%% file: main.tex
\documentclass[a4paper,USenglish,cleveref,autoref]{lipics-v2021}

\hideLIPIcs  

\title{Space Complexity Dichotomies for Subgraph Finding Problems in the Streaming Model} 

\titlerunning{Streaming Space Complexity Dichotomies for Subgraph Finding Problems} 

\author{Yu-Sheng Shih}{National Taiwan University, Taiwan}{r13922048@csie.ntu.edu.tw}{https://orcid.org/0009-0005-4826-0383}{}
\author{Meng-Tsung Tsai}{Academia Sinica, Taiwan}{mttsai@iis.sinica.edu.tw}{https://orcid.org/0000-0002-2243-8666}{This research was supported in part by the National Science and Technology Council under
contract NSTC 114-2221-E-001-023.}
\author{Yen-Chu Tsai}{National Taiwan University, Taiwan}{b11902003@ntu.edu.tw}{}{}
\author{Ying-Sian Wu}{National Taiwan University, Taiwan}{b11201018@ntu.edu.tw}{https://orcid.org/0000-0002-1856-4918}{}

\authorrunning{Yu-Sheng Shih, Meng-Tsung Tsai, Yen-Chu Tsai and Ying-Sian Wu} 

\ccsdesc[100]{Theory of Computation~Streaming, sublinear and near linear time algorithms} 

\keywords{Tur\'{a}n number, communication complexity, multi-party set disjointness, sparse certificate, cographs} 

\category{} 

\relatedversion{} 

\nolinenumbers 

\EventEditors{}
\EventNoEds{0}
\EventLongTitle{}
\EventShortTitle{CCC}
\EventAcronym{CCC}
\EventYear{2026}
\EventDate{}
\EventLocation{}
\EventLogo{}
\SeriesVolume{}
\ArticleNo{1}

\input{packages}

\input{commands}

\begin{document}

\maketitle
\begin{abstract}
We study the space complexity of four variants of the standard subgraph finding problem in the streaming model. Specifically, given an $n$-vertex input graph and a fixed-size pattern graph, we consider two settings: undirected simple graphs, denoted by $G$ and $H$, and oriented graphs, denoted by $\vec{G}$ and $\vec{H}$. Depending on the setting, the task is to decide whether $G$ contains $H$ as a subgraph or as an induced subgraph, or whether $\vec{G}$ contains $\vec{H}$ as a subgraph or as an induced subgraph. Let $\textsc{Sub}(H)$, $\textsc{IndSub}(H)$, $\textsc{Sub}(\vec{H})$, and $\textsc{IndSub}(\vec{H})$ denote these four variants, respectively. 

An oriented graph is well-oriented if it admits a bipartition in which every arc is oriented from one part to the other, and a vertex is non-well-oriented if both its in-degree and out-degree are non-zero. For each variant, we obtain a complete dichotomy theorem, briefly summarized as follows.
\begin{itemize}
\item $\textsc{Sub}(H)$ can be solved by an $\tilde{O}(1)$-pass $n^{2-\Omega(1)}$-space algorithm if and only if $H$ is bipartite.
\item $\textsc{IndSub}(H)$ can be solved by an $\tilde{O}(1)$-pass $n^{2-\Omega(1)}$-space algorithm if and only if $H \in \{P_3, P_4, \mbox{co-}P_3\}$.
\item $\textsc{Sub}(\vec{H})$ can be solved by a single-pass $n^{2-\Omega(1)}$-space algorithm if and only if every connected component of $\vec H$ is either a well-oriented bipartite graph or a tree containing at most one non-well-oriented vertex.
\item $\textsc{IndSub}(\vec{H})$ can be solved by an $\tilde{O}(1)$-pass $n^{2-\Omega(1)}$-space algorithm if and only if the underlying undirected simple graph $H$ is a $\mbox{co-}P_3$.
\end{itemize}
\end{abstract}

\thispagestyle{empty}
\clearpage
\setcounter{page}{1}

\input{intro}

\input{prel}

\input{section3}

\input{induced}

\input{oriented}

\input{inducedoriented}

\input{openproblem}

\bibliography{ref}

\clearpage

\appendix
\input{helper}

\end{document}

%% file: packages.tex
\usepackage{xcolor}
\usepackage{todonotes}
\usepackage{amsthm}
\usepackage{thmtools, thm-restate}
\usepackage{paralist}
\usepackage{float}
\usepackage{algorithm}
\usepackage{algpseudocode}
\newcommand{\algcom}[1]{\hfill{\footnotesize\textit{#1}}}

\usepackage{booktabs}
\usepackage{mdframed}

\hypersetup{colorlinks=true,
 linkcolor=red,
citecolor=blue,
urlcolor=magenta
}

%% file: commands.tex
\definecolor{purple}{HTML}{1A1AB3}

\DeclareEmphSequence{\color{purple}\itshape}

\AtBeginEnvironment{algorithm}{%
  \DeclareEmphSequence{\itshape}
}
\AtEndEnvironment{algorithm}{%
  \DeclareEmphSequence{\color{purple}\itshape}
}

\AtBeginEnvironment{thebibliography}{%
  \DeclareEmphSequence{\itshape}
}
\AtEndEnvironment{thebibliography}{%
  \DeclareEmphSequence{\color{purple}\itshape}
}

\theoremstyle{definition}
\newtheorem{problem}[theorem]{Problem}



%% file: intro.tex
\section{Introduction}\label{sec:intro}

The Tur\'{a}n number $ex(n,H)$, for a fixed undirected simple graph $H$ and an integer $n$, is the maximum number of edges in an $n$-vertex undirected graph that does not contain $H$ as a subgraph. Consequently, to find a copy of $H$ in an $n$-vertex graph $G$, it suffices to retain any subgraph of $G$ with at least $ex(n,H)+1$ edges, since any such subgraph must contain $H$. This immediately yields a space-efficient single-pass streaming algorithm for finding $H$ whenever $ex(n,H)$ is small, and the same idea extends to finding an oriented graph\footnote{An oriented graph is a digraph in which each pair of vertices is connected by at most one arc.} $\vec{H}$. In the literature, this approach was noted by Feigenbaum et al.~\cite{FeigenbaumKMSZ04} as a way to determine whether a graph has girth larger than $k$.

The Erd\H{o}s--Stone theorem~\cite{EST46} implies that $ex(n,H)=o(n^2)$ if and only if $H$ is bipartite. This bound was strengthened by Alon, Krivelevich, and Sudakov~\cite{AlonKS03}, who showed that for every bipartite graph $H = (V_1 \sqcup V_2, E)$, letting $\Delta'(H)$ denote the minimum of the maximum degree among vertices in $V_1$ and the maximum degree among vertices in $V_2$, we have $ex(n, H) = O(n^{2 - 1/\Delta'(H)})$.

For oriented graphs $\vec{H}$, Valadkhan~\cite{Payam07} proved that $ex(n, \vec{H}) = o(n^2)$ if and only if $\vec H$ is bipartite and \emph{well-oriented}, meaning that under some bipartition of $\vec H$ is every arc oriented from one part to the other. We extend the proof of Alon et al.\ to the well-oriented case, and show in~\cref{Oex} that $ex(n, \vec{H}) = O(n^{2 - 1/\Delta'(H)})$ for every well-oriented bipartite $\vec{H}$, where $H$ is the underlying undirected graph of $\vec H$.

Thus, applying the naive algorithm described above, finding a bipartite undirected simple graph $H$ or a well-oriented bipartite graph $\vec{H}$ in an $n$-vertex graph needs $n^{2-\Omega(1)}$ space, whereas finding a non-bipartite graph $H$, an oriented non-bipartite graph $\vec H$ or a non-well-oriented bipartite graph $\vec{H}$ needs $O(n^2)$ space. It is natural to ask:

\begin{mdframed}
\textbf{Question:} \textit{Does finding a fixed graph $H$ (resp.\ $\vec{H}$) in an $n$-vertex graph $G$ (resp.\ $\vec{G}$) necessarily require $\Omega(\mathrm{ex}(n, H))$ (resp.\ $\Omega(\mathrm{ex}(n, \vec{H}))$) space, or is a more clever algorithm possible?}
\end{mdframed}

Before presenting our answer to the above question, we formally define our problem in \cref{prob:main}, followed by the definition of our computation model. In this paper, all graphs are assumed to be simple, and we will indicate explicitly whether a given graph is oriented.

\begin{problem}[Subgraph Finding]
\label{prob:main}
~
\begin{itemize}
    \item \textbf{Input:} An $n$-vertex graph and a fixed-size pattern graph: either both are undirected, denoted by $G$ and $H$, or both are oriented, denoted by $\vec{G}$ and $\vec{H}$. The pattern graph has at least three vertices and may be disconnected. 
    \item \textbf{Question:} Depending on the input type, decide whether $G$ contains $H$ as a subgraph
    or as an induced subgraph, or whether $\vec{G}$ contains $\vec{H}$ as a subgraph or as an induced subgraph. If such a copy exists, output the corresponding set of vertices in $G$ (resp.\ $\vec{G}$) that form a copy of $H$ (resp. $\vec{H}$); otherwise, report that no such copy exists.
\end{itemize}

Let $\textsc{Sub}(H)$, $\textsc{IndSub}(H)$, 
$\textsc{Sub}(\vec{H})$, $\textsc{IndSub}(\vec{H})$
denote the four variants, respectively.
\end{problem}

\begin{definition}[Insertion-Only Graph Streaming Model~\cite{McGregor14}]
The edge set of the input graph is arranged as a stream in an arbitrary order, called the input stream. The algorithm is given the number $n$ of vertices in advance and may scan the stream from beginning to end $p$ times, without reading backward. The algorithm has only $S$ space, measured in bits, typically much smaller than the input size, and thus must discard information that is not essential for completing the task, e.g. \textsc{Sub}$(H)$. We call such an algorithm a $p$-pass $S$-space streaming algorithm. Since the stream consists only of edge insertions, this model is called the insertion-only streaming model.
\end{definition}

We summarize our dichotomy results for each of the four variants of the subgraph-finding problem in \cref{tab:dichotomy}. Here, $P_\ell$ denotes the path on $\ell$ vertices, and co-$P_\ell$ denotes the complement of $P_\ell$. The notation ``$\vec{H}$ is WO'' means that $\vec{H}$ is well-oriented. We write $\vec{H} \owns \mathrm{NWOC}_\ell$ if the underlying graph $H$ has a connected component containing $\ell$ non-well-oriented (NWO) vertices, where an NWO vertex is a vertex of $\vec{H}$ with both in-degree and out-degree non-zero. We write $\vec{H} \owns \mathrm{NWOC}_{\ell+C}$ if, in addition, $H$ contains at least one cycle.

The Tur\'{a}n number $ex(n, H)$ in fact yields a dichotomy for the space complexity of the \textsc{Sub}$(H)$ problem; that is, there exists an $\tilde{O}(1)$-pass $n^{2-\Omega(1)}$-space algorithm for \textsc{Sub}($H$) if and only if $H$ is bipartite. This is proven in~\cref{thmUNB}.

For the induced subgraph finding problems \textsc{IndSub}$(H)$ and \textsc{IndSub}$(\vec{H})$, the $n$-vertex complete graph contains no induced subgraphs other than cliques, and hence the naive approach may need to retain the entire input graph. However, there are examples of $H$, namely $P_3$, $P_4$, and $co\mbox{-}P_3$, that admit more clever algorithms using $\tilde{O}(n)$ space with an $\tilde{O}(1)$-pass algorithm; all remaining graphs $H$ require $\tilde{\Omega}(n^2)$ space for any $\tilde{O}(1)$-pass algorithm. This is proven in \cref{thmUI,thm6}.

\begin{table}[!h]
\centering
\begin{tabular}{llccc}
& $H$ (resp. $\vec{H}$)  & $\tilde{O}(1)$-pass alg. & single-pass alg. & references \\
\midrule
\midrule
$\textsc{Sub}(H)$ & if $H$ is non-bipartite & $\tilde{\Omega}(n^2)$ &  & \cref{thmUNB}\\
& otherwise &  & $\tilde O(n^{2-1/\Delta'(H)})$ & \cref{thmUNB}\\
$\textsc{IndSub}(H)$ & if $H \notin \{P_3, P_4, co\mbox{-}P_3\}$ & $\tilde{\Omega}(n^2)$   & & \cref{thmUI}\\
& otherwise &  $\tilde{O}(n)$ & & \cref{thmUI} \\
$\textsc{Sub}(\vec{H})$ & if $H$ is non-bipartite & $\tilde{\Omega}(n^2)$ &  & Corollary~\ref{ONB}  \\ 
& else if $\vec{H}$ is WO & & $\tilde O(n^{2-1/\Delta'(H)})$ & Corollary~\ref{ONB}  \\
& else if $\vec{H} \owns \mathrm{NWOC}_2$ &  & $\Omega(n^2)$ & \cref{thmOSPC}\\
& else if $\vec{H} \owns \mathrm{NWOC}_{1+C}$ & & $\Omega(n^2)$ & \cref{thmOSPC}\\
& else if $H$ is forest & & $\tilde{O}(n)$ &\cref{thmOSPC} \\
& else  &  & $\tilde O(n^{2-1/\Delta'(H)})$ & \cref{NWOFandWO} \\
$\textsc{IndSub}(\vec{H})$ & if $H \ne co\mbox{-}P_3$ & $\tilde{\Omega}(n^2)$ & & \cref{thm6} \\
& otherwise & $\tilde{O}(n)$ & & \cref{thm6} \\
\bottomrule
\end{tabular}
\caption{Dichotomy results of the four variants of \cref{prob:main}. \label{tab:dichotomy}}
\end{table}

For \textsc{Sub}($\vec{H}$), there exist infinitely many oriented graphs $\vec{H}$ whose Tur\'{a}n number satisfies $ex(n, \vec{H}) = \Omega(n^2)$, yet finding a copy of $\vec{H}$ can be achieved by an $\tilde{O}(1)$-pass $\tilde{O}(n^{2-\Omega(1)})$-space algorithm. Our dichotomy result is as follows: there exists a single-pass $n^{2-\Omega(1)}$-space algorithm for \textsc{Sub}($\vec{H}$) if and only if every component of $H$ is either WO or a tree containing exactly one NWO vertex. Note that if some component of $H$ is a tree with exactly one NWO vertex, then $ex(n, \vec{H}) = \Omega(n^2)$. This yields an infinite family of exceptions to the question above. In particular, if $H$ is a forest in which every component contains at most one NWO vertex, then there exists a single-pass $\tilde{O}(n)$-space algorithm based on a novel sparse certificate, which we call an \emph{small-forest-preserving certificate}, consisting of $O(|H|)$-many $(c|H|)$-cores that preserve any designated forests rooted at designated vertices. This is one of the more clever algorithms that answers our question.

\subsection{Related Work}

In the literature, the space complexity of \textsc{Sub}$(C_\ell)$ is fully understood for $\ell \in \{3,4\}$. Braverman and Ostrovsky showed in~\cite{BravermanOV13} that \textsc{Sub}$(C_3)$ requires $\tilde{\Omega}(ex(n, C_3))$ space for any $\tilde{O}(1)$-pass algorithm, and McGregor and Vorotnikova showed in~\cite{McGregorV20} that \textsc{Sub}$(C_4)$ requires $\tilde{\Omega}(ex(n, C_4))$ space for any $\tilde{O}(1)$-pass algorithm. Our \cref{thmUNB,thmUB} extend these results by showing that, for every fixed integer $\ell \ge 3$, \textsc{Sub}$(C_\ell)$ requires $\tilde{\Omega}(ex(n, C_\ell))$ space for any $\tilde{O}(1)$-pass algorithm.

Oostveen and van Leeuwen~\cite{OostveenL24} studied parameterized aspects of the streaming complexity of fundamental graph problems, with an emphasis on how additional structural information about the input can be exploited algorithmically.
In particular, they asked how the availability of a small $H$-free moderator, that is, a vertex set $X$ such that $G-X$ is $H$-free, can influence the streaming complexity. In any algorithmic use of such a promise, one must first certify that the provided set $X$ indeed yields an $H$-free remainder. Thus our results on $\textsc{Indsub}(H)$ or even $\textsc{Indsub}(\vec H)$ can be viewed as supporting the preprocessing primitives for the modulator-based perspective of they propose.

In view of the hardness of finding even a single copy of $H$, existing work on counting occurrences of $H$ in a streamed graph $G$ typically relies on the assumption that $H$ appears with sufficiently high frequency in order to reduce the space complexity to an acceptable level; see, for example, Kane et al.~\cite{KaneMSS12} and Fichtenberger and Peng~\cite{Fichtenberger022}, which apply to any fixed graph $H$. There is also a large body of work that obtains stronger bounds for particular choices of $H$.

Halld\'{o}rsson et al.\ and Braverman et al.~\cite{HalldorssonSSW12, BravermanLSVY18} studied the promise problem of distinguishing whether the input graph has clique number at least $r$ or at most $s$. This can be viewed as a clique-detection problem under a promise: deciding whether the graph contains a copy of $K_r$ versus containing no copy of $K_{s+1}$. They showed that every $\tilde{O}(1)$-pass streaming algorithm for this task requires $\tilde{\Omega}(n^2)$ space.

\subsection{Outline}

We define $R^p_{2/3}(\textsc{Sub}(H))$ to be the minimum space complexity of a randomized $p$-pass streaming algorithm for $\textsc{Sub}(H)$ with success probability at least $2/3$, and $D^p(\textsc{Sub}(H))$ to be the minimum space complexity of a deterministic $p$-pass streaming algorithm for $\textsc{Sub}(H)$, likewise for the other 3 types of subgraph problems.
Our lower bounds are always given for randomized algorithms. Note that although \cref{prob:main} requires us to find a copy of the pattern graph if exists, our lower bounds hold even for weaker algorithms that only detects the existence of the pattern graph, due to the nature of the communication games that we use. By the same reason, although in \cref{tab:dichotomy} we only present bounds for $\tilde O(1)$-pass and single-pass, in fact
our multi-pass ($p$-pass) lower bounds hold even if $p$ is polynomial in $n$.

\subsubsection{The undirected case: $\textsc{Sub}(H)$}

 We first study the undirected case. We show $R^p_{2/3}(\textsc{Sub}(H))=\Omega(n^2/p)$ for all non-bipartite $ H$ by reduction from the communication game $\textsc{Multi-Disjointness}$, leading to a dichotomy of $\textsc{Sub}( H)$ as follows, like the Tur\'{a}n number does.

\begin{theorem}\label{thmUNB}
    $\textsc{Sub}(H)$ for a fixed graph $H$  
    admits the dichotomy:

    \begin{itemize}
        \item If $H$ is non-bipartite, then $R^p_{2/3}(\textsc{Sub}(H))=\Omega(n^2/p)$.
        \item Otherwise, $D^1(\textsc{Sub}(H))=O(n^{2-1/\Delta'(H)})$.
    \end{itemize}
\end{theorem}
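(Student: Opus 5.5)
The plan handles the two items separately, starting with the upper bound, which is routine. For bipartite $H$, the Alon--Krivelevich--Sudakov bound gives $ex(n,H)\le C n^{2-1/\Delta'(H)}$ for a constant $C$ depending only on $H$. The single-pass deterministic algorithm keeps the first $ex(n,H)+1$ edges of the stream; if the stream is shorter, it keeps all edges. Each stored edge costs $O(\log n)$ bits, so to stay within $O(n^{2-1/\Delta'(H)})$ bits I would store the edges as a bit vector over a suitable encoding. A cleaner option is to store an adjacency bitmap only for the stored edges, or to note that a sparse subset of $\binom{n}{2}$ of size $m$ can be encoded in $\log\binom{\binom n2}{m}=O(m\log(n^2/m))$ bits. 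Since $n^2/m=n^{1/\Delta'}$, this still carries a log factor, so I may need to accept $\tilde O$ here, matching the table. If the stored graph has more than $ex(n,H)$ edges it contains $H$, and because $H$ has constant size we find a copy by brute force over the stored edges, which uses no extra stream space. Otherwise we have stored all of $G$ and check it exhaustively.

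For the lower bound with $H$ non-bipartite, $H$ contains an odd cycle. I would reduce from $\textsc{Multi-Disjointness}$ with $t=|V(H)|$ or $k$ players on a universe of size $N=\Theta(n^2)$, which needs $\Omega(N/k)$ total communication even under the unique-intersection promise. The standard conversion of a $p$-pass $S$-space stream algorithm into a protocol with $O(pkS)$ communication then yields $S=\Omega(n^2/p)$ for constant $k$. The gadget is a blow-up of $H$. Take $n/|V(H)|$ vertices per vertex of $H$, so $V=\bigsqcup_{v} V_v$. Index the universe by pairs $(i,j)\in[m]^2$ with $m=n/|V(H)|$, and for each edge $e=uv$ of $H$ assign a player $P_e$ (so the number of players is $|E(H)|$). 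For each $(i,j)\in X_e$, player $P_e$ inserts an edge between the $i$-th vertex of $V_u$ and the $j$-th vertex of $V_v$. This alone does not work, since a copy of $H$ can use different indices on different parts.

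The fix, and the main obstacle, is to make the graph $H$-free whenever the sets are pairwise disjoint, while a common element $(i,j)$ creates a copy. I would use a "coordinated index" construction. Fix a spanning structure: for an odd cycle $C_{2\ell+1}$ in $H$, encode element $(i,j)$ so that the edges between consecutive parts are determined by linear maps such as $x\mapsto x+ i$ or $x\mapsto x+ a_e\, i + b_e\, j$ over $\mathbb Z_m$. Each vertex $x$ of each part is connected to its shift in the next part, so every cycle through the blow-up closes up exactly when the shifts sum to zero mod $m$, which requires all players to agree on the same element. The remaining edges of $H$ off the cycle are inserted completely, and disjointness, together with the fact that the underlying homomorphism to $H$ must be a bijection on the parts, forces any copy to be a "transversal" one. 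Verifying the bijection claim is the delicate step. A copy of non-bipartite $H$ in the blow-up maps homomorphically to $H$, and this map need not be an automorphism; I would handle it by choosing $H$ to be a minimal non-bipartite structure or by taking a core/odd-girth argument, i.e. blowing up along the shortest odd cycle, so that any odd closed walk of that length must traverse each part once. I expect this step to consume most of the effort.
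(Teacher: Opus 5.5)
Your upper bound is the paper's argument: store $\mathrm{ex}(n,H)+1$ edges and apply Alon--Krivelevich--Sudakov. The paper's $O(\cdot)$ suppresses the $\log n$ bits per edge, as the $\tilde O$ in its table shows, so your hedge is appropriate.

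The lower bound, however, breaks at exactly the step you call the fix. In your coordinated-index gadget each element $(i,j)\in X_e$ contributes the whole shift matching $\{x,\,x+s_e(i,j)\}$ between $V_u$ and $V_v$. So the graph depends on player $P_e$'s input only through the set $\Sigma_e=\{s_e(i,j):(i,j)\in X_e\}\subseteq\mathbb{Z}_m$, which is just $m=O(n)$ bits. If such a reduction were correct, the players could send their sets $\Sigma_e$ to one party, who rebuilds the graph and tests for $H$. That would solve $\mathrm{MULTIDISJ}$ on $m^2=\Theta(n^2)$ elements with $O(|E(H)|\,n)$ communication, contradicting the very lower bound you invoke. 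Hence no choice of coefficients $a_e,b_e$ can rescue it.

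Concretely, your claim that a cycle closes only when all players agree on one element is false. Closure is the single congruence $\sum_e s_e(i_e,j_e)\equiv 0\pmod m$, which has many solutions with pairwise distinct elements $(i_e,j_e)$, i.e.\ solutions compatible with the NO case. The bijection-on-parts step, which you leave open, is a second unresolved point.

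The missing idea is to encode an element by a \emph{single} edge per player, with all encoded edges sharing one endpoint. The two coordinates of the element then select copies rather than being combined arithmetically. The construction in the paper is as follows.
\begin{itemize}
\item Fix a largest shortest-odd-cycle component $A_1$: a maximal vertex set connected by edges that lie on shortest odd cycles $C_\ell$.
\item Fix $v\in A_1$ and its neighbors $u_1,\dots,u_r$ along $C_\ell$-edges; note $r\ge 2$.
\item Take $n$ copies of $A_1$, each containing all edges except the $vu_q$.
\item Player $q$ adds $\{v^{(j)},u_q^{(i)}\}$ for each $(j-1)n+i\in S_q$.
\item Each remaining edge $vw$ at $v$ (lying on no $C_\ell$) is added between $v^{(j)}$ and every copy $w^{(i)}$.
\item A single copy of $H-A_1$ is joined to every copy of $A_1$ as in $H$.
\end{itemize}
A common element $(i,j)$ lets $v^{(j)}$ stand in for $v^{(i)}$ in copy $i$.

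Under pairwise disjointness there is at most one encoded edge between $v^{(j)}$ and copy $i$. Collapsing copies is a homomorphism onto $H$. So a $C_\ell$ through an encoded edge, which must later leave copy $i$ and can only do so through some $v^{(\cdot)}$, projects to a closed walk of length $\ell$ visiting $v$ twice. That walk contains a shorter odd cycle, a contradiction. Hence every shortest-odd-cycle component of the reduction graph is either inside one copy of $A_1$, which misses at least $r-1\ge 1$ of its edges, or is one of the other such components of $H$. Counting components of maximum size then rules out a copy of $H$. This homomorphism-plus-minimality argument is also what replaces your unproven bijection-on-parts claim.
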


We remark that for a few non-bipartite $H$, in particular for all the odd cycles, as long as $p=O(n)$, we have a corresponding deterministic $p$-pass $\tilde O(n^2/p)$-space  algorithm based on color-coding ideas from~\cite{AlonYZ95} and coloring results from \cite{SS90}.

Next, we investigate the lower bounds for bipartite graphs. For some specific classes of $H$ we do have $R^p_{2/3}(\textsc{Sub}(H))=\Omega(\text{ex}(n,H)/p)$, while in general we have the single-pass lower bound $R^1_{2/3}(\textsc{Sub}(H))=\Omega(\text{ex}(n,H))$ as long as some component of $H$ has connectivity $>2$. 

\begin{theorem}\label{thmUB}
    Let $H$ be a fixed bipartite graph.
    \begin{itemize}
        \item If $H$ is an even cycle, a forest that is not a matching, or has some component of diameter less than $4$, then $R^p_{2/3}(\textsc{Sub}(H))=\Omega(\mathrm{ex}(n,H)/p)$.
        \item If some component of $H$ has connectivity $>2$, then $R^1_{2/3}(\textsc{Sub}(H))=\Omega(\mathrm{ex}(n,H))$.
    \end{itemize}
\end{theorem}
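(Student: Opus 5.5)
We treat the two items of Theorem~\ref{thmUB} separately; both are reductions from communication problems in which the hard instance is an extremal $H$-free graph with a few edges "toggled" by the players. The common idea: take an $N$-vertex graph $F$ that is $H$-free with $\Theta(\mathrm{ex}(N,H))$ edges and has a strong uniqueness property. Deleting an arbitrary subset of $E(F)$ and adding a small gadget then creates a copy of $H$ exactly when some designated edge survives. The edges of $F$ are indexed by the coordinates of a \textsc{Multi-Disjointness} (for $p$ passes) or \textsc{Index} (for one pass) instance of size $m=\Theta(\mathrm{ex}(n,H))$. A $p$-pass $S$-space algorithm yields a protocol of cost $O(pS)$ for a constant number of players, so $S=\Omega(m/p)$.

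For the first item I would split into the three listed cases.

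\emph{Even cycles $C_{2k}$.} I would use a graph of girth $>2k$ (or a $C_{2k}$-free graph on which each edge lies in no short cycle) with $\Theta(\mathrm{ex}(n,C_{2k}))$ edges. Each player holds one copy of its edge subset, and the copies are glued through a bipartite "doubling". A copy of $C_{2k}$ then arises only from an edge present in all players' sets, extended by a fixed path of length $2k-1$ built by the players. This generalizes the McGregor--Vorotnikova construction for $C_4$, and I expect this step to be the main obstacle: one must keep the number of edges at $\mathrm{ex}(n,C_{2k})$, which is not known asymptotically, and prevent spurious $C_{2k}$'s. What I would try first is to let the players' graph be a densest $C_{2k}$-free graph and attach a copy of $P_{2k}$ between endpoints. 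This forces any new $C_{2k}$ through the added path, because the host graph is $C_{2k}$-free.

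\emph{Forests that are not matchings.} Here $\mathrm{ex}(n,H)=\Theta(n)$. I would reduce \textsc{Disjointness} on $n$ bits via stars and paths: coordinate $i$ is encoded by an edge at vertex $v_i$, and a common $1$ yields a vertex of degree $2$ (or a larger star) completing a copy of $H$ together with fixed disjoint padding.

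\emph{Some component of diameter $<4$.} For this case I would take the extremal graph and use that any copy of a small-diameter component must use an edge together with its full neighbourhood pattern. Blowing up the vertices of the designated edge then lets a single common coordinate create the copy.

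For the second item (connectivity $>2$), I would reduce from \textsc{Index}. Alice encodes her $m$-bit string as a subgraph of an extremal $H$-free graph $F$ on $n$ vertices. Bob, holding index $(u,v)$, adds the graph $H$ minus one edge $e$, glued onto $u$ and $v$ via fresh vertices. A copy of $H$ appears iff $uv$ is present. The $3$-connectivity of the component is what rules out copies that use Bob's gadget partially and Alice's edges partially: any such copy would be separated by the two attachment vertices $\{u,v\}$, which a $3$-connected component cannot be. The one-way complexity of \textsc{Index} then gives $\Omega(\mathrm{ex}(n,H))$.
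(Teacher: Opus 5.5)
\textbf{Verdict.} Your second item matches the paper's Proposition~\ref{k2LB}: Alice encodes her set in an extremal $H$-free graph, Bob glues $H$ minus an edge onto the index edge, and $3$-connectivity rules out crossing copies. Your forest sketch is in the spirit of Proposition~\ref{forestLB}, though you still need the no-false-positive step. There, with no common $j$, $G_H$ has fewer components of size $|H_1|$ than $H$ does. The genuine gap is in the multi-pass bounds for even cycles and for components of diameter $<4$.

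\textbf{Even cycles.} Your template, a small gadget that completes $H$ exactly when a designated edge survives, is an \textsc{Index} template: someone must know where to put the gadget. For $\Omega(\mathrm{ex}(n,H)/p)$ you need a two-way problem in which nobody knows the common coordinate. So the gadget must be present at every coordinate at once, using only $\Theta(n)$ extra vertices, and you must rule out spurious copies built from all players' edges together. The plan you commit to for $C_{2k}$, a $P_{2k}$ attached between the endpoints of an edge, only works for a known edge, which gives a one-pass bound. Attaching one at every edge costs $\Theta(\mathrm{ex}(n,C_{2k}))$ vertices. Your ``two glued copies'' sentence points the right way, but the gluing must be done by paths of length $k-1$. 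The paper puts Alice's edges $e_i$ ($i\in S_1$) in a copy $G_1$ of an extremal $C_{2k}$-free graph and Bob's twins $e'_i$ ($i\in S_2$) in a second copy $G_2$. It then joins each $v_j$ to $v'_j$ by a path of length $k-1$. A $C_{2k}$ cannot lie in one copy, so it must traverse an even number of whole connecting paths. Length counting then leaves exactly two paths plus one edge on each side, i.e.\ $e_i$ and $e'_i$ for the same $i$. No girth condition is needed, and it is not even known that graphs of girth $>2k$ attain $\Theta(\mathrm{ex}(n,C_{2k}))$ edges. The obstacle you name is also not real: the extremal graph is used as a black box, so its asymptotics never matter.

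\textbf{Diameter $<4$.} Here ``use an edge with its full neighbourhood pattern and blow up the designated edge'' is not a reduction, and it misses the actual mechanism. The paper fixes a bipartite $H$-free skeleton $G'$ with parts $\{g_i\},\{g'_j\}$ and $\Theta(\mathrm{ex}(n,H))$ edges $f_k=\{g_{i_k},g'_{j_k}\}$, and takes $n$ copies of $V(H)$. Each edge $e_q=\{v_{a_q},u_{b_q}\}$ of $H$ goes to its own player of $\mathrm{MULTIDISJ}_{|E(G')|,|E(H)|}$, who inserts $\{v_{a_q}^{(i_k)},u_{b_q}^{(j_k)}\}$ iff $k\in S_q$. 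A common $k$ gives $H$ on copies $i_k,j_k$. In the NO case, the promise of pairwise disjointness means that two distinct vertices on the same side of one copy never share a neighbour. Such a neighbour would place the unique $k$ with $(i_k,j_k)=(i,j)$ in two different sets. But any two same-side vertices of a bipartite graph of diameter $<4$ do share a neighbour. Hence a copy of $H$ uses at most one vertex per copy-side, and $v^{(i)}\mapsto g_i$, $u^{(j)}\mapsto g'_j$ embeds it into $G'$, a contradiction. Both the diameter hypothesis and the promise enter exactly at this step, and neither appears in your outline.
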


\subsubsection{The undirected induced case: $\textsc{Indsub}(H)$}

 Next we study the induced variant of the subgraph finding problem. With the help of \cref{thmUNB} and a further reduction from the communication game $\textsc{Set-Disjointness}$, we find that for all but a small set of exceptional graphs $H$ we have $R^p_{2/3}(\textsc{Indsub}(H))=\Omega(n^2/p)$, which fits the intuition that induced detection should be markedly more difficult. Moreover, the exceptional cases admit algorithms using only $\tilde O(1)$ passes and $\tilde O(n)$ space, mainly based on the method proposed in~\cite{cographs}, where we adopt its observation on the structures of cographs while employing a different algorithmic strategy for the streaming model.
\begin{theorem}\label{thmUI}
    $\textsc{Indsub}(H)$ for a fixed $H$ admits the following dichotomy:
    \begin{itemize}
        \item If $H$ is a $P_3, \text{co-}P_3$ or $P_4$, then  
        $D^{q}(\textsc{Indsub}(H))=\tilde O(n)$ for some $q=\tilde O(1)$.
        \item Otherwise, $R^p_{2/3}(\textsc{Indsub}(H))=\Omega(n^2/p)$.
    \end{itemize}
\end{theorem}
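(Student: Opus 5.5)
The plan has two parts: an upper bound for the three exceptional patterns and a lower bound for everything else.

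\emph{Upper bounds.} For $H=\text{co-}P_3$ (an edge plus an isolated vertex) it suffices to find an edge $uv$ and a vertex $w$ adjacent to neither. First, in one pass, I compute all degrees using $O(n\log n)$ bits. Pick any edge $uv$ and consider the neighbourhood union $N(u)\cup N(v)$. If this union misses some vertex, we are done. If every edge's endpoints dominate $V$, then no induced co-$P_3$ exists. I would detect this with a pass comparing $\deg(u)+\deg(v)$ against the count of common neighbours. Common neighbours are hard to count directly, so I expect the cleaner route is to pick the edge $uv$ minimizing $\deg(u)+\deg(v)$ and use an extra pass to store $N(u)\cup N(v)$ explicitly in $O(n)$ bits. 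If no edge works, the graph is a complete multipartite graph plus isolated vertices (i.e. co-$P_3$-free); I would verify this claim structurally.

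For $P_3$, the graph is $P_3$-free exactly when it is a disjoint union of cliques. I compute connected components with a streaming union-find and degrees in one pass. Then in a second pass I check that every vertex $v$ has $\deg(v)=|C(v)|-1$. If some vertex fails, I locate a witness by storing, for that vertex $v$, its neighbour set $N(v)$ ($O(n)$ bits) and searching for a neighbour $u$ together with a non-neighbour $w\in C(v)$ with $uw\in E$, using $O(\log n)$ further passes along a BFS path inside the component.

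For $P_4$, i.e.\ cographs, I would follow~\cite{cographs}: a cograph has a cotree in which every connected induced subgraph on at least two vertices has a disconnected complement. I would recursively build the cotree by computing components of $G$ and of $\bar G$. Components of $\bar G$ restricted to a part can be computed in $\tilde O(n)$ space with $O(\log n)$ passes by a Borůvka-style procedure, in which degree counts identify non-adjacencies. The recursion depth can be large, so I process all nodes of one recursion level in parallel per pass and bound the number of levels by $O(\log n)$ by always recursing into the heavy-part structure. If some connected part also has a connected complement, I extract an induced $P_4$ by the standard argument. \textbf{This is the step I expect to be the main obstacle}: bounding the passes to $\tilde O(1)$ while keeping $\tilde O(n)$ space and still producing the witness.

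\emph{Lower bounds.} I split by the structure of $H$. If $H$ is non-bipartite, then by \cref{thmUNB} the hard instances should be complete-multipartite-like. I would adapt the \textsc{Multi-Disjointness} reduction so that occurrences are induced. If $H$ contains an induced $K_3$-free odd structure, I pass to the complement: $\textsc{Indsub}(H)$ is equivalent to $\textsc{Indsub}(\bar H)$ on $\bar G$, but the stream of $\bar G$ is not available. So instead I reduce from \textsc{Set-Disjointness} directly. Alice's bits toggle the edges of a fixed gadget $B$ of size $\Theta(n^2)$; Bob adds fixed edges so that an induced $H$ appears iff some toggled pair meets. The case analysis runs over $H\notin\{P_3,P_4,\text{co-}P_3\}$ with $|H|\ge3$.
\begin{itemize}
\item[(a)] If $H$ has an isolated vertex plus a non-edge structure, use padding.
\item[(b)] Use $K_3$ versus $\overline{K_3}$.
\item[(c)] Handle larger graphs by embedding a minimal non-cograph or a non-cograph-closed pair of vertices whose adjacency encodes one bit, with the rest of $H$ planted by Bob.
\end{itemize}
Each case then gives $\Omega(n^2/p)$ by the standard $p$-pass simulation.
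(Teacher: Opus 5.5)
Your $P_3$ algorithm is fine, but the other two upper bounds have gaps. For co-$P_3$, testing only the edge $uv$ minimizing $\deg(u)+\deg(v)$ is wrong. Take $K_{m,m}$ with parts $A,B$ ($m\ge 3$) and add one edge $a_1a_2$ inside $A$. The minimizing edges are $a_ib$ with $i\ge 3$, and they satisfy $N(a_i)\cup N(b)=V$, yet $\{a_1,a_2,a_3\}$ induces a co-$P_3$. Co-$P_3$-freeness requires \emph{every} edge to dominate, i.e.\ $\overline G$ must be a disjoint union of cliques, so you need the components of $\overline G$. The paper gets them (or an induced co-$P_3$) from Lemma~\ref{lem:components-or-copath} with $k=1$, then uses one pass to check whether some edge of $G$ lies inside a component of $\overline G$. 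Your own Bor\r{u}vka-style routine would also work here: it is sound, at the cost of $O(\log^2 n)$ passes, since there are $O(\log n)$ merge rounds, each with an $O(\log n)$-pass binary search for a non-neighbor outside the current component.

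For $P_4$, the obstacle you flag is real, and ``recursing into the heavy part'' does not resolve it. A cotree can have depth $\Theta(n)$, e.g.\ for a threshold graph built by alternately adding an isolated and a dominating vertex. Alternately splitting into components of $G$ and of $\overline G$ then removes one vertex per level, so you need $\Theta(n)$ rounds. The missing idea is the component-partition with respect to a root vertex $v$ (Lemma~\ref{lem:cograph_structure}). It cuts the whole root-to-$v$ path of the cotree at once and is computable and verifiable in $O(1)$ passes. It must be paired with a balancing choice of root: a vertex of degree in $[n/4,3n/4]$ if one exists, else a vertex of degree $<n/4$ maximizing $|N(u)\cap\mathrm{High}(G)|$. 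Lemmas~\ref{lem:no_big_subgraph} and~\ref{lem:large_Bi_high} then ensure every part has at most $3n/4$ vertices after splitting parts into components. A possible huge component inside some $B_i$ lies in $\mathrm{High}(G)$ and needs one extra complement-components step. Hence $O(\log n)$ rounds of $O(\log n)$ passes suffice.

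Your lower-bound case analysis (a)--(c) is too vague to check, and its premise that the stream of $\overline G$ ``is not available'' is mistaken. In the \textsc{Multi-Disjointness} reduction of Proposition~\ref{NBlower}, each input-dependent pair $\{v^{(j)},u_q^{(i)}\}$ is controlled by a single party. So each party can stream the complement of its own pairs, and anyone can complement the fixed part. This gives hardness for every $H$ with $\overline H$ non-bipartite (Proposition~\ref{prop:induced-complement-nonbipartite}). For non-bipartite $H$ no adaptation is needed, since the YES-copies in that reduction are already induced.

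The step that closes the classification, which you never isolate, is this: if $H$ and $\overline H$ are both bipartite, pairing their $2$-colorings gives a proper $4$-coloring of $K_{|V(H)|}$, so $|V(H)|\le 4$. The only such graphs on $3$ or $4$ vertices are $P_3$, co-$P_3$, $P_4$, $C_4$ and $2K_2$. The last two still need an explicit reduction, which your proposal does not supply. The paper reduces $\mathrm{DISJ}_{n^2}$ to $\textsc{IndSub}(C_4)$ using four $n$-sets $U_1,U_2,V_1,V_2$: Alice's pairs go between $U_1$ and $V_1$, Bob's between $U_2$ and $V_2$, with cliques on $U_1$ and $V_2$ and index matchings $U_1$--$U_2$ and $V_1$--$V_2$. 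It gets $2K_2$ by complementation. Finally, as phrased (Alice toggles $\Theta(n^2)$ edges, Bob ``adds fixed edges''), your gadget is \textsc{Index}-like and gives only a one-pass bound. For $\Omega(n^2/p)$ against $p$ passes, Bob's edges must also encode his own $\Theta(n^2)$-bit set.
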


\subsubsection{The oriented case: $\textsc{Sub}(\vec H)$}

Next we study the oriented case. As a simple application of \cref{thmUNB}, we see that any non-bipartite $\vec H$ 
has $R^p_{2/3}(\textsc{Sub}(\vec H))=\Omega(n^2/p)$. On the other hand, a WO bipartite graph $\vec H$ has $\text{ex}(n,\vec H)=O(n^{2-1/\Delta'(H)})$, so we immediately get $D^1(\textsc{Sub}(\vec H))=\tilde O(n^{2-1/\Delta'(H)})$. So far things are like the way in \cref{thmUNB}.

The interesting cases are the
NWO bipartite graphs. Despite all having $\text{ex}(n,\vec H)=\Theta(n^2)$, the following result suggests that not all such $\vec H$ are hard instances. Indeed, some $\vec H$ yields $R^p_{2/3}(\textsc{Sub}(\vec H))=\Omega(n^2/p)$, 
which means that the non-well orientation of $\vec H$ governs the space complexity, but so far as we know, when the underlying $H$ is a forest, $\vec H$ admits efficient algorithms using constant pass and $\tilde O(\text{ex}(n,H))=\tilde O(n)$ space, which means that here the undirected forest structure dominates instead. These findings are summarized as follows.

\begin{theorem}\label{thmOMP}
    Let $\vec H $ be an NWO bipartite graph.
    \begin{itemize}
        \item If $H$ is an even cycle or a complete bipartite graph, then $R^p_{2/3}(\textsc{Sub}(\vec H))=\Omega(n^2/p)$.
        \item If $H$ is a forest and $r$ is the largest radius of its components, then $D^{2r}(\textsc{Sub}(\vec H))=\tilde O(n)$.
    \end{itemize}
\end{theorem}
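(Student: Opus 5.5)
The two parts of \cref{thmOMP} need different tools: a reduction from $\textsc{Multi-Disjointness}$ (or $\textsc{Set-Disjointness}$) for the lower bound, and a radius-indexed multi-pass algorithm for forests.

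\textbf{Lower bound.} I would reduce from $k$-party $\textsc{Multi-Disjointness}$ on $N=\Theta(n^2)$ coordinates, as in \cref{thmUNB}. Since $\vec H$ is NWO bipartite, under every bipartition some arc must go ``backwards''. For $H=C_{2m}$ or $H=K_{a,b}$ one can therefore locate a vertex $v$ of $\vec H$ with an in-arc and an out-arc; for $C_{2m}$, take a vertex where the orientation along the cycle does not alternate.

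The plan for the gadget is as follows.
\begin{itemize}
\item Take vertex classes $V_1,\dots,V_t$, each of size $\Theta(n)$, one per vertex of $\vec H$, or per suitable merged group.
\item The coordinates $(x,y)\in[n]^2$ index pairs between two designated classes $A$ and $B$. Each player $j$ inserts the arc of one fixed arc-type of $\vec H$ between the $x$-copy and $y$-copy exactly when $(x,y)$ is in that player's set.
\item All other arcs of the template are fixed in advance, for instance as perfect matchings oriented consistently with $\vec H$, so that every copy of $\vec H$ is forced to use one ``diagonal'' index $(x,y)$ across all player-controlled arcs.
\end{itemize}

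The orientation is what rules out spurious copies. An oriented blow-up of $\vec H$ in which every class-to-class bundle has a single direction behaves like a homomorphism-closed layered graph. Any copy of $\vec H$ must then map NWO vertices to classes that have both in- and out-bundles, and following the matchings forces the copy to be the intended one.

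The cycle case is the natural place to check this: a copy of $C_{2m}$ in the layered construction must wrap around the layers once. For $K_{a,b}$ I would instead use a blow-up in which parts are replaced by matched copies indexed by $x$ or $y$, so that a $K_{a,b}$ copy exists iff some $(x,y)$ lies in all sets.

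Standard simulation then gives $p\cdot S\cdot k\ge \Omega(N/k)$ (or the relevant bound for the game), which yields $\Omega(n^2/p)$ for fixed $k$.

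\textbf{Upper bound.} Root each component of $H$ at a center, so it has depth at most $r$. I would run $2r$ passes of a dynamic program from the leaves up. In passes $2i-1$ and $2i$ we determine, for every vertex $u$ of $\vec G$ and every pattern vertex $h$ at height $i$, whether $u$ can serve as the root of the oriented subtree $T_h$ with injective images. Each pass stores $O(1)$ bits per (vertex, pattern vertex) pair plus a small list of witnesses, giving $\tilde O(n)$ space.

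Injectivity is the subtlety. The fix is to store, for each $(u,h)$, up to $c|H|$ disjoint witness children per child-type, which fits in $\tilde O(n)$ memory since $|H|$ is constant. A greedy argument then shows that $|H|$-many stored witnesses suffice to extend any partial embedding avoiding at most $|H|$ used vertices. The two passes per level serve to first collect candidate neighbours along out-arcs and in-arcs and then verify them.

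At the root we combine the components, again greedily using the stored witness sets to keep them vertex-disjoint.

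\textbf{Main obstacle.} I expect the hardest step to be the lower-bound gadget for a general NWO orientation of $K_{a,b}$. There I must argue that no unintended copy of $\vec H$ appears using the fixed arcs. I would first try making all fixed bundles go from lower to higher layer index, with the single reversed bundle player-controlled, so that any copy must use a reversed arc and hence a player arc.
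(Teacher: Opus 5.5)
\textbf{Lower bound.} Both halves of your proposal have genuine gaps. For the lower bound, you never fix a gadget, and the template you describe does not work as stated.

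If every non-player arc is a perfect matching between copies, all vertices of a copy joined by fixed arcs carry the same index. So to encode a pair $(x,y)$, the player-controlled arc types must form an edge cut of $H$ separating the $x$-indexed vertices from the $y$-indexed ones. Removing fewer than $a$ edges leaves $K_{a,b}$ ($2\le a\le b$) connected, so with fewer than $a$ player arc types only the $n$ diagonal coordinates matter. To do better you must cut off all arcs at some vertex. That still leaves copies that use several copies of that vertex to be excluded, and that is exactly where the orientation analysis lives. Your ``homomorphism-closed layered graph'' remark asserts the outcome but does not prove it.

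The paper gets $n^2$ coordinates from two players by decoupling one vertex $v$. Two arcs at $v$ carry Alice's and Bob's sets, and the remaining arcs at $v$ are complete crossing bundles (the sets $E_5$ in \cref{kstLB}), not matchings. Spurious copies are then excluded by three ingredients:
\begin{itemize}
\item a case split on where WO vertices lie, reversing one player's bundle if there are none;
\item degree counting: $d(a_k^{(i)})=t=|B(\vec H')|$ forces $N(a_k^{(i)})=B(\vec H')$;
\item WO-ness arguments on the player subgraph.
\end{itemize}

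For cycles, ``a copy must wrap around the layers once'' is false in the natural two-player layout. Suppose non-adjacent arcs $e_1,e_2$ are player-controlled and split the cycle into fixed segments $P,Q$. A cycle can then traverse two different copies of $P$, using two player-$1$ arcs at a common endpoint and two player-$2$ arcs at a common endpoint. Its length is $2|P|+4$. When $\vec H$ is two internally disjoint directed paths of length $3$ ($|P|=1$, $|Q|=3$), this cycle is a copy of $\vec H$ in a disjoint instance.

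The paper's cycle reduction uses exactly this layout. For this $\vec H$ it is forced to $k=3$, and $x_{i'}\to y_j\to w_j\to z_i\leftarrow w_{j'}\leftarrow y_{j'}\leftarrow x_{i'}$ with $i\ne i'$ is such a copy. So its claim that a copy must contain both fixed segments has the same hole; do not model your argument on it.

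A layout that does work is to decouple one cycle vertex $v$:
\begin{itemize}
\item The two cycle arcs at $v$ are the player arcs, and the path $C_{2m}-v$ is copied $n$ times.
\item Internal path vertices have degree $2$, so a copy using a fixed arc traverses a whole path copy and can close only through a common coordinate.
\item Copies lying inside the player arcs are excluded by taking $v$ WO when $\vec H$ has a WO vertex, since the player subgraph is then WO.
\item If every vertex of $\vec H$ is NWO, note instead that one bipartition class of such a copy would consist of path endpoints, which are WO in the player subgraph.
\end{itemize}

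\textbf{Upper bound.} Your injectivity fix fails. A bit ``$u$ can root $T_h$'' plus a bounded list of witness children carries no information about collisions below the children. So the greedy claim that $c|H|$ stored witnesses always extend a partial embedding is false.

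Let $\vec H$ be $y_1\leftarrow x_1\leftarrow h\to x_2\to y_2$. Suppose the first $c|H|$ out-neighbours of $u$ to arrive all have $z$ as their only out-neighbour, while two later out-neighbours have private out-neighbours $z_1,z_2$. Then $\vec G\supseteq\vec H$, but every embedding through $u$'s stored witnesses sends $y_1,y_2$ to $z$, so you get a false negative. In the other direction, suppose the bit for $u$ is computed only from children's bits and distinct witnesses. If every out-neighbour of $u$ points only to $z$, it reports a copy that does not exist.

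A sound witness-based dynamic program would need representative families of whole subtree embeddings, with a proof. The paper avoids this with deterministic colour coding:
\begin{itemize}
\item it uses the $O(\log^2 n)$ colourings of \cref{ColorCoding};
\item its tables $D^c_x[S]$ and $W^c_{y,x}[S,s]$ are indexed by colour sets, with disjoint colour sets required at every merge, so colourful embeddings are automatically injective;
\item one pass per level decides in $r$ passes, and $r$ more passes recover the copy (\cref{forest}).
\end{itemize}
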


Despite believing that any bipartite $\vec H$  having a $\text{NWOC}_{1+C}$ component should be a multi-pass hard instance with $R^p_{2/3}(\textsc{Sub}(\vec H))=\Omega(n^2/p)$, this general multi-pass lower bound remains open.
That being said, we know much more when restricted to single pass algorithms.  We can deduce $D^1(\textsc{Sub}(\vec H))=\Omega(n^2)$ whenever $\vec H$ contains a $\text{NWOC}_2$ or $\text{NWOC}_{1+C}$ component, as in the first part of \cref{thmOSPC}.

We are left with the case where every component of $\vec H$ is either WO or a tree containing exactly one NWO vertex. In this case, if $H$ is a forest, then we have a single-pass $\tilde O(n)$-space algorithm as in the second part \cref{thmOSPC}. This immediately implies the third part of \cref{thmOSPC}, because the NWO trees in $\vec H$ requires only $\tilde O(n)$ space while the WO part can be handled by the naive approach.

\begin{theorem}\label{thmOSPC}
    Let $\vec H$ be an NWO bipartite graph.
    \begin{itemize}
        \item If $\vec H\ni \text{NWOC}_2$ or $\vec H\ni \text{NWOC}_{1+c}$, then $R^1_{2/3}(\textsc{Sub}(\vec H))=\Omega(n^2)$.
        \item Else if $\vec H$ is a forest, then $R^1_{2/3}(\textsc{Sub}(\vec H))=\tilde O(n)$.
        \item Otherwise each component of $\vec H$ is either WO or a tree with exactly one NWO vertex, and $R^1(\textsc{Sub}(\vec H))=\tilde O(n^{2-1/\Delta'(H)})$.
    \end{itemize}
\end{theorem}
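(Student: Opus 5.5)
The proof has three parts, matching the three bullets of \cref{thmOSPC}.

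\textbf{Lower bound.} I would reduce from one-way \textsc{Index} on $\Theta(n^2)$ bits. Alice holds a bit string indexed by pairs $(u,v)\in [N]\times[N]$ with $N=\Theta(n)$. She streams a bipartite graph $\vec G_A$ on parts $U$ and $V$, inserting the arc $u\to v$ exactly when bit $(u,v)$ equals $1$. Bob holds an index $(u^*,v^*)$ and appends gadget arcs so that $\vec G$ contains $\vec H$ if and only if $u^*\to v^*$ is present.

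\emph{Case $\mathrm{NWOC}_2$.} Take two NWO vertices $a,b$ in one component $C$. Consider a shortest path between them; along it there is an arc $x\to y$ such that one endpoint has an incoming arc and the other an outgoing arc outside this arc. I would map $x\mapsto u^*$ and $y\mapsto v^*$. Bob realizes every other arc of $C$, and all other components of $\vec H$, on fresh private vertices. He attaches an in-arc into $u^*$ and an out-arc from $v^*$ wherever the orientation requires them.

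To make sure no spurious copy appears, the arcs Alice places are all oriented $U\to V$. So any vertex of $U\cup V$ that is NWO in a copy must use a Bob arc. Bob only touches $u^*$ and $v^*$, so every NWO vertex of a copy lies among his private vertices, $u^*$, and $v^*$. I would make Bob's private gadget WO except at the designated points, so that the two NWO vertices of $C$ (more precisely, the NWO structure forced by the path) can only be realized using the arc $u^*\to v^*$.

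\emph{Case $\mathrm{NWOC}_{1+C}$.} A cycle in the component through the NWO region plays the role of the second NWO vertex. A copy of the cycle must use a Bob arc at both ends of some Alice arc. Since Bob's arcs attach only at $u^*$ and $v^*$, the only Alice arc a copy can use is $u^*\to v^*$.

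In both cases \textsc{Index} needs $\Omega(n^2)$ bits of one-way communication, which gives $R^1_{2/3}=\Omega(n^2)$.

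\textbf{Main obstacle.} The hardest step is the careful case analysis over the shape of $C$. It must show that every copy of $\vec H$ in $\vec G$ is forced through $u^*\to v^*$. This includes copies that use Alice's arcs in unintended places, and copies of the other components of $\vec H$ landing in Alice's part. I would handle the latter by making Alice's graph $\vec H$-free component-wise, e.g.\ by blowing up only a $\vec C$-free pattern.

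\textbf{Forest algorithm.} Every component of $\vec H$ is a tree with at most one NWO vertex. Rooting each tree at its NWO vertex makes it an in-star-like structure glued to an out-star-like structure, each side WO. I would use the small-forest-preserving certificate: maintain, in one pass, $O(|H|)$ edge-disjoint $(c|H|)$-cores, one per orientation side. For each vertex we keep up to $c|H|$ in-arcs and $c|H|$ out-arcs greedily, with tie-breaking to leaf-disjoint neighbours, for $\tilde O(n)$ space in total. The standard greedy exchange argument then shows that if a copy of a rooted tree exists, a copy exists inside the kept arcs, because a discarded arc at a vertex means that vertex already has at least $c|H|$ alternative kept arcs. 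Those alternatives let us reroute around the at most $|H|$ vertices already used.

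\textbf{Mixed case.} The WO components can be detected by storing $ex(n,\vec H_{WO})+1$ arcs via \cref{Oex}, which costs $\tilde O(n^{2-1/\Delta'(H)})$. The NWO trees are handled by the certificate above. I would run both in parallel and combine them by a disjointness argument, again using the $c|H|$ slack to avoid the vertices already used.
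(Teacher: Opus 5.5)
\textbf{Lower bound.} Your gadget is the paper's: overlap one arc $\vec e$ of $\vec H$ with the index arc, and let Bob realize $\vec H-\vec e$. But excluding spurious copies is the whole difficulty, and your arguments for it fail.

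In the $\mathrm{NWOC}_{1+C}$ case not already covered by $\mathrm{NWOC}_2$, every cycle of the component is WO. Pass-through vertices on an even oriented cycle come in even number, and each is NWO in $\vec H$. A WO cycle embeds entirely inside Alice's WO graph. A cycle through $u^*,v^*$ can also detour along $u^*\to v^{(j)}\leftarrow u^{(i)}\to v^*$. So ``a copy of the cycle must use a Bob arc at both ends of some Alice arc'' and ``the only Alice arc a copy can use is $u^*\to v^*$'' are both false.

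Making Alice's graph avoid these patterns is not an option. By Corollary~\ref{Oex}, a WO graph avoiding a fixed WO cycle has $o(n^2)$ arcs, which destroys the $\Omega(n^2)$ bound. The exclusion must instead come from the choice of $\vec e$. In Proposition~\ref{pure} the paper restricts to shortest cycles at minimum distance $m$ from the unique NWO vertex $w$. Among these it takes $\vec C^*$ maximizing the number $M$ of its vertices at distance $m$, and puts $\vec e=(u,v)$ on $\vec C^*$ with $A(\vec C^*)\not\subseteq\{u,v\}$. Then:
\begin{itemize}
\item cycles inside Alice's part are reached from $w$ only via $u^{(s)},v^{(t)}$, so they are farther from $w$ or have fewer than $M$ closest vertices;
\item cycles mixing Alice and Bob arcs contract to shorter cycles of $\vec H$;
\item counting shortest cycles with parameters $(m,M)$ gives the contradiction.
\end{itemize}

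The $\mathrm{NWOC}_2$ case has the same gap. Knowing that the NWO vertices of a copy lie among Bob's vertices does not control how the copy connects them through $E_1$, and you never choose $\vec e$ extremally. Your arc-selection property need not even exist: in $v_1\leftarrow v_2\leftarrow v_3\to v_4\to v_5$, no endpoint of an arc on $v_2v_3v_4$ has an incoming arc other than that arc. The paper uses Lemma~\ref{CritPath} to obtain an NWO cycle, a directed $P_4$, or a critical path, none of which fits inside the WO graph $E_1$. It places $\vec e$ on a shortest NWO cycle, a longest directed path, or a shortest critical path. It then shows that any such structure meeting $E_1$ yields a strictly more extreme one in $\vec H$.

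\textbf{Forest algorithm.} Keeping the first $k$ in-arcs and first $k$ out-arcs of each vertex is not a certificate, for any $k$. Let $\vec H$ be $v_1\to v_2\to v_3\leftarrow v_4$, which has one NWO vertex. Stream $u_1\to c,\dots,u_k\to c$, then $b\to w_1,\dots,b\to w_k$, then $a\to b$, then $b\to c$. The last arc is dropped because $b$ and $c$ are saturated. In the kept graph, $b$ is the only vertex with both an in-arc and an out-arc, and its out-neighbours $w_i$ have no second in-neighbour, so there is no copy. Yet $a\to b\to c\leftarrow u_1$ is in $\vec G$. Your exchange step fails because the kept alternatives can be dead ends.

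What is needed is the core property of Lemma~\ref{lem:kcore-certificate}. Vertices outside the $k$-core keep all their arcs and core vertices keep $k$ arcs into the core, so an embedding that enters the core can always be continued. The paper computes this after the pass with $\ell_0$-samplers, whose linearity allows peeled vertices to be deleted. It combines this with a random $3$-coloring that splits the tree at its NWO vertex into two WO trees. These live in the WO subgraphs of arcs $c_1\to c_2$ and $c_2\to c_3$, where any color-respecting undirected embedding is automatically correctly oriented. This algorithm is randomized; a correct deterministic greedy like yours would settle the paper's open Question~\ref{Q7}.

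In the mixed case, disjointness from the WO part comes from density, not from slack. With more than $\mathrm{ex}(n,\vec H_{\mathrm{WO}})+O(n)$ arcs, deleting the $O(1)$ vertices of a tree copy still leaves a copy of the WO part. Otherwise all arcs can be stored.
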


\subsubsection{The oriented induced case: $\textsc{Indsub}(\vec H)$}

Finally we extend our results to the oriented induced subgraph problem. By a simple reduction, $\vec H$ remains a hard instance if $H$ is a hard instance in \cref{thmUI}, while from $\textsc{Set-Disjointness}$ we see that $\vec H$ is a hard instance even for $H=P_3,P_4$. Thus the only exceptional case is  co-$P_3$, where an efficient algorithm exists like in \cref{thmUI}.

\begin{theorem}\label{thm6}
    $\textsc{Indsub}(\vec H)$ for a fixed $\vec H$ admits the following dichotomy:
    \begin{itemize}
        \item If $\vec H$ is a co-$P_3$,  then there is a  $p=\tilde O(1)$ such that  $D^p(\textsc{Indsub}(\vec H))=\tilde O(n)$.
        \item Otherwise, $R^p_{2/3}(\textsc{IndSub}(\vec H))=\Omega(n^2/p)$.
    \end{itemize}
\end{theorem}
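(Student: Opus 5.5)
The plan has two parts: an algorithm for $\vec H$ being co-$P_3$, and lower bounds for every other $\vec H$. The lower bounds split according to the underlying graph $H$.

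\textbf{Upper bound.} The underlying graph of co-$P_3$ is an edge plus an isolated vertex. Every orientation of it is the same up to isomorphism: one arc $\vec{uv}$ together with a vertex $w$ not adjacent to either $u$ or $v$. So the induced problem asks whether some arc $(u,v)$ and some third vertex $w$ exist with $w$ adjacent to neither endpoint. Orientation plays no role, so I would reuse the undirected co-$P_3$ algorithm from \cref{thmUI}, which treats each arc as an undirected edge and uses $\tilde O(1)$ passes and $\tilde O(n)$ space.

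If I had to write that algorithm directly, I would proceed as follows.
\begin{itemize}
\item In one pass, compute all degrees in $O(n\log n)$ space.
\item A witness exists iff some edge $uv$ satisfies $|N(u)\cup N(v)|<n$, that is, $\deg u+\deg v-|N(u)\cap N(v)|<n$.
\item If some vertex $w$ is non-isolated and non-universal, then $w$ has a neighbour $x$ and a non-neighbour $y$. I expect a short case analysis on the triple $x,w,y$, using a small number of extra passes to look for a suitable neighbour, to either produce a witness or force the graph into a complete multipartite-like structure that can be checked in $\tilde O(n)$ space.
\end{itemize}
Since the undirected case is already proven earlier, I simply invoke it.

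\textbf{Lower bound when $H$ is undirected-hard.} Suppose $H\notin\{P_3,P_4,\text{co-}P_3\}$. Then \cref{thmUI} gives $\Omega(n^2/p)$ for $\textsc{Indsub}(H)$, and I would transfer this to the oriented problem. The natural approach is to inspect the proof of \cref{thmUI}. Its hard instances come from reductions in which each player holds edges of a structured graph, for example a blow-up of $H$ over parts $V_1,\dots,V_k$ with the players' sets encoded between parts. Fix an orientation of $H$ that matches $\vec H$ on $V(H)$. Each player orients every edge between $V_i$ and $V_j$ according to the arc of $\vec H$ between $i$ and $j$; this is done locally, so no extra communication is needed. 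In the resulting oriented instance, an induced copy of $\vec H$ has underlying graph an induced copy of $H$. The point to check is the converse: the induced copies of $H$ that the reduction relies on should be ``transversal'', meaning they take one vertex per part. Any transversal copy then inherits exactly the orientation of $\vec H$.

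\textbf{Lower bound for $P_3$ and $P_4$.} Here $H\in\{P_3,P_4\}$, so $\vec H$ is an orientation of $P_3$ or $P_4$. I would reduce directly from $\textsc{Set-Disjointness}$ on $\Theta(n^2)$ bits.
\begin{itemize}
\item Take vertex sets $A$ and $B$ of size $n/2$ each, indexed so that pairs $(a,b)$ correspond to coordinates.
\item Alice inserts the arcs $a\to b$ for every pair with $x_{ab}=1$.
\item Bob inserts arcs that make every pair $a,b$ adjacent except at his $1$-coordinates, together with gadget arcs such as a universal structure oriented consistently.
\item The aim is that all induced paths present have the wrong orientation pattern, unless some pair is where Alice has an arc but Bob does not, and that such a pair creates an induced $\vec H$.
\end{itemize}
Because $G$ must be simple, collisions where Alice and Bob both insert the same pair need care. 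To handle this, I would instead take $\Theta(n)$-sized blocks and use Alice's and Bob's edges on disjoint pairs. An index $i$ with $x_i=y_i=1$ then produces a length-2 path $u\to v\to w$ (or the appropriate orientation), whose endpoints are made non-adjacent by encoding each index as a triple of vertices. For non-witnessing indices, a default tournament-like completion ensures no induced $\vec H$ with the right orientation appears. Each orientation of $P_3$ (in-star, out-star, directed path) and of $P_4$ needs its own gadget, chosen so that the background has only induced paths of forbidden orientation, for instance a transitive tournament.

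\textbf{Main obstacle.} I expect this last step to be the hardest: designing a dense background that contains no induced copy of the specific oriented $P_3$ or $P_4$, while a single intersecting coordinate creates one. Transitive tournaments plus a bipartite ``all arcs forward'' structure are my first attempt. By contrast, the $\Omega(n^2/p)$ bound follows from the standard $\Omega(N)$ randomized communication complexity of disjointness with $N=\Theta(n^2)$, divided over the $p$ passes.
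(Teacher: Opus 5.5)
\textbf{The co-$P_3$ upper bound and the $H\notin\{P_3,P_4,\text{co-}P_3\}$ case.} Your upper bound is the paper's argument: orientation is irrelevant, so Corollary~\ref{co-p3} applies.

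For the undirected-hard patterns you take a different, white-box route. The paper uses the black-box Proposition~\ref{prop:oriented-to-undirected}: random $m$-wise independent orientations, repeated $2^{m+1}$ times. That loses an $O(m2^m)$ factor in passes but never looks inside the gadgets. You instead orient each gadget deterministically by vertex type, which avoids randomness and the pass blow-up but must be checked gadget by gadget.

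It does go through. Every gadget behind \cref{thmUI} (that of Proposition~\ref{NBlower}, its complement, the $C_4$ gadget, and its complement) has no induced $H$ at all on NO instances, so no orientation of it has an induced $\vec H$. Its planted YES copy has one vertex of each type $x\in V(H)$ playing role $x$. Two details must be added:
\begin{itemize}
\item ``Transversal'' has to mean that the vertex of type $x$ plays role $x$; one vertex per part is not enough.
\item Some pairs are joined in the gadget although their types are not adjacent in $H$: the cliques $E_3$ of the $C_4$ gadget, and many pairs in the complemented gadgets, including all the input-dependent ones. These have no arc of $\vec H$ to copy, so you must orient them by a fixed default rule. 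This is harmless by the NO-instance argument.
\end{itemize}

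\textbf{The gap: $H\in\{P_3,P_4\}$.} This is the only part of the lower bound not inherited from the undirected setting, and you leave it as an open ``main obstacle''. The hints you give would fail as stated. Giving each of the $\Theta(n^2)$ coordinates its own triple of vertices needs $\Theta(n^2)$ vertices and yields only a linear bound. A tournament-like completion of the background also makes the two endpoints of the planted path adjacent.

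The missing idea (Propositions~\ref{prop:oriented-P3-hard} and~\ref{prop:oriented-P4-hard}) is to let coordinates share vertices in layers $X_1,X_2,X_3$ of size $n$, and to join the two endpoint layers completely \emph{except along a perfect matching}. Then non-adjacency of the endpoints of an induced path forces Alice's and Bob's arcs onto the same coordinate.
\begin{itemize}
\item For $a\to b\to c$: for coordinate $(i,j)$, Alice adds $x^{(1)}_i\to x^{(2)}_j$ and Bob adds $x^{(2)}_j\to x^{(3)}_i$. Also add all arcs $x^{(1)}_i\to x^{(3)}_k$ with $i\neq k$. Every directed $2$-path runs $X_1\to X_2\to X_3$, and it is induced only if it is $x^{(1)}_i\to x^{(2)}_j\to x^{(3)}_i$.
\item For $a\to b\leftarrow c$ (the out-star follows by reversal): Alice and Bob add arcs from $X_1$ resp.\ $X_2$ into $X_3$. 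Make $X_1$ and $X_2$ transitive tournaments, and join them by all pairs except $\{x^{(1)}_i,x^{(2)}_i\}$, oriented $x^{(1)}_i\to x^{(2)}_j$ and $x^{(2)}_i\to x^{(1)}_j$ for $i>j$. Then the only non-adjacent pairs of possible sources are matching pairs, and these have no common out-neighbour outside $X_3$.
\item The orientations of $P_4$ follow by attaching a fourth layer through a perfect matching in the appropriate direction.
\end{itemize}
Without such constructions, your proposal does not prove the theorem for the orientations of $P_3$ and $P_4$.
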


\subsection{Paper Organization}

In \cref{sec:prel} we introduce notation and background from extremal graph theory and communication complexity.
\Cref{sec:section3} studies $\textsc{Sub}(H)$ and proves Theorems~\ref{thmUNB} and~\ref{thmUB}.
In \cref{sec:section4} we turn to the induced setting and prove \cref{thmUI} for $\textsc{IndSub}(H)$.
\Cref{sec:section5} considers the oriented variant $\textsc{Sub}(\vec H)$ and establishes Theorems~\ref{thmOMP} and~\ref{thmOSPC}.
In \cref{sec:section6} we analyze induced oriented subgraph detection $\textsc{IndSub}(\vec H)$ and prove \cref{thm6}.
We conclude in \cref{sec:open problem} with several open questions.

%% file: prel.tex
\section{ Definitions and Preliminaries\label{sec:prel}}
\theoremstyle{plain}  
\newtheorem*{thm}{Theorem}
\subsection{Graph theory}
Our notations are mostly standard. We will consider undirected and oriented graphs, all assumed to be simple. For an undirected graph $G$, an edge is denoted by an unordered pair of vertices $\{u,v\}$. On the other hand, we use the vector notation $\vec H$ for an oriented graph, where an arc from vertex $u$ to $v$ is denoted by the pair $(u,v)$. Whenever an oriented graph $\vec H$ is specified, $H$ will denote the underlying undirected graph, and whenever we refer to a cycle $\vec C_{\ell}$ in $\vec H$, we mean a cycle $C_\ell$ in $H$ whose edges are oriented as in $\vec H$.
We have the following definition about orientations.

\begin{definition}[well-oriented]
    Let $\vec H$ be an oriented bipartite graph. We say that $H$ is well-oriented if there is a bipartition of $H$ into parts $A(H)$, $B(H)$ such that 
    all arcs are oriented from $A(H)$ to $B(H)$, and non-well-oriented otherwise.
    
    Also, in an oriented graph $\vec H$, we say a vertex is well-oriented in $\vec H$ if its in-degree in $\vec H$ or out-degree in $\vec H$ is $0$, and non-well-oriented in $\vec H$ otherwise. We always write WO for well-oriented and NWO for non-well-oriented.
\end{definition}

Given the pattern graph in the subgraph finding problem, our space complexity bounds are highly related to its Tur\'{a}n number, with definition and properties as follows.

\begin{definition}[Tur\'{a}n number]
    Let $H$ be a fixed graph, then the Tur\'{a}n number $\mathrm{ex}(n,H)$ is the maximum number of edges an $n$-vertex $H$-free graph can have. Likewise, let $\vec H$ be a fixed oriented graph, then the Tur\'{a}n number $\mathrm{ex}(n,\vec H)$ is the maximum number of edges an $n$-vertex $\vec H$-free oriented graph can have. 
\end{definition}

\begin{thm}[Erdős-Stone~\cite{EST46}]
    Let $H$ be a fixed graph with chromatic number $\chi(H)$, then
    \begin{align*}
        \mathrm{ex}(n,H)=\left(\frac{\chi(H)-2}{\chi(H)-1}+o(1)\right)\binom n2.
    \end{align*}
\end{thm}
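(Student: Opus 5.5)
The statement is the classical Erdős--Stone theorem, which the paper only cites. I would prove it by matching a lower and an upper bound, writing $r=\chi(H)$ throughout. For the lower bound, take the Turán graph $T_{r-1}(n)$: the complete $(r-1)$-partite graph with parts as equal as possible. Any copy of $H$ inside it would give a proper $(r-1)$-colouring of $H$, by colouring each vertex with the index of its part. This is impossible, so $T_{r-1}(n)$ is $H$-free. It has $\bigl(\tfrac{r-2}{r-1}+o(1)\bigr)\binom n2$ edges, which gives $\mathrm{ex}(n,H)\ge \bigl(\tfrac{r-2}{r-1}-o(1)\bigr)\binom n2$. (When $r=2$ the claimed density is $0$ and this part is trivial.)

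For the upper bound, let $t=|V(H)|$ and let $K_r(t)$ denote the complete $r$-partite graph with all parts of size $t$. Colouring $H$ properly with $r$ colours embeds $H$ into $K_r(t)$, since each colour class has at most $t$ vertices. It therefore suffices to prove the following: for every $\varepsilon>0$ and every $t$, every $n$-vertex graph with at least $\bigl(\tfrac{r-2}{r-1}+\varepsilon\bigr)\binom n2$ edges contains $K_r(t)$ once $n$ is large.

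I would prove this by induction on $r$, in three steps.
\begin{enumerate}
\item \emph{Minimum degree.} Repeatedly delete vertices of degree below $\bigl(\tfrac{r-2}{r-1}+\varepsilon/2\bigr)m$, where $m$ is the current number of vertices. A standard counting argument shows the process stops at a subgraph $G'$ on $m\ge c(\varepsilon)\sqrt n$ vertices with minimum degree at least $\bigl(\tfrac{r-2}{r-1}+\varepsilon/2\bigr)m$.
\item \emph{Induction.} By Turán's theorem $G'$ contains $K_{r-1}$. More to the point, the induction hypothesis, applied with a much larger part size $T\gg t$, yields a copy of $K_{r-1}(T)$ in $G'$, with parts $V_1,\dots,V_{r-1}$.
\item \emph{Extension.} Let $W$ be the set of vertices outside $\bigcup_i V_i$ having at least $t$ neighbours in every $V_i$. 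The minimum-degree condition gives, for each outside vertex $v$, at least $\bigl(\tfrac{r-2}{r-1}+\varepsilon/2\bigr)(r-1)T - O(T/m)$ edges from $v$ into $\bigcup_i V_i$. Double counting then forces $|W|\ge \varepsilon' m$ for some $\varepsilon'>0$: a vertex missing some $V_i$ almost entirely has at most $(r-2)T+(t-1)$ such edges, which is too few. Each $w\in W$ selects a $t$-subset of every $V_i$, and there are at most $\binom Tt^{r-1}$ possible choices. If $|W|\ge t\binom Tt^{r-1}$, pigeonhole gives $t$ vertices of $W$ with the same choice, and together with those chosen subsets they form $K_r(t)$.
\end{enumerate}

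The main obstacle is balancing the parameters in the extension step, since $T$ must be large enough for the degree counting while $\binom Tt^{r-1}$ must stay below $\varepsilon' m$. I would take $T=\lfloor\delta\log m\rfloor$ for a small $\delta=\delta(\varepsilon,r,t)$. Then $\binom Tt^{r-1}\le m^{(r-1)t\delta}=o(m)$, so the pigeonhole succeeds. For the induction to provide $K_{r-1}(T)$ with $T$ growing like $\log m$, the inductive statement itself must be phrased quantitatively, with part size $c\log n$. Alternatively, one can avoid logarithms altogether with the dependent/supersaturation approach: count $K_{r-1}(T)$ copies and apply Kővári--Sós--Turán to an auxiliary bipartite graph. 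I would first try the quantitative induction, since $\log n$ part sizes are exactly what the classical argument delivers, and only fall back to supersaturation if the degree bookkeeping breaks.

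Combining the lower and upper bounds gives $\mathrm{ex}(n,H)=\bigl(\tfrac{r-2}{r-1}+o(1)\bigr)\binom n2$.
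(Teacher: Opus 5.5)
The paper gives no proof of this statement. It is quoted from Erd\H{o}s and Stone and used only for the consequence that $\mathrm{ex}(n,H)=o(n^2)$ if and only if $H$ is bipartite, so there is nothing in the paper to compare your argument with. Your outline is the classical proof: the Tur\'an graph for the lower bound, then embedding $H$ in $K_r(t)$, minimum-degree cleaning, induction on $r$ and a pigeonhole extension for the upper bound. It goes through, but two points need correcting.

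\textbf{The edge count into $U$ is misstated.} In the extension step, the minimum-degree condition says nothing about an individual outside vertex. The set $U=\bigcup_i V_i$ has only $(r-1)T$ vertices, so a vertex of degree $\bigl(\tfrac{r-2}{r-1}+\tfrac{\varepsilon}{2}\bigr)m$ may have no neighbour in $U$ at all. The double count you invoke must run through the degrees of the vertices of $U$. Writing $\overline U$ for the complement of $U$ in $G'$,
\[
e(U,\overline U)\ \ge\ (r-1)T\Bigl(\bigl(\tfrac{r-2}{r-1}+\tfrac{\varepsilon}{2}\bigr)m-(r-1)T\Bigr)\ =\ (r-2)Tm+\tfrac{\varepsilon}{2}(r-1)Tm-(r-1)^2T^2 .
\]
On the other hand, $e(U,\overline U)\le |W|(r-1)T+m\bigl((r-2)T+t-1\bigr)$. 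Hence
\[
|W|\ \ge\ \tfrac{\varepsilon}{2}m-\tfrac{(t-1)m}{(r-1)T}-(r-1)T\ \ge\ \tfrac{\varepsilon}{4}m-(r-1)T
\]
as soon as $T\ge 4t/(\varepsilon(r-1))$. What is true is the averaged version of your per-vertex bound, with error $O(T^2/m)$ rather than $O(T/m)$.

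\textbf{The ``main obstacle'' does not exist.} The computation above only needs $T$ to be a constant depending on $r$, $t$ and $\varepsilon$. For such a fixed $T$, the pigeonhole threshold $t\binom{T}{t}^{r-1}$ is also a constant, while $|W|\ge \tfrac{\varepsilon}{4}m-O(1)\to\infty$ because $m\to\infty$. So you do not need logarithmic part sizes. The plain induction hypothesis (fixed part size, $n$ large) closes the induction: $G'$ has edge density at least
\[
\tfrac{r-2}{r-1}+\tfrac{\varepsilon}{2}\ \ge\ \tfrac{r-3}{r-2}+\tfrac{1}{(r-1)(r-2)},
\]
so the case $r-1$, applied with this fixed excess and part size $T$, gives $K_{r-1}(T)$ in $G'$ once $m$ is large. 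You also need to state the base case $r=2$. It is the same extension step started from an arbitrary $T$-set of vertices, or K\H{o}v\'ari--S\'os--Tur\'an.

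Your $c\log n$ version (the Bollob\'as--Erd\H{o}s strengthening) is also correct, with two adjustments. Shrink $T$ to $\delta\log m$ with $(r-1)\delta\ln 2<\tfrac12$, and use $\binom{T}{t}\le 2^{T}$ rather than $m^{t\delta}$. Take the output part size to be $c_r\log n$ with $c_r$ small compared with $\varepsilon\delta$. It buys only a quantitative refinement that neither the theorem nor the paper needs.
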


\begin{thm}[Valadkhan~\cite{Payam07}]
    Let $\vec H$ be a fixed oriented graph. Let $z(H)$ be the minimum integer $k$ such that $\vec H$ is homomorphic to any tournament on $k$ vertices, or $z(H)=\infty$ otherwise, then
    \begin{align*}
        \mathrm{ex}(n,\vec H)=\left(\frac{z(H)-2}{z(H)-1}+o(1)\right)\binom n2.
    \end{align*}
\end{thm}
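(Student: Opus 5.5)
The plan is to prove matching lower and upper bounds on $\mathrm{ex}(n,\vec H)$. The lower bound comes from a blow-up of a tournament that $\vec H$ does not map into. The upper bound reduces to the Erdős--Stone theorem, followed by a Ramsey-type cleaning step that makes the orientations between parts uniform.

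\emph{Lower bound.} Suppose first that $z=z(H)<\infty$. By minimality of $z$ there is a tournament $T$ on $z-1$ vertices such that $\vec H$ is not homomorphic to $T$. Partition $[n]$ into $z-1$ nearly equal parts $V_1,\dots,V_{z-1}$, one for each vertex of $T$. Orient every pair $u\in V_i$, $v\in V_j$ ($i\neq j$) from $V_i$ to $V_j$ exactly when $(i,j)$ is an arc of $T$, and put no arcs inside a part. The underlying graph is the Turán graph, so it has $\bigl(\frac{z-2}{z-1}+o(1)\bigr)\binom n2$ edges. Suppose this oriented graph contained a copy of $\vec H$. Since each part is independent, adjacent vertices of $\vec H$ would land in distinct parts, and arcs would be oriented as in $T$. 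Sending each vertex of $\vec H$ to the index of its part would then be a homomorphism $\vec H\to T$, a contradiction. If $z=\infty$, the same construction works for every $k$ with a tournament on $k$ vertices that $\vec H$ does not map into. Letting $k\to\infty$ gives density $1-o(1)$, which is the intended reading of the formula in this case.

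\emph{Upper bound.} Let $\vec G$ be an $n$-vertex oriented graph with at least $\bigl(\frac{z-2}{z-1}+\varepsilon\bigr)\binom n2$ arcs. The underlying graph $G$ has this density, so by the Erdős--Stone theorem (in its quantitative form) $G$ contains a complete $z$-partite subgraph with parts $U_1,\dots,U_z$, each of size $t$, where $t\to\infty$ as $n\to\infty$ (indeed $t=\Omega(\log n)$).

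Next I clean the orientations pair by pair. For the pair $(U_1,U_2)$, one of the two directions carries at least $t^2/2$ of the $t^2$ arcs. By the Kővári--Sós--Turán bound, these arcs contain a $K_{s,s}$ with $s=\Omega(\log t)$. Replace $U_1,U_2$ by the two sides of this $K_{s,s}$, and shrink every other part arbitrarily to size $s$. All arcs between the new $U_1$ and $U_2$ now go in one direction. Repeat for each of the $\binom z2$ pairs; shrinking later parts never destroys the uniformity already achieved, since that property passes to subsets.

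After $\binom z2$ rounds (a constant number of iterated logarithms of $t$), each part still has size at least $|V(\vec H)|$ once $n$ is large. The final parts form a blow-up of some tournament $T'$ on $z$ vertices. By definition of $z$, $\vec H$ is homomorphic to $T'$. Map each vertex of $\vec H$ to a distinct vertex of the part indexed by its image under this homomorphism; this is an injective embedding of $\vec H$ into $\vec G$. Hence $\mathrm{ex}(n,\vec H)\le\bigl(\frac{z-2}{z-1}+\varepsilon\bigr)\binom n2$ for large $n$.

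\emph{Main obstacle.} The delicate point is making sure the iterated cleaning leaves parts of size at least $|V(\vec H)|$. This needs the quantitative Erdős--Stone bound ($t\to\infty$, not merely some fixed $t$) together with the fact that only constantly many rounds are performed. Since each round reduces the part size only from $t$ to $\Omega(\log t)$, any unbounded $t$ suffices.
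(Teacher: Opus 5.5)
Your proof is correct. The paper gives no proof of this statement: it is quoted from \cite{Payam07} and used only through the consequence that $\mathrm{ex}(n,\vec H)=o(n^2)$ iff $z(\vec H)=2$. Your argument is the standard proof of this result. For the lower bound you blow up a tournament on $z-1$ vertices that $\vec H$ does not map into. For the upper bound you apply Erd\H{o}s--Stone and then perform $\binom z2$ rounds of K\H{o}v\'ari--S\'os--T\'uran cleaning to obtain a blow-up of a tournament on $z$ vertices.

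Three small remarks. First, the ``main obstacle'' is milder than you suggest. The plain Erd\H{o}s--Stone theorem (for each fixed $t$, $G\supseteq K_z(t)$ once $n$ is large) already suffices, without the quantitative $t=\Omega(\log n)$ version. Fix $t$ in advance, depending only on $|V(\vec H)|$ and $z$, so that $\binom z2$ iterations of $t\mapsto\Omega(\log t)$ still leave at least $|V(\vec H)|$ vertices per part.

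Second, the case $z=\infty$ is exactly the case where $\vec H$ contains a directed cycle. An acyclic $\vec H$ on $h$ vertices maps into the transitive subtournament on $h$ vertices that every tournament on $2^{h-1}$ vertices contains. When $\vec H$ does contain a directed cycle, the transitive tournament on $n$ vertices gives $\mathrm{ex}(n,\vec H)=\binom n2$ exactly, so no limit in $k$ is needed.

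Third, the formula tacitly requires $z\ge 2$, i.e.\ that $\vec H$ has at least one arc.
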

In particular,  we have $\text{ex}(n,H)=o(n^2)$ if and only if $H$ is bipartite. Also, $z(\vec H)=2$ if and only if $\vec H$ is bipartite and WO, in which case $\text{ex}(n,\vec H)=o(n^2)$. In fact, for bipartite $H$ there is a polynomial gap between $\text{ex}(n,H)$ and $n^2$ as follows.
\begin{thm}[Alon,Krivelevich,Sudakov~\cite{AlonKS03}]
    Let $H$ be bipartite with parts $A(H),B(H)$, and
    \begin{align*}
        \Delta'(H)=\min \left\{\max_{v\in A(H)}\mathrm{deg}(v),\max_{v\in B(H)}\mathrm{deg}(v)\right\},
    \end{align*}
    then $\mathrm{ex}(n,H)=O(n^{2-1/\Delta'(H)})$.
\end{thm}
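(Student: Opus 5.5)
The plan is to use \emph{dependent random choice}. Let $H$ have parts $A=A(H)$ and $B=B(H)$, and write $h=|V(H)|$. By symmetry we may assume that every vertex of $B$ has degree at most $r:=\Delta'(H)$ in $H$. Let $G$ be an $n$-vertex graph with $m\ge C n^{2-1/r}$ edges, where $C=C(H)$ is a constant to be fixed. Its average degree is then $d=2m/n\ge 2Cn^{1-1/r}$. We will show that $G$ contains $H$, which gives $\mathrm{ex}(n,H)=O(n^{2-1/r})$.

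\textbf{Step 1 (find a rich set $U$).} Choose a sequence $T=(t_1,\dots,t_r)$ of $r$ vertices of $G$, uniformly and independently with repetition. Let $U$ be the common neighbourhood of $T$. A vertex $v$ lies in $U$ exactly when all $t_i\in N(v)$, which happens with probability $(\deg(v)/n)^r$. By convexity of $x\mapsto x^r$,
\[
\mathbb{E}|U|=\sum_v\left(\frac{\deg(v)}{n}\right)^r\ge n\left(\frac{d}{n}\right)^r\ge (2C)^r .
\]

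\textbf{Step 2 (few bad sets).} Call an $r$-subset $S\subseteq V(G)$ \emph{bad} if its common neighbourhood $N(S)$ has fewer than $h$ vertices. We have $S\subseteq U$ exactly when every $t_i$ lies in $N(S)$, so $\Pr[S\subseteq U]=(|N(S)|/n)^r<(h/n)^r$. Hence the expected number $Y$ of bad $r$-sets contained in $U$ satisfies
\[
\mathbb{E}Y\le \binom{n}{r}\left(\frac{h}{n}\right)^r\le h^r .
\]
It follows that $\mathbb{E}[|U|-Y]\ge (2C)^r-h^r\ge h$ once $C$ is large enough in terms of $h$ and $r$. Fix a choice of $T$ achieving $|U|-Y\ge h$. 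Delete one vertex from each bad $r$-set inside $U$. The resulting set $U'$ has $|U'|\ge h\ge |A|$, and every $r$-subset of $U'$ has at least $h$ common neighbours in $G$. Sets of fewer than $r$ vertices can be padded to an $r$-set, so the same holds for every subset of size at most $r$, provided $|U'|\ge r$.

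\textbf{Step 3 (greedy embedding).} Map $A$ injectively into $U'$ in an arbitrary way. Then embed the vertices of $B$ one at a time. Each $b\in B$ has at most $r$ neighbours, all in $A$, and their images form a subset of $U'$ of size at most $r$. That subset has at least $h$ common neighbours, while fewer than $h$ vertices have been used so far. So we can choose an unused common neighbour as the image of $b$. All edges of $H$ go between $A$ and $B$, so this gives a copy of $H$ in $G$.

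The main obstacle is Step 2: the estimate must balance the size of $U$ against the number of bad sets. The key point is that the degree bound $r$ on the $B$-side is exactly the exponent in the probability $(|N(S)|/n)^r$, and this is what makes the threshold $d\approx n^{1-1/r}$ suffice. A degenerate case is an isolated vertex of $B$, which we simply embed into any unused vertex.
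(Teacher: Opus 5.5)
Your proof is correct and follows the same dependent-random-choice route as the paper's appendix argument (Lemma~\ref{DRC} and the proof of Corollary~\ref{Oex}): sample $\Delta'(H)$ random vertices, take their common neighbourhood, delete one vertex from each bad $\Delta'(H)$-set, and embed $B(H)$ greedily. The only difference is that you work directly in $G$ instead of first passing to a balanced bipartite subgraph, a step the paper needs only to get a well-oriented host for its oriented generalization.
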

The above result is proved by dependent random choice, and the proof easily generalizes to $\text{ex}(n,\vec H)$ for WO bipartite $\vec H$.
\begin{corollary}\label{Oex}
    Let $\vec H$ be  WO with respect to the bipartition $V(H)=A(H)\sqcup B(H)$, and
    \begin{align*}
        \Delta'(H)=\min \left\{\max_{v\in A(H)}\mathrm{deg}_H(v),\max_{v\in B(H)}\mathrm{deg}_H(v)\right\},
    \end{align*}
    then $\mathrm{ex}(n,\vec H)=O(n^{2-1/\Delta'(H)})$.
\end{corollary}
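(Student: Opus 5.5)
We follow the dependent random choice proof of Alon, Krivelevich and Sudakov and make every common-neighbourhood condition direction-aware. Write $r=\Delta'(H)$, $h=|V(H)|$, and suppose without loss of generality that every vertex of $B(H)$ has degree at most $r$ in $H$. The case where the degree bound holds on the $A(H)$ side is symmetric: reverse all arcs of both $\vec G$ and $\vec H$, which preserves being WO and swaps the roles of the two parts.

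Let $\vec G$ be an $n$-vertex oriented graph with $m\ge Cn^{2-1/r}$ arcs, where $C$ is a large constant depending on $\vec H$. The goal is to find a set $U\subseteq V(\vec G)$ with $|U|\ge h$ such that every $r$-subset $S\subseteq U$ has a common out-neighbourhood
\[
N^+(S)=\bigcap_{s\in S}N^+(s)
\]
of size at least $h$.

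\textbf{Step 1 (dependent random choice).} Pick $t=r$ vertices $v_1,\dots,v_r$ uniformly and independently. Let $U_0$ be the set of vertices that are in-neighbours of all of them, that is, $U_0=\bigcap_i N^-(v_i)$.

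For the expected size, we have $\mathbb E|U_0|=\sum_u (d^+(u)/n)^r\ge n\,(m/n^2)^r$ by convexity, using $\sum_u d^+(u)=m$. This gives $\mathbb E|U_0|\ge C^r$.

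For the bad sets, call an $r$-set $S$ bad if $|N^+(S)|<h$. Such an $S$ lies in $U_0$ only if all $v_i\in N^+(S)$, which happens with probability less than $(h/n)^r$. The expected number of bad $r$-subsets of $U_0$ is therefore at most $\binom{n}{r}(h/n)^r\le h^r/r!$.

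Choosing $C$ so that $C^r-h^r\ge h$, there is an outcome in which deleting one vertex from each bad set leaves a set $U$ of size at least $h$ containing no bad $r$-set. This is the verbatim AKS computation; only the neighbourhood direction changes.

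\textbf{Step 2 (greedy embedding).} Embed $A(H)$ injectively into $U$ arbitrarily; this is possible since $|U|\ge h$. Then embed the vertices of $B(H)$ one at a time.

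For $b\in B(H)$, its neighbours all lie in $A(H)$, and since $\vec H$ is WO they are all in-neighbours of $b$. There are at most $r$ of them. Pad their images to an $r$-set $S\subseteq U$; this needs $|U|\ge r$, which holds since $h\ge r$. Then $N^+(S)$ has at least $h$ vertices, so it contains a vertex not yet used, and we map $b$ there. Every arc from $a$ to $b$ in $\vec H$ is then realised as an arc from the image of $a$ to the image of $b$ in $\vec G$.

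Injectivity between the images of $A(H)$ and $B(H)$ is automatic. A vertex of $U$ that lies in $N^+(S)$ for some $S\subseteq U$ could in principle be reused, but we simply exclude all used vertices, at most $h-1$ of them, from the choice. This gives a copy of $\vec H$, so $\mathrm{ex}(n,\vec H)<Cn^{2-1/r}$.

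\textbf{Main obstacle.} The only real care is orientation consistency, and it has two parts. First, the WO hypothesis is exactly what guarantees that all neighbours of each $b\in B(H)$ are in-neighbours, so a single common out-neighbourhood condition suffices. Second, the convexity bound uses only out-degrees, and $\sum_u d^+(u)=m$ holds with no loss. Once these are checked, the remaining constants are routine.
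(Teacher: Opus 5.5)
Your proof is correct and follows the same route as the paper: dependent random choice produces a large set in which every $r$-subset has many common out-neighbours, and $B(H)$ is then embedded greedily. The only difference is in the setup. The paper first passes to a balanced bipartite subgraph and keeps the majority orientation between the parts, so that it can apply an undirected bipartite version of the lemma. You instead run the random choice directly on out-neighbourhoods in all of $\vec G$. You compensate for the missing bipartition by demanding $h$ rather than $|B(H)|$ common out-neighbours, so that already-used vertices can always be avoided.
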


\subsection{Communication complexity}
Our streaming lower bounds are proved via reductions from standard communication problems.
Concretely, we simulate any $p$-round communication protocol using a $p$-pass streaming algorithm
by partitioning the stream among the players and passing the algorithm's memory state between them.
Therefore, lower bounds on the communication cost of the underlying problems imply space lower bounds
for streaming algorithms.

We will use the following canonical problems and their known lower bounds as primitives in our reductions:
\textsc{Index}, \textsc{Set-Disjointness}, and $\textsc{Multi-Disjointness}$ (the $k$-party version in the
message-passing model). The first problem is only useful for deriving single-pass streaming lower bounds, whereas the others yield multi-pass lower bounds.

\subsubsection{The INDEX problem.}
In the $\mathrm{INDEX}_n$ communication problem, Alice receives a string
$x \in \{0,1\}^n$ and Bob receives an index $i \in [n]$; Alice send a single message to Bob, and Bob must output $x_i$ (equivalently, Alice has a set $T \subseteq [n]$
and Bob must decide whether $i \in T$).
A classical result shows that any randomized one-way protocol for $\mathrm{INDEX}_n$
that answers correctly with probability at least $\frac{2}{3}$ requires $\Omega(n)$ communication, even with shared randomness
\cite{KremerNisanRon99}. See also \cite{JayramKumarSivakumar08} for an elementary proof and
\cite{KushilevitzNisan97} for background.

\subsubsection{(Two-party) SET-DISJOINTNESS}
In the two-party problem $\mathrm{DISJ}_n$, Alice and Bob receive $x,y\in\{0,1\}^n$,
represented by sets $A,B\subseteq[n]$.  The goal is to output $1$ iff $A\cap B\neq\emptyset$.
Any randomized protocol for this problem with bounded errors needs $\Omega(n)$ bits of communication,
the lower bound was proved by Kalyanasundaram and Schnitger~\cite{KS92}
(see also their earlier conference version~\cite{KS87}).

\subsubsection{MULTI-DISJOINTNESS.}
In the $k$-party (number-in-hand) version, player $t \in [k]$ receives
$x^{(t)} \in \{0,1\}^n$, represented by set $S_t \subseteq [n]$, and the goal is to output $1$ iff $\bigcap_{t=1}^k S_t \neq \emptyset$. We will mainly use the promise version of the problem, denoted by $\mathrm{MULTIDISJ}_{n,k}$, in which the input sets are promised to be one of the two following cases. The task is to distinguish between these two cases, outputting $1$ in the YES case and $0$ in the NO case.

\begin{enumerate}
    \item[\emph{(YES)}] $\bigcap_{t=1}^k S_t = \{i^\star\}$ for some $i^\star \in [n]$.
    \item[\emph{(NO)}] $S_1,\cdots S_k$ are pairwise disjoint 
\end{enumerate}

In the blackboard (broadcast) model, every bit written is visible to all players, and the communication cost is the total number of bits written. Chakrabarti--Khot--Sun~\cite{CKS03} proved an
$\Omega\!\left(\frac{n}{k\log k}\right)$ lower bound for general blackboard protocols, and an
optimal $\Omega\!\left(\frac{n}{k}\right)$ lower bound for the restricted one-way (predetermined-order) model.
Gronemeier~\cite{Gronemeier09} later strengthened this to a bounded error randomized lower bound of $\Omega\!\left(\frac{n}{k}\right)$
for general blackboard protocols. When there are only $k = O(1)$ players, this gives an $\Omega(n)$ space lower bound.
Since the communication-to-streaming reduction relies on a protocol in a model weaker than blackboard,
these blackboard lower bounds suffice for our streaming space lower bounds.

%% file: section3.tex
\section{Undirected simple graphs}\label{sec:section3}

\subsection{Non-bipartite graphs: The hard instances}
We first investigate the undirected subgraph finding problem. The Erd\H{o}s-Stone theorem \cite{EST46} has shown a polynomial gap between the Tur\'{a}n numbers of any bipartite graph and non-bipartite graph. Now we prove Theorem~\ref{thmUNB}, which shows a similar gap between the space complexities of the corresponding subgraph finding streaming algorithms.

The upper bound for bipartite $H$ follows from the naive algorithm because we have $ex(n,H)=O(n^{2-1/\Delta'(H)})$ as in \cite{AlonKS03}.  On the other hand, the lower bound for non-bipartite $H$ is established by a reduction from $\textsc{Multi-Disjointness}$. To build the reduction graph, the shortest odd cycles in $H$ are the key structures, and to this end we define the \emph{shortest odd cycle components} of $H$. 
\begin{definition}[SOCC]
    Let $H$ be a non-bipartite graph and $\ell$ be the length of a shortest odd cycle in $H$. We say that $\{v_1,\ldots,v_k\}\subseteq V(H)$ forms a shortest odd cycle component (SOCC) of $H$ if
    \begin{itemize}
        \item for all $x,y\in \{v_1,\ldots,v_k\}$, there exists a path $e_1e_2\ldots e_j$ from $x$ to $y$, in which each $e_i\in E(H)$ belongs to some copy of $C_\ell$; and
        \item there is no $v_{k+1} \notin\{v_1,\ldots,v_k\}$ such that the above holds for $\{v_1,\ldots,v_{k+1}\}$.
    \end{itemize}
    
    In this case, we define $H[\{v_1,\ldots,v_k\}]$ to be an SOCC. Moreover, we define the size of an SOCC to be the number of edges in it.
\end{definition}

Note that for all shortest odd cycle, an SOCC either contains it or is disjoint from it. Now
let $\ell$ be the length of a shortest odd cycle in $H$. The following property illustrates why the shortest odd cycles are the natural structures to consider. If we define a graph $\tilde H$ by
\begin{align*}
    &V_1=\{v^{(1)}:v\in V(H)\},\ldots, V_n=\{v^{(n)}:v\in V(H)\},\\
    &V(\tilde H)=V_1\sqcup\ldots\sqcup V_n,\quad E(\tilde H)=\Big\{\{u^{(i)},v^{(j)}\}:\{u,v\}\in E(H), i,j\in [n]\Big\},
\end{align*}
then for any $C_\ell$ in $\tilde H$ given by
\begin{align*}
    C:v_1^{(i_1)}-v_2^{(i_2)}-\ldots- v_\ell^{(i_\ell)}-v_1^{(i_1)},\quad i_1,\ldots,i_\ell\in [n],\quad v_1,\ldots,v_\ell\in V(H),
\end{align*}
back in $H$, $v_1-v_2-\ldots- v_\ell -v_1$ will be a closed walk of length $\ell$, so it contains an odd cycle since $\ell$ is odd, and hence itself must also be a $C_\ell$ by the minimality of $\ell$.
Note that the same might not hold if we only consider the shortest cycles in $H$, because we need $\ell$ to be odd.

This helps prove the following lower bound of Theorem~\ref{thmUNB}. Although not using the above $\tilde H$ directly, we harness the idea of edges crossing different pairs of vertex copies. We shall build the reduction graph by duplicating a largest SOCC in $H$ and then encoding the inputs of $\textsc{Multi-Disjointness}$ as some crossing edges. Then by analyzing the SOCCs in the reduction graph, with the above property we can show that if we have a false instance in $\textsc{Multi-Disjointness}$, then no copy of $H$ can appear.

\begin{proposition}\label{NBlower}
    Let $H$ be a fixed non-bipartite graph, then  $R^p_{2/3}(\textsc{Sub}(H))=\Omega(n^2/p)$.
\end{proposition}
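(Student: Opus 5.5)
We reduce from $\mathrm{MULTIDISJ}_{N,k}$ with $N=\Theta(n^2)$ and a constant number $k$ of players. The $k$ will be the size (number of edges) of a largest SOCC of $H$. Fix such a largest SOCC $S$ with edges $e_1,\dots,e_k$, and let $\ell$ be the shortest odd cycle length. Set $m=\Theta(n)$. Replace every vertex $v$ of $S$ by an independent set $v^{(1)},\dots,v^{(m)}$ of $m$ copies. The universe is identified with $[m]\times[m]$.

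Player $t$, holding $S_t$, is responsible for the edge $e_t=\{u,w\}$. For every $(i,j)\in S_t$ it inserts the edge $\{u^{(i)},w^{(j)}\}$. The indexing should be arranged consistently along the cycles, for example by using an orientation of each $C_\ell$ and letting the copy indices ``shift'', so that a common element $(i,j)$ yields a full copy of $S$ on the vertices of index $i$ and $j$. In addition, a fixed copy of $H\setminus E(S)$ is attached once to these blown-up vertices. This attachment is known to all players and is added by player $1$; it has $O(1)$ edges per copy, or alternatively the remaining components are added as disjoint gadgets.

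The graph has $O(n)$ vertices. A $p$-pass $s$-space streaming algorithm gives a $p$-round protocol of cost $O(pks)$. Since Gronemeier's $\Omega(N/k)$ blackboard bound gives $pks=\Omega(n^2/k)$ with $k=O(1)$, we get $s=\Omega(n^2/p)$.

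Correctness in the YES case is immediate: the common element produces a copy of $S$ with the fixed remainder of $H$ glued on.

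The NO case is the main obstacle. We must show that if the sets $S_t$ are pairwise disjoint, then no copy of $H$ appears. The plan uses the closed-walk property established above. Any $C_\ell$ in the reduction graph projects to a closed walk of length $\ell$ in $H$, and hence to a genuine $C_\ell$. Within one SOCC of the reduction graph, consecutive $C_\ell$'s sharing edges force consistency of the projected labels. So each SOCC of the reduction graph projects onto an SOCC of $H$, and any copy of $H$ must map some largest SOCC of $H$ onto an SOCC of the reduction graph of size at least $k$.

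The key claim is that in the NO case every SOCC of the reduction graph has fewer than $k$ edges. The intended reason is that each $C_\ell$ uses edges from distinct players' edge classes. Pairwise disjointness of the $S_t$ then prevents the index pattern $(i,j)$ from being realised by all $k$ edge classes simultaneously. Consequently no full copy of $S$ arises, and any $\ell$-cycle-connected structure uses strictly fewer edge classes, giving size $<k$.

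The remainder of $H$ contributes no $C_\ell$ edges in the wrong place, because it is attached as a fixed, sparse set. This point needs care: I would choose the attachment so that it creates no new $\ell$-cycles, for example by attaching at a single designated copy. The delicate bookkeeping is the choice of index scheme so that any cycle spanning the whole SOCC pins down a single common element. I would first try the scheme where the edge of player $t$ joins copy $i$ to copy $j$ for $(i,j)\in S_t$, applied to a bipartite-like $2$-index labelling of the vertices of $S$, and verify the claim on $C_\ell$ itself before generalising.
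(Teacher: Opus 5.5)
Your reduction breaks at exactly the step you flag as the key claim. Pairwise disjointness only forbids a single element from lying in two sets. It says nothing about a $C_\ell$ assembled from several \emph{different} elements held by different players. Since every edge of $S$ is player-controlled in your encoding, such cycles are always available.

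Take $H=S=C_3$ on $a,b,c$. With your $2$-index scheme even the YES case fails. A common $(i,j)$ with $i\neq j$ produces the three disjoint edges $a^{(i)}b^{(j)}$, $b^{(i)}c^{(j)}$, $c^{(i)}a^{(j)}$, because labelling each vertex of an odd cycle consistently by ``$i$'' or ``$j$'' would $2$-colour it. The NO case fails too: $a^{(x)}b^{(y)}c^{(z)}$ is a triangle whenever $(x,y)\in S_1$, $(y,z)\in S_2$, $(z,x)\in S_3$, and these three distinct elements may sit in pairwise disjoint sets.

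Shifted labellings repair YES but not NO, and no injective encodings $f_1,f_2,f_3$ can fix both. Soundness on the NO instances $S_t=\{x_t\}$ excludes every triangle $f_1(x_1)f_2(x_2)f_3(x_3)$ with distinct $x_t$. A triangle with two equal indices already contains two edges of some $T_x=\{f_1(x),f_2(x),f_3(x)\}$, so it equals $T_x$. Hence the union of all potential edges is a graph on $\Theta(n)$ vertices with $3N$ edges, each lying in exactly one triangle. The triangle removal lemma of Ruzsa and Szemer\'edi then forces $N=o(n^2)$. So this is not a matter of bookkeeping in the index scheme: the framework itself cannot reach $\Omega(n^2)$.

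The missing idea is to keep most of the SOCC \emph{fixed}. The paper picks a vertex $v$ of a largest SOCC $A_1$ and makes only its $C_\ell$-edges $\{v,u_q\}$, $q\in[r]$, player-controlled. Player $q$ inserts $\{v^{(j)},u_q^{(i)}\}$ for $(j-1)n+i\in S_q$, so there are $r\ge 2$ players, not $k$. The construction is completed as follows:
\begin{itemize}
\item all other edges of $A_1$ are copied diagonally inside each $A_1^{(i)}$;
\item the non-$C_\ell$ edges at $v$ are made fully crossing;
\item $V(H)\setminus A_1$ is a single copy joined to \emph{every} copy of $A_1$. Attaching at one designated copy, as you suggest, breaks YES, since the position of the common element is unknown.
\end{itemize}

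Your projection argument plus one more use of the minimality of $\ell$ then finishes. A $C_\ell$ through $\{v^{(j)},u_q^{(i)}\}$ projects to a $C_\ell$ of $H$. So it uses no lifted non-$C_\ell$ edge, and it cannot visit a second copy of $v$, since the projected closed walk would then contain a shorter odd cycle. Only copies of $v$ carry crossing edges, so the cycle must return to $v^{(j)}$ by a second player edge from the same $A_1^{(i)}$. That means two players share $(j-1)n+i$.

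Hence in the NO case every SOCC of the reduction graph is some $A_t$ with $t\ge 2$, or lies inside one $A_1^{(i)}$. Each such $A_1^{(i)}$ lacks at least one edge $\{v^{(i)},u_q^{(i)}\}$. So the reduction graph has fewer largest-size SOCCs than $H$, and $H$ cannot embed.
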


\begin{proof}
    Let $\ell$ be the length of a shortest odd cycle in $H$. Let $A_1,A_2,\ldots A_k\subseteq V(H)$ be disjoint vertex sets such that $H[A_1],\ldots,H[A_k]$ are all the SOCCs in $H$, and assume without loss of generality that $H[A_1]$ is a largest SOCC. We make $n$ copies $A_1^{(1)},\ldots,A_1^{(n)}$ of $A_1$, namely $A_1^{(j)}=\{x^{(j)}:x\in A_1\}$ for $1\leq j\leq n$.
    Moreover, we fix a vertex $v\in A_1$, and define
    \begin{align*}
        &\{u_1,\ldots,u_r\}=\{x\in A_1: \{v,x\}\in E(H)\text{ and }\{v,x\}\text{is contained in some }C_\ell\},\\
        &\{w_1,\ldots,w_s\}=A_1\setminus \{v,u_1,\ldots,u_r\}.
    \end{align*}
    Given an instance of $\mathrm{MULTIDISJ}_{n^2,r}$, where for each $q\in [r]$ the $q$-th party holds the input $S_q\subseteq [n^2]$, we define the reduction graph $G_H$ to be on the vertex set
    \begin{align*}
        V(G_H)=A_1^{(1)}\sqcup\ldots\sqcup A_1^{(n)}\sqcup (V(H)\setminus A_1   )
    \end{align*}
    and have the edge set
    \begin{align*}
        &E(G_H)=E_1\sqcup E_2\sqcup E_3\sqcup E_4\sqcup E_5,\\
        &E_1=\bigsqcup_{j=1}^n\Big\{ \{x^{(j)},{y}^{(j)}\}: \{x,y\}\in E(A_1),\text{ and }\{x,y\}\ne \{v,u_q\}\text{ for all }q\in [r]\Big\},\\
        &E_2=\bigsqcup_{j=1}^n\Big\{ \{v^{(j)},u_q^{(i)}\}: q\in[r],(j-1)n+i\in S_q  \Big\},\\
        &E_3=\bigsqcup_{i\ne j}\Big\{ \{v^{(j)},w_t^{(i)}\}:t\in [s],\{v,w_t\}\in E(H) \Big\},\\
        &E_4=\Big\{\{x,y\}\in E(H):x,y\in V(H)\setminus A_1\Big\},\\
        &E_5=\bigsqcup_{j=1}^n\Big\{ \{x^{(j)},y\}:x\in A_1,\text{ }y\in V(H)\setminus A_1,\text{ }\{x,y\}\in E(H) \Big\}.
        \end{align*}
    \begin{figure}[H]
        \centering
        \includegraphics[width=\linewidth]{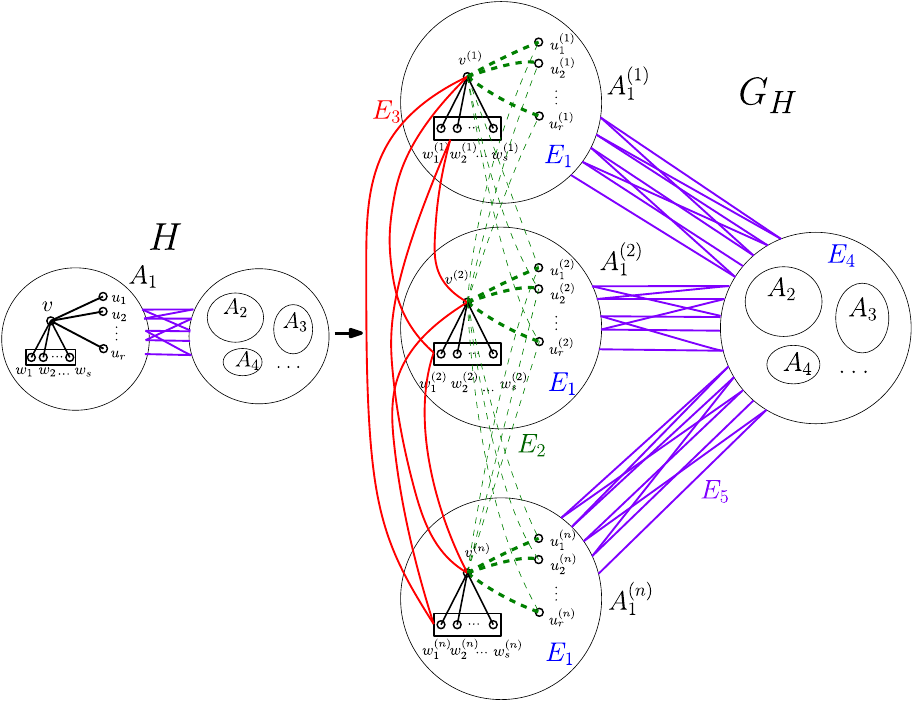}
        \caption{The reduction graph $G_H$ in Proposition~\ref{NBlower}. The edges in $E_2$ are represented by dashed lines since they depend on the input of $\mathrm{MULTIDISJ}_{n^2,r}$.}
    \end{figure}
    \noindent The following claim shows that $\mathrm{MULTIDISJ}_{n^2,r}$ can be reduced to $\textsc{Sub}(H)$ in $G_H$.  But 
    \begin{align*}
        |V(G_H)|=|V(H)|+(r+s+1)(n-1)=\Theta(n),    
    \end{align*}
    so any streaming algorithm which can solve the $H$-finding problem in any $n$-vertex input graph will require $\Omega(n^2/p)$ space under $p$ passes, otherwise we could have applied it to $G_H$ and beaten the complexity of $\mathrm{MULTIDISJ}_{n^2,r}$.\\\\
    \textbf{Claim:} We have $H\subseteq G_H$ if and only if $\bigcap_{q=1}^rS_q\ne \emptyset$.
    \begin{claimproof}
        If $\bigcap_{q=1}^rS_q\ne \emptyset$, then $(j-1)n+i\in  \bigcap_{q=1}^rS_q$ for some $i,j\in [n]$. By the  definition of $E(G_H)$, we have $\{v^{(j)},u_q^{(i)}\}\in E_2$ for all $q\in [r]$, but in $G_H$ there are also the edges $\{v^{(j)}, w_t^{(i)}\}$ from $E_3$, so we get a copy of $H$ formed by
    \begin{align*}
        V(H)\setminus A_1,\text{ }A_1^{(i)}\setminus \{v^{(i)}\},\text{ }\{v^{(j)}\}.
    \end{align*}
    
    Conversely, if $\bigcap_{q=1}^rS_q= \emptyset$, then by the assumption of $\mathrm{MULTIDISJ}_{n^2,r}$, for all $q_1\ne q_2$ we have $S_{q_1}\cap S_{q_2}=\emptyset$. Then by definition, for all $i,j\in [n]$, we can have $\{v^{(j)},u_q^{(i)}\}\in E_2$ for at most one $q\in [r]$.
    In this case, we show that every $e\in E_2\cup E_3\cup E_5$  cannot be contained in any copy of $C_\ell$.
    
    First
    suppose some $\{v^{(j_1)},w_t^{(j_2)}\}\in E_3$ is contained in a copy of $C_\ell$, say this copy being
    \begin{align*}    v^{(j_1)}-w_t^{(j_2)}-\ldots- x^{(i)}-\ldots- z-\ldots -v^{(j_1)},
    \end{align*}
    where we mean that each internal vertex may be an $x^{(i)}\in A_1^{(i)}$ for some $i\in[n]$, or some $z\in V(H)\setminus A_1$. Then back in $H$ we get the closed walk $v-w_t-\ldots-x-\ldots-z-\ldots-v$ of length $\ell$, which must be a $C_\ell$ by the minimality of $\ell$, and is a contradiction since by definition the edge  $\{v,w_t\}$ should not be in any $C_\ell$. Likewise, we can show that every $e\in E_5$ is not contained in any $C_\ell$.
    
    On the other hand, suppose some $\{v^{(j_1)},u_q^{(j_2)}\}\in E_2$ is contained in a $C_\ell$, then this $C_\ell$ must fall in $A_1^{(1)}\sqcup\ldots\sqcup A_1^{(n)}$ since no $e\in E_5$ can be contained in it, so we can assume it to be
    \begin{align*}
        v^{(j_1)}-u_q^{(j_2)}-x_1^{(j_3)}-\ldots -x_{\ell-2}^{(j_{\ell})}- v^{(j_1)},\quad j_1,\ldots,j_\ell\in [n],
    \end{align*}
    where each $x_i\in A_1$, and the last edge $\{x_{\ell-2}^{(j_\ell)},v^{(j_1)}\}$ is in $ E_2$ because no $e\in E_3$ can be in this \nolinebreak $C_\ell$.
    However, for all $i,j\in [n]$ we can have $\{v^{(j)},u_q^{(i)}\}\in E_2$ for at most one $q\in [r]$, so in particular $\{v^{(j_1)},u_q^{(j_2)}\}$ is the unique edge in $E_2$ with endpoints in  $A_1^{(j_1)}$ and $A_1^{(j_2)}$, and hence $j_\ell\ne j_2$. Thus, starting from $v^{(j_1)},u_q^{(j_2)}$, this $C_\ell$ must leave $A_1^{(j_2)}$ at some point before $x_{\ell-2}^{(j_\ell)}$, so it must use some edge in $E_2$ aside from $\{v^{(j_1)},u_q^{(j_2)}\}$ and $\{x_{\ell-2}^{(j_\ell)},v^{(j_1)}\}$. This means that some internal vertex $x_i^{(j_{i+2})}$ of this $C_\ell$ is actually $v^{(j_{i+2})}$, but then back in $A_1$ we have the closed walk $v-u_q-x_1-\ldots -x_{\ell-2}-v$ of length $\ell$,
    and it uses $v$ twice, so it contains an odd cycle shorter than $C_\ell$ and is a contradiction.
    
    Thus indeed no $e\in E_2\cup E_3\cup E_5$ is contained in a copy of $C_\ell$, and hence an SOCC in $G_H$ is either $G_H[A_2],G_H[A_3],\ldots,G_H[A_k]$ or is contained in $A_1^{(i)}$ for some $i\in [n]$. However, now the number of edges in each $G_H[A_1^{(i)}]$ is less than that in $H[A_1]$, since at most one of $\{v^{(i)},u_1^{(i)}\},\ldots,\{v^{(i)},u_r^{(i)}\}$ can be in $E(G_H)$.
    Therefore, the number of SOCCs in $G_H$ with size the same as $H[A_1]$ is less than that number in $H$, and hence $H\not\subseteq G_H$.
    \end{claimproof}
\end{proof}

When $p\leq \text{polylog}(n)$, the $p$-pass $\Omega(n^2/p)$-space lower bound is tight up to a polylogarithmic factor since we have the naive single-pass $O(n^2)$-space algorithm. As for larger $p$, using the idea of \emph{color-coding} from~\cite{AlonYZ95}, we can find a corresponding upper bound for some undirected $H$ with simple structures. In particular, we can show this for all odd cycle $H=C_{2\ell+1}$. For this we need the following coloring lemma.
\begin{lemma}[Schimdt-Siegal \cite{SS90}]\label{ColorCoding}
Let $\ell\in\mathbb N$ and $V$ be a set with $|V|=n$, then there exists a family $\mathcal F$ of $(2\ell+1)$-colorings  $\chi:V\to [2\ell+1]$ with size $|\mathcal F|=O(\log^2 n)$ such that for all $V'\subseteq V$ with $|V'|=2\ell+1$, we have $\chi(V')=[2\ell+1]$ for some $\chi\in\mathcal F$. 
\end{lemma}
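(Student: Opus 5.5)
The plan is to prove the statement with a counting argument over random colorings. The hypothesis $|V'|=2\ell+1=:k$ with $\ell$ fixed is what makes the bound easy, since $k$ is a constant and $O(\log^2 n)$ is a generous target. Afterwards I add an explicit two-level construction, because a deterministic streaming algorithm must be able to evaluate each $\chi\in\mathcal F$ in small space.

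\emph{Existence.} Draw $m$ colorings $\chi_1,\ldots,\chi_m:V\to[k]$ independently and uniformly at random. Fix $V'\subseteq V$ with $|V'|=k$. A single random $\chi$ satisfies $\chi(V')=[k]$ exactly when $\chi|_{V'}$ is a bijection, which happens with probability $k!/k^k\ge e^{-k}$. Hence
\begin{align*}
\Pr\bigl[\text{no }\chi_i\text{ satisfies }\chi_i(V')=[k]\bigr]\le (1-e^{-k})^m\le \exp(-m e^{-k}).
\end{align*}
There are at most $\binom nk\le n^k$ choices of $V'$. Taking $m=\lceil 2k e^{k}\ln n\rceil$ makes the union bound at most $n^k\cdot n^{-2k}<1$. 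So some choice of $\chi_1,\ldots,\chi_m$ works for every $V'$ simultaneously. This gives $|\mathcal F|=O(\log n)\subseteq O(\log^2 n)$, with constants depending only on $\ell$.

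\emph{Explicit version.} Identify $V$ with $[n]$ and build $\mathcal F$ in two levels.
\begin{itemize}
\item First level: map $x\mapsto x \bmod p$ for every prime $p\le P$, with $P=c\,k^2\log n$. For a fixed $V'$, the map fails to be injective only if $p$ divides some difference $x-y$ with $x,y\in V'$. There are fewer than $k^2$ such differences, each at most $n$, so each has at most $\log_2 n$ prime divisors. For a suitable constant $c$, the number of primes up to $P$ is $\Theta(P/\log P)$, which exceeds $k^2\log_2 n$, so some prime $p\le P$ is injective on $V'$.
\item Second level: compose with a family $\mathcal G_p$ of maps $[p]\to[k]$ that rainbow every $k$-subset of $[p]$. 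Since the universe $[p]$ has size $O(\log n)$, the existence argument above applied to $[p]$ gives $|\mathcal G_p|=O(\log p)=O(\log\log n)$. Such a family can be found by brute force over the tiny universe, using $O(\log n)$-size descriptions.
\end{itemize}
The resulting family has size $O\bigl(\tfrac{\log n}{\log\log n}\cdot\log\log n\bigr)=O(\log n)$, again within $O(\log^2 n)$. Each $\chi$ is evaluated by one modular reduction followed by a table lookup.

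The only step needing care is the first-level counting of primes, namely ensuring that the number of primes up to $P$ beats the at most $k^2\log_2 n$ bad primes. This follows from the prime number theorem, or from Chebyshev's bound, once the constant $c$ is chosen large in terms of $k$. Everything else is a routine union bound.
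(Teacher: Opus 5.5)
Your existence argument is correct and already proves the lemma as stated, in fact with the better bound $|\mathcal F|=O(\log n)$. A uniform $\chi$ is a bijection on a fixed $k$-set with probability $k!/k^k\ge e^{-k}$, and $m=\lceil 2ke^k\ln n\rceil$ independent colorings survive the union bound over at most $n^k$ sets. The paper gives no proof of its own; it imports the lemma from Schmidt and Siegel. As that construction is usually described, it is an \emph{explicit} family in the Fredman--Koml\'os--Szemer\'edi two-level hashing style, of size $2^{O(k)}\log^2 n$, with short, efficiently evaluable descriptions; the extra $\log n$ factor is essentially the price of explicitness. Your probabilistic route is shorter and gives a smaller family, but it is non-constructive. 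That is harmless for how the paper uses the lemma. Its deterministic algorithms store the $L$ colorings explicitly in $O(Ln)$ space. A good family does not depend on the input graph, so it can be found deterministically by exhaustive search over candidate families, checking every $k$-subset, within $\widetilde O(n)$ space. So the explicit version you motivate by ``evaluating $\chi$ in small space'' is not actually needed here.

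The explicit construction you add does contain an error in its first level. With $P=ck^2\log n$ you have $\pi(P)=\Theta(P/\log P)=\Theta\bigl(k^2\log n/\log(k^2\log n)\bigr)$. For fixed $k$ this is $o(k^2\log_2 n)$, so whatever constant $c$ you pick, it eventually drops below your bound $k^2\log_2 n$ on the number of bad primes. Hence the existence of a good prime does not follow; the estimate ``at most $\log_2 n$ prime divisors per difference'' is too crude. The standard repair keeps $P=O(k^2\log n)$. Suppose every prime $p\le P$ divided $D=\prod_{\{x,y\}\subseteq V'}|x-y|<n^{k^2}$. These primes are distinct, so $\prod_{p\le P}p\le D$, i.e.\ $\theta(P)=\sum_{p\le P}\ln p<k^2\ln n$. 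This contradicts Chebyshev's estimate $\theta(P)=\Omega(P)$ once $c$ is large. Alternatives: use that an integer at most $n$ has $O(\log n/\log\log n)$ distinct prime factors, or enlarge $P$ by a $\log\log n$ factor, which still gives a family of size $O(\log n\log\log n)\subseteq O(\log^2 n)$. Since the lemma only asserts existence, this error does not affect your proof of the statement itself.
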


In particular, for any $n$-vertex graph $G$ we can take $L=O(\log^2n)$ many colorings of $V(G)$, independent of $G$, so that if $G$ contains a copy of $C_{2\ell+1}$, then that copy is colorful under one of the colorings. Then we can find a colorful copy of $C_{2\ell+1}$ by dynamic programming. Roughly speaking, we divide $V(G)$ into different parts, each part being the anchor vertices in one section, where a section consists of a number of streaming passes. Then 
for each anchor vertex, we record the possible destinations starting from it under each designated distance and color, and then a colorful $C_{2\ell+1}$ will be detected if exists.  

\begin{proposition}\label{oddcycleUB}
    Fix $\ell\geq 1$, then for all $p=O(n)$, we have $D^p(\textsc{Sub}(C_{2\ell+1}))=\tilde O(n^2/p)$.
\end{proposition}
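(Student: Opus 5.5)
We combine the colorings of \cref{ColorCoding} with a pass-efficient dynamic program over colorful walks. Fix the family $\mathcal F$ of $L=O(\log^2 n)$ colorings $\chi:V(G)\to[2\ell+1]$ given by \cref{ColorCoding}. The family is independent of $G$, so it can be fixed in advance, or recomputed deterministically with $O(\log n)$-bit descriptions per coloring. If $G$ contains a copy of $C_{2\ell+1}$, some $\chi\in\mathcal F$ makes its vertex set colorful. For each $\chi$ it then suffices to decide whether there is a colorful closed walk of length $2\ell+1$. Such a walk visits $2\ell+1$ distinct colors, hence $2\ell+1$ distinct vertices, so it is a genuine $C_{2\ell+1}$. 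Conversely, the colorful copy gives such a walk. The overall algorithm is the sequential run over all $\chi\in\mathcal F$, which costs only an $O(\log^2 n)$ factor in passes.

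For a fixed $\chi$, partition $V(G)$ into $\lceil n/b\rceil$ blocks of $b$ \emph{anchor} vertices, where $b=\Theta(n/p)$ up to polylog factors. The blocks are processed in sections of $2\ell+1=O(1)$ passes each. In a section with anchor set $X$, we maintain for every $a\in X$ and every $v\in V$ the set
\[
D_a^{t}(v)=\{\,C\subseteq[2\ell+1]:\ \exists\text{ colorful path } a=x_0,x_1,\dots,x_t=v \text{ with } \chi(\{x_0,\dots,x_t\})=C\,\}.
\]
This table has $|X|\cdot n\cdot 2^{2\ell+1}=O(bn)$ bits. Initially $D_a^0(a)=\{\{\chi(a)\}\}$. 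In pass $t+1$, for each streamed edge $\{u,w\}$ (in both directions) and each $C\in D^t_a(u)$ with $\chi(w)\notin C$, we add $C\cup\{\chi(w)\}$ to $D^{t+1}_a(w)$. Only layers $t$ and $t+1$ are needed at any time, so the space stays $O(bn)$. In the last pass, we check whether some streamed edge $\{u,a\}$ with $a\in X$ has $C\in D^{2\ell}_a(u)$ with $|C|=2\ell+1$, i.e.\ $C=[2\ell+1]$. Correctness is standard color-coding: the subsets track exactly the colors used, which prevents vertex repetition. To output the cycle, we additionally store one predecessor per entry, which adds only an $O(\log n)$ factor. Every vertex of the cycle serves as an anchor in some section, so the cycle is detected.

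The total pass count is $L\cdot(2\ell+1)\cdot\lceil n/b\rceil=O(\log^2 n\cdot n/b)$. Choosing $b=\Theta(n\log^2 n/p)$ gives $p$ passes and space $O(bn\log n)=\tilde O(n^2/p)$. The condition $p=O(n)$ guarantees $b\ge 1$. The part I expect to need the most care is confirming that \cref{ColorCoding} yields an explicitly computable family with the right size, so that the algorithm stays deterministic and the colorings need not be stored beyond $\tilde O(1)$ bits each. I would use the Schmidt--Siegal construction and evaluate $\chi(v)$ on the fly. The remaining accounting, namely the $2^{2\ell+1}$ constant and the handling of rounding when $p$ is smaller than $\log^2 n$, where the claimed bound is trivially met by storing $G$, is routine.
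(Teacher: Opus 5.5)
Your proposal is correct and follows essentially the same route as the paper: the Schmidt--Siegal coloring family, anchor blocks of size $\tilde\Theta(n/p)$ processed in sections of $O(\ell)$ passes, a colorful-path dynamic program indexed by color subsets, and a final check for an edge back to the anchor. The differences are only bookkeeping. You iterate over the $L$ colorings sequentially, absorbing the $\log^2 n$ factor into the block size, and you recover the cycle from stored predecessors, which means keeping all $2\ell+1$ layers rather than just two (still only $O(1)$ overhead). The paper instead handles all colorings in parallel within each section and spends $2\ell-1$ extra passes reconstructing the cycle.
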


\begin{proof}
    Let $r=p/2\ell-1 $ and divide $V(G)=V_1\sqcup\ldots\sqcup V_r$ into almost balanced parts, every two of which differ in size by at most $1$. Take $L=O(\log^2n)$ colorings $\chi_1,\ldots,\chi_L:V(G)\to [2\ell+1]$ as above, using space $O(Ln)=\tilde O(n^2/p)$. Set $j=1$.
    \begin{enumerate}
        \item [(1)]
        Write $V_j=\{v_1^{(j)},\ldots,v_{n/r}^{(j)}\}$. For all $i\in [n/r]$, $c\in [L]$, $0\leq d\leq 2\ell$ and $S\subseteq [2\ell+1]$ with \linebreak $|S|=d+1$, we declare an array $D_d^{(i,c)}[S]:V(G)\to \{0,1\}$, using space
        \begin{align*}
            \tilde O((n/r)L(2\ell)2^{2\ell+1}n)=\tilde O(n^2/p)
        \end{align*}
        in total, and initialize $D_d^{(i,c)}[S](x)=0$ for all $x\in V(G)$ and  $d\geq 1$. Our goal is to have
    \begin{align*}
        D_d^{(i,c)}(x)=1\Leftrightarrow \text{under }\chi_c,\text{ }\exists\text{ a length-}d\text{ path  }v_i^{(j)}\leadsto x\text{ using exactly the colors in }S.
    \end{align*}
    To this end, for $d=0$ we need to initialize by
    \begin{align*}
        D_0^{(i,c)}[S](x)=\begin{cases}
            1,\text{ if }x=v_i^{(j)}\text{ and }S=\{\chi_c(v_i^{(j)})\},\\
            0,\text{ otherwise.}
        \end{cases}
    \end{align*}
    \item [(2)]
    Iteratively, for $d\geq 1$, suppose all $D_{d-1}^{(i,c)}[S]$ have been determined in the first $d-1$ passes, then in the $d$-th pass, whenever an edge $\{x,y\}$ arrives, we check if $D_{d-1}^{(i,c)}[S\setminus \{\chi_c(x)\}](y)=1$, setting $D_d^{(i,c)}[S](x)=1$ if this is the case. Likewise, if $D_{d-1}^{(i,c)}[S\setminus \{\chi_c(y)\}](x)=1$, then we set $D_d^{(i,c)}[S](y)=1$. This way, after the $(2\ell)$-th pass,  $D_{2\ell}^{(i,c)}[S]$ is determined for all $i,c,S$. Then in the $(2\ell+1)$-th pass we can check if any $w$ with $D_{2\ell}^{(i,c)}[S](w)=1$ is adjacent to $v_i^{(j)}$. If so, then we go to step (3), otherwise we go to step (4).\\
    \item [(3)]
    Now we know $w$ and $v_i^{(j)}$ are on a $C_{2\ell+1}$, and we store $v_i^{(j)},w$. In the $(2\ell+2)$-th pass, whenever an edge $\{w,y\}$ incident to $w$ arrives, we check if $D_{2\ell-1}^{(i,c)}[S'](y)=1$ for some $S'\not\ni \chi_c(w)$. If so, then $v_i^{(j)},w,y$ is a part of a $C_{2\ell+1}$ colorful under $\chi_c$, and we store them. Iteratively, after the $4\ell$-th pass we are guaranteed to find a $C_{2\ell+1}$ colorful under $\chi_c$, and we output this copy.\\
    \item [(4)]
    Now no vertex $w$ with $D_{2\ell}^{(i,c)}[S](w)=1$ is adjacent to $v_i^{(j)}$. If $j=r$ then  $C_{2\ell+1}\nsubseteq G$. Otherwise we
    increase $j$ by $1$ and restart the process from step $(1)$. Note that all arrays $D_d^{(i,c)}[S]$ can be re-initialized and reused, so the total space is still $\tilde O(n^2/p)$.
    \end{enumerate}
    
The algorithm is correct because if there is a copy of $C_{2\ell+1}$, then under some $\chi_c$ will it be colorful, and hence for some $j$ will it contain a vertex in $V_j$ and be found. Moreover, steps $(1)$ and $(2)$ use at most $2\ell$ passes in total, and step $(3)$ uses $2\ell-1$ passes,  so the total number of passes is at most $2\ell(r+1)=p$.
\end{proof}

\subsection{Lower bounds for bipartite graphs}

  Aside from the dichotomy we just established, we can also investigate the lower bounds for bipartite graphs, as summarized by Theorem~\ref{thmUB}. One might ask if for bipartite $H$
  we also have $R^p_{2/3}(\textsc{Sub}(H))=\Omega(\text{ex}(n,H)/p)$, and
  the first part of Theorem~\ref{thmUB} proves the correctness for even cycles and forests that are not matchings. The proofs are by reductions from $\textsc{Set-Disjointness}$.
  
  For even cycles $C_{2\ell}$, we encode the inputs of the two parties as the edges of two identical dense $C_{2\ell}$-free graphs, and then join the corresponding vertices with additional paths of lengths $\ell-1$, finishing the reduction.
\begin{proposition}\label{ECprop}
    Fix $\ell\geq 2$, then $R^p_{2/3}(\textsc{Sub}(C_{2\ell}))=\Omega(\mathrm{ex}(n,C_{2\ell})/p)$.
\end{proposition}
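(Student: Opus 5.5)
We reduce from $\mathrm{DISJ}_N$ with $N=\Theta(\mathrm{ex}(n,C_{2\ell}))$, as the paragraph before the statement suggests. Fix an extremal $C_{2\ell}$-free graph $F$ on $m=\lfloor n/\ell\rfloor$ vertices with $N=\mathrm{ex}(m,C_{2\ell})=\Theta(\mathrm{ex}(n,C_{2\ell}))$ edges. The last equality holds because $\mathrm{ex}(n,H)$ is superadditive-like: partitioning into $\ell$ blocks gives $\mathrm{ex}(n,H)\le \binom{\ell}{2}$-many $\mathrm{ex}$-bounds, or more simply $\mathrm{ex}(n,H)/\binom n2$ is nonincreasing. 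We may moreover take $F$ bipartite with $\Omega(N)$ edges, since every graph has a bipartite subgraph with half its edges; this also kills all odd cycles.

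Index the edges of $F$ by $[N]$. Alice holds $A\subseteq[N]$ and builds a copy $F_A$ on vertex set $X=\{x_v:v\in V(F)\}$ containing only the edges indexed by $A$. Bob likewise builds $F_B$ on a disjoint copy $Y=\{y_v\}$. For every $v\in V(F)$ we add a path of length $\ell-1$ from $x_v$ to $y_v$ with $\ell-2$ fresh internal vertices. These connecting edges are input-independent and can be streamed by either player. The total vertex count is $m\ell\le n$. If $e=\{u,w\}\in A\cap B$, then $x_u x_w$, the path $x_w\leadsto y_w$, the edge $y_w y_u$, and the path $y_u\leadsto x_u$ form a cycle of length $1+(\ell-1)+1+(\ell-1)=2\ell$.

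\textbf{Main obstacle: soundness.} Suppose $A\cap B=\emptyset$; we must show $G$ has no $C_{2\ell}$. Contract each connecting path, assigning length $\ell-1$ to it, and classify a cycle $C$ by the number $2t$ of connecting paths it uses. The number is even because the paths are the only $X$–$Y$ links.

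If $t=0$, then $C$ lies in $F_A$ or in $F_B$, each a subgraph of $F$, which is $C_{2\ell}$-free.

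If $t\ge 2$, the path segments alone contribute $2t(\ell-1)\ge 4\ell-4\ge 2\ell$ edges, with equality only when $\ell=2$. When $\ell=2$ and $t=2$, the cycle has length exactly $4$ with no room for $F$-edges, which is impossible since consecutive path endpoints would have to coincide. Otherwise $C$ has length greater than $2\ell$.

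If $t=1$, $C$ uses paths at $u$ and $w$ with $u\ne w$, an $x_u$–$x_w$ path $P$ in $F_A$, and a $y_w$–$y_u$ path $Q$ in $F_B$, for total length $2\ell-2+|P|+|Q|$. This equals $2\ell$ only if $|P|=|Q|=1$, i.e.\ $\{u,w\}\in A\cap B$, a contradiction.

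So the case analysis goes through, and it is the step needing the most care: the degenerate cases (shared vertices, $\ell=2$) have to be checked against simplicity of the cycle.

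Finally, the standard simulation of a $p$-pass, $S$-space streaming algorithm gives a protocol communicating $O(pS)$ bits: Alice's edges come first, then Bob's, with the fixed paths placed anywhere. Hence $pS=\Omega(N)$, which yields $R^p_{2/3}(\textsc{Sub}(C_{2\ell}))=\Omega(\mathrm{ex}(n,C_{2\ell})/p)$.
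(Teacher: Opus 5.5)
Your proof is correct and is essentially the paper's argument. Both use two copies of an extremal $C_{2\ell}$-free graph carrying Alice's and Bob's edges, join them vertex by vertex with paths of length $\ell-1$, and reduce from $\mathrm{DISJ}$. Your soundness case analysis is more careful than the paper's, since it handles cycles through four or more connecting paths, which the paper passes over. The passage to a bipartite subgraph is unnecessary. Your partition-based justification of $\mathrm{ex}(\lfloor n/\ell\rfloor,C_{2\ell})=\Theta(\mathrm{ex}(n,C_{2\ell}))$ is garbled, but the monotonicity of $\mathrm{ex}(n,H)/\binom n2$ suffices on its own.
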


\begin{proof}
    Take an $n$-vertex $C_{2\ell}$-free graph $G_1$ with $|E(G_1)|=\text{ex}(n,C_{2\ell})$. Let 
    \begin{align*}
        V(G_1)=\{v_1,\ldots,v_n\},\quad E(G_1)=\{e_1,\ldots,e_{\text{ex}(n,C_{2\ell})}\}.    
    \end{align*}
    Let $G_2$ be a copy of $G_1$ with the corresponding vertices and edges
    \begin{align*}
        V(G_2)=\{v_1',\ldots,v_n'\},\quad E(G_2)=\{e_1',\ldots,e_{\text{ex}(n,C_{2\ell})}'\}. 
    \end{align*}
    Also, consider the new vertices $u_1^{(j)},\ldots,u_{\ell-2}^{(j)}$ for all $j\in [n]$. Given an instance of $\mathrm{DISJ}_{\text{ex}(n,C_{2\ell})}$, where Alice and Bob hold $S_1,S_2\subseteq [\text{ex}(n,H)]$, respectively, we define the reduction graph $G_{C_{2\ell}}$ to be on the vertex set
    \begin{align*}
        V(G_{C_{2\ell}})=V(G_1)\sqcup V(G_2)\sqcup \bigsqcup_{j=1}^n\{u_1^{(j)},\ldots,u_{\ell-2}^{(j)}\}
    \end{align*}  
    and have the edge set $E(G_{C_{2\ell}})=E_1\sqcup E_2$, where
    \begin{align*}
        &E_1=\bigsqcup_{j=1}^n\Big\{ \{v_j,u_1^{(j)}\},\{u_1^{(j)},u_2^{(j)}\},\ldots,\{u_{
        \ell-2
        }^{(j)},v_j'\} \Big\},\quad E_2=\{e_i:i\in S_1\}\cup \{e_i':i\in S_2\}.
    \end{align*}
    
    The following claim shows that $\mathrm{DISJ}_{\text{ex}(n,C_{2\ell})}$ can be reduced to $\textsc{Sub}(C_{2\ell})$ in $G_{C_{2\ell}}$, and this  finishes the proof since we have $|V(G_{C_{2\ell}})|=\Theta(n)$. \\\\
    \textbf{Claim:} We have $C_{2\ell}\subseteq G_{C_{2\ell}}$ if and only if $S_1\cap S_2\ne \emptyset$.
    \begin{claimproof}
        If $i\in S_1\cap S_2$, then $e_i,e_i'\in E(G_{C_{2\ell}})$. Let $e_i=\{v_a,v_b\}$ for some $a,b\in[n]$, so $e_i'=\{v_a',v_b'\}$, then there is a copy of $C_{2\ell}$ using the edges
    \begin{align*}
        e_i',\{v_a,u_1^{(a)}\},\{u_1^{(a)},u_2^{(a)}\},\ldots,\{u_{\ell-2}^{(a)},v_a'\},e_i',\{v_b',u_{\ell-2}^{(b)}\},\{u_{\ell-2}^{(b)},u_{\ell-3}^{(b)}\},\ldots,\{u_{1}^{(b)},v_b\}.
    \end{align*}
    
    Conversely, since $G_1$ and $G_2$ are both $C_{2\ell}$-free, if there is a copy of $C_{2\ell}$ in $G_{C_{2\ell}}$, then that copy uses some edges of $E_1$, so it contains the paths $\{v_a,u_1^{(a)}\},\{u_1^{(a)},u_2^{(a)}\},\ldots,\{u_{\ell-2}^{(a)},v_a'\}$ and $\{v_b,u_1^{(b)}\},\{u_1^{(b)},u_2^{(b)}\},\ldots,\{u_{\ell-2}^{(b)},v_b'\}$ for some $a,b\in [n]$, $a\ne b$. But these are paths of lengths $\ell-1$, so to complete a cycle $C_{2\ell}$, we must have $\{v_a,v_b\},\{v_a',v_b'\}\in E(G_{C_{2\ell}})$, which implies $S_1\cap S_2\ne\emptyset$.
    \end{claimproof}
\end{proof}

For forests that are not matchings, the reduction is simple since the Tur\'{a}n number is just $\Theta(n)$. Note that if $H$ is a matching, then a very simple algorithm exists, as remarked in section~\ref{sec:open problem}.
\begin{proposition}\label{forestLB}
    Let $H$ be a fixed  forest that is not a matching, then
    \begin{align*}
        R^p_{2/3}(\textsc{Sub}(H))=\Omega(n/p).        
    \end{align*}
\end{proposition}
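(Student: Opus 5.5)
We reduce from $\mathrm{DISJ}_m$ with $m=\Theta(n)$, in the same style as the proof of Proposition~\ref{ECprop}. A $p$-pass streaming algorithm run on a reduction graph with $\Theta(n)$ vertices then yields a communication protocol of cost $O(pS)$, which forces $S=\Omega(n/p)$. Since $H$ is a forest that is not a matching, some component $T$ of $H$ contains a path $P_3$. We build a gadget so that a copy of $H$ appears exactly when the two parties share an element.

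\textbf{Step 1: the gadget.} For each $i\in[m]$ introduce a fresh vertex $c_i$ together with two fresh vertices $a_i,b_i$. Alice inserts the edge $\{a_i,c_i\}$ iff $i\in S_1$, and Bob inserts $\{c_i,b_i\}$ iff $i\in S_2$. The graph $G$ is then a disjoint union of edges and of paths $a_i-c_i-b_i$, and such a path exists exactly for $i\in S_1\cap S_2$.

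\textbf{Step 2: make $H$ appear.} We must handle general forests, not just $P_3$. A graph of maximum degree $1$ cannot contain any tree other than $K_2$, so we design the gadget so that it contains $H$ iff two gadget pieces meet. Concretely, we fix a vertex $x$ of $T$ with $\deg_T(x)\ge 2$ and split the edges of $T$ at $x$ into two nonempty groups, giving $T=T_1\cup T_2$ where $T_1$ and $T_2$ share only $x$. For each $i$ we plant a copy of $T_1$ rooted at a vertex $\alpha_i$ (Alice's edges, present iff $i\in S_1$) and a copy of $T_2$ rooted at a vertex $\beta_i$ (Bob's, present iff $i\in S_2$), with $\alpha_i=\beta_i=c_i$ identified. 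The remaining components of $H$, other than $T$, are added once as a fixed disjoint part. If $i\in S_1\cap S_2$, the union at $c_i$ is a copy of $T$, and together with the fixed part this gives a copy of $H$. The number of vertices is $O(m|V(H)|)=\Theta(n)$.

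\textbf{Step 3: the main obstacle is soundness.} Suppose $S_1\cap S_2=\emptyset$. Then every connected component of $G$ is a copy of $T_1$, a copy of $T_2$, or a fixed component of $H$. We must argue that $H$ does not embed. A naive count can fail, because proper subtrees might still host other components of $H$ and let the embedding rearrange. We fix this by choosing $T$ as a component with the maximum number of edges among the components of $H$ that are not single edges or isolated vertices. Since $T_1$ and $T_2$ each have fewer edges than $T$, $T$ cannot embed into any planted piece. So a copy of $H$ would have to send all copies of components isomorphic to $T$ into the fixed part, which holds one fewer such component than $H$ needs, because $T$ itself was omitted from it. This gives the contradiction, via the same counting-of-largest-components argument used at the end of Proposition~\ref{NBlower}. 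We check this count carefully: components isomorphic to $T$ have exactly $|E(T)|$ edges, and an embedding into a tree with at most $|E(T)|$ edges must be an isomorphism. Hence each copy of $T$ consumes a distinct fixed component of the same isomorphism type. The fixed part has too few of these, so $H\not\subseteq G$.
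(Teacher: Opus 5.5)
Your proof is correct and follows essentially the paper's argument: $\Theta(n)$ indexed copies of a largest component $T$ that become complete exactly when the two parties' contributions at the same index are both present, with the remaining components of $H$ added once as a fixed part. Soundness is by counting maximum-size components; the only difference is that the paper lets each player toggle a single designated edge $e_1^{(j)}$ or $e_2^{(j)}$ of an otherwise present copy, whereas you give each player a whole branch-group $T_1$ or $T_2$ at a vertex $x$. This difference is immaterial, and your Step~1 gadget is subsumed by Step~2.
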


\begin{proof}
    Let $H_1$ be a largest component of $H$, and fix two edges $e_1,e_2\in H_1$. Take $n$ copies of $H_1$, denoted by $H_1^{(1)},\ldots,H_1^{(n)}$, where
    \begin{align*}
        E(H_1^{(1)})=\{e^{(1)}:e\in E(H)\},\ldots,E(H_1^{(n)})=\{e^{(n)}:e\in E(H)\}.
    \end{align*}
    Given an instance of $\mathrm{DISJ}_n$, where Alice and Bob hold $S_1,S_2\subseteq [n]$, respectively, we define the reduction graph $G_H$ to be on the vertex set
    \begin{align*}
        V(G_H)=V(H\setminus H_1)\sqcup V(H_1^{(1)})\sqcup \ldots\sqcup V(H_1^{(n)})
    \end{align*}
    and have the edge set $E(G_H)=E_1\sqcup E_2$, where
    \begin{align*}
        E_1=E(H\setminus H_1)\sqcup\bigsqcup_{j=1}^n E(H_1^{(j)}\setminus \{e_1^{(j)},e_2^{(j)}\}),\quad E_2=\{e_1^{(j)}:j\in S_1\}\cup \{e_2^{(j)}:j\in S_2\}.
    \end{align*}
    
    If $j\in S_1\cap S_2$, then $V(H_1^{(j)})$ and $V(H\setminus H_1)$ together form a copy of $H$ in $G_H$, while if $S_1\cap S_2=\emptyset$, then the number of components of size $|H_1|$ in $G_H$ is less than that number in $H$, so $H\not\subseteq G_H$. Thus $\mathrm{DISJ}_{\text{ex}(n,C_{2\ell})}$ is reduced $\textsc{Sub}(H)$ in $G_H$, and this finishes the proof since $|V(G_H)|=\Theta (n)$.
\end{proof}

Imitating the proofs above, one can extend this $p$-pass $\Omega(\text{ex}(n,H)/p)$ lower bound to some other bipartite $H$, but we do not know if this is true for all general bipartite $H$. As in Theorem~\ref{thmUB}, we either need an extra condition on the diameter, or can only derive single-pass lower bounds, where there are still some uncovered cases. We will discuss more about these open problems in section~\ref{sec:open problem}. 

To derive the multi-pass lower bound, we shall fix an $H$-free  bipartite dense $G'$ as the skeleton,  and then make copies of $V(H)$ to encode the inputs from $\textsc{Multi-Disjointness}$ into the crossing pairs according to the edges in $G'$. For this reduction to be valid, the extra condition we need is that some component of $H$ has diameter less than $4$, which is equivalent to saying that any two nodes from the same part in that component share a common neighbor, because $H$ is bipartite. 
\begin{proposition}\label{diam4LB}
    Let $H$ be a fixed bipartite graph. If some component of $H$ has diameter less than $4$, then $R^p_{2/3}(\textsc{Sub}(H))=\Omega(\mathrm{ex}(n,H)/p)$.
\end{proposition}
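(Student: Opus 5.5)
Let $H_1$ be a component of $H$ with diameter less than $4$, chosen to be largest among such components (by edge count), with bipartition $A(H_1)\sqcup B(H_1)$. Fix an $n$-vertex bipartite $H$-free graph $G'$ with parts $X,Y$ and $m=\Theta(\mathrm{ex}(n,H))$ edges. Such a graph exists because every graph has a bipartite subgraph with at least half its edges, applied to an extremal $H$-free graph. Enumerate $E(G')=\{e_1,\ldots,e_m\}$.

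The reduction is from $\mathrm{MULTIDISJ}_{m,r}$, where $r$ is the number of edges of $H_1$, so $r=O(1)$. Enumerate $E(H_1)=\{f_1,\ldots,f_r\}$. For each vertex $x\in X$ take a copy $A^{(x)}$ of $A(H_1)$, and for each $y\in Y$ a copy $B^{(y)}$ of $B(H_1)$; this uses $\Theta(n)$ vertices. Add a disjoint copy of $H\setminus H_1$. For $e_i=\{x,y\}$ and $f_q=\{a,b\}$ with $a\in A(H_1)$, $b\in B(H_1)$, player $q$ inserts the edge $\{a^{(x)},b^{(y)}\}$ iff $i\in S_q$.

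If $i^\star\in\bigcap_q S_q$ with $e_{i^\star}=\{x,y\}$, then $A^{(x)}\cup B^{(y)}$ spans a copy of $H_1$, which together with $H\setminus H_1$ gives a copy of $H$. Since $|V(G_H)|=\Theta(n)$ and the input size is $m$, a $p$-pass algorithm yields a protocol of cost $O(pS)$, so Gronemeier's bound gives $S=\Omega(\mathrm{ex}(n,H)/p)$.

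For the NO case, suppose the sets are pairwise disjoint, so each edge $e_i$ of $G'$ carries at most one edge of $H_1$. The goal is to show that no copy of $H_1$ lies in the part of $G_H$ built from $G'$. Consider the projection $\pi$ sending $A^{(x)}\to x$ and $B^{(y)}\to y$, which maps the gadget part to $G'$.

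Take an embedded copy $\phi$ of $H_1$. It is connected and bipartite, and $\pi\circ\phi$ is a homomorphism into $G'$. By the diameter condition, any two vertices $a,a'$ of $H_1$ on the same side share a common neighbour, and each pair of them is at distance $0$ or $2$. I would use this to argue that the image of $H_1$ under $\pi\circ\phi$ is a genuine (injective on edges) subgraph of $G'$: distinct edges of $H_1$ map to distinct edges of $G'$, because each $e_i$ supports at most one gadget edge. It then remains to get a subgraph of $G'$ isomorphic to something containing $H_1$.

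The main obstacle is non-injectivity on vertices: two vertices of $H_1$ on the same side may project to the same $x$ while their images are different copies inside $A^{(x)}$. In that case the projected image is a homomorphic image in which two vertices with a common neighbour $b$ are identified. But then both edges from $b$ project onto the same edge $\{x,\pi(b)\}$, contradicting that each $e_i$ carries at most one gadget edge. This is exactly where diameter $<4$ is used. It shows that $\pi\circ\phi$ is injective on each side, hence injective, so $G'$ contains $H_1$.

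To finish, I still need $H$ itself rather than $H_1$, so the construction has to be tweaked. I would first try taking $G'$ to be $H_1$-free, using $\mathrm{ex}(n,H_1)\ge \mathrm{ex}(n,H)$. This breaks the bound direction, since $\mathrm{ex}(n,H_1)$ could be larger than $\mathrm{ex}(n,H)$, so that tweak is not enough by itself.

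Instead I would count components. Embed the copy of $H\setminus H_1$ separately, and argue that any copy of $H$ in $G_H$ must send some component isomorphic to $H_1$ into the gadget. If it did not, the number of diameter-$<4$ components of maximum size would fall short, as in Proposition~\ref{NBlower}. That copy of $H_1$ then injects into $G'$, and together with the remaining components, the whole of $H$ maps into a graph formed from $G'$ plus a fixed disjoint remainder. I expect handling this interplay, so as to derive a contradiction with $G'$ being $H$-free (possibly by choosing $G'$ to be $H$-free with all components accounted for), to be the delicate step.
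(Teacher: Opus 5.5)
Your gadget and injectivity argument coincide with the paper's. On a NO instance no vertex copy can have two gadget neighbours over the same edge of $G'$. Since any two same-side vertices of a connected bipartite graph of diameter $<4$ have a common neighbour, every copy of $H_1$ in the gadget projects injectively into $G'$. For connected $H$ this completes the proof and is exactly what the paper does.

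The gap is the disconnected case, and it is not merely delicate: with $G'$ only $H$-free, your reduction is wrong. Take $H=C_4\sqcup H_2$, where $H_2$ is $K_{3,3}$ with a pendant path of length $3$ (diameter $5$). Then $H_1=C_4$ and $\mathrm{ex}(n,H)\ge \mathrm{ex}(n,K_{3,3})=\Omega(n^{5/3})$. So $G'$ has far more than $\mathrm{ex}(n,C_4)$ edges and contains a $4$-cycle $x_1y_1x_2y_2$. Let the four players hold singletons that place the four edges of $C_4$ on the four edges of this cycle. This is a NO instance whose gadget contains a $C_4$, and together with the appended copy of $H_2$ it gives a copy of $H$. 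The underlying problem is that $H_1\subseteq G'$ contradicts nothing when $G'$ is merely $H$-free. Also, components of diameter $\ge 4$ that land in the gadget need not project injectively, so no component count can push all of $H$ into $G'$.

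The paper replaces $H$ by the diameter-$<4$ component at the outset and takes $G'$ to be $H_1$-free, which is the tweak you discarded. With that choice the gadget is $H_1$-free on NO instances, and appending $H\setminus H_1$ is harmless. Every component of $H$ containing $H_1$ must embed into a component of the appended copy that contains $H_1$. Those host components have, in total, $|E(H_1)|$ fewer edges than the components of $H$ containing $H_1$, so no such embedding exists.

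Your inequality is reversed, though: $\mathrm{ex}(n,H_1)\le \mathrm{ex}(n,H)$. So this route yields $\Omega(\mathrm{ex}(n,H_1)/p)$. That matches the statement when $H$ is connected, or more generally when $\mathrm{ex}(n,H_1)=\Theta(\mathrm{ex}(n,H))$. Your underlying worry is legitimate: the paper's \emph{without loss of generality} passes over exactly this point. For $H$ as above, its argument gives only $\Omega(n^{3/2}/p)$, not $\Omega(\mathrm{ex}(n,H)/p)$.
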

\begin{proof}
    Since we are proving a lower bound by reduction, we can replace $H$ with that component and hence assume without loss of generality that $H$ has diameter $\text{diam}(H)<4$.
    
    Let $G'$ be a $(2n)$-vertex  $H$-free bipartite graph on balanced parts with $\Theta(\text{ex}(n,H))$ edges. Let the two parts of $V(G')$ be $\{g_1,\ldots,g_n\}$ and $\{g_1',\ldots,g_n'\}$, and write
    \begin{align*}
        E(G')=\{f_1,\ldots,f_{|E(G')|}\},\quad f_k=\{g_{i_k},g_{j_k}'\}\quad \forall 1\leq k\leq |E(G')|.
    \end{align*}
    Likewise, let the two parts of $V(H)$ be $\{v_1,\ldots,v_a\}$ and $\{u_1,\ldots,u_b\}$, and write
    \begin{align*}
        E(H)=\{e_1,\ldots,e_{|E(H)|}\},\quad e_q=\{v_{a_q},u_{b_q}\}\quad \forall 1\leq q\leq |E(H)|,\quad a_q\in [a],\quad b_q\in [b].
    \end{align*} Now we make $n$ copies $V_1,\ldots,V_n$ of $V(H)$, namely
    \begin{align*}
        V_j=\{v_1^{(j)},\ldots,v_a^{(j)}\}\sqcup \{u_1^{(j)},\ldots,u_b^{(j)}\},\quad 1\leq j\leq n.
    \end{align*}
    Given an instance of $\mathrm{MULTIDISJ}_{|E(G')|,|E(H)|}$, where for each $q\in [|E(H)|]$ the $q$-th party holds input $S_q\subseteq [|E(G')|]$, we define the reduction graph $G_H$ to be on the vertex set
    \begin{align*}
        V(G_H)=V_1\sqcup \ldots\sqcup V_n    
    \end{align*}
    and have the edge set
    \begin{align*}
        E(G_H)=\Big\{ \{v_{a_q}^{(i_k)},u_{b_q}^{(j_k)}\} :q\in [|E(H)|],\text{ }k\in S_q   \Big\}.
    \end{align*}
    \begin{figure}[H]
        \centering
        \includegraphics[width=0.7\linewidth]{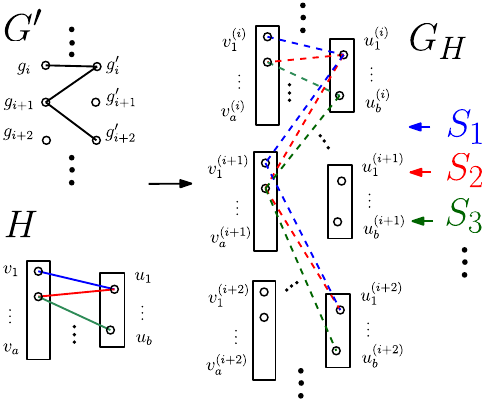}
        \caption{The  reduction graph $G_H$ in Proposition~\ref{diam4LB}. The dotted edges in one color correspond to the input of a party in $\mathrm{MULTIDISJ}_{|E(G')|,|E(H)|}$.}
    \end{figure}    
    Intuitively speaking, the edge $e_q=\{v_{a_q},u_{b_q}\}$ in $H$ is assigned to the $q$-th party, and only between the pairs $\{v_{a_q}^{(i_k)},u_{b_q}^{(j_k)}\}$ where $(i_k,j_k)$ corresponds to edges $f_k=\{g_{i_k},g_{j_k}'\}$ in $G'$ is a  party allowed to encode its input $S_q$, that is, adding that edge according to whether $k\in S_q$.
    Then the following claim  finishes the proof since we have $|V(G_H)|=\Theta(n)$ and $|E(G')|=\Theta(\text{ex}(n,H))$. \\\\
    \textbf{Claim:} We have $H\subseteq G_{H}$ if and only if $\bigcap_{q=1}^{|E(H)|}S_q\ne \emptyset$.
    \begin{claimproof}
        If $k\in \bigcap_{q=1}^{|E(H)|}S_q$, then $\{v_{a_q}^{(i_k)},u_{b_q}^{(j_k)}\}\in E(G_H)$ for all $q\in [|E(H)|]$, but by definition
        \begin{align*}
            \{\{v_{a_q},u_{b_q}\}:q\in [|E(H)|]\}=E(H),   
        \end{align*}
        so we get a copy of $H$ formed by the parts
    \begin{align*}
        \{v_1^{
    (i_k)},\ldots,v_a^{(i_k)}\},\text{ }\{u_1^{
    (j_k)},\ldots,u_b^{(j_k)}\}.
    \end{align*}
    
    Conversely, if $\bigcap_{q=1}^{|E(H)|}S_q= \emptyset$, then by assumption of $\mathrm{MULTIDISJ}_{|E(G')|,|E(H)|}$, we have $S_{q_1}\cap S_{q_2}=\emptyset$ for all $q_1\ne q_2$ . This means that for all $1\leq i\leq n$, two distinct vertices from $\{ v_1^{(i)},\ldots,v_a^{(i)} \}$ cannot have a common neighbor, otherwise by our construction of $E(G_H)$ there are some $1\leq j\leq n$ and $q_1\ne q_2$ with $b_{q_1}=b_{q_2}$ such that 
    \begin{align*}
        \{v_{a_{q_1}}^{(i)},u_{b_{q_1}}^{(j)}\},\{v_{a_{q_2}}^{(i)},u_{b_{q_2}}^{(j)}\}\in E(G_H)
    \end{align*}
    and $(i,j)=(i_k,j_k)$ for some $k\in [|E(G')|]$, giving $k\in S_{q_1}\cap S_{q_2}\ne\emptyset$ and is a contradiction. 
    
    Now suppose there is a copy $H'$ of $H$ in $G_H$, then
    by $\text{diam}(H')<4$, that two vertices from the same part of $H'$ will share a common neighbor, but as above two distinct vertices from $\{ v_1^{(i)},\ldots,v_a^{(i)} \}$ cannot have a common neighbor in $G_H$.   Thus
    for all $1\leq j\leq n$ we have
    \begin{align*}
        |V(H')\cap \{v_1^{(j)},\ldots,v_a^{(j)}\}|,|V(H')\cap \{u_1^{(j)},\ldots,u_a^{(j)}\}|\leq 1.
    \end{align*}
    Hence there are distinct $i_1,\ldots,i_a\in [n]$ and distinct $j_1,\ldots,j_b\in [n]$ such that either
    \begin{align*}
        V(H')=\{v_{\alpha_1}^{(i_1)},\ldots,v_{\alpha_a}^{(i_a)}\}\cup \{u_{\beta_1}^{(j_1)},\ldots,v_{\beta_b}^{(j_b)}\}
    \end{align*}
    for some $\alpha_1,\ldots,\alpha_a\in [a]$ and $\beta_1,\ldots,\beta_b\in [b]$, or
    \begin{align*}
        V(H')=\{v_{\alpha_1}^{(j_1)},\ldots,v_{\alpha_b}^{(j_b)}\}\cup \{u_{\beta_1}^{(i_1)},\ldots,v_{\beta_a}^{(i_a)}\}
    \end{align*}
    for some $\alpha_1,\ldots,\alpha_b\in [a]$ and $\beta_1,\ldots,\beta_a\in [b]$. The former suggests that back in $G'$, the parts $\{g_{i_1},\ldots,g_{i_a}\}$ and $\{g_{j_1}',\ldots,g_{j_b}'\}$ form a copy of $H$, while the latter shows that $\{g_{j_1},\ldots,g_{j_b}\}$ and $\{g_{i_1}',\ldots,g_{i_a}'\}$ form a copy of $H$, but these are contradictions since $G'$ is $H$-free.
    \end{claimproof}
\end{proof}

If we only ask for a single-pass $\Omega(\text{ex}(n,H))$-space lower bound, then a reduction from $\textsc{Index}$ works whenever some component of $H$ has connectivity more than $2$. To prove this, again we can assume without loss of generality that $H$ itself has connectivity $\kappa(H)> 2$. Then we shall fix an $H$-free dense $G'$ as the skeleton and use its edges to encode Alice's input. An edge of $H$ will be chosen to overlap with the index edge in $G'$, and the remaining part of $H$ will be attached accordingly.

\begin{proposition}\label{k2LB}
    Let $H$ be a fixed bipartite graph. If some component of $H$ has connectivity  more than $2$, then $R^1_{2/3}(\textsc{Sub}(H))=\Omega(\mathrm{ex}(n,H))$. 
\end{proposition}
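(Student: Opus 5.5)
We reduce from $\mathrm{INDEX}_{m}$ with $m=\Theta(\mathrm{ex}(n,H))$. As the paragraph before the statement indicates, we may replace $H$ by a component with $\kappa(H)>2$; since the other components can be planted as disjoint fixed copies on $O(1)$ fresh vertices without affecting the argument, we assume $H$ is connected and $3$-connected. Let $G'$ be an $n$-vertex $H$-free graph with $m=\mathrm{ex}(n,H)$ edges $f_1,\ldots,f_m$.

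\emph{Alice's part.} Alice, holding $T\subseteq[m]$, streams the edges $\{f_k:k\in T\}$; this subgraph of $G'$ is $H$-free.

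\emph{Bob's part.} Bob, holding $k$ with $f_k=\{x,y\}$, fixes an edge $\{a,b\}\in E(H)$. He adds $|V(H)|-2$ fresh vertices forming a copy of $H$ in which $a,b$ are identified with $x,y$, and streams all edges of that copy except $\{a,b\}$. Then $G$ has $n+O(1)$ vertices. If $k\in T$, then $f_k$ is present and completes a copy of $H$. A one-pass $S$-space algorithm therefore gives a one-way protocol with $S$ bits of communication, so $S=\Omega(m)$.

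\emph{Correctness when $k\notin T$.} We must show that $G$ contains no copy $H'$ of $H$. The key fact is that the gadget $W$ attaches to the $G'$-side only through the two vertices $x,y$, and $V(G')\setminus\{x,y\}$ and $W\setminus\{x,y\}$ are separated by $\{x,y\}$.
\begin{itemize}
\item Suppose $H'$ meets both sides outside $\{x,y\}$. Then $\{x,y\}\cap V(H')$ is a vertex cut of $H'$ of size at most $2$, contradicting $\kappa(H)>2$. (Since $H$ is bipartite and $3$-connected, $|V(H)|\ge 6$, so removing two vertices leaves a graph that must stay connected.)
\item Otherwise $H'$ lies entirely in Alice's graph, which is impossible because that graph is a subgraph of $G'$ and hence $H$-free.
\item Or $H'$ lies entirely in $W$, the gadget on exactly $|V(H)|$ vertices. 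The induced edges on $W$ are Bob's edges plus possibly the edge $\{x,y\}$, which is absent since $k\notin T$. So $W$ spans exactly $|E(H)|-1$ edges and cannot contain $H$.
\end{itemize}

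The main obstacle is precisely this cut argument. It is where connectivity $>2$ is needed: with $\kappa=2$, a copy of $H$ could be split across the separator $\{x,y\}$, mixing Alice's edges with the gadget. We must also be careful that $x,y$ together with edges of $G'$ incident to them do not form an $H$ using only part of the gadget, and the vertex-cut argument above rules this out.
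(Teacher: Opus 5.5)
Your proposal is correct and is essentially the paper's own proof: an $\mathrm{INDEX}$ reduction in which Alice streams a subset of the edges of an extremal $H$-free graph, Bob glues a copy of $H$ minus one edge onto the index edge $\{x,y\}$, and $\kappa(H)>2$ rules out a copy of $H$ straddling the separator $\{x,y\}$. The one step that you, like the paper, treat loosely is the passage to a single component $H_1$ with $\kappa(H_1)>2$: planting the other components needs a short counting argument when they themselves contain $H_1$, and this route yields only $\Omega(\mathrm{ex}(n,H_1))$, which for disconnected $H$ can be polynomially smaller than $\mathrm{ex}(n,H)$ (e.g.\ $H_1=K_{3,3}$ alongside a component made of two copies of $K_{10,10}$ sharing a vertex).
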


\begin{proof}
    Assume without loss of generality that $\kappa(H)>2$. Let $H_0=H$ be on vertex set $V(H_0)=\{v_1,\ldots,v_{|V(H)|}\}$ and fix any $\{v_i,v_j\}\in E(H_0)$. Let $G'$ be an $n$-vertex $H$-free graph with $\text{ex}(n,H)$ edges and write
    \begin{align*}
        E(G')=\{e_1,\ldots,e_{\text{ex}(n,H)}\},\quad e_i=\{v_{i,1},v_{i,2}\},\quad v_{i,1},v_{i,2}\in V(G').
    \end{align*}
    Given an instance of $\mathrm{INDEX}_{\text{ex}(n,H)}$, where Alice holds $S\subseteq [\text{ex}(n,H)]$ and Bob holds $k\in [\text{ex}(n,H)]$, we define the reduction graph $G_H$ by
    \begin{align*}
        &V(G_H)=
        V(G')\sqcup V(H_0)\setminus\{v_i,v_j\},\quad E(G_H)=E_1\sqcup E_2\sqcup E_3,\\
        &E_1=\{e_\ell:\ell\in S\},\quad E_2= \{e\in E(H_0):\{v_i,v_j\}\cap e=\emptyset\},\\
        & E_3= \Big\{\{v_{k,1},v_\ell\}:\{v_i,v_\ell\}\in E(H_0)\Big\}\sqcup \Big\{\{v_{k,2},v_\ell\}:\{v_j,v_\ell\}\in E(H_0)\Big\}.
    \end{align*}
    That is, we use $E(G')$ to encode $S$ and attach a copy of $H$ by overlapping $\{v_i,v_j\}$ with $\{v_{k,1},v_{k,2}\}$, where $k$ is the target index. The following claim finishes the proof since $|V(G_H)|=\Theta(n)$.\\\\
    \textbf{Claim:} We have $H\subseteq G_H$ if and only if $k\in S$.
    \begin{claimproof}
        If $k\in S$, then $\{v_{k,1},v_{k,2}\}\in E(G_H)$, so $\{v_{k,1},v_{k,2}\}\sqcup V(H_0)\setminus \{v_i,v_j\}$ forms a copy of $H$. Conversely, suppose there is a copy $H'$ of $H$ in $G_H$ but $k\notin S$, then $\{v_{k,1},v_{k,2}\}\notin G_H$, so 
        \begin{align*}
        &H\not\subseteq G_H[V(G')],\quad   H\not\subseteq G_H[\{v_{k,1},v_{k,2}\}\sqcup V(H)\setminus\{v_i,v_j\}],\\
        &V(H')\cap (V(G')\setminus \{v_{k,1},v_{k,2}\} )\ne \emptyset,\quad V(H')\cap (V(H_0)\setminus \{v_i,v_j\})\ne\emptyset,
    \end{align*}
    so $V(H')\cap \{v_{k,1},v_{k,2}\}$ is a vertex-cut of size $\leq 2$ in $H'$, which contradicts $\kappa(H)>2$.
    \end{claimproof}
\end{proof}

We have been aiming to prove the $p$-pass $\Omega(\text{ex}(n,H)/p)$-space lower bound throughout this section, and for any target graph $H$, we do have the naive single-pass $\tilde O(\text{ex}(n,H))$-space deterministic algorithm. However, when multiple passes are allowed, especially when the number of passes is polynomial in $n$, currently we do not know if algorithms that are more space-efficient exist in general. Proposition~\ref{oddcycleUB} is a special case when such algorithms exist. 
More remarks on the upper bounds will be given in section~\ref{sec:open problem}.

%% file: induced.tex
\section{Induced simple graphs}\label{sec:section4}

We next study the \emph{induced} variant of the subgraph finding problem in the streaming model, where the goal is to detect an induced copy of $H$ rather than an arbitrary (not necessarily induced) copy.
In general, induced detection is markedly more difficult: even a very dense graph may fail to contain a given induced pattern.
Our results was concluded in Theorem~\ref{thmUI} that for all but a small set of exceptional graphs $H$, we have $R^p_{2/3}(\textsc{Indsub}(H))=\Omega(n^2/p)$, whereas the exceptional cases admit deterministic algorithms using only $\widetilde{O}(n)$ space and $\widetilde{O}(1)$ passes.

\subsection{Hard Instances}

We first show that the non-bipartite cases remain as hard in the induced setting.

\begin{proposition}\label{prop:induced-nonbipartite}
Let $H$ be a non-bipartite graph, then $R^p_{2/3}(\textsc{IndSub}(H)) = \Omega(n^2/p)$.
\end{proposition}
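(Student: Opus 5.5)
The plan is to reuse the reduction from $\mathrm{MULTIDISJ}_{n^2,r}$ in the proof of Proposition~\ref{NBlower} essentially verbatim, and to check that it also works for the induced problem. The NO direction needs nothing new. If the sets $S_1,\dots,S_r$ are pairwise disjoint, the Claim in Proposition~\ref{NBlower} already shows $H\not\subseteq G_H$. In particular $G_H$ then contains no induced copy of $H$. So the only work is the YES direction: the copy of $H$ exhibited there must actually be an induced copy.

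One small modification is needed first. The exhibited copy uses the vertex $v^{(j)}$ together with $A_1^{(i)}\setminus\{v^{(i)}\}$, where $(j-1)n+i$ is the common element. If $i=j$ this degenerates: $v^{(j)}=v^{(i)}$ and the $E_3$-edges are absent. I will therefore encode the universe of the communication instance only into the $n(n-1)$ ordered pairs $(j,i)$ with $i\neq j$, via an arbitrary bijection. This costs only a constant factor, so the $\Omega(n^2/p)$ bound is kept.

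Now let $k=(j-1)n+i$ with $i\neq j$ be the common element, and let $X=(V(H)\setminus A_1)\cup(A_1^{(i)}\setminus\{v^{(i)}\})\cup\{v^{(j)}\}$. I will go through the edge classes of $G_H[X]$ and check that each class contributes exactly the corresponding edges of $H$ under the natural identification $v^{(j)}\mapsto v$ and $x^{(i)}\mapsto x$.
\begin{itemize}
\item $E_4$ restricted to $V(H)\setminus A_1$ is exactly $E(H[V(H)\setminus A_1])$.
\item $E_5$ joins $x^{(\cdot)}$ to $y\notin A_1$ exactly when $\{x,y\}\in E(H)$; this holds for $v^{(j)}$ as well.
\item $E_1$ inside $A_1^{(i)}\setminus\{v^{(i)}\}$ consists of all edges of $H[A_1]$ avoiding $v$. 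The only edges that $E_1$ omits are those at $v$, and $v^{(i)}\notin X$.
\item No $E_1$-edges or $E_5$-edges join $v^{(j)}$ to $A_1^{(i)}$, since these classes never cross between different copies.
\item $E_3$ joins $v^{(j)}$ to $w_t^{(i)}$ exactly when $\{v,w_t\}\in E(H)$, using $i\neq j$.
\item $E_2$ joins $v^{(j)}$ to $u_q^{(i)}$ exactly when $k\in S_q$, which holds for all $q\in[r]$. These are exactly the edges $\{v,u_q\}$ of $H$.
\end{itemize}
Hence $G_H[X]\cong H$, so $G_H$ contains an induced copy of $H$ in the YES case.

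Since $|V(G_H)|=\Theta(n)$, a $p$-pass streaming algorithm for $\textsc{IndSub}(H)$ yields a $p$-round protocol for $\mathrm{MULTIDISJ}_{\Theta(n^2),r}$ with $r=O(1)$. By Gronemeier's bound this gives $\Omega(n^2/p)$ space.

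The main obstacle is the inducedness check. Specifically, one must make sure that no stray $E_2$-edges appear between $v^{(j)}$ and $A_1^{(i)}$ beyond the $H$-edges, and that the diagonal case $i=j$ is excluded. The stray-edge concern is settled because $E_2$-edges from $v^{(j)}$ into copy $i$ are determined by the single index $k$, so only the $H$-edges appear. The diagonal case is handled by the off-diagonal encoding above.
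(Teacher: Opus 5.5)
Your proposal is correct and follows the same route as the paper: reuse the $\mathrm{MULTIDISJ}$ reduction of Proposition~\ref{NBlower}, note that NO-instances contain no copy of $H$ at all, and check that the exhibited YES-copy is induced — you verify this class by class, while the paper only asserts it. Your off-diagonal re-encoding is harmless but not needed: for $i=j$, the $E_1$-edges $\{v^{(i)},w_t^{(i)}\}$ (which $E_1$ does not omit) replace the missing $E_3$-edges, so $A_1^{(i)}\cup(V(H)\setminus A_1)$ already induces $H$.
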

\begin{proof}
In the reduction used for the non-bipartite case of Theorem~\ref{thmUNB}, every yes-instance contains a copy of $H$, and moreover, this copy is induced. On the other hand, every no-instance contains no copy of $H$ at all (not even induced). Therefore, an algorithm for $\textsc{IndSub}(H)$ also solves the communication game in the previous section, yielding the claimed lower bound.
\end{proof}
Furthermore, a simple complementation argument then extends the same hardness to graphs whose complements are non-bipartite.
\begin{proposition}\label{prop:induced-complement-nonbipartite}
Let $H$ be a graph such that $\overline{H}$ is bipartite, then 
\begin{align*}
    R^p_{2/3}(\textsc{IndSub}(H)) = \Omega(n^2/p).
\end{align*}
\end{proposition}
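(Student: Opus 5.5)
The plan is to reduce to cases we already control, using the structure forced by $\overline{H}$ being bipartite. If $\overline{H}$ is bipartite with parts $X,Y$, then $X$ and $Y$ are cliques in $H$. So if $|V(H)|\ge 5$, one of them has at least three vertices, $H$ contains a triangle, and Proposition~\ref{prop:induced-nonbipartite} already gives $R^p_{2/3}(\textsc{IndSub}(H))=\Omega(n^2/p)$. More generally, whenever $H$ contains a triangle we are done by that proposition. So only triangle-free $H$ on at most $4$ vertices with bipartite complement remain. A short enumeration, using $|V(H)|\ge 3$ from \cref{prob:main}, gives exactly $P_3$, $\text{co-}P_3=K_2+K_1$, $P_4$, $C_4$ and $2K_2$. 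I will treat these separately.

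\textbf{Obstacle: the statement is false as literally worded.} \cref{thmUI} asserts $\tilde O(n)$-space $\tilde O(1)$-pass algorithms for $P_3$, $\text{co-}P_3$ and $P_4$, and all three have bipartite complements. No proof can cover these cases. I expect the intended reading is "$\overline{H}$ is non-bipartite", which matches the sentence preceding the proposition. The provable content of the statement as worded is the claim for the remaining graphs $C_4$ and $2K_2$, together with the triangle case above.

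\textbf{Complementation tool.} In every reduction of this paper, the vertex set is fixed and each potential edge is either fixed or decided by exactly one party. Hence each party can locally stream the complement of its own pairs, and the fixed pairs are complemented publicly. The parties therefore jointly simulate a streaming algorithm on $\overline{G_H}$. Since $G$ contains an induced $H$ iff $\overline{G}$ contains an induced $\overline{H}$, any hard reduction for $\textsc{IndSub}(F)$ yields one for $\textsc{IndSub}(\overline{F})$ with the same number of vertices. This is the route to the intended version: for non-bipartite $\overline{H}$, complement the reduction of Proposition~\ref{prop:induced-nonbipartite} applied to $\overline{H}$. For the worded version it means $C_4$ and $2K_2=\overline{C_4}$ need only one reduction.

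\textbf{Reduction for induced $C_4$.} I would reduce from $\mathrm{DISJ}_{n^2}$. Take cliques $A=\{a_i\}$ and $B=\{b_j\}$, and cliques $A'=\{a'_i\}$ and $B'=\{b'_j\}$, with perfect matchings $a_i a'_i$ and $b_j b'_j$. Alice adds $a_ib_j$ for $(i,j)\in S_1$, and Bob adds $a'_ib'_j$ for $(i,j)\in S_2$. A common element $(i,j)$ gives the induced $C_4$ $a_i b_j b'_j a'_i$: the chords $a_ib'_j$ and $b_ja'_i$ are never present.

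\textbf{Main step: ruling out other induced $C_4$'s.} An induced $C_4$ contains no triangle and uses at most two vertices from each clique. If it meets a clique in two vertices, they must be opposite on the cycle, which contradicts their adjacency. So it uses at most one vertex from each of $A,B,A',B'$. Moreover $A$–$B'$ and $A'$–$B$ have no edges, so the cycle must be $a_i b_j b'_{j'} a'_{i'}$. Its $B$–$B'$ and $A$–$A'$ edges are matching edges, forcing $j'=j$ and $i'=i$, which gives $(i,j)\in S_1\cap S_2$. This gives $\Omega(n^2/p)$ on $\Theta(n)$ vertices. If this case analysis leaks, I would first try to fix it by adjusting the clique/matching gadgets. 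Complementation then handles $2K_2$.
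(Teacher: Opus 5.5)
You are right that the statement as printed is a typo: the paper's proof begins with ``If $\overline{H}$ is non-bipartite''. Your complementation paragraph is exactly the paper's proof of that intended version: apply the reduction of Proposition~\ref{prop:induced-nonbipartite} to $\overline{H}$, and have each party stream the complement of the pairs it owns, with the fixed pairs complemented publicly. Nothing more is needed for the proposition as intended.

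The part you label the main step, the induced-$C_4$ gadget for the literal wording, is broken. Because $A$ and $A'$ are both cliques joined by the matching $a_ia'_i$, take any $i\neq i'$. The vertices $a_i,a_{i'},a'_{i'},a'_i$ carry the edges $a_ia_{i'}$, $a_{i'}a'_{i'}$, $a'_{i'}a'_i$ and $a'_ia_i$, but neither chord $a_ia'_{i'}$ nor $a_{i'}a'_i$. So they induce a $C_4$ regardless of $S_1,S_2$, and likewise $b_j,b_{j'},b'_{j'},b'_j$. Your NO case therefore fails for every input once $n\ge 2$.

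The faulty step is ``if it meets a clique in two vertices, they must be opposite on the cycle''. Two adjacent vertices of an induced $C_4$ are consecutive, not opposite, and your argument never excludes a cycle with two consecutive vertices in one clique.

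The repair is the paper's construction in its separate proposition on $\textsc{IndSub}(C_4)$: let each matching join a clique to an independent set. Keep $A$ and $B'$ as cliques, which still rules out induced $C_4$'s inside Alice's or Bob's bipartite part, but make $A'$ and $B$ independent sets. The case analysis then goes as follows.
\begin{itemize}
\item Two cycle vertices in $A$ (or in $B'$) are consecutive. Their other two cycle vertices are adjacent to each other, yet both lie in $A'\cup B$, which is independent.
\item Two cycle vertices in $A'$ (or in $B$) are non-adjacent, hence opposite. Their two common neighbours lie in the clique $B'$ (resp.\ $A$), which gives a chord.
\end{itemize}
Hence the cycle uses exactly one vertex per part, and your matching argument finishes. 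Note also that this $C_4$/$2K_2$ material is not needed for the proposition in its intended form.
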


\begin{proof}
Given a graph $G$, observe that
\[
G \text{ contains an induced copy of } \overline{H}
\quad\Longleftrightarrow\quad
\overline{G} \text{ contains an induced copy of } H .
\]
If $\overline{H}$ is non-bipartite, consider the multi-party reduction from \textsc{Multi-Disjointness} used in Theorem~\ref{thmUNB} for $\overline{H}$, which constructs a reduction graph $G_{\overline{H}}$.
Instead of streaming $G_{\overline{H}}$, the players can stream its complement $\overline{G_{\overline{H}}}$ by complementing the edges contributed by their input sets and complementing the remaining fixed edges in the construction.
Running an induced-$H$-detecting streaming algorithm on $\overline{G_{\overline{H}}}$ would therefore solve \textsc{Multi-Disjointness}.
Hence, such an algorithm must use $\Omega(n^2/p)$ space with $p$ passes.
\end{proof}

\begin{corollary}
    Let $H$ be a graph with at least 5 vertices. Then $R^p_{2/3}(\textsc{IndSub}(H)) = \Omega(n^2/p)$.
\end{corollary}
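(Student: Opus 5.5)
The plan is to combine Proposition~\ref{prop:induced-nonbipartite} with Proposition~\ref{prop:induced-complement-nonbipartite}. Read as intended, the latter says that if $\overline{H}$ is non-bipartite then $R^p_{2/3}(\textsc{IndSub}(H))=\Omega(n^2/p)$; its proof only uses non-bipartiteness of $\overline{H}$, so this is the version I will use. It then suffices to show a purely graph-theoretic fact: if $|V(H)|\ge 5$, then $H$ and $\overline{H}$ cannot both be bipartite. Once this is established, either $H$ is non-bipartite, and Proposition~\ref{prop:induced-nonbipartite} applies, or $\overline{H}$ is non-bipartite, and Proposition~\ref{prop:induced-complement-nonbipartite} applies. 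Either way we get the $\Omega(n^2/p)$ bound.

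For the graph-theoretic fact, suppose $H$ is bipartite with parts $A\sqcup B$, and $|A|+|B|\ge 5$. By pigeonhole, one part, say $A$, has at least $3$ vertices. Since $A$ is independent in $H$, any three vertices of $A$ are pairwise adjacent in $\overline{H}$. So $\overline{H}$ contains a triangle and is non-bipartite. This finishes the argument.

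No step is a real obstacle. The one point needing care is reading Proposition~\ref{prop:induced-complement-nonbipartite} with the hypothesis ``$\overline{H}$ is non-bipartite''. The statement as printed says ``bipartite'', which is evidently a typo, since the proof begins ``If $\overline{H}$ is non-bipartite''. I would note this explicitly when invoking it.
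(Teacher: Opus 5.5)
Your proposal is correct and matches the paper's proof: both reduce to Propositions~\ref{prop:induced-nonbipartite} and~\ref{prop:induced-complement-nonbipartite} via the fact that $H$ and $\overline{H}$ cannot both be bipartite on at least $5$ vertices, with the latter proposition read as requiring $\overline{H}$ non-bipartite, as you rightly note. The only difference is cosmetic: the paper gets that fact by combining proper $2$-colorings of $H$ and $\overline{H}$ into a proper $4$-coloring of $K_t$, whereas your pigeonhole argument exhibits a triangle in $\overline{H}$ directly and is equally short.
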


\begin{proof}
     Suppose that $H$ has $t$ vertices and that both $H$ and its complement $\overline{H}$ are bipartite.\\
Let $\phi:V(H)\to\{0,1\}$ and $\psi:V(H)\to\{0,1\}$ be proper $2$-colorings of $H$ and $\overline{H}$, respectively. 
Then the map $v\mapsto (\phi(v),\psi(v))\in\{0,1\}^2$ is a proper $4$-coloring of $K_t$. 
This yields 
\begin{align*}
    t = \chi(K_t)\le 4.
\end{align*}
Equivalently, every graph $H$ on at least $5$ vertices is either non-bipartite or has a non-bipartite complement. Propositions~\ref{prop:induced-nonbipartite} and \ref{prop:induced-complement-nonbipartite} together show the hardness of $\textsc{IndSub}(H)$.
\end{proof}
For graphs with $|V(H)|\in\{3,4\}$, the only cases for which both $H$ and $\overline{H}$ are bipartite are
$P_3$, $\mathrm{co}\text{-}P_3$, $C_4$, $2K_2$, and $P_4$; all other graphs fall into the hard regime.
Moreover, $C_4$ and $2K_2$ are hard as well.
We prove hardness only for $\textsc{IndSub}(C_4)$; since $2K_2=\overline{C_4}$, the corresponding lower bound for $2K_2$ follows by the same complementation argument as in Proposition~\ref{prop:induced-complement-nonbipartite}.

\begin{proposition}
    $R^p_{2/3}(\textsc{IndSub}(C_4)) = \Omega(n^2/p)$
\end{proposition}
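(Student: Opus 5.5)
We reduce $\mathrm{DISJ}_{n^2}$ to $\textsc{IndSub}(C_4)$, following the scheme of Proposition~\ref{ECprop}. The idea is to encode Alice's and Bob's inputs on two separate ``sides'' of the graph. Each side is built as a split graph, so that no induced $C_4$ can live inside one side. The two sides are then joined by perfect matchings, so that an induced $C_4$ can only arise from a common element of the two input sets.

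The construction is as follows. Index the universe $[n^2]$ by pairs $(i,j)\in[n]^2$. Take four vertex sets
\[
U=\{u_1,\ldots,u_n\},\quad W=\{w_1,\ldots,w_n\},\quad U'=\{u'_1,\ldots,u'_n\},\quad W'=\{w'_1,\ldots,w'_n\}.
\]
The fixed edges are the following:
\begin{itemize}
\item $U$ is a clique and $W$ is independent;
\item $U'$ is independent and $W'$ is a clique;
\item there are matching edges $\{u_i,u'_i\}$ and $\{w_j,w'_j\}$ for all $i,j$.
\end{itemize}
The input-dependent edges are:
\begin{itemize}
\item Alice contributes $\{u_i,w_j\}$ for each $(i,j)\in S_1$;
\item Bob contributes $\{u'_i,w'_j\}$ for each $(i,j)\in S_2$.
\end{itemize}
Each fixed edge is assigned to an arbitrary player, who streams it together with their own input edges. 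The graph has $4n$ vertices. Hence a $p$-pass $S$-space algorithm yields a $p$-round protocol with $O(pS)$ communication, and the $\Omega(n^2)$ lower bound for $\mathrm{DISJ}_{n^2}$ gives $S=\Omega(n^2/p)$.

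For completeness, suppose $(i,j)\in S_1\cap S_2$. Then $u_i-w_j-w'_j-u'_i-u_i$ is a $4$-cycle. Its only possible chords are $\{u_i,w'_j\}$ and $\{w_j,u'_i\}$, and neither is ever an edge of the construction. So this cycle is an induced $C_4$.

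For soundness, suppose $S_1\cap S_2=\emptyset$; we must rule out every induced $C_4$. We argue by cases.
\begin{itemize}
\item \emph{Cycles inside $U\cup W$.} This is a split graph: a clique plus an independent set. An induced $C_4$ has independence number $2$ and clique number $2$. So it would use exactly two vertices of $U$, which are adjacent, and two of $W$, which are nonadjacent. The two $W$-vertices must then be opposite on the cycle, and therefore so must the two $U$-vertices. But opposite vertices of an induced $C_4$ are nonadjacent, while these two are adjacent, a contradiction.
\item \emph{Cycles inside $U'\cup W'$.} The same argument applies, since this side is also a split graph.
\item \emph{Cycles meeting both sides.} The only edges between the sides are the matching edges, so such a cycle crosses them an even number of times, hence exactly twice.
  \begin{itemize}
  \item If it uses $\{u_i,u'_i\}$ and $\{u_k,u'_k\}$, it needs the edge $\{u'_i,u'_k\}$, which is absent because $U'$ is independent.
  \item If it uses $\{w_j,w'_j\}$ and $\{w_l,w'_l\}$, it needs $\{w_j,w_l\}$, which is absent because $W$ is independent.
  \item If it uses $\{u_i,u'_i\}$ and $\{w_j,w'_j\}$, it needs both $\{u_i,w_j\}$ and $\{u'_i,w'_j\}$. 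This means $(i,j)\in S_1\cap S_2$, which is impossible.
  \end{itemize}
\end{itemize}

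The main obstacle is this soundness check. A naive choice, such as making both $U$ and $U'$ cliques, creates spurious induced $C_4$'s of the form $u_i u_k u'_k u'_i$. Alice's arbitrary bipartite graph between $U$ and $W$ could also contain an induced $C_4$ on its own. The choice of opposite clique/independent-set roles on the two sides is exactly what prevents both failure modes.
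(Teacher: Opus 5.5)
Your proof is correct and uses exactly the paper's construction, with $U,W,U',W'$ playing the roles of $U_1,V_1,U_2,V_2$: a clique and an independent set on Alice's side, an independent set and a clique on Bob's side, joined by index-matchings. The only difference is how soundness is organized: you split by side, using that each side is a split graph and that a $4$-cycle meeting both sides must use exactly two (necessarily opposite) matching edges, whereas the paper splits into the case of one cycle vertex in each of the four parts versus some part containing at least two cycle vertices.
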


    \begin{proof}
    We reduce $\mathrm{Disj}_{n^2}$ to $\textsc{IndSub}(C_4)$.
Let $U_1,U_2,V_1,V_2$ be four pairwise disjoint vertex sets, each of size $n$.
For $i\in\{1,2\}$, write
\[
U_i=\{u_{i,1},u_{i,2},\ldots,u_{i,n}\},\qquad
V_i=\{v_{i,1},v_{i,2},\ldots,v_{i,n}\}.
\]
Given $S_1,S_2\subseteq [n^2]$, define
\[
E_1=\bigl\{\{u_{1,i},v_{1,j}\} : (i-1)n+j\in S_1 \bigr\},
\qquad
E_2=\bigl\{\{u_{2,i},v_{2,j}\} : (i-1)n+j\in S_2 \bigr\}.
\]
In addition, let
\[
E_3=\bigl\{\{u,u'\} : u,u'\in U_1,\ u\neq u' \bigr\}
\ \sqcup\
\bigl\{\{v,v'\} : v,v'\in V_2,\ v\neq v' \bigr\},
\]
and
\[
E_4=\left(\bigsqcup_{i\in[n]} \Big\{\{u_{1,i},u_{2,i}\}\Big\}\right)
\ \sqcup\
\left(\bigsqcup_{j\in[n]} \Big\{\{v_{1,j},v_{2,j}\}\Big\}\right).
\]
We construct a graph $G$ with
\[
V(G)=U_1\sqcup U_2\sqcup V_1\sqcup V_2
\qquad\text{and}\qquad
E(G)=E_1\sqcup E_2\sqcup E_3\sqcup E_4.
\]
\textbf{Claim.} $S_1\cap S_2\neq\emptyset$ if and only if $G$ contains an induced copy of $C_4$.
\begin{claimproof}
 If $(i-1)n+j\in S_1\cap S_2$, then by the definitions of $E_1$ and $E_2$ we have
$\{u_{1,i},v_{1,j}\}\in E_1$ and $\{u_{2,i},v_{2,j}\}\in E_2$.
Moreover, $\{u_{1,i},u_{2,i}\}\in E_4$ and $\{v_{1,j},v_{2,j}\}\in E_4$.
Hence the four vertices $\{u_{1,a},v_{1,b},v_{2,b},u_{2,a}\}$ span the cycle
\[
u_{1,a} -v_{1,b}-  v_{2,b}  -u_{2,a}-  u_{1,a}.
\]
There are no additional edges among these four vertices, so these four vertices induce a $C_4$.

Conversely, suppose that $G$ has an induced subgraph $H\cong C_4$ with
$V(H)=\{w_1,w_2,w_3,w_4\}$ and $E(H)=\{\{w_i,w_{i+1}\}:i\in[4]\}$, where $w_5=w_1$.
We show that $S_1\cap S_2\neq\emptyset$.\\\\
\textbf{Case 1:} $H$ has one vertex in each of $U_1,U_2,V_1,V_2$.\\\\
Notice that there are no edges between $U_1$ and $V_2$ and no edges between $U_2$ and $V_1$, and the edges between $U_1,U_2$ or $V_1,V_2$ only connect those having the same index.
The only way to form a $4$-cycle using one vertex from each part is
\[
V(H)=\{u_{1,s},v_{1,t},v_{2,t},u_{2,s}\}
\]
for some $s,t\in[n]$, with edges
$\{u_{1,s},v_{1,t}\}\in E_1$, $\{u_{2,s},v_{2,t}\}\in E_2$, and the matching edges
$\{u_{1,s},u_{2,s}\}, \{v_{1,t},v_{2,t}\}\in E_4$.
Thus $(s-1)n+t\in S_1$ and $(s-1)n+t\in S_2$, implying $S_1\cap S_2\neq\emptyset$.\\\\
\textbf{Case 2:} Some part contains at least two vertices of $H$.\\\\
We derive a contradiction.
If $|U_1\cap V(H)|\ge 2$, take distinct $x,y\in U_1\cap V(H)$. Since $U_1$ is a clique by $E_3$, we have $\{x,y\}\in E(G)$.
As $H$ is an induced $C_4$, the edge $\{x,y\}$ must belong to the cycle, so the other two vertices cannot be in $U_1$ (otherwise a triangle forms). Hence, they lie in $N(U_1) \subseteq U_2\cup V_1$.

However, they also cannot both lie in $U_2\cup V_1$, because $U_2\cup V_1$ is an independent set while the remaining two vertices of a $4$-cycle must be adjacent. This is a contradiction, and therefore $|U_1\cap V(H)|\le 1$.
A symmetric argument yields $|V_2\cap V(H)|\le 1$.

If $|U_2\cap V(H)|\ge 2$, then the two vertices in $U_2$ are non-adjacent.
In an induced $C_4$, opposite vertices have exactly two common neighbors, but in $G$ the common neighborhood structure forces both
neighbors to lie in $V_2$ (via the edges in $E_2$), which would make them adjacent (since $V_2$ induces a clique), creating a chord.
This contradicts that $H$ is induced. A symmetric argument rules out $|V_1\cap V(H)|\ge 2$.
Therefore, Case~2 is impossible, and Case~1 must hold. $S_1\cap S_2\neq\emptyset$.
\end{claimproof}

By the claim, the existence of the streaming algorithm implies a protocol for $\textsc{Disj}_{n^2}$: Alice and Bob can simulate it on the constructed stream to decide whether two sets are disjoint.
\end{proof}

\subsection{Simple Instances}
The remaining 3 graphs in the last subsection are easy to decide with streaming algorithms. In fact, they can all be answered deterministically with $\widetilde{O}(1)$-passes and $\widetilde{O}(n)$-space.
\begin{proposition}
$D^1(\textsc{IndSub}(P_3)) = \widetilde{O}(n)$.
\end{proposition}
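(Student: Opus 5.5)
The plan is to use the structural fact that a graph has no induced $P_3$ if and only if it is a disjoint union of cliques. Equivalently, every connected component $C$ satisfies $\deg(v)=|C|-1$ for all $v\in C$. The algorithm must both decide this and, when it fails, output three vertices $x,y,z$ with $xy,yz\in E(G)$ and $xz\notin E(G)$. All of this has to happen in a single deterministic pass with $\tilde O(n)$ bits. I therefore plan to keep $O(1)$ words per vertex, so that at the end of the pass the answer can be computed offline from the stored summary.

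\textbf{Summary kept during the pass.} For every vertex $v$ the algorithm stores four things, each updatable in $O(\log n)$ bits per edge arrival. The first is $\deg(v)$. The second is a union--find structure, so the connected components and their sizes are known. The third is the label $L(v)=\min N[v]$, where $N[v]$ is the closed neighbourhood; this is initialised to $v$ and lowered whenever an edge $\{v,u\}$ with $u<L(v)$ arrives. The fourth is $\mathrm{next}(v)=\min\{u\in N(v):u>v\}$.

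\textbf{Extracting a witness at the end.} I would check the following conditions in order.
\begin{itemize}
\item \emph{Adjacent vertices with different labels.} Suppose some pair $x,y$ in one component has $L(x)\ne L(y)$. I should not rely only on spanning-forest edges for this check, since the forest changes during the pass. Instead, I would also maintain, per vertex, a flag recording the first edge $\{x,y\}$ seen with unequal labels at the end; this needs care because labels keep changing, and is one place I am unsure. Given such an edge, assume $m=L(x)<L(y)$. Then $m\ne x$, because $x\in N[y]$ forces $L(y)\le x$. Also $m\notin N[y]$, since otherwise $L(y)\le m$. Hence $m-x-y$ is an induced $P_3$.
\item \emph{Equal labels.} Otherwise each component $C$ has a common label $m$, and $m$ is adjacent to every vertex of $C$. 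Sort $C$ as $m=c_1<c_2<\dots<c_k$. If some $i\ge 2$ has $\mathrm{next}(c_i)\ne c_{i+1}$, then $c_i$ and $c_{i+1}$ are non-adjacent. In that case $c_i-m-c_{i+1}$ is an induced $P_3$.
\item \emph{Degree check.} Finally, if every component satisfies $\deg(v)=|C|-1$ for all its vertices, report that no induced $P_3$ exists.
\end{itemize}

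\textbf{Main obstacle.} The hard step is the case where all labels agree and all consecutive pairs are adjacent, yet some vertex $v$ is deficient, with $\deg(v)<|C|-1$. The degree count certifies that a non-neighbour $w$ of $v$ exists, but it does not name $w$, and the deterministic one-pass requirement rules out random sketches. My first attempt would be to add per-vertex sums of neighbour identifiers, $\sum_{u\in N(v)}u$, together with $\sum_{u\in N(v)}u^2$. This identifies $w$ directly when $v$ has exactly one non-neighbour, because then $w=\sum_{u\in C}u-v-\sum_{u\in N(v)}u$. From there I would try to show that a deficient vertex with exactly one non-neighbour always exists, or can be reached after restricting attention to a suitable sub-clique around $m$. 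If that fails, I would fall back to generalising $\mathrm{next}(\cdot)$ to the nearest neighbour in several fixed orderings chosen deterministically. Each such ordering adds only $O(\log n)$ bits per vertex, which keeps the total space at $\tilde O(n)$.
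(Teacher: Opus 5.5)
Your third bullet is exactly the paper's argument. The paper keeps a spanning forest and the degree counters in one pass, and accepts iff $\deg(v)=|C|-1$ for every vertex $v$ of every component $C$. That settles the decision version, and it is all the paper proves for $P_3$; it never extracts a witness. So your proposal already contains the paper's proof. The rest of your machinery is an attempt at the search version, and that part is not finished.

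Two comments on the search part. First, the worry in your first bullet is unnecessary. Store the forest edges explicitly (they are genuine edges of $G$), and note that $L(\cdot)$ is exact after the pass. If $L$ is not constant on a component, then some stored forest edge has endpoints with different labels. That already gives your $m-x-y$ witness, with no flags needed.

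Second, the real gap is the last case, and the route you sketch cannot close it. The hoped-for fact that some deficient vertex has exactly one non-neighbour is false. Take $C=\{1,\dots,k\}$ with $k\ge 4t+1$, and delete from the clique on $C$ all edges between $\{3,5,\dots,2t+1\}$ and $\{2t+3,\dots,4t+1\}$.
\begin{itemize}
\item Vertex $1$ stays universal, so every label equals $1$.
\item No pair $\{i,i+1\}$ is deleted, since all deleted edges join odd vertices; hence $\mathrm{next}(c_i)=c_{i+1}$ for all $i$.
\item Every deficient vertex has exactly $t$ non-neighbours. A constant number of power sums per vertex does not determine these once $t$ exceeds that number.
\end{itemize}
The fallback with polylogarithmically many fixed orderings fails the same way. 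The pairs that are consecutive in some ordering form a graph with $\tilde O(n)$ edges, so it has an independent set of size $\tilde\Omega(n)$. Deleting a $K_{t,t}$ inside that set, avoiding the minimum vertex, passes every check you propose.

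So, as written, your proposal proves the decision statement by the paper's argument, but not the witness-producing version you set out to prove. The paper leaves that version unaddressed as well.
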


\begin{proof}
By a simple observation, a graph contains no induced $P_3$ if and only if it is a union of cliques.

Using a streaming model, one can retain a spanning forest and compute the degrees of each vertex in a single pass and $\widetilde{O}(n)$-space. To check whether the graph is a union of cliques, we only need to check whether the degree of each vertex is exactly the size of its connected component minus one.
\end{proof}

For the next two cases, we will frequently need to reason about the complement graph $\overline{G}$, in particular to partition $V(G)$ into the connected components of $\overline{G}$.
Without any structural assumption on $G$, however, performing this task directly in the streaming model is nontrivial.
To handle this, we use a routine that runs in $\widetilde{O}(n)$ space and a polylogarithmic number of passes, and either produces the desired partition or, failing that, outputs a specific certificate structure in $G$.

\begin{lemma}
\label{lem:components-or-copath}
    Given a graph $G$ and an integer $k$, there exists an $O(k \log n)$-pass deterministic streaming algorithm using $\widetilde{O}(kn)$ space that either outputs the partition of $V(G)$ into the connected components of $\overline{G}$, or finds an induced copy of $\mathrm{co}\text{-}P_{k+2}$ in $G$.

\end{lemma}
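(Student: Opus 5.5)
We plan to compute the components of $\overline{G}$ by Borůvka-style merging, where each merging round is carried out by BFS inside the complement graph. Throughout we maintain a partition $\mathcal P$ of $V(G)$ into groups. Each group is known to be connected in $\overline{G}$. Initially every group is a singleton.

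In one phase, each group $X\in\mathcal P$ tries to find one non-edge of $G$ leaving it. Such a non-edge is a pair $x\in X$, $y\notin X$ with $\{x,y\}\notin E(G)$. To detect it in $\tilde O(n)$ space we count, in one pass, $d_X(y)=|N_G(y)\cap X|$ for every $y$ and every group $X$. There are too many pairs $(y,X)$ to store all of them, so we count per vertex: for each vertex $x$ we record its degree $\deg(x)$, and for each $y$ the number of $G$-neighbours $y$ has in the group of $x$.

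The key test is the following. If $|X|\cdot|V\setminus X|$ equals the number of $G$-edges between $X$ and $V\setminus X$, then $X$ is a full join to the rest, and it is a finished component of $\overline{G}$. Otherwise some $x\in X$ has $\deg_{V\setminus X}(x)<n-|X|$. That $x$ has a non-neighbour outside $X$, but we do not yet know which vertex it is.

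To name that non-neighbour, we binary search over a fixed ordering of $V\setminus X$. In each pass, for every active group, we count the $G$-edges from the chosen $x$ into a halving range of $V\setminus X$. This costs $O(\log n)$ passes per phase and $O(n)$ counters in total. After the search, every unfinished group hooks onto a discovered non-neighbour, and we merge the groups along these hooks. Each phase at least halves the number of unfinished groups, so $O(\log n)$ phases suffice.

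The obstacle is that merging along complement-edges only certifies connectivity in $\overline{G}$; it never costs us correctness. The plan above already gives the partition deterministically in $O(\log^2 n)$ passes and $O(n\log n)$ space. So where do $k$ and $\mathrm{co}\text{-}P_{k+2}$ come in? I expect the intended algorithm bounds the passes by $O(k\log n)$ using BFS depth in $\overline{G}$. Concretely, the algorithm grows each group by complement-BFS layers. Once a shortest path in $\overline{G}$ reaches length $k+1$, that path is an induced $P_{k+2}$ in $\overline{G}$, which is an induced $\mathrm{co}\text{-}P_{k+2}$ in $G$, and we output it. So I would run, from a representative of each current group, a complement-BFS for at most $k+1$ layers. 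Each layer is computed in one pass: a vertex $y$ joins the next layer if it is non-adjacent to some vertex of the current frontier. We detect this by comparing $y$'s count of $G$-neighbours in the frontier with the frontier size, which needs only $O(n)$ counters per simultaneous BFS.

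If a BFS runs past depth $k+1$, the shortest path to the new vertex certifies the pattern. This is where the factor $k$ enters the pass count. Storing predecessors for path recovery costs $\tilde O(kn)$ space. If no BFS exceeds depth $k+1$, each BFS halts with a complete component of $\overline{G}$.

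Running separate BFSs simultaneously for all representatives is the main difficulty, since a vertex can be reached by several of them at once. I would handle this by merging the representatives through the halving scheme above. That yields $O(\log n)$ rounds, each consisting of $O(k)$ BFS passes, for $O(k\log n)$ passes overall.
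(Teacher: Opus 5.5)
You have a genuine gap. Your Bor\r{u}vka phase is sound, and it does not even depend on $k$. Each unfinished group hooks onto the group of a named non-neighbour, and that group is itself unfinished. So the number of unfinished groups at least halves per phase, and you get the exact partition deterministically in $\widetilde{O}(n)$ space. But every phase spends $O(\log n)$ passes on the binary search that names the non-neighbour, so the total is $O(\log^2 n)$ passes. This matches the claimed $O(k\log n)$ only when $k=\Omega(\log n)$. For constant $k$ it does not prove the lemma, and that is exactly where the lemma is used: $k=1$ in Corollary~\ref{co-p3}, claimed at $O(\log n)$ passes, and $k=2$ inside the cograph algorithm. Your bound would still give $\widetilde{O}(1)$ passes downstream, but not the stated pass counts.

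The step meant to bring this down to $O(k\log n)$ does not work as written. To advance a complement-BFS from the representative of group $X$ by one layer in one pass, every vertex $y$ must compare $|N_G(y)\cap F_X|$ with $|F_X|$ for that BFS's own frontier $F_X$. BFSs started inside one component of $\overline{G}$ overlap, so $y$ needs a separate counter for each of them. That is up to $n$ counters per vertex in the first round, when all groups are singletons. ``Merging the representatives through the halving scheme'' does not address this. Your count of ``$O(\log n)$ rounds of $O(k)$ BFS passes'' also drops the $O(\log n)$-pass binary search that each halving round needs. So even granting the BFS, you are back at $O(\log n\,(k+\log n))$ passes.

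The missing idea is to run all searches at once by propagating one label per vertex. Fix an order $\prec$ and let $N(v,i)$ be the $\prec$-smallest vertex within $\overline{G}$-distance $i$ of $v$. Then $N(v,i+1)=\min_\prec\bigl(N(v,i),\min_{w\not\sim v}N(w,i)\bigr)$. For all $v$ simultaneously, this is a single binary search per vertex over the vertices sorted by $N(\cdot,i)$: locate the first non-neighbour of $v$ in that order. That costs $O(\log n)$ passes and $O(n)$ counters per layer, so $O(k\log n)$ passes in total. No merging rounds follow. One pass checks whether the classes $\{v:N(v,k)=c\}$ are pairwise completely joined in $G$; if so, they are exactly the components of $\overline{G}$. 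If not, take a non-edge $\{u,v\}$ between classes with $w=N(u,k)\prec N(v,k)$. This forces $\mathrm{dist}_{\overline{G}}(w,u)=k$ and $\mathrm{dist}_{\overline{G}}(w,v)=k+1$. Hence a shortest $w$--$v$ path in $\overline{G}$ is an induced $P_{k+2}$, i.e.\ an induced $\mathrm{co}\text{-}P_{k+2}$ in $G$.
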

\begin{proof}
    We first fix an arbitrary strict linear order $\prec$ on $V(G)$. For each vertex $v \in V(G)$ and each $0 \le i \le k$, we define
\[
N(v,i) := \min_{\prec} \{ u \in V(G) \mid \mathrm{dist}_{\overline{G}}(u,v) \le i \},
\]
where the minimum is taken with respect to the order $\prec$. In other words, $N(v,i)$ denotes the smallest vertex reachable from $v$ in $\overline{G}$ by a path of length at most $i$. In particular, we initialize $N(v,0)=v$ for all $v \in V(G)$.

We take a dynamic programming approach to compute $N(v,k)$ for all $v \in V(G)$. Suppose that the values $N(v,i)$ are already computed for all $v \in V(G)$. We describe how to compute $N(v,i+1)$ using $O(\log n)$ additional passes. By definition, $N(v,i+1)$ is the minimum value of $N(w,i)$ over all non-neighbors $w$ of $v$ in $G$, as well as $v$ itself.

To this end, we define a new strict linear order $\prec_i$ on $V(G)$ by setting
\[
u \prec_i w \iff N(u,i) \prec N(w,i) \;\text{or}\; (N(u,i)=N(w,i) \wedge u \prec w).
\]
Let
\[
x_{1,i} \prec_i x_{2,i} \prec_i \cdots \prec_i x_{n,i}
\]
be the resulting ordering of the vertices. Then it suffices to find, for each $v$, the smallest index $s_v$ such that $x_{s_v,i} \not\sim v$ in $G$, and we set
\[
N(v,i+1) = \min_{\prec}\bigl(v,\, N(x_{s_v,i},i)\bigr).
\]

The values $s_v$ can be determined via a binary-search procedure. Initially, each vertex $v$ is assigned the interval $[1,n]$, corresponding to the range $x_{1,i}$ through $x_{n,i}$. At each step, we divide the current interval into two halves and perform one pass to count, for each $v$, the number of neighbors of $v$ that fall into the left half. If all vertices in the left half are neighbors of $v$, we discard the left half and continue with the right half; otherwise, we discard the right half and continue with the left half. Repeating this process for $O(\log n)$ rounds yields $s_v$ for all $v$.

In total, the values $N(v,k)$ can be computed in $O(k\log n)$ passes using $\widetilde{O}(kn)$ space.
\vspace{3mm}

After computing the values $N(v,k)$, we partition $V(G)$ into blocks $B_1, \ldots, B_\ell$ according to these values, where vertices $u$ and $v$ belong to the same block if and only if $N(u,k) = N(v,k)$. We then take one additional pass to count the total number of edges crossing between different blocks. Let this number be denoted by $F$.

\begin{itemize}
\item \textbf{Case 1:} 
$F = \sum_{1 \le i < j \le \ell} |B_i| \cdot |B_j|$.  
In this case, every pair of distinct blocks is fully connected in $G$, which implies that vertices from different blocks belong to different connected components in $\overline{G}$. Moreover, since $N(u,k)=N(v,k)$ implies that $u$ and $v$ are connected in $\overline{G}$, each block $B_i$ corresponds exactly to a connected component of $\overline{G}$.

\item \textbf{Case 2:} 
$F < \sum_{1 \le i < j \le \ell} |B_i| \cdot |B_j|$.  
Then there exist vertices $u \in B_i$ and $v \in B_j$ with $i \neq j$ such that $\{u,v\} \notin E(G)$, equivalently, $\{u,v\} \in E(\overline{G})$. Such a pair can be found using $O(1)$ additional passes by first computing, for each vertex, its number of neighbors outside its own block, selecting any vertex that is not fully adjacent to all other blocks, and then checking its neighbors explicitly.

Assume without loss of generality that $w = N(u,k) \prec N(v,k)$. Then we must have
\[
\mathrm{dist}_{\overline{G}}(w,v) \ge k+1,
\]
since otherwise $N(v,k) \preceq w$, contradicting the assumption $w \prec N(v,k)$. On the other hand,
\[
\mathrm{dist}_{\overline{G}}(w,v) \le \mathrm{dist}_{\overline{G}}(w,u) + 1 \le k+1,
\]
where the last inequality follows from the definition of $N(u,k)$. Therefore,
\[
\mathrm{dist}_{\overline{G}}(w,u) = k
\quad \text{and} \quad
\mathrm{dist}_{\overline{G}}(w,v) = k+1.
\]
Consequently, the vertices on a shortest path from $w$ to $v$ in $\overline{G}$ form an induced path on $k+2$ vertices, which corresponds to an induced co-$P_{k+2}$ in $G$. Moreover, this path can be recovered from the table $N$ by maintaining, for each $v$, the corresponding predecessor index $s_v$ during the computation.
\end{itemize}
\end{proof}

This routine can be directly applied to solve $\textsc{IndSub}(\text{co-}P_3)$.
\begin{corollary}\label{co-p3}
$D^{p}(\textsc{IndSub}(\text{co-}P_3)) = \widetilde{O}(n)$ for some $p=O(\log n)$.
\end{corollary}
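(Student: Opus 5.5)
We obtain the corollary by applying Lemma~\ref{lem:components-or-copath} with $k=1$. With this choice the routine uses $O(\log n)$ passes and $\widetilde{O}(n)$ space. It either finds an induced copy of $\mathrm{co}\text{-}P_{3}$ in $G$, in which case we output it and stop, or it returns the partition $B_1,\ldots,B_\ell$ of $V(G)$ into the connected components of $\overline{G}$. It remains to show how to finish in the second case within the same budget.

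The key observation is that $G$ contains an induced $\mathrm{co}\text{-}P_3$ if and only if $\overline{G}$ contains an induced $P_3$. This holds because the complement of an induced subgraph of $G$ is the induced subgraph of $\overline{G}$ on the same vertex set. As in the proof for $\textsc{IndSub}(P_3)$, $\overline{G}$ has no induced $P_3$ if and only if $\overline{G}$ is a disjoint union of cliques. Given the components $B_1,\ldots,B_\ell$ of $\overline{G}$, this happens exactly when each $B_i$ is an independent set in $G$.

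We can test this in one more pass. We store each vertex's block label, which takes $O(n\log n)$ bits. While streaming, we look for an edge of $G$ whose two endpoints lie in the same block. If no such edge appears, then $G$ has no induced $\mathrm{co}\text{-}P_3$, and we report this.

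If such an edge $\{u,v\}\subseteq B_i$ appears, then $u$ and $v$ are non-adjacent in $\overline{G}$ but lie in the same connected component of $\overline{G}$. So a shortest path from $u$ to $v$ in $\overline{G}$ has length at least $2$, and any three consecutive vertices on it form an induced $P_3$ in $\overline{G}$. We need to recover such a triple explicitly, and this is the main obstacle.

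Our first attempt is to reuse the table from the lemma. Within a block, every vertex $x$ has $N(x,1)$ equal to the block's representative $w$, and these vertices satisfy $\mathrm{dist}_{\overline{G}}(w,x)\le 1$. When $k=1$, Case 1 of the lemma fixes the blocks by $N(\cdot,1)$. Each block is therefore a star in $\overline{G}$ around $w$: every $x\ne w$ in the block is non-adjacent to $w$ in $G$. Now take the monochromatic edge $\{u,v\}$ of $G$. If neither endpoint is $w$, then $u$, $v$, $w$ induce a $\mathrm{co}\text{-}P_3$ in $G$: the pair $u,v$ is adjacent, and both are non-adjacent to $w$. We output this triple.

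The edge cannot contain $w$, because every vertex of the block other than $w$ is non-adjacent to $w$ in $G$. So the triple is always well defined, and no extra passes are needed. The whole procedure uses $O(\log n)$ passes and $\widetilde{O}(n)$ space.
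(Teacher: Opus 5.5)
Your proposal is correct and follows the paper's proof: apply Lemma~\ref{lem:components-or-copath} with $k=1$, then use one extra pass to look for an edge of $G$ inside a component of $\overline{G}$. The paper only asserts that such an edge ``directly yields'' an induced $\mathrm{co}\text{-}P_3$. Your observation makes this step explicit at no extra cost: every vertex $x$ of a block other than its label $w=N(x,1)$ is non-adjacent to $w$ in $G$, so an in-block edge $\{u,v\}$ cannot contain $w$, and $\{u,v,w\}$ induces a $\mathrm{co}\text{-}P_3$.
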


\begin{proof}
Observe that $G$ contains no induced $\mathrm{co}\text{-}P_3$ if and only if $\overline{G}$ is a disjoint union of cliques. By Lemma~\ref{lem:components-or-copath}, after $O(\log n)$ passes we either obtain an induced $\mathrm{co}\text{-}P_3$, or obtain the partition of $V(G)$ into the connected components of $\overline{G}$. 

In the latter case, it suffices to take one additional pass over the stream to check whether any edge of $G$ lies inside a component of $\overline{G}$. If such an edge exists, then it directly yields an induced $\mathrm{co}\text{-}P_3$; otherwise, $G$ contains no induced $\mathrm{co}\text{-}P_3$.
\end{proof}

For the final case $P_4$, our approach is mainly based on the method proposed in~\cite{cographs}, which studies the recognition of induced-$P_4$-free graphs, or \emph{cographs}, under the parallel computation model. We adopt a similar observation on the structural properties of induced-$P_4$-free graphs while employing a different algorithmic strategy to address the distinct limitation of the computational models.

\begin{theorem}\label{thm:P4_hardness}
    $D^{p}(\textsc{IndSub}(P_4)) = \widetilde{O}(n)$ for some $p = O(\log^2n)$.
\end{theorem}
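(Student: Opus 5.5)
The plan is to follow the cotree recursion for cographs. The key fact is that a graph is induced-$P_4$-free if and only if every induced subgraph on at least two vertices is disconnected or has disconnected complement. We will run a recursion of depth $O(\log n)$. At each level the current vertex set is partitioned into blocks, all handled in parallel within the same passes, with total space $\widetilde{O}(n)$ per level. Each level uses $O(\log n)$ passes via Lemma~\ref{lem:components-or-copath} with $k=2$. A co-$P_4$ is itself a $P_4$, since $P_4$ is self-complementary. So whenever that lemma fails to produce a partition, it hands us an induced $P_4$ and we stop. This gives $O(\log n)\cdot O(\log n)=O(\log^2 n)$ passes.

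For a block $B$, we compute two partitions in parallel: the connected components of $G[B]$, found with a spanning forest restricted to edges inside blocks, and the components of $\overline{G}[B]$, found with Lemma~\ref{lem:components-or-copath} applied blockwise. The lemma's binary search restricted to $B$ costs the same passes and only $O(\log n)$ bits per vertex. If $G[B]$ and $\overline{G}[B]$ are both connected and $|B|\ge 2$, then $G[B]$ is not a cograph, and we must extract an explicit induced $P_4$ inside $B$ using $O(\log n)$ further passes. Otherwise we replace $B$ by the components of whichever side is disconnected.

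The obstacle is depth. A split may peel off only one vertex, which would give depth $n$. To fix this, we use the cograph structure as in~\cite{cographs}. If $G[B]$ is a cograph, its cotree can be made balanced by recursing on a heavy child. The real trick is to peel many levels at once. For a vertex $v$, the sets $N(v)\cap B$ and $B\setminus N[v]$ separate along the root-to-leaf path in the cotree. So I would choose a centroid-like split: at each level, for each block, refine using the largest component on either side. I would argue that every vertex ends up in a part of size at most $|B|/2$ within $O(1)$ alternations. Concretely, if $G[B]$ is disconnected with a giant component $C$ of size more than $|B|/2$, then $\overline{G}[C]$ must be disconnected if $G$ is a cograph. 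So we immediately split $C$ again, and continue. The number of such alternations before every piece has at most half the size is unbounded, though, so instead I would directly compute, in one round of $O(\log n)$ passes, the chain of nested giant pieces. I expect this to be the delicate step. My first attempt: sort vertices of $B$ by degree within $B$. In a cograph, the giant chain is determined by degree thresholds, in the sense that vertices peeled at join steps have degree larger than all deeper vertices. Then verify this in one pass by counting edges between consecutive layers. A count mismatch certifies that some pair violates the structure, and from it we locate an induced $P_4$ with $O(1)$ more passes.

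Finally, for the extraction step when both $G[B]$ and $\overline{G}[B]$ are connected, I would follow the standard argument. Take a vertex $v$ and a non-neighbor $w$ that is adjacent in $G$ to some neighbor $u$ of $v$, chosen to maximize neighborhood properties. Then find the fourth vertex by counting neighbors within designated sets over a few passes, each storing $O(n)$ counters. Correctness of the whole procedure follows from the cograph characterization: if no $P_4$ is ever reported, every block eventually becomes a singleton through valid splits, so $G$ is a cograph.
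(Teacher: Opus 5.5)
\textbf{The gap is in the depth bound.} Your framework is sound: you split blocks alternately into components of $G[B]$ and of $\overline{G}[B]$, using Lemma~\ref{lem:components-or-copath} with $k=2$ and co-$P_4=P_4$. But the whole difficulty of the theorem is keeping the recursion depth $O(\log n)$, and the step you propose for that is false. You claim that in a cograph the chain of nested giant pieces is determined by degree thresholds, with vertices peeled at join steps having larger degree than all deeper vertices.

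Take $B=\overline{K_5}\vee(K_9\sqcup K_1)$, a cograph on $15$ vertices.
\begin{itemize}
\item Its giant co-component is $K_9\sqcup K_1$, so the five vertices of $\overline{K_5}$ are peeled at a join step. Each has degree $10$ in $G[B]$.
\item The chain then peels the $K_1$ vertex, of degree $5$, at a union step.
\item The chain ends at $K_9$, whose vertices have degree $5+8=13$.
\end{itemize}
So the join-peeled layer sits strictly in the middle of the degree order, and no threshold separates it from the deeper vertices. In general, degrees inside $B$ depend on the internal structure of the sibling subtrees, which is unconstrained. Hence neither the one-round computation of the chain nor the edge-count verification built on it goes through. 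The claim that a count mismatch yields an explicit induced $P_4$ in $O(1)$ passes is also left unjustified. Without a correct replacement, your plain alternating recursion can have depth $\Theta(n)$, for example on threshold graphs, where each split removes a single vertex.

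\textbf{The missing idea is a balanced choice of root.} You actually named the right lever: $N(v)$ and its complement separate along the entire cotree path through $v$. What you did not supply is how to choose $v$ so that this split is balanced, which is exactly what the paper does. The component-partition of Lemma~\ref{lem:cograph_structure} with respect to a root is computed, or refuted with an explicit $P_4$, in $O(1)$ passes, and it peels the whole path at once. The root is chosen by degree class:
\begin{itemize}
\item If some vertex has degree in $[n/4,3n/4]$, use it as the root; then every part has at most $3n/4$ vertices.
\item Otherwise both $\mathrm{Low}$ and $\mathrm{High}$ are nonempty, since Lemma~\ref{lem:no_big_subgraph} would contradict connectivity of $G$ or $\overline{G}$. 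Take $u\in\mathrm{Low}$ maximizing $|N(u)\cap\mathrm{High}|$. Each $A_i\subseteq N(u)$ has fewer than $n/4$ vertices.
\item By Lemma~\ref{lem:large_Bi_high}, any component of some $G[B_i]$ with more than $3n/4$ vertices lies entirely in $\mathrm{High}=\mathrm{Low}(\overline{G})$. By Lemma~\ref{lem:no_big_subgraph} applied to $\overline{G}$, its co-components have at most $n/2$ vertices, so one more call to Lemma~\ref{lem:components-or-copath} splits it.
\end{itemize}
This gives constant-factor shrinkage per round of $O(\log n)$ passes, hence $O(\log^2 n)$ passes in total. You need this argument, or some other correct balanced-root argument, in place of the degree-threshold step.
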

    
     Before presenting our algorithm, we first develop the structural properties of cographs that it relies on. Note that any induced $P_4$ must lie within a connected component of $G$ and within a connected component of $\overline{G}$. By this observation and Lemma~\ref{lem:components-or-copath}, we may perform an initial preprocessing step using $O(1)$ passes to partition $V(G)$ into the connected components of $G$ and of $\overline{G}$.
Hence, from now on we may restrict attention to instances in which both $G$ and $\overline{G}$ are connected.

    \begin{lemma}[\cite{cographs}, Lemma~2.1 and Corollary~2.1]\label{lem:cograph_structure}
A connected graph $G$ is induced-$P_4$-free if and only if for any $v\in V(G)$, there exist $k\in\mathbb{N}$ and a partition of $V(G)$ into
\[
A_1,\ldots,A_k, \; B_1,\ldots,B_k
\]
such that:
\begin{itemize}
    \item $N(v)=A_1\sqcup\cdots\sqcup A_k$ and $V(G)\setminus N(v)=B_1\sqcup\cdots\sqcup B_k$.
    \item For all $1\le i<j\le k$, every vertex in $A_i\cup B_i\cup B_j$ is adjacent to every vertex in $A_j$.
    \item For all $1\le i<j\le k$, there are no edges between $A_i\cup B_i$ and $B_j$.
    \item For every $i\in[k]$, the induced subgraphs $G[A_i]$ and $G[B_i]$ are induced-$P_4$-free.
\end{itemize}

We call such a decomposition a \emph{component-partition} of $G$ with respect to the root $v$. Also, $v \in B_1$.
\end{lemma}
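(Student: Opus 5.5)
We prove the lemma by induction on $|V(G)|$, building the decomposition layer by layer from the root $v$. Throughout we use the classical fact that every induced subgraph on at least two vertices of a $P_4$-free graph has either a disconnected complement or is itself disconnected.

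\textbf{Easy direction.} Suppose such a partition exists for some root $v$, and suppose $P=abcd$ is an induced $P_4$ in $G$. We locate the endpoints of each edge of $P$ by the index $i$ of the block $A_i\cup B_i$ containing them. The adjacency rules between blocks force a strict structure.
\begin{itemize}
\item Vertices of $A_j$ are complete to everything in lower blocks and to every $B_{j'}$ with $j'>j$.
\item There are no edges from $A_i\cup B_i$ to $B_j$ for $j>i$.
\end{itemize}
Now let $j$ be the largest index whose block meets $P$. Vertices of $P$ in $A_j$ are adjacent to all vertices of $P$ in lower blocks. A vertex of $B_j$ is adjacent to nothing in lower blocks, since by the second rule there are no edges between $A_i\cup B_i$ and $B_j$ for $i<j$.
\begin{itemize}
\item If $P$ meets $A_j$ and also a lower block, the $P$-vertices in $A_j$ dominate the lower part. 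Since $P_4$ has no dominating vertex and its complement is connected, this gives a contradiction by a short case check.
\item Otherwise $P$ lies within a single block $A_j\cup B_j$, or splits as $B_j$ versus lower blocks with no edges between them, which contradicts connectivity of $P$.
\end{itemize}
Within a single block, $P$ cannot lie inside $A_j$ or inside $B_j$, by the last bullet of the lemma. So it must use edges between $A_j$ and $B_j$. To rule this out we need the structure of those edges, and this is where the precise definition of the blocks matters.

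\textbf{Hard direction and construction.} For $P_4$-free connected $G$ we construct the blocks greedily. Set $B_1$ to be the connected component of $G-N(v)$ containing $v$ ($\{v\}$ plus its far reachability), and set $A_1$ to be the vertices of $N(v)$ that are not complete to the rest of $V(G)\setminus(A_1\cup B_1)$. We then recurse on $G-(A_1\cup B_1)$, which must again have the form needed. The key sublemma is the cograph property: in a $P_4$-free graph, if $x\in N(v)$ has a non-neighbor in some component $C$ of $G-N[v]$, then no vertex of $N(v)$ has mixed adjacency leading to a $P_4$. Specifically, for each component $C$ of $G-N[v]$ and $y\in N(v)$, $y$ is either complete or anticomplete to $C$, since otherwise $v,y,c,c'$ with $c\sim c'$, $y\sim c$, $y\not\sim c'$ is an induced $P_4$. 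This nested neighborhood structure orders the components of the non-neighborhood by inclusion of their attachment sets in $N(v)$. That ordering, which follows by checking pairs of components with another $P_4$ argument, yields the linear ordering $B_1,\dots,B_k$ and correspondingly $A_1,\dots,A_k$. The last bullet is immediate because induced subgraphs of cographs are cographs.

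\textbf{Main obstacle.} The main obstacle is establishing the linear (nested) ordering of attachment sets and verifying the complete-join condition between $A_i$ and $A_j$. We would attempt this via the complement-disconnectedness fact applied to $G[N(v)]$.
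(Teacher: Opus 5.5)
Your proposal has genuine gaps in both directions.

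\textbf{Hard direction.} Your explicit construction is not well defined. Since $v$ has no neighbors in $G-N(v)$, ``the component of $G-N(v)$ containing $v$'' is just $\{v\}$. Your $A_1$ is defined in terms of itself. And after deleting $A_1\cup B_1$ you recurse with no root and no reason for the remainder to be connected or of the required form.

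More seriously, the step you flag as the main obstacle, that distinct classes $A_i,A_j$ are complete to each other, is never proved. Complement-disconnectedness of $G[N(v)]$ will not give it on its own, because it says nothing about how co-components of $G[N(v)]$ relate to neighborhoods outside $N(v)$. The missing idea is the paper's Observation~2: if $u,w\in N(v)$ are non-adjacent, then $N(u)\setminus N(v)=N(w)\setminus N(v)$. Indeed, a vertex $x\notin N(v)$ adjacent to $w$ but not to $u$ yields the induced $P_4$ $x-w-v-u$. Hence vertices of $N(v)$ with different outside neighborhoods are adjacent, which is precisely the join between different $A$-classes.

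Once you add this, your own ingredients finish the proof without any recursion, by a route dual to the paper's. Your complete-or-anticomplete sublemma is the paper's Observation~3. The attachment sets in $N(v)$ of the components of $G[V(G)\setminus N(v)]$ are nested: take $y,y'\in N(v)$ with $y$ adjacent to $C$ but not $C'$ and $y'$ the reverse, and pick $c\in C$, $c'\in C'$. Then $c-y-y'-c'$ (if $y\sim y'$) or $c-y-v-y'$ (otherwise) is an induced $P_4$. So list the distinct attachment sets as $T_k\subsetneq\cdots\subsetneq T_1=N(v)$, where $T_1$ comes from the component $\{v\}$. You may then take $B_i$ to be the union of the components with attachment set $T_i$, and $A_i=T_i\setminus T_{i+1}$ with $T_{k+1}=\emptyset$. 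The paper instead groups $N(v)$ by outside neighborhood, orders these classes by inclusion via Observations~2 and~4, and lets $B_i$ be the outside vertices first reached by $A_i$.

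\textbf{Easy direction.} This also stops exactly where the work lies, partly because your first bullet misreads the second condition. That condition (take any $i<j$) makes $A_j$ complete to $B_j$, with the same index, whenever $j\ge 2$. By contrast, $A_j$ is anticomplete to $B_{j'}$ for $j'>j$ by the third condition. With the correct reading, every case with $j\ge 2$ is immediate. If $P$ meets $A_j$, then $P\cap A_j$ is complete to $P\setminus A_j\neq\emptyset$ (note $P\not\subseteq A_j$ by the last condition), which would disconnect $\overline{P}\cong P_4$.

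For $j=1$, however, nothing constrains the edges between $A_1$ and $B_1$, and the listed conditions really are insufficient. $G=P_4$ satisfies them for every root, with $k=1$, $A_1=N(v)$ and $B_1=V(G)\setminus N(v)$. So the ``if'' direction needs the additional property that $A_1$ is complete to $B_1$. The paper's ``straightforward to verify'' uses this tacitly, and its construction does provide it: $B_1$ consists of the outside vertices adjacent to $A_1$, and all of $A_1$ shares one outside neighborhood.

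With that property, $G[\bigcup_{i\le j}(A_i\cup B_i)]$ is the join of $G[A_j]$ with the disjoint union of $G[\bigcup_{i<j}(A_i\cup B_i)]$ and $G[B_j]$. Since joins and disjoint unions of $P_4$-free graphs are $P_4$-free, induction on $j$ shows that $G$ is $P_4$-free.
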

\begin{figure}[H]
        \centering
        \includegraphics[width=0.5\linewidth]{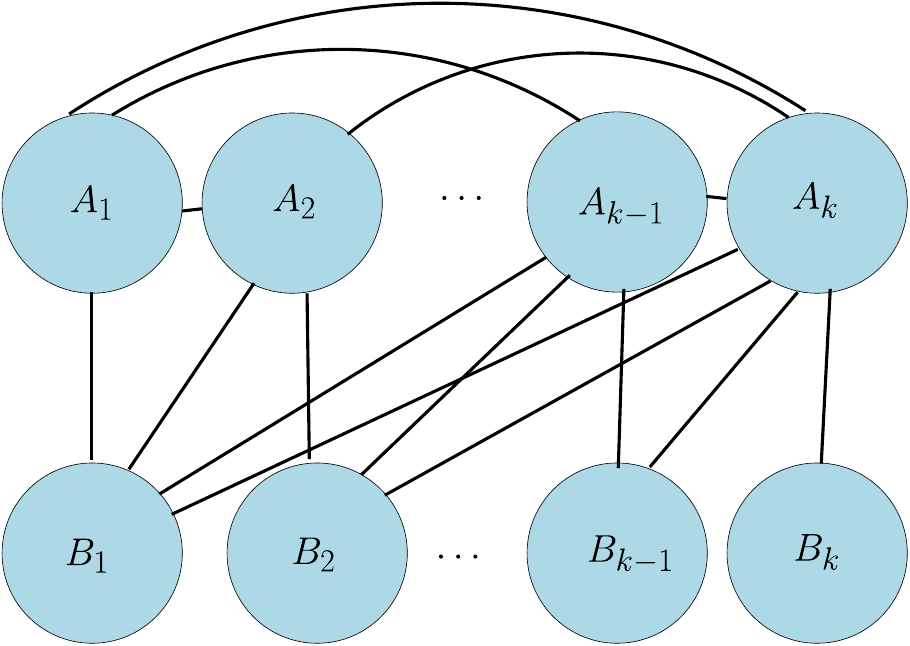}
        \caption{The structure of an induced-$P_4$-free graph. An edge between two groups means that all vertices in one are adjacent to all vertices in the other.}
    \end{figure}
\begin{proof}
    It is straightforward to verify that if $G$ has the claimed structure, then $G$ contains no induced $P_4$.
Conversely, assume that $G$ is induced-$P_4$-free. We derive the structure step by step.

\medskip
\noindent\textbf{Observation 1.} $G$ has diameter at most $2$.

\noindent Indeed, if there exist $u,w\in V(G)$ with $\mathrm{dist}(u,w)\ge 3$, then any shortest path from $u$ to $w$ contains an induced $P_4$.
In particular, for any fixed root $v$, we have $V(G)\setminus N(v)\subseteq N(N(v))$.
Write $A:=N(v)$ and $B:=V(G)\setminus N(v)$.

\medskip
\noindent\textbf{Observation 2.} If $u,w\in A$ are non-adjacent, then $N(u)\cap B = N(w)\cap B$.

\noindent Otherwise, assume there exists $x\in (N(u)\cap B)\setminus (N(w)\cap B)$.
Since $x\in B$, we have $x\not\sim v$, while $x\sim u$ and $v\sim w$.
Moreover, $u\not\sim w$ by assumption and $x\not\sim w$ by the choice of $x$.
Hence $x-u-v-w$ is an induced $P_4$, a contradiction.
Consequently, all vertices in the same connected component of $\overline{G[A]}$ have the same neighborhood in $B$.

\medskip
\noindent\textbf{Observation 3.} If $u,w\in B$ are adjacent, then $N(u)\cap A = N(w)\cap A$.

\noindent Otherwise, assume there exists $x\in (N(u)\cap A)\setminus (N(w)\cap A)$.
Then $v-x-u-w$ is an induced $P_4$: we have $v\sim x\sim u\sim w$, while $v\not\sim u$ and $v\not\sim w$ (since $u,w\in B$), and also $x\not\sim w$ by the choice of $x$.
Thus, vertices in the same connected component of $G[B]$ have the same neighborhood in $A$.

\medskip
\noindent\textbf{Observation 4.} For any $u,w\in A$, the sets $N(u)\cap B$ and $N(w)\cap B$ are comparable by inclusion.

\noindent We only care about the case where $u$ and $v$ are adjacent because observation 2 handles the other case. Suppose not, then there exists
$x\in (N(u)\cap B)\setminus (N(w)\cap B)$ and
$y\in (N(w)\cap B)\setminus (N(u)\cap B)$.
If $x\sim y$, then $x$ and $y$ are adjacent vertices in $B$ with different neighborhoods in $A$ (one is adjacent to $u$ but not $w$, and the other to $w$ but not $u$), contradicting Observation~3.
Hence $x\not\sim y$, and $x-u-w-y$ is an induced $P_4$, a contradiction.

\medskip
\noindent\textbf{Conclusion.}
Partition $A$ into classes according to the set $N(\cdot)\cap B$; write these classes as $A_1,\ldots,A_k$ ordered so that
\[
N(A_1)\cap B \subsetneq N(A_2)\cap B \subsetneq \cdots \subsetneq N(A_k)\cap B=B .
\]
(The strictness follows from Observation~4 and the definition of the classes, and the last equality comes from Observation 1)
Let $A_0:=\emptyset$ and define
\[
B_i := \bigl(N(A_i)\setminus N(A_{i-1})\bigr)\cap B \qquad \text{for } i=1,\ldots,k.
\]
Then the resulting partition $\{A_i\}_{i=1}^k,\{B_i\}_{i=1}^k$ satisfies the claimed structure.
\end{proof}

Moreover, for any graph $G$ and any root vertex $v$, there is a $O(1)$-pass streaming algorithm using $\widetilde{O}(n)$ space that either outputs a component-partition with respect to $v$ or finds an induced $P_4$, certifying that $G$ is not a cograph.
In the first pass we determine $N(v)$ and $V(G)\setminus N(v)$.
In the second pass we compute, for each vertex $u$, the number of neighbors of $u$ on the opposite side (i.e., $|N(u)\cap (V(G)\setminus N(v))|$ for $u\in N(v)$ and $|N(u)\cap N(v)|$ for $u\notin N(v)$); when $G$ is a cograph, these values determine the partition into the sets $A_i$ and $B_i$.
A third pass then verifies that all edges between different $A_i, A_j, B_i,$ and $B_j$ respect the adjacency pattern prescribed by the component-partition; any violation yields an induced $P_4$.

Finally, note that even when $G$ is not a cograph, the procedure may possibly return a component-partition.
In that case, however, any induced copy of $P_4$ must be contained entirely within a single part $A_i$ or $B_i$. Motivated by this, our algorithm iteratively partitions $V(G)$ into parts $A_i$ and $B_j$ and recursively checks whether each part induces a cograph.
Since the parts are vertex-disjoint, the checks on each part can be carried out independently within the same pass, and the total space across all active parts remains linear at any stage.

The main issue is the number of rounds.
If the largest part has size close to $n$, the recursion could take as many as $\Omega(n)$ rounds.
To avoid this, we modify the procedure so that in each round the largest remaining part shrinks by at least a constant factor, yielding an overall $O(\log n)$ upper bound on the number of rounds. Moreover, each round can be implemented using at most $O(\log n)$ passes, resulting in at most $O(\log^2n)$ passes in total.

We partition the vertices into three sets according to their degrees:
\[
\mathrm{Low}(G)=\{v\in V(G): \deg(v)< n/4\}\]\[
\mathrm{High}(G)=\{v\in V(G): \deg(v)> 3n/4\}\]\[
\mathrm{Mid}(G)=\{v\in V(G): n/4\le \deg(v)\le 3n/4\}.
\]
This classification can be computed using one additional pass.
If $\mathrm{Mid}(G)\neq\emptyset$, we choose an arbitrary $v\in \mathrm{Mid}(G)$ as the root; then every $A_i$ and $B_j$ must have size at most $3n/4$. To handle the remaining cases, we first show that vertices in $\mathrm{Low}(G)$ cannot form a large connected subgraph.

\begin{lemma}[\cite{cographs} Lemma 2.3]\label{lem:no_big_subgraph} Suppose $G$ is a cograph. If $F$ is a connected subgraph of $G$ with $V(F)\subseteq \mathrm{Low}(G)$, then $|V(F)|\le n/2$.
\end{lemma}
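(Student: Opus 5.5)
The plan is to use the standard fact that a connected cograph on at least two vertices is a \emph{join}: its vertex set splits into two nonempty parts with every vertex of one part adjacent to every vertex of the other. Let $F$ be a connected subgraph of $G$ with $V(F)\subseteq \mathrm{Low}(G)$. If $|V(F)|=1$ there is nothing to prove, so assume $|V(F)|\ge 2$. Set $G':=G[V(F)]$. Since $F$ is a spanning subgraph of $G'$ and $F$ is connected, $G'$ is connected. Moreover, $G'$ is induced-$P_4$-free, because any induced subgraph of a cograph is a cograph.

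The key step is to show that $\overline{G'}$ is disconnected. I would cite the classical theorem of Seinsche: for a $P_4$-free graph on at least two vertices, $G'$ and $\overline{G'}$ cannot both be connected. Failing that, I would derive it from Lemma~\ref{lem:cograph_structure} applied to $G'$ with some root. The self-contained argument I would try first is induction on $|V(G')|$.
\begin{itemize}
\item Pick a vertex $x$ such that $G'-x$ is connected, for instance a leaf of a spanning tree.
\item By induction, $G'-x$ is a join of nonempty parts $X$ and $Y$.
\item If $x$ is adjacent to every vertex of $X$ or to every vertex of $Y$, then $G'$ is again a join, of $X$ and $Y\cup\{x\}$ or of $X\cup\{x\}$ and $Y$ respectively.
\item Otherwise there are $a\in X$ and $b\in Y$ with $x\not\sim a$ and $x\not\sim b$. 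Connectivity of $G'$ gives a neighbour $c$ of $x$, say $c\in X$. Then $x-c-b-a$ is a path, and it is induced provided $c\not\sim a$. Choosing $a,c$ appropriately inside $X$, using the structure of $G'[X]$, should yield either an induced $P_4$ or a join.
\end{itemize}
This case analysis is the main obstacle. If it becomes messy, I will simply invoke Seinsche's theorem, which is standard for cographs.

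Given the join decomposition $V(F)=X\sqcup Y$ with $X,Y\neq\emptyset$, the counting is immediate. Any $x\in X$ is adjacent in $G$ to all of $Y$, so $|Y|\le\deg_G(x)<n/4$. Symmetrically, any $y\in Y$ gives $|X|\le \deg_G(y)<n/4$. Hence $|V(F)|=|X|+|Y|<n/2$, which proves the lemma.
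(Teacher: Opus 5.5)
Your proof is correct, but it obtains the key structural fact by a different route than the paper.

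\textbf{The paper's route.} The paper does not pass to $G[V(F)]$. It takes the component-partition of $G$ itself (which is connected in context), rooted at some $x\in V(F)$. It lets $y$ be a vertex of $F$ in the highest-indexed class $A_\ell$ meeting $F$. Using connectivity of $F$ and the no-edge conditions, it confines $V(F)$ to $\bigcup_{i\le\ell}(A_i\cup B_i)$. It then observes $\bigcup_{i\le\ell}A_i\subseteq N(x)$ and $\bigcup_{i\le\ell}B_i\subseteq N(y)$.

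\textbf{Your route.} Your argument ends the same way: two vertices of $F$, each of degree $<n/4$, whose neighbourhoods cover $V(F)$. You get that pair from the join structure of the connected cograph $G[V(F)]$. This is cleaner: it only uses heredity of $P_4$-freeness, and it needs neither connectivity of $G$ nor the confinement step that the paper leaves as ``not hard to see.''

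\textbf{The step to firm up.} The price is the appeal to Seinsche's theorem. Your inductive sketch leaves the case $c\sim a$ unresolved, so as written it is not a proof, and you should rely on the citation instead. Alternatively, you can stay within the paper's own tools. Apply Lemma~\ref{lem:cograph_structure} to $G[V(F)]$ with any root. The top class $A_k$ is nonempty, complete to every $A_i$ with $i<k$, and its common neighbourhood in $B$ is all of $B$. So $X=A_k$ and $Y=V(F)\setminus A_k$, which contains the root, give the required join.
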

\begin{proof}
    Take an arbitrary vertex $x\in V(F)$ as the root and construct its component-partition.
Let $\ell$ be the largest index such that $A_\ell\cap V(F)\neq\emptyset$, and fix a vertex $y\in A_\ell\cap V(F)$.
By Lemma~\ref{lem:cograph_structure}, it is not hard to see that
\[
V(F)\subseteq \bigcup_{i=1}^{\ell} A_i \ \cup\  \bigcup_{i=1}^{\ell} B_i .
\]
Moreover, by the definition of the partition, every vertex in $\bigcup_{i=1}^{\ell} A_i$ is adjacent to $x$, and every vertex in $\bigcup_{i=1}^{\ell} B_i$ is adjacent to $y$.
Therefore,
\[
|V(F)|
\le \left|\bigcup_{i=1}^{\ell} A_i\right| + \left|\bigcup_{i=1}^{\ell} B_i\right|
\le \deg(x) + \deg(y).
\]
Since $V(F)\subseteq \mathrm{Low}(G)$, we have $\deg(x),\deg(y)<\frac{n}{4}$, and hence $|V(F)|<n/2$.
\end{proof}

As an immediate consequence, if $\mathrm{Low}(G)=V(G)$, then every connected component of $G$ has size at most $n/2$, contradicting the assumption that $G$ is connected. Similarly, if $\mathrm{High}(G)=V(G)$, then $V(G)=\mathrm{High}(G)=\mathrm{Low}(\overline{G})$.
Moreover, $G$ is a cograph if and only if $\overline{G}$ is a cograph. The size of a connected component of $\overline{G}$ is at most $n/2$, contradicting that $\overline{G}$ is connected.

It remains to consider the case where $\mathrm{Mid}(G)=\emptyset$ and both $\mathrm{Low}(G)$ and $\mathrm{High}(G)$ are nonempty.
To deal with the case, we first need to find a vertex $u\in \mathrm{Low}(G)$ maximizing $|N(u)\cap \mathrm{High}(G)|$ and its component-partition, which can be done in $O(1)$ extra passes.

\begin{lemma}[\cite{cographs} Lemma 2.4]\label{lem:large_Bi_high} Suppose there exists an index $i$ and a connected component $F$ of the induced subgraph $G[B_i]$ such that
$|V(F)|>3n/4$.
Then $V(F)\subseteq \mathrm{High}(G)$.
\end{lemma}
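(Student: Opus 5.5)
The plan is to argue by contradiction. Because $\mathrm{Mid}(G)=\emptyset$, every vertex lies in either $\mathrm{Low}(G)$ or $\mathrm{High}(G)$. So suppose some $x\in V(F)$ lies in $\mathrm{Low}(G)$. I will show that $x$ has strictly more neighbors in $\mathrm{High}(G)$ than $u$ does, which contradicts the choice of $u$. Throughout, $G$ is the cograph under consideration, and $A_1,\dots,A_k,B_1,\dots,B_k$ is the component-partition with respect to $u$ from Lemma~\ref{lem:cograph_structure}.

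\textbf{Step 1: structure of $F$.} $F$ is an induced subgraph of a cograph, hence a cograph, and it is connected with $|V(F)|>3n/4\ge 2$. A connected cograph on at least two vertices has a disconnected complement, since otherwise Lemma~\ref{lem:cograph_structure}/Observation~1 applied to both $F$ and $\overline F$ would force diameter at most $2$ in both, and the standard cograph fact rules this out. So I write $V(F)=X\sqcup Y$, where $X$ is the co-component containing $x$, $Y\neq\emptyset$, and every vertex of $X$ is adjacent to every vertex of $Y$. Since $x$ is adjacent to all of $Y$ and $\deg(x)<n/4$, we get $|Y|<n/4$, hence $|X|>n/2$. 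Every $y\in Y$ is adjacent to all of $X$, so $\deg(y)>n/2\ge n/4$, so $y\notin\mathrm{Low}(G)$ and therefore $y\in\mathrm{High}(G)$. Thus $x$ has at least one neighbor in $\mathrm{High}(G)$ inside $B_i$.

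\textbf{Step 2: locate the high neighbors of $u$.} Since $B_1\cup\dots\cup B_k$ is exactly the non-neighborhood of $u$, we have $N(u)=A_1\sqcup\dots\sqcup A_k$. For $j<i$, the component-partition says there are no edges between $A_j$ and $B_i$. So a vertex of $A_j$ misses all of $V(F)\subseteq B_i$, more than $3n/4$ vertices, and hence has degree less than $n/4$, i.e.\ it lies in $\mathrm{Low}(G)$. Consequently
\[
N(u)\cap \mathrm{High}(G)\subseteq A_i\cup\dots\cup A_k .
\]
For $j\ge i$, every vertex of $B_i$ is adjacent to every vertex of $A_j$. In particular $x$ is adjacent to all of $A_i\cup\dots\cup A_k$, so $N(u)\cap\mathrm{High}(G)\subseteq N(x)\cap\mathrm{High}(G)$.

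\textbf{Step 3: conclude.} The vertices of $Y$ are in $\mathrm{High}(G)$ and adjacent to $x$, but they lie in $B_i$, so they are not in $N(u)$. Hence $|N(x)\cap\mathrm{High}(G)|>|N(u)\cap\mathrm{High}(G)|$, with $x\in\mathrm{Low}(G)$, contradicting the choice of $u$. Therefore $V(F)\subseteq\mathrm{High}(G)$.

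I expect the main obstacle to be Step~1, specifically justifying that a connected cograph has a disconnected complement, i.e.\ a join decomposition. If the paper's Observations do not give this directly, I would prove it by induction on $|V(F)|$: remove a vertex, apply the induction hypothesis, and use that $P_4$ is self-complementary.
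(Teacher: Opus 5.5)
Your proof is correct, but it finds the low--high adjacency inside $F$ by a different route from the paper. The paper first shows that every vertex of $\mathrm{High}(G)\setminus V(F)$ meets $F$, since a high vertex has more than $3n/4$ neighbors and $|V(F)|>3n/4$. Such a vertex therefore lies in some $A_j$ with $j\ge i$ and is adjacent to all of $F$. The paper then uses Lemma~\ref{lem:no_big_subgraph} to rule out $V(F)\subseteq\mathrm{Low}(G)$, so connectivity of $F$ gives an edge $st$ with $s$ low and $t$ high, and it concludes with $|N(u)\cap\mathrm{High}(G)|\le|\mathrm{High}(G)\setminus V(F)|<|N(s)\cap\mathrm{High}(G)|$.

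Your Step~2 is essentially the contrapositive of the paper's first step: vertices of $A_j$ with $j<i$ miss $F$ and are therefore low. You phrase it as an inclusion rather than a count. The real difference is Step~1, where you replace Lemma~\ref{lem:no_big_subgraph} by the join decomposition of the connected cograph $F$. This gives you the stronger pointwise fact that \emph{every} low vertex of $F$ has a high neighbor in $F$. The price is that you import Seinsche's theorem (a connected cograph on at least two vertices has a disconnected complement), which the paper never needs.

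Your diameter remark does not establish that theorem, since $C_5$ and its complement both have diameter $2$. So the induction you sketch is genuinely needed, and it does work. Suppose $\overline{F-v}$ has co-components $C_1,\dots,C_m$ with $m\ge 2$; the case where $F-v$ is disconnected is symmetric because $P_4$ is self-complementary.
\begin{itemize}
\item Connectivity of $\overline F$ gives $v$ a non-neighbor in every $C_p$.
\item Connectivity of $F$ gives $v$ a neighbor in some $C_1$.
\item Connectivity of $\overline F[C_1]$ then yields a non-edge $ab$ inside $C_1$ with $a\sim v\not\sim b$.
\item Together with a non-neighbor $c\in C_2$ of $v$, this gives the induced $P_4$ $v\,a\,c\,b$.
\end{itemize}
Alternatively, you can avoid any external fact by running your Steps~2--3 with the low endpoint $s$ supplied by Lemma~\ref{lem:no_big_subgraph} in place of $x$.
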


\begin{proof}
Fix such a component $F\subseteq G[B_i]$, for any vertex $v \in A_i, F\subseteq B_i \subseteq N(v)$, we can see that $A_i \subseteq \text{High}(G)$.
Take any vertex $x\in \mathrm{High}(G)\setminus V(F)$.
Since $\deg(x)>3n/4$ and $|V(F)|>3n/4$, we must have $N(x)\cap V(F)\neq\emptyset$.
By assumption, $x\notin B_i$.
Moreover, by the structure guaranteed in Lemma~\ref{lem:cograph_structure}, $x$ must lie in some part $A_j$ with $j\ge i$.
In particular, every vertex of $F$ is adjacent to $x$, and hence
\[
\mathrm{High}(G)\setminus V(F)\subseteq N(w)
\qquad\forall w\in V(F).
\]
Assume for contradiction that $V(F)$ contains a vertex from $\mathrm{Low}(G)$. Since $|V(F)|>3n/4$ and $F$ is connected, Lemma~\ref{lem:no_big_subgraph} implies that $V(F)\nsubseteq \mathrm{Low}(G)$.
Then there exist adjacent vertices $s\in V(F)\cap \mathrm{Low}(G)$ and $t\in V(F)\cap \mathrm{High}(G)$.
By the previous paragraph, $s$ is adjacent to every vertex in $\mathrm{High}(G)\setminus V(F)$, and also to $t$.
Therefore,
\[
|N(s)\cap \mathrm{High}(G)| \ge |\mathrm{High}(G)\setminus V(F)| + 1.
\]
On the other hand, by the construction of the partition rooted at $u$, we have $V(F)\cap N(u)=\emptyset$.
Hence $u$ has no neighbors inside $\mathrm{High}(G)\cap V(F)$, and thus
\[
|N(u)\cap \mathrm{High}(G)| \le |\mathrm{High}(G)\setminus V(F)|.
\]
This contradicts the choice of $u$ maximizing $|N(\cdot)\cap \mathrm{High}(G)|$ over $\mathrm{Low}(G)$.
Therefore, $V(F)\cap \mathrm{Low}(G)=\emptyset$, and we conclude that $V(F)\subseteq \mathrm{High}(G)$.
\end{proof}

After obtaining the component-partition with respect to $v$, we further decompose each $A_i$ and $B_i$ into the connected components of the induced subgraphs $G[A_i]$ and $G[B_i]$.
Since $|A_i|\le n/4$ for all $i$, if no pair $(F,B_i)$ as in Lemma~\ref{lem:large_Bi_high} exists, then every resulting part has size at most $3n/4$.

Otherwise, let $F$ be as in Lemma~\ref{lem:large_Bi_high}. Then $V(F)\subseteq \mathrm{High}(G)=\mathrm{Low}(\overline{G})$.
By Lemma~\ref{lem:no_big_subgraph}, the vertices in $\mathrm{Low}(\overline{G})$ cannot contain a connected subgraph of $\overline{G}$ on more than $n/2$ vertices.
Hence, when we consider the induced subgraph $\overline{G}[V(F)]$ (equivalently, the complement $\overline{F}$), every connected component has size at most $n/2$.
We can obtain this decomposition by applying the procedure of Lemma~\ref{lem:components-or-copath} once more.\\\\
\textbf{Proof of Theorem~\ref{thm:P4_hardness}.}
For simplicity, we only prove the decision version of the problem; in the case where $G$ contains an induced $P_4$, it will be discovered by tracing the step at which the algorithm returns \textsc{False}. The claimed algorithm is summarized in Algorithm~\ref{alg:cograph},  which tells whether a graph is a cograph, that is, contains no induced $P_4$.
As shown in the discussion above, in all cases the procedure produces, within $O(\log n)$ passes, a partition of $V(G)$ into parts of size at most $3n/4$, with the property that any induced copy of $\mathrm{co}\text{-}P_4$ (if present) is contained entirely within a single part.
Recursively applying the same procedure to all parts (in parallel within each pass) yields an $O(\log^2 n)$-pass algorithm using $\widetilde{O}(n)$ space to decide whether $G$ is a cograph.\qed\\

\begin{algorithm}[t]
\caption{Cograph recognition in the streaming model}
\label{alg:cograph}
\begin{algorithmic}[1]
\Procedure{IsCograph}{$G$}
    \State $n \gets |V(G)|$
    \If{$n \le 3$} \Return \textsc{True} \EndIf

    \Statex \textbf{Preprocess.}
    \State $\bigl(\mathcal{P},\texttt{foundP4}\bigr)\gets$ \Call{PreprocessComponents}{$G$} \algcom{Lemma~\ref{lem:components-or-copath}}
    \If{\texttt{foundP4}} \Return \textsc{False} \EndIf
    \If{$|\mathcal{P}|>1$}
        \Return \Call{AllCographs}{$G,\mathcal{P}$} \algcom{all parts processed in parallel per pass}
    \EndIf

    \Statex \textbf{Choose a root and form a component-partition.}
    \State $\bigl(\mathrm{Low},\mathrm{Mid},\mathrm{High}\bigr)\gets$ \Call{DegreeClasses}{$G$}
    \If{$\mathrm{Mid}\neq\emptyset$}
        \State choose any $u\in \mathrm{Mid}$
    \Else
        \If{$\mathrm{Low}=\emptyset$ \textbf{or} $\mathrm{High}=\emptyset$} \Return \textsc{False} \EndIf
        \State $u \gets \arg\max_{x\in \mathrm{Low}} |N(x)\cap \mathrm{High}|$
    \EndIf
    \State $\bigl(\{A_i\}_{i=1}^k,\{B_i\}_{i=1}^k,\texttt{ok}\bigr)\gets$ \Call{ComponentPartition}{$G,u$} \algcom{Lemma~\ref{lem:cograph_structure}}
    \If{\texttt{ok}=\textsc{False}} \Return \textsc{False} \EndIf

    \Statex \textbf{Refine parts and recurse.}
    \State $\mathcal{Q}\gets$ \Call{RefineParts}{$G,\{A_i\},\{B_i\}$}
    \If{$\max_{S\in\mathcal{Q}} |S| \le 3n/4$}
        \Return \Call{AllCographs}{$G,\mathcal{Q}$} \algcom{parallel per pass}
    \EndIf

    \State choose $F\in\mathcal{Q}$ with $|F|>3n/4$
    \If{$F\cap \text{Low} \ne \emptyset$}
    \Return \textsc{False}
    \EndIf
    \algcom{$F\subseteq B_i$, Lemma~\ref{lem:large_Bi_high}}
    \State $\bigl(\mathcal{R},\texttt{foundP4}\bigr)\gets$ \Call{CoComponentsOn}{$G,F$} \algcom{Lemma~\ref{lem:components-or-copath} on $\overline{G}[F]$}
    \If{\texttt{foundP4}} \Return \textsc{False} \EndIf
    \State $\mathcal{Q}\gets (\mathcal{Q}\setminus\{F\}) \cup \mathcal{R}$

    \Return \Call{AllCographs}{$G,\mathcal{Q}$} \algcom{parallel per pass}
\EndProcedure

\Function{AllCographs}{$G,\mathcal{S}$}
    \ForAll{$S\in\mathcal{S}$} \algcom{maintain all sub-instances simultaneously in each pass}
        \If{\Call{IsCograph}{$G[S]$}=\textsc{False}} \Return \textsc{False} \EndIf
    \EndFor
    \Return \textsc{True}
\EndFunction

\Function{RefineParts}{$G,\{A_i\},\{B_i\}$}
    \State $\mathcal{Q}\gets \emptyset$
    \For{$i=1$ to $k$}
        \State $\mathcal{C}\gets$ \Call{ConnComponents}{$G[A_i]$}
        \State $\mathcal{Q}\gets \mathcal{Q}\cup \mathcal{C}$
        \State $\mathcal{C}\gets$ \Call{ConnComponents}{$G[B_i]$}
        \State $\mathcal{Q}\gets \mathcal{Q}\cup \mathcal{C}$
    \EndFor
    \State \Return $\mathcal{Q}$
\EndFunction
\end{algorithmic}
\end{algorithm}

\begin{remark}
Although Lemma~\ref{lem:components-or-copath} is effective for finding induced $\mathrm{co}\text{-}P_3$ and $\mathrm{co}\text{-}P_4$, it can not serve as a detection for an induced $\mathrm{co}\text{-}P_k$ when $k \ge 5$. This is because graphs that exclude an induced $\mathrm{co}\text{-}P_k$ for larger $k$ may fail to exhibit structural properties as strong as in the cases $k=3,4$. The algorithm may still terminate with a connected component of $\overline{G}$ whose size is nearly $n$, even after possible preprocessing steps. Indeed, as shown in Proposition \ref{prop:induced-complement-nonbipartite}, deciding the existence of an induced $\mathrm{co}\text{-}P_k$ is space-inefficient for $k \ge 5$.
\end{remark}

%% file: oriented.tex
\section{Oriented graphs}\label{sec:section5}

We turn to the oriented subgraph finding problem. Although some results resemble the undirected case, there are new interesting behaviors and open questions, especially for NWO bipartite $\vec H$, where we have at least two different behaviors as stated in Theorems~\ref{thmOMP} and~\ref{thmOSPC} despite all having $\text{ex}(n,\vec H)=\Theta(n^2)$.

\subsection{Reduction from the undirected version}

First of all, given a fixed oriented graph $\vec H$, the subgraph finding problem $\textsc{Sub}(H)$ can be reduced to $\textsc{Sub}(\vec H)$. The same result also holds from $\textsc{Indsub}(H)$ to $\textsc{Indsub}(\vec H)$, so we state and prove the general version here, which will also be applied later on in section~\ref{sec:section6}.

\begin{proposition}\label{prop:oriented-to-undirected}
Let $H$ be a fixed undirected graph and let $\vec{H}$ be any orientation of $H$.
Suppose $R^p_{2/3}(\textsc{Sub}(\vec{H})) = O(f(n,p))$, let $m = |E(H)| = O(1)$, then
\[R^{pm2^{m}}_{2/3}(\textsc{Sub}(H)) = O(f(n,p)+m\log n)\]

The same reduction from $\textsc{Indsub}(H)$ to $\textsc{Indsub}(\vec H)$ also holds.
\end{proposition}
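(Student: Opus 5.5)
We reduce $\textsc{Sub}(H)$ to $\textsc{Sub}(\vec H)$ by randomly orienting the input edges, so that a fixed copy of $H$ in $G$ receives the orientation of $\vec H$ with constant probability. To keep the total number of passes bounded, we derandomize over the edge set of $H$ rather than over $G$. Fix a hash-type assignment: choose a uniformly random linear order (equivalently, a random injective labelling) on $V(G)$ and orient each edge $\{u,v\}$ of $G$ from the smaller to the larger endpoint. We then flip the orientation of an edge according to a rule determined by a bit-pattern over the $m$ edge positions of $H$, and repeat over all $2^m$ patterns.

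The cleanest implementation we would try is the following. For each $x\in V(G)$ pick an independent uniformly random bit $b_x$, and orient $\{u,v\}$ as $(u,v)$ iff $b_u=0,b_v=1$, breaking the remaining cases ($b_u=b_v$) by a secondary random rule. This does not cover non-bipartite-style orientations, however, so we prefer a random-permutation approach. Take a random permutation $\pi$ of $V(G)$, which can be stored in $O(n\log n)$ bits, or replaced by a $t$-wise independent hash into $[n^3]$ with $t=O(|V(H)|)$ using $O(m\log n)$ bits, since only the relative order on the $|V(H)|$ vertices of a fixed copy matters. Orient $\{u,v\}$ by comparing $\pi(u),\pi(v)$. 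Every acyclic $\vec H$ is then realized on a fixed copy of $H$ with probability at least $1/|V(H)|!$.

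The orientation $\vec H$ may contain directed cycles, so to handle general $\vec H$ we add the edge-flip layer. Enumerate $\sigma\in\{0,1\}^m$ together with a map that assigns the edges of $G$ to classes. Concretely, we hash each edge of $G$ by a pairwise-independent hash $h:E\to[m]$, flip an edge $e$ iff $\sigma_{h(e)}=1$, and check whether the $m$ edges of a fixed copy land in distinct classes, which happens with constant probability depending only on $m$. For the right $\sigma$, every orientation of that copy agrees with $\vec H$. This step is where the $2^m$ factor arises.

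For each $\sigma$ we run the given $p$-pass algorithm for $\textsc{Sub}(\vec H)$ on the stream with orientations computed on the fly, giving $p2^m$ passes. We then repeat the whole procedure $O(m)$ times with fresh hashes (or sequentially, giving $pm2^m$ passes) to push the success probability above $2/3$, reusing the same $O(f(n,p))$ space and storing only the hashes in $O(m\log n)$ bits.

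Soundness is immediate, since any copy of $\vec H$ in the oriented graph is a copy of $H$ in $G$, and the algorithm outputs it. For the induced version, the same argument applies because orienting does not change adjacency, so an induced copy of $\vec H$ is an induced copy of $H$, and a fixed induced copy of $H$ becomes induced $\vec H$ under the good event.

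The main obstacle is guaranteeing that, for every fixed copy, some choice among the $2^m$ flip patterns times the hash repetitions yields exactly the orientation of $\vec H$ with probability $\Omega(1)$ while using only $O(m\log n)$ bits of randomness. We would handle this via a $2m$-wise independent hash for $h$, so that the event that the $m$ copy edges receive distinct labels has probability at least $m!/m^m$. Boosting across $O(m)$ rounds then needs a constant per-round success probability, which we expect to hold since $m$ is fixed.
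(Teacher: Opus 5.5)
Your reduction is structurally sound. If the $m$ edges of a fixed copy get distinct labels under $h$, some $\sigma$ realizes $\vec H$ on that copy, and orienting edges never changes adjacency, so the induced case goes through. However, the probability accounting that is supposed to give $pm2^m$ passes fails.

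First, the per-round success probability is about $q\approx\tfrac23\cdot m!/m^m=e^{-\Theta(m)}$. Boosting a success probability $q$ to $2/3$ takes $\Theta(1/q)=\Theta(m^m/m!)$ independent rounds, not $O(m)$. The phrase ``constant per-round success probability'' hides exactly the exponential dependence on $m$ that the stated bound tracks. Your scheme therefore costs roughly $p(2e)^m$ passes up to factors polynomial in $m$, not $pm2^m$.

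The edge-class hash is what loses the factor $m^m/m!$, and it buys nothing. The random-permutation layer is likewise redundant. Instead, orient every edge by an $m$-wise independent random bit, with no classes and no enumeration of $\sigma$. Then a fixed copy receives the orientation of $\vec H$ with probability exactly $2^{-m}$. So $T=2^{m+1}$ fresh trials all miss it with probability at most $(1-2^{-m})^{T}\le e^{-2}<1/6$. This is the paper's argument, and it hits $2^m$ directly.

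Second, your soundness argument treats the $\textsc{Sub}(\vec H)$ algorithm $\mathcal A$ as error-free. Each invocation may err with probability $1/3$, and you OR together $2^m$ invocations per round, times the number of rounds. On a NO instance a spurious YES is then almost certain. The missing step is to first amplify $\mathcal A$ by an $O(m)$-fold majority vote, bringing its error below $\delta=2^{-m}/12$. A union bound over the $T=2^{m+1}$ trials then gives total error at most $T\delta=1/6$ from $\mathcal A$, plus at most $1/6$ from missing a good orientation. This amplification is where the factor $m$ in $pm2^m$ comes from, not from repeating the outer procedure. Alternatively, you could verify any returned vertex set with one extra pass, but that has to be stated and accounted for.

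With these repairs your scheme does give $O_m(p)$ passes and $O(f(n,p)+m\log n)$ space. That suffices for the downstream lower bounds, since $m=O(1)$, but it does not establish the pass count $pm2^m$ as stated.
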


\begin{proof}
Let $R^p_{2/3}(\textsc{Sub}(\vec{H})) = O(f(n,p))$ be attained by algorithm $\mathcal A$.
Given an undirected input graph $G$ on $n$ vertices, we generate an oriented graph $\vec{G}$ by assigning a random orientation to each edge.
To keep the same orientation in different passes of the streaming model, we sample an $m$-wise independent hash function that, on each edge $\{a,b\}$, determines whether it is oriented as $a\to b$ or $b\to a$.
 The hash function can be constructed with the classic polynomial method for $m$-wise independent hashing over finite fields
(see, e.g., \cite{ChristianiPaghFOCS2014}, which attributes the result to \cite{Joffe1974,WegmanCarter1981}), using a random seed of size $O(m\log n)$.\\\\
\textbf{Step 1: error reduction for $\mathcal{A}$.}\\\\
Fix a particular orientation $\vec{G}$.
By standard amplification, we can run $\mathcal{A}$ independently $r=O(m)$ times on the same stream and take the majority vote.
Denote the resulting algorithm by $\mathcal{A}'$.
Choosing $r=O(m)$ reduces the error probability of $\mathcal{A}'$ to at most
\[
\delta := \frac{2^{-m}}{12}.
\]
This increases the number of passes by  $O(m)$ while keeping the space usage at $f(n,p)$.\\\\
\noindent\textbf{Step 2: repeating random orientations.}\\\\
We repeat the following experiment for
\[
T := 2^{m+1}
\]
independent trials. In each trial, we draw a fresh random orientation $\vec{G}$ and run $\mathcal{A}'$ on $\vec{G}$.
We output \textsc{YES} if and only if at least one trial outputs \textsc{YES}.\\\\
\noindent\textbf{Correctness:}
First assume that $G$ contains an (induced) copy of $H$.
Fix one such copy in $G$.
Since this copy has $m$ edges, under an $m$-wise independent random orientation, the probability that all its edges are oriented consistently with $\vec{H}$ is exactly $2^{-m}$.
Hence in each trial, $\vec{G}$ contains an (induced) copy of $\vec{H}$ with probability at least $2^{-m}$.
Therefore, the probability that none of the $T$ trials produces an orientation containing $\vec{H}$ is at most
\[
(1-2^{-m})^{T} \le e^{-2} < \frac{1}{6}.
\]
Moreover, by a union bound, the probability that $\mathcal{A}'$ errs in any of the $T$ trials is at most
\[
T\delta \;=\; 2^{m+1}\cdot \frac{2^{-m}}{12} \;=\; \frac{1}{6}.
\]
Thus the overall failure probability on \textsc{YES}-instances is at most $1/6+1/6=1/3$.

Next assume that $G$ does not contain an (induced) copy of $H$.
Then for every orientation $\vec{G}$, the graph $\vec{G}$ cannot contain an (induced) copy of $\vec{H}$.
Hence the only way our algorithm outputs \textsc{YES} is if $\mathcal{A}'$ errs in some trial, which happens with probability at most $T\delta = 1/6$.
In particular, the success probability is at least $5/6$ on \textsc{NO}-instances.\\\\
\noindent\textbf{Complexity:}
Each trial runs $\mathcal{A}'$ for $p$ passes repeated $O(m)$ times, namely $O(pm)$ passes per trial.
Over $T=2^{m+1}$ trials, the total number of passes is $O(pm2^{m})$.
The space usage is $f(n,p)$ for running $\mathcal{A}$ plus $O(m\log n)$ to store the hash seed defining the random orientation.
\end{proof}

Combining Proposition~\ref{prop:oriented-to-undirected} with Theorem 1, we see that all non-bipartite $\vec H$ are hard instances with $R^p_{2/3}(\textsc{Sub}(\vec H))=\Omega(\text{ex}(n,\vec H)/p)$. On the other hand, for a WO bipartite $\vec H$,  we have $\text{ex}(n,\vec H)=O(n^{2-1/\Delta'(H)})$ as given in Corollary~\ref{Oex}, so
there indeed is a polynomial gap compared to non-bipartite oriented graphs, just like the bipartite case in Theorem~\ref{thmUNB}. These results are summarized as follows. 

\begin{corollary}\label{ONB}
    Let $\vec H$ be a fixed oriented graph. 
    \begin{itemize}
        \item If $\vec H$ is non-bipartite, then $R^p_{2/3}(\textsc{Sub}(\vec H))=\Omega(n^2/p)$.
        \item Else if $\vec H$ is WO, then $D^1(\textsc{Sub}(\vec H))=\tilde O(n^{2-1/\Delta'(H)})$.
    \end{itemize}
\end{corollary}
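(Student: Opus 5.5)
The corollary has two parts, and I would prove them separately by combining results already in hand.

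For the first part, let $\vec H$ be non-bipartite, so its underlying graph $H$ is non-bipartite. I would argue by contradiction. Suppose some algorithm gives $R^p_{2/3}(\textsc{Sub}(\vec H))=O(f(n,p))$ with $f(n,p)=o(n^2/p)$. Proposition~\ref{prop:oriented-to-undirected} then yields
\[
R^{pm2^m}_{2/3}(\textsc{Sub}(H))=O(f(n,p)+m\log n),
\]
where $m=|E(H)|=O(1)$. Proposition~\ref{NBlower}, applied with $p'=pm2^m$ passes, gives
\[
R^{p'}_{2/3}(\textsc{Sub}(H))=\Omega(n^2/p')=\Omega(n^2/p),
\]
because $m2^m$ is a constant. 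Hence $f(n,p)+O(\log n)=\Omega(n^2/p)$. Since $p\le n$ (the stated range is polynomial), $\log n=o(n^2/p)$, and therefore $f(n,p)=\Omega(n^2/p)$, which is the desired contradiction.

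Two technical points need care here. First, the constant blow-up in the number of passes is harmless only because the lower bound of Proposition~\ref{NBlower} holds for every $p$. Second, the majority vote in the amplification step does not change space by more than a constant factor, since runs are sequential and reuse space.

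An alternative route avoids randomness altogether: run the \textsc{Multi-Disjointness} reduction of Proposition~\ref{NBlower} directly on an oriented version of $G_H$. One would orient the fixed edges as in $\vec H$ and orient each $E_2$ edge $\{v^{(j)},u_q^{(i)}\}$ the way $\{v,u_q\}$ is oriented. The yes-case then produces a copy of $\vec H$. The no-case holds a fortiori, since the underlying graph already contains no copy of $H$. I would mention this as a remark, but rely on the black-box argument above.

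For the second part, let $\vec H$ be WO bipartite. Corollary~\ref{Oex} gives $\mathrm{ex}(n,\vec H)=O(n^{2-1/\Delta'(H)})$. The algorithm is the naive one-pass procedure: keep the first $\mathrm{ex}(n,\vec H)+1$ arcs of the stream, each stored in $O(\log n)$ bits. If the stream contains more arcs than that, the retained subgraph has more than $\mathrm{ex}(n,\vec H)$ arcs and so contains $\vec H$; a brute-force search over the stored arcs, taking $O(n^{|V(H)|})$ time but no extra space beyond the stored arcs, finds it. Otherwise the whole graph has been stored and can be checked exactly. The total space is $\tilde O(n^{2-1/\Delta'(H)})$, and the algorithm is deterministic.

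The only step that needs genuine care is the pass and space bookkeeping in the first part; the rest is direct.
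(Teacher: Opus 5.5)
Your proof is correct and follows the paper's route. The first part combines Proposition~\ref{prop:oriented-to-undirected} with the non-bipartite lower bound of Theorem~\ref{thmUNB} (Proposition~\ref{NBlower}), and the second part is the naive one-pass algorithm retaining $\mathrm{ex}(n,\vec H)+1$ arcs, justified by Corollary~\ref{Oex}.

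Your side remark is also valid and slightly cleaner: orient each edge of $G_H$ like its image under the projection homomorphism to $\vec H$ and run the \textsc{Multi-Disjointness} reduction directly. It is deterministic and avoids both the $O(\log n)$ seed term and the pass blow-up, so it removes the need for your $p\le n$ assumption.
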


However, an NWO bipartite $\vec H$ has $\text{ex}(n,\vec H)=\Theta(n^2)$ instead. Such graphs are the interesting cases. Given their Tur\'{a}n number, one might ask if $R^p_{2/3}(\textsc{Sub}(\vec H))=\Omega(n^2/p)$, or at least $D^1(\textsc{Sub}(\vec H))=\Omega(n^2)$.

By Theorems~\ref{thmOMP} and~\ref{thmOSPC}, these lower bounds do hold for some NWO bipartite $\vec H$, but counterexamples also exist. For instance, any NWO forest admits efficient constant-pass algorithms, while the NWO bipartite graphs admitting efficient single-pass algorithms are exactly those in which each component is either WO or a tree with exactly one NWO vertex.
\subsection{Non-well-oriented bipartite graphs: multi-pass}
We first consider the multi-pass setting, proving Theorem~\ref{thmOMP} step by step. We first show that NWO even cycles are hard instances with the $p$-pass $\Omega(n^2/p)$-space lower bound, just like non-bipartite graphs. The proof is by reduction from $\textsc{Set-Disjointness}$ and resembles that of Proposition~\ref{ECprop}, but this time we can take two copies of WO $\vec K_{n,n}$ to encode the inputs, and this allows us to reach $\Omega(n^2/p)$ space.
\begin{proposition}
    Let $\vec H$ be fixed  NWO bipartite graph. If  $H=C_{2\ell}$ for some $\ell\geq 2$, then \begin{align*}
        R^p_{2/3}(\textsc{Sub}(\vec H))=\Omega(n^2/p).
    \end{align*}
\end{proposition}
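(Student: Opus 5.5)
We reduce $\mathrm{DISJ}_{n^2}$ to $\textsc{Sub}(\vec H)$, following the template of Proposition~\ref{ECprop} but with two WO oriented copies of $K_{n,n}$ in place of the extremal $C_{2\ell}$-free graph.

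\textbf{Splitting the pattern.} Write the cycle as $x_0 - x_1 - \cdots - x_{2\ell-1} - x_0$, with each edge oriented as in $\vec H$. Single out the edge $e_A=\{x_0,x_1\}$ and the edge $e_B=\{x_\ell,x_{\ell+1}\}$ (indices mod $2\ell$). Removing these two edges leaves two paths of length $\ell-1$ each:
\begin{itemize}
\item $P$ from $x_1$ to $x_\ell$;
\item $Q$ from $x_{\ell+1}$ to $x_{2\ell-1}$ and then to $x_0$.
\end{itemize}
Both paths keep the orientations inherited from $\vec H$.

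\textbf{Construction.} Take vertex sets $U_1=\{u_{1,i}\}$, $V_1=\{v_{1,j}\}$, $U_2=\{u_{2,i}\}$, $V_2=\{v_{2,j}\}$, each of size $n$. For every $i$, join $u_{1,i}$ to $u_{2,i}$ by a fresh oriented copy of $P$: $u_{1,i}$ plays $x_1$ and $u_{2,i}$ plays $x_\ell$. For every $j$, join $v_{2,j}$ to $v_{1,j}$ by a fresh oriented copy of $Q$: $v_{2,j}$ plays $x_{\ell+1}$ and $v_{1,j}$ plays $x_0$. When $\ell=2$ these paths are single arcs and no new vertices are needed.

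The players then encode their inputs.
\begin{itemize}
\item Alice, holding $S_1\subseteq[n^2]$, adds for each $(i-1)n+j\in S_1$ the arc between $v_{1,j}$ (playing $x_0$) and $u_{1,i}$ (playing $x_1$), oriented as $e_A$ is in $\vec H$.
\item Bob adds, for each $(i-1)n+j\in S_2$, the arc between $u_{2,i}$ (playing $x_\ell$) and $v_{2,j}$ (playing $x_{\ell+1}$), oriented as $e_B$.
\end{itemize}
The graph has $4n+2n(\ell-2)=\Theta(n)$ vertices. A $p$-pass streaming algorithm yields a $p$-round protocol for $\mathrm{DISJ}_{n^2}$, so the space bound is $\Omega(n^2/p)$.

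\textbf{YES direction.} If $(i-1)n+j\in S_1\cap S_2$, the two encoded arcs together with the $i$-th $P$-gadget and the $j$-th $Q$-gadget form exactly a copy of $\vec H$, with all orientations matching by design.

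\textbf{NO direction.} Suppose the graph contains a copy $C$ of $C_{2\ell}$ (as an underlying cycle). We argue in three steps.
\begin{enumerate}
\item Internal gadget vertices have degree $2$. So $C$ uses each gadget it touches either entirely or not at all.
\item Suppose $C$ uses no gadget. Then it lies inside Alice's graph or inside Bob's graph; there are no edges between the two sides except through gadgets. Each of these graphs is bipartite with all arcs oriented one way, so every subgraph of it is WO. Since $H$ is connected, its bipartition is unique, and because $\vec H$ is NWO, no copy of $\vec H$ can appear there.
\item Suppose $C$ uses some gadget. A gadget connects the Alice side to the Bob side, so $C$ must cross back and uses at least two gadgets. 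Each gadget contributes $\ell-1$ edges and there are $2\ell$ edges in total, so $C$ uses exactly two gadgets plus exactly one more edge on each side.
\end{enumerate}
In step~3, the single Alice edge joins endpoints of the two gadgets on the Alice side. Since Alice's arcs only go between $U_1$ and $V_1$, one gadget must be a $P$-gadget with index $i$ and the other a $Q$-gadget with index $j$. The same two gadgets end at $u_{2,i}$ and $v_{2,j}$ on Bob's side, so $C$ also contains the Bob edge $\{u_{2,i},v_{2,j}\}$. Hence $(i-1)n+j\in S_1\cap S_2$. Thus disjoint inputs produce no copy even of the underlying $C_{2\ell}$, let alone of $\vec H$.

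The main obstacle is the case analysis in step~3: one must rule out that two $P$-gadgets, or two $Q$-gadgets, can close a cycle through one Alice edge and one Bob edge. This is excluded because every encoded edge goes between a $U$-vertex and a $V$-vertex, which forces the length count above.
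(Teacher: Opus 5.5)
Your reduction is correct. It uses the same template as the paper: two vertex-disjoint WO bipartite graphs carry Alice's and Bob's sets, and oriented path gadgets complete the cycle. Where you split $\vec H$, however, differs from the paper in a way that matters.

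The paper splits at two arcs $(v_1,v_2)$ and $(v_k,v_{k+1})$ that point the same way around the cycle, justified by assuming at least three counterclockwise arcs. Its gadgets then have lengths $k-2$ and $2\ell-k$, which generally differ. You split at antipodal edges with arbitrary orientations and use only that each side is WO separately. That suffices, because a gadget-free cycle stays on one side.

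Your choice buys two things:
\begin{itemize}
\item You need no orientation hypothesis. The paper's assumption fails for the NWO $C_4$ given by $v_1\to v_2\to v_3\leftarrow v_4\leftarrow v_1$, which has only two arcs in each rotational direction.
\item Equal gadget lengths make your length count conclusive. Two gadgets of the same type have their side endpoints in the same class ($U_1$, $V_1$, $U_2$ or $V_2$). So both side segments have even length at least $2$, and the cycle has length at least $2(\ell-1)+4>2\ell$.
\end{itemize}
With the paper's unequal lengths, two copies of the shorter gadget (length at most $\ell-2$) plus even side segments can have total length exactly $2\ell$, and orientations need not exclude this. Take $C_6$ oriented as two directed paths of length $3$ from a source to a sink, with $k=3$ (the case $k=5$ is symmetric). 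Disjoint inputs with $i\neq i'$ then give the copy $x_{i'}\to y_j\to w_j\to z_i\leftarrow w_{j'}\leftarrow y_{j'}\leftarrow x_{i'}$ of $\vec H$. So the antipodal split is exactly what justifies the paper's step ``to close up a cycle it contains the paths\dots''.

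Two small repairs to your write-up:
\begin{itemize}
\item Your closing sentence overstates the result. Alice's graph alone may contain an underlying $C_{2\ell}$. Step~2 only shows that any such cycle is WO, hence not a copy of $\vec H$, and that is all you need.
\item In Step~3 the edge count alone does not exclude four gadgets when $\ell=2$. Add that every non-internal vertex is an endpoint of exactly one gadget, so consecutive gadget traversals are separated by at least one side edge. Then $2g$ gadgets force length at least $2g\ell$, hence $g=1$ with exactly one edge on each side.
\end{itemize}
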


\begin{proof}    Viewing $\vec H$ as a planar graph, assume without loss of generality that at least three edges in $\vec H$ have counterclockwise orientation. Label the vertices of $\vec H$ by 
\begin{align*}
    v_1,v_2,\ldots,v_{2\ell},v_{2\ell+1}=v_1    
\end{align*}
in counterclockwise order. Assume without loss of generality that $(v_1,v_2),(v_k,v_{k+1})\in E(\vec H)$ for some $3\leq k\leq 2\ell-1$.   Given an instance of $\mathrm{DISJ}_{n^2}$, where Alice and Bob hold $S_1,S_2\subseteq [n^2]$, respectively, we define the oriented reduction graph $\vec G_{\vec H}$ to be on the vertex set
    \begin{align*}
        V(\vec G_{\vec H})=\bigsqcup_{i=1}^n\Big\{x_i,y_i,z_i,w_i\Big\}\sqcup \bigsqcup_{i=1}^n\Big\{v_3^{(i)},v_4^{(i)},\ldots,v_{k-1}^{(i)},v_{k+2}^{(i)},v_{k+3}^{(i)},\ldots, v_{2\ell}^{(i)}\Big\}
    \end{align*}
    and have the arc set $E(\vec G_{\vec H})=E_1\cup E_2\cup E_3\cup E_4\cup E_5$, where
    \begin{align*}
        &E_1=\{(x_i,y_j):i,j\in [n], (j-1)n+i\in S_1\}\cup \{(w_j,z_i):i,j\in [n], (j-1)n+i\in S_2\},\\
&E_2=\bigsqcup_{i=1}^n\Big\{(v_j^{(i)},v_{j+1}^{(i)}):j\in \{3,4,\ldots,k-2,k+2,k+3,\ldots,2\ell\}, (v_j,v_{j+1})\in E(\vec H)\Big\},\\
&E_3=\bigsqcup_{i=1}^n\Big\{(v_{j+1}^{(i)},v_{j}^{(i)}):j\in\{4,5,\ldots,k-1,k+3,k+4,\ldots,2\ell+1\},(v_{j+1},v_{j})\in E(\vec H)\Big\},\\
&E_4=\begin{cases}
    \{(y_i,w_i):1\leq i\leq n\},\text{ if }k=3,\\
    \{(y_i,v_3^{(i)}),(v_{k-1}^{(i)},w_i):1\leq i\leq n\},\text{ if }k>3,
\end{cases}\\
&E_5=\begin{cases}
    \{(z_i,x_i):1\leq i\leq n\},\text{ if }k=2\ell-1,\\
    \{(z_i,v_{k+2}^{(i)}),(v_{2\ell}^{(i)},x_i):1\leq i\leq n\},\text{ if }k<2\ell-1.
    \end{cases}
    \end{align*}
    The following claim finishes the proof since we have $|V(\vec G_{\vec H})|=\Theta(n)$.\\\\
    \textbf{Claim:} We have $\vec H\subseteq \vec G_{\vec H}$ if and only if $S_1\cap S_2\ne \emptyset$.
    \begin{claimproof}
        If $(j-1)n+i\in S_1\cap S_2$ for some $i,j\in [n]$, then $(x_i,y_j),(w_j,z_i)\in E(\vec G_{\vec H})$, so corresponding to the vertices $v_1,v_2,\ldots,v_{2\ell-1}$, respectively, a copy of $\vec H$ is formed by
    \begin{align*}
    x_i,y_j,v_3^{(j)},v_4^{(j)},\ldots,v_{k-1}^{(j)},w_{j},z_{i},v_{k+1}^{(i)},\ldots,v_{2\ell-1}^{(i)}.    
    \end{align*}
    Conversely, since the induced subgraph $\vec G_{\vec H}[E_1]$ is WO but $\vec H$ is not, if there is a copy $\vec H'$ of $\vec H$ in $\vec G_{\vec H}$, then $\vec H'$ uses some arc not from $E_1$, so to close up a cycle it contains the paths $y_j-v_3^{(j)}-v_4^{(j)}-\ldots-v_{k-1}^{(j)}-w_{j}$ and $z_{i}-v_{k+1}^{(i)}-\ldots-v_{2\ell-1}^{(i)}-x_{i}$ for some $i$ and $j$. But these paths have lengths summing to $2\ell-2$, so to form $\vec H'$ we get $(x_i,y_j),(w_j,z_i)\in E(\vec G_{\vec H})$ and hence $S_1\cap S_2\ne \emptyset$. 
    \end{claimproof}\vspace{-10pt}
\end{proof}

Next we show that NWO complete bipartite graphs are also hard instances, again by reduction from $\textsc{Set-Disjointness}$. Roughly speaking, we let two WO $\vec K_{n,n}$ overlap in one of their two parts, forming a WO $\vec K_{n,2n}$. The edge set of one $\vec K_{n,n}$ will be used to encode Alice's input while the other is for Bob, such that the same element corresponds to a pair of symmetric edges. Then we attach an extra copy of $\vec H$ on each horizontal level, choosing some $v\in V(\vec H)$ to have two of its incident arcs overlapping with the horizontal index arcs on that level. To make the reduction valid, $v$ needs to be chosen carefully, and some crossing arcs will also be needed.

\begin{proposition}\label{kstLB}
    Fix $2\leq s \leq t$. Let $\vec H$ be a NWO bipartite graph with $H=K_{s,t}$, then 
    \begin{align*}
        R^p_{2/3}(\textsc{Sub}(\vec H))=\Omega(n^2/p)
    \end{align*}
\end{proposition}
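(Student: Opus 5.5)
We reduce from $\mathrm{DISJ}_{n^2}$, following the sketch given just before the statement. Fix the bipartition $V(H)=X\sqcup Y$ with $|X|=s$ and $|Y|=t$. Since $\vec H$ is NWO, some vertex has both an in-arc and an out-arc. We choose $v$ to be such a vertex, say $v\in X$, with neighbours $a,b\in Y$ and arcs $(a,v),(v,b)\in E(\vec H)$.

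\textbf{Skeleton.} Take three vertex classes $P=\{p_1,\dots,p_n\}$, $Q=\{q_1,\dots,q_n\}$ and $R=\{r_1,\dots,r_n\}$. Alice encodes $S_1$ by arcs $(p_j,q_i)$ for $(j-1)n+i\in S_1$. Bob encodes $S_2$ by arcs $(p_j,r_i)$ for $(j-1)n+i\in S_2$. The union of the encoding arcs is then a well-oriented $\vec K_{n,2n}$, oriented from $P$ to $Q\cup R$. For each level $i$, we attach a gadget that is a copy of $\vec H$ with $a$ identified with $q_i$ and $b$ identified with $r_i$, and with $v$ deleted. The gadget's neighbours of $v$ other than $a,b$ are then joined to every $p_j$ using the orientation of $\vec H$ at $v$; these are the crossing arcs. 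Equivalently, the images of $a$ and $b$ could be swapped so that one arc enters $p_j$ and one leaves it, and we would fix whichever convention makes the symmetric pair ``$q_i\to p_j$, $p_j\to r_i$'' consistent.

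\textbf{Completeness.} If $(j-1)n+i\in S_1\cap S_2$, then $p_j$ together with the level-$i$ gadget forms a copy of $\vec H$, with $p_j$ playing the role of $v$. The orientation of the encoding arcs at $p_j$ must be chosen so that one is in and one is out at $p_j$. Concretely, Alice's arcs are $(q_i,p_j)$ and Bob's are $(p_j,r_i)$. The encoding arcs then no longer form a WO graph, but they still form a bipartite graph between $P$ and $Q\cup R$, in which every $p_j$ may be NWO. This is acceptable provided the soundness argument still holds.

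\textbf{Soundness (main obstacle).} Suppose $S_1\cap S_2=\emptyset$ and $\vec H'$ is a copy of $\vec H$. We expect the argument to go as follows.
\begin{enumerate}
\item Because $K_{s,t}$ with $s,t\ge 2$ is $2$-connected and every two vertices on the same side have $t$ (resp.\ $s$) common neighbours, $\vec H'$ cannot lie inside the skeleton alone. The skeleton restricted to $P$ and $Q\cup R$ has, for each $p_j$, at most one of the two arcs to a given level, and without the gadget it is too sparse in the required pattern.
\item We next show that $\vec H'$ must meet exactly one gadget. Different gadgets communicate only through $P$. Using the complete bipartite structure, any two vertices of $\vec H'$ on the same side need $\ge 2$ common neighbours, and one should check that vertices from two different levels cannot have two common neighbours.
\item Within one gadget at level $i$, the vertex playing $v$ must be some $p_j$ adjacent, correctly oriented, to both $q_i$ and $r_i$. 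This forces $(j-1)n+i\in S_1\cap S_2$.
\end{enumerate}

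The delicate point is choosing $v$ (and possibly the labels $a,b$) so that no alternative embedding exists. Such an embedding could, for instance, use two distinct $p_j,p_{j'}$ and exploit arcs to $q_i$ from one and to $r_i$ from the other. To rule it out, I would first try picking $v$ in the larger side, or picking $v$ with extremal in/out-degree, and then compare arc counts between level $i$ and $P$. I would also consider adding crossing arcs only from the gadget vertices on $v$'s side to $P$ to restore the degrees. Since $|V|=\Theta(n)$ and the encoding uses $n^2$ bits, the $\Omega(n^2/p)$ bound then follows.
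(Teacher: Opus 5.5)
Your reduction is not sound as set up, because you always take $v$ NWO. With Alice's arcs $(q_i,p_j)$ and Bob's arcs $(p_j,r_i)$, the encoding arcs alone form an essentially arbitrary bipartite digraph between $P$ and $Q\cup R$. In it every $q_i$ is a source and every $r_i$ is a sink, and disjointness only forbids $q_i\to p_j\to r_i$ at the \emph{same} level $i$. So whenever one side of $\vec H$ consists of WO vertices, the skeleton hosts $\vec H$ by itself in a NO instance. Concretely, let $H=K_{2,2}$ with arcs $(y_1,x_1),(y_1,x_2),(x_1,y_2),(x_2,y_2)$. Only $x_1,x_2$ are NWO, so you must take $v=x_1$, $a=y_1$, $b=y_2$. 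Then $q_i,p_j,p_{j'},r_{i'}$ with $i\neq i'$ form a copy of $\vec H$ as soon as $(j-1)n+i,(j'-1)n+i\in S_1$ and $(j-1)n+i',(j'-1)n+i'\in S_2$, which is compatible with $S_1\cap S_2=\emptyset$.

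The missing idea is the paper's case split. If $\vec H$ has a WO vertex, take $v$ to be WO. Then both encoding arcs at $p_j$ point the same way, so $\vec G_{\vec H}[E_1\cup E_2]$ is WO and cannot contain the NWO $\vec H$. This is your original WO skeleton, which becomes consistent once $v$ is a source or sink. Only if $\vec H$ has no WO vertex do you use an NWO $v$ with one in-arc and one out-arc. In that case the skeleton cannot host $\vec H$, because its $Q\cup R$ vertices are WO while $\vec H$ has none.

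Second, ``say $v\in X$'' is not without loss of generality, because $s$ and $t$ play different roles. Take $s=2<t=3$ and $\vec H$ with arcs $x_1\to y_1\to x_2$, $x_2\to y_2\to x_1$, $x_1\to y_3\to x_2$, which has no WO vertex. With $v=x_1$, $a=y_2$, $b=y_1$, your gadget has arcs $(r_i,x_2^{(i)}),(x_2^{(i)},q_i),(y_3^{(i)},x_2^{(i)})$ and crossing arcs $(p_j,y_3^{(i)})$. The map $x_1\mapsto q_i$, $x_2\mapsto y_3^{(i)}$, $y_1\mapsto p_j$, $y_2\mapsto x_2^{(i)}$, $y_3\mapsto p_{j'}$ is then a copy of $\vec H$ whenever $(j-1)n+i,(j'-1)n+i\in S_1$, regardless of $S_2$. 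Your step~2 also fails as stated: the crossing arcs (and already the encoding arcs) give vertices of different levels many common neighbours in $P$.

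What the paper does instead, after choosing $v$ as above and on the larger side whenever possible, is a degree count:
\begin{itemize}
\item The orientation argument forces $\vec H'$ to contain a gadget copy $z$ of a vertex on $v$'s side, and $z$ is adjacent exactly to the level-$i$ copies of the opposite side.
\item The choice of $v$ forces the side of $\vec H'$ opposite to $z$ to be all of $N(z)$, which contains $q_i$ and $r_i$. This uses either that $v$ is on the larger side, or, when all WO vertices lie on the smaller side, that the larger side then has no WO vertex while every $p_j$ is WO.
\item The gadget provides one common neighbour of $q_i$ and $r_i$ too few, so some $p_j$ carries both encoding arcs, i.e.\ $(j-1)n+i\in S_1\cap S_2$.
\end{itemize}
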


\begin{proof}
    Let the two parts of $\vec H$ be 
    $A=A(\vec H)$ and $B=B(\vec H)$, where $|A|=s$ and $|B|=t$. Since the case $s=t$ turns out to be simpler, we may just assume $s<t$. We will always be given an instance of $\mathrm{DISJ}_{n^2}$, where Alice and Bob hold $S_1,S_2\subseteq [n^2]$, respectively, and construct the oriented reduction graph $\vec G_{\vec H}$ with $|V(\vec G_{\vec H})|=\Theta(n)$.\\\\
    \textbf{Case 1:} Some $v\in A$ is WO in $\vec H$ and no vertex in $B$ is.\\\\
    \begin{figure}[H]
        \centering
        \includegraphics[width=0.75\linewidth]{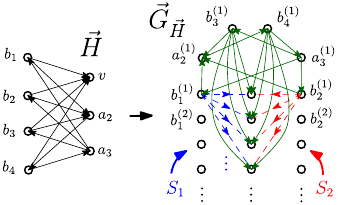}
        \caption{The reduction graph $\vec G_{\vec H}$ for \cref{kstLB} in Case 1. We only present $a_2^{(i)},a_3^{(i)},b_3^{(i)},b_4^{(i)}$ for $i=1$ in the figure, but corresponding structures should also exist for all $i\in [n]$. }
    \end{figure}
    Assume without loss of generality that the out-degree of $v$ is $0$.    Let $A=\{v,a_2,\ldots,a_s\}$ and $B=\{b_1,\ldots,b_t\}$. Define $\vec G_{\vec H}$ to have the vertex set 
    \begin{align*}
        V(\vec G_{\vec H})=\bigsqcup_{i=1}^n \Big\{b_1^{(i)},v^{(i)},b_2^{(i)}\Big\}\sqcup \bigsqcup_{i=1}^n\Big\{a_2^{(i)},a_3^{(i)},\ldots,a_s^{(i)},b_3^{(i)},b_4^{(i)},\ldots,b_t^{(i)}\Big\}
    \end{align*}
    and arc set $E(\vec G_{\vec H})=E_1\cup E_2\cup E_3\cup E_4\cup E_5$, where
    \begin{align*}
        &E_1=\{(b_1^{(i)},v^{(j)}):i,j\in[n],(j-1)n+i\in S_1\},\\
        & E_2=\{(b_2^{(i)},v^{(j)}):i,j\in [n],(j-1)n+i\in S_2\},\\
        &E_3=\bigsqcup_{i=1}^n\Big\{(a_k^{(i)},b_\ell^{(i)}):2\leq k\leq s,1\leq \ell\leq t,(a_k,b_\ell)\in E(\vec H)\Big\},\\
        &E_4=\bigsqcup_{i=1}^n\Big\{(b_\ell^{(i)},a_k^{(i)}):2\leq k\leq s,1\leq \ell\leq t,(b_\ell,a_k)\in E(\vec H)\Big\},\\
        &E_5=\bigsqcup_{i,j\in [n]}\Big\{(b_3^{(
        i)},v^{(j)}),\ldots,(b_t^{(i)},v^{(j)})\Big\}.
    \end{align*}
    As before, if $S_1\cap S_2\ne \emptyset$, then $(b_1^{(i)},v^{(j)}),(b_2^{(i)},v^{(j)})\in E(\vec G_{\vec H})$ for some $i,j\in [n]$, so corresponding to the vertices $a_1,a_2,\ldots,a_s,b_1,\ldots,b_t$, respectively, a copy of $\vec H$ is formed by 
    \begin{align*}
        v^{(j)},a_2^{(i)},\ldots,a_s^{(i)},b_1^{(i)},\ldots,b_t^{(i)}.
    \end{align*}
    
    Conversely, since the induced subgraph $\vec G_{\vec H}[E_1\cup E_2]$ is WO but $\vec H$ is not, if there is a copy $\vec H'$ of $\vec H$ in $\vec G_{\vec H}$, then $\vec H'$ uses some vertex from $\{a_2^{(i)},a_3^{(i)},\ldots,a_s^{(i)},b_3^{(i)},b_4^{(i)},\ldots,b_t^{(i)}\}$ for exactly one $i\in [n]$, the uniqueness of $i$ coming from $H=K_{s,t}$.
    
    First suppose all $a_2^{(i)},\ldots,a_s^{(i)}\notin V(\vec H')$, then $\vec H'$ is contained in the subgraph of $\vec G_{\vec H}$ induced by $\{b_3^{(i)},\ldots,b_t^{(i)}\}$ and $\{b_1^{(j)},v^{(j)},b_2^{(j)}\}$ for all $j$. But the latter is WO while $\vec H'$ is not, which is a contradiction.
    Thus we have $a_k^{(i)}\in V(\vec H')$ for some $2\leq k\leq s$.
    
    If $a_k^{(i)}\in B(\vec H')$, then since $s<t=|B(\vec H')|$, we need some vertex other than $a_2^{(i)},\ldots,a_s^{(i)}$ to be in $B(\vec H')$, which can only be some $v^{(j)}$. However, $v^{(j)}$ is WO in $\vec G_{\vec H}$ while no vertices in $B(\vec H')$ should be WO in $\vec H'$ by assumption, which is a contradiction.
    Thus $a_k^{(i)}\in A(\vec H')$, and then $N_{\vec G_{\vec H}}(a_k^{(i)})= B(\vec H')$ since $d_{\vec G_{\vec H}}(a_k^{(i)})=t=|B(\vec H')|$. In particular, $b_1^{(i)},b_2^{(i)}\in B(\vec H')$, so aside from $a_2^{(i)},\ldots,a_s^{(i)}$, there is one more common neighbor of $b_1^{(i)},b_2^{(i)}$, which must be some $v^{(j)}$, and this shows $S_1\cap S_2\ne \emptyset$.\\\\
    \textbf{Case 2:} Some $v\in B$ is WO in $\vec H$.\\
    
    Let $A=\{a_1,a_2,\ldots,a_s\}$ and $B=\{v,b_2,\ldots,b_t\}$. Define $\vec G_{\vec H}$ to have the vertex set 
    \begin{align*}
        V(\vec G_{\vec H})=\bigsqcup_{i=1}^n \Big\{a_1^{(i)},v^{(i)},a_2^{(i)}\Big\}\sqcup \bigsqcup_{i=1}^n\Big\{a_3^{(i)},\ldots,a_s^{(i)},b_2^{(i)},\ldots,b_t^{(i)}\Big\}
    \end{align*}
    and arc set $E(\vec G_{\vec H})=E_1\cup E_2\cup E_3\cup E_4\cup E_5$, where
    \begin{align*}
        &E_1=\{(a_1^{(i)},v^{(j)}):i,j\in [n],(j-1)n+i\in S_1\},\\
        &E_2=\{(a_2^{(i)},v^{(j)}):i,j\in [n],(j-1)n+i\in S_2\},\\
        &E_3=\bigsqcup_{i=1}^n\Big\{(a_k^{(i)},b_\ell^{(i)}):1\leq k\leq s,\text{ }2\leq \ell\leq t,(a_k,b_\ell)\in E(\vec H)\Big\},\\
        &E_4=\bigsqcup_{i=1}^n\Big\{(b_\ell^{(i)},a_k^{(i)}):1\leq k\leq s,\text{ }2\leq \ell\leq t,(b_\ell,a_k)\in E(\vec H)\Big\},\\
        &E_5=\{(a_3^{(
        i)},v^{(j)}),\ldots,(a_s^{(i)},v^{(j)}):i\ne j\}.
    \end{align*}
    A similar argument as in Case $1$ shows that $\vec H\subseteq \vec G_{\vec H}$ if and only if $S_1\cap S_2\ne\emptyset$. Namely, if there is a copy $\vec H'$ of $\vec H$ in $\vec G
    _{\vec H}$, then there is a unique $i\in [n]$ for which $H'$ uses some vertex $b_\ell^{(i)}$. Then $b_\ell^{(i)}\in B(\vec H')$, so $a_1^{(i)},a_2^{(i)}\in A
    (\vec H')$, and some $v^{(j)}$ must be their common neighbor. \\\\
    \textbf{Case 3:} There are no WO vertices in $\vec H$.\\
    
    Pick any $v\in B$ and let $A=\{a_1,\ldots,a_s\}$, $B=\{v,b_2,\ldots,b_t\}$, where $(a_1,v),(v,a_2)\in E(\vec H)$. Define $\vec G_{\vec H}$ as in Case 2, except in $E_2$ we replace $(a_2^{(i)},v^{(j)})$ by $(v^{(j)},a_2^{(i)})$.
    
    Again, if $S_1\cap S_2\ne\emptyset$, then $\vec H\subseteq \vec G_{\vec H}$.
    Conversely, let there be a copy $\vec H'$ of $\vec H$ in $\vec G_{\vec H}$. In the induced subgraph $\vec G_{\vec H}[E_1\cup E_2]$, the vertices $a_1^{(i)},a_2^{(i)}$ are WO while none in $\vec H$ are, so again $\vec H'$ uses some vertex from $\{a_3^{(i)},\ldots,a_s^{(i)},b_2^{(i)},\ldots,b_t^{(i)}\}$. 
    If all $b_2^{(i)},\ldots,b_t^{(i)}\notin V(\vec H')$, then any vertex in $\{a_3^{(i)},\ldots,a_s^{(i)}\}\cap V(\vec H')$ would be WO in $\vec H'$, which is a contradiction.
    Thus $b_k^{(i)}\in V(\vec H')$ for some $2\leq k\leq t$, and then $N_{\vec G_{\vec H}}(b_k^{(i)})= A(\vec H')$ since $d_{\vec G_{\vec H}}(b_k^{(i)})=s=|A(\vec H')|$. This leads to a common neighbor of $a_1^{(i)},a_2^{(i)}$ other than $b_2^{(i)},\ldots,b_t^{(i)}$, which can only be some $v^{(j)}$, showing $S_1\cap S_2\ne\emptyset$.
\end{proof}

The two propositions above feature results where $\text{ex}(n,\vec H)$ governs the space complexity, but 
as in the latter part of Theorem~\ref{thmOMP}, if $H$ is a forest, then regardless of the orientation on $\vec H$, there is always a constant-pass $\tilde O(n)$-space algorithm. This means that although $\text{ex}(n,\vec H)=\Theta(n^2)$, the structure of $H$ is so simple that even with arbitrary orientation, we can still detect $\vec H$ while storing only little information.

The proof is again by color-coding. Roughly speaking, we choose a set of colorings of $V(\vec G)$ as in Lemma~\ref{ColorCoding} so that a copy of $\vec H$, if exists, must be colorful under one of the colorings. Then we can use dynamic programming to iteratively record whether a vertex $v\in V(\vec G)$ can serve as the root of a subtree of $\vec H$ in a designated set of colors, leading to the final detection of $\vec H$.

\begin{proposition}\label{forest}
    Let $H$ be a fixed forest and $\vec H$ be any orientation of $H$. Let $r$ be the largest radius of the components of $H$, then $D^{2r}(\textsc{Sub}(\vec H))=\tilde O(n)$.
\end{proposition}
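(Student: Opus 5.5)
Fix a center $c_i$ for each component $T_i$ of $H$, so every vertex of $T_i$ is within distance $r$ of $c_i$, and root $T_i$ at $c_i$. Let $h=|V(H)|$. Using \cref{ColorCoding} (with $h$ colors in place of $2\ell+1$), fix a deterministic family $\chi_1,\ldots,\chi_L:V(\vec G)\to[h]$ with $L=O(\log^2 n)$, such that every $h$-subset of $V(\vec G)$ is colorful under some $\chi_c$. It suffices to detect, for some $c$, a copy of $\vec H$ that is colorful under $\chi_c$. The colorings depend only on $n$, so they need not be stored beyond $\tilde O(n)$ space.

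For a vertex $u\in T_i$, let $T_i^u$ denote the subtree hanging below $u$, and let $d(u)$ be its height. For every $x\in V(\vec G)$, every $c\in[L]$, every $u$ and every color set $S\subseteq[h]$ with $|S|=|V(T_i^u)|$, maintain a bit $D^{c}_u[S](x)$. It equals $1$ iff some copy of $T_i^u$ rooted at $x$, with orientations as in $\vec H$, uses exactly the colors in $S$ under $\chi_c$. Since $h$ is constant, there are $O(2^h h)$ choices of $(u,S)$, so the total space is $\tilde O(n)$.

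Leaves are initialized for free. In pass $d$ (for $d=1,\dots,r$) I will compute all bits for vertices $u$ with $d(u)=d$. When an arc $(x,y)$ arrives, for each child $w$ of $u$ with $(u,w)\in E(\vec H)$ and each $S'$ with $D^c_w[S'](y)=1$, I record that $x$ can host the branch $w$ with colors $S'$; arcs oriented $(w,u)$ in $\vec H$ are handled symmetrically. The difficulty is that recording all feasible $(w,S')$ per $x$ and combining them into disjoint color sets can be done in $O(2^h)$ bits per $(x,u)$, but only if I store, for each child $w$ and each $S'$, whether some neighbor of $x$ realizes it. That is again a constant number of bits per $x$. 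At the end of the pass, for each $x$ with $\chi_c(x)$ given, I run a subset DP over the children of $u$ to choose pairwise disjoint $S'_w$ avoiding $\chi_c(x)$, and set $D^c_u[S](x)$ accordingly.

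Colorfulness is what makes this sound. Disjoint color sets force the subtrees to be vertex-disjoint and to avoid $x$, so a positive entry certifies a genuine copy; conversely, a colorful copy yields positive entries by induction on height. After $r$ passes we know $D^c_{c_i}[S](x)$ for all roots, and whether $\vec H$ has a colorful copy reduces to choosing pairwise disjoint $S_i$ and some $x_i$ with $D^c_{c_i}[S_i](x_i)=1$. Disjointness of the color classes guarantees distinct vertices across components, and this combination step needs no pass since it only uses the stored tables.

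The step I expect to be the main obstacle is staying within $2r$ passes while also \emph{outputting} the copy, not only deciding existence. I will use the remaining $r$ passes for backtracking, top-down from the roots. I fix $x_i$ and $S_i$, and in each pass, for each currently fixed vertex $x$ playing $u$ with chosen partition $\{S'_w\}$, I scan for an arc to a $y$ with $D^c_w[S'_w](y)=1$. I take the first such $y$, and then fix a partition for $y$ locally from its own DP. This reconstructs one level per pass, for $r$ more passes, giving $2r$ in total. Correctness of the reconstruction follows again from colorfulness, since the chosen $y$'s cannot collide.
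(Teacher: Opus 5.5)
Your proposal is correct and follows essentially the same route as the paper: color coding with $L=O(\log^2 n)$ colorings, and bottom-up tables $D^c_u[S]$ filled one tree level per pass, where your per-neighbor realization bits play the role of the paper's arrays $W^c_{y,x}[S,s]$. It then uses $r$ further passes of top-down reconstruction, for $2r$ passes and $\tilde O(n)$ space in total. The differences are cosmetic: you index passes by subtree height rather than by $r_i-\mathrm{dist}_{H_i}(v_i,x)$, and during recovery you fix each frontier vertex's color partition from the stored bits, whereas the paper stores all arcs incident to the frontier and chooses offline.
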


\begin{proof}
    Let $\vec H_1,\ldots,\vec H_k$ be the components of $\vec H$, having radii $r_1,\ldots,r_k$, respectively. For each \nolinebreak $\vec H_i$, choose $v_i$ from the center of $H_i$ to be the root. Then for all $x\in V(\vec H_i)$, define $h(x)=r_i-\text{dist}_{H_i}(v_i,x)$. Also, denote by $\vec T_x$ the subtree of $\vec H_i$ rooted at $x$.
    
    By Lemma~\ref{ColorCoding}, we can take $L=O(\log^2 n)$ many colorings $\chi_1,\ldots,\chi_L:V(\vec G)\to [|V(\vec H)|]$ so that if $\vec G$ contains a copy of $\vec H$, then that copy is colorful under one of the colorings. This uses space $O(Ln)=\tilde O(n)$. Now for all $x\in V(\vec H)$, $c\in [L]$ and $ S\subseteq [|V(\vec H)|]$ with $|S|=|\vec T_x|$, declare an array $D_x^c[S]:V(\vec G)\to \{0,1\}$, initialized to $D_x^c[S](v)=0$ for all $v\in V(\vec G)$. This uses space $\tilde O( |V(\vec H)|L2^{|V(\vec H)|}n )=\tilde O(n)$, and our goal is to have
    \begin{align*}
        D_x^c[S](v)=1\iff \exists \text{ embedding }\varphi:\vec T_x\hookrightarrow \vec G,\text{ }x\mapsto v,\text{ }\varphi(\vec T_x)\text{ uses color }S\text{ under }\chi_c.
    \end{align*}
    
    On the other hand, for all $x,c,S$ as before, if $x$ has parent $y$, then for all $s\in [|V(\vec H)|]\setminus S$ we also declare an array $W_{y,x}^c[S,s]:V(\vec G)\to \{0,1\}$, initialized to $W_{y,x}^c[S,s](v)=0$ for all $v\in V(\vec G)$. This uses space $\tilde O(n)$, and our goal is to have $W_{y,x}^c[S,s](u)=1$ if and only if
    \begin{align*}
        &\chi_c(u)=s,\text{ and }\\& \exists v\in V(\vec G)\text{ such that } D_x^c[S](v)=1 \text{ with }\begin{cases}
            (v,u)\in E(\vec G),\text{ if }(x,y)\in E(\vec H),\\
            (u,v)\in 
            E(\vec G),\text{ if }(y,x)\in E(\vec H).
        \end{cases}
    \end{align*}
    Then we begin the dynamic programming, starting from $j=1$.\\
    \begin{enumerate}
        \item [(1)] For each leaf  $x\in V(\vec H)$, $c\in [L]$ and $S\subseteq [|V(\vec H)|]$ with $|S|=1$, we set $D_x^c[S](u)=1$ if and only if $S=\{\chi_c(u)\}$. This fixes $D_x^c[S]$ for all $x$ with $h(x)=0$.\\
        \item [(2)]
        Before the $j$-th pass, $D_x^c[S]$ would have been fixed for all $x$ with $h(x)\leq j-1$. Then in the $j$-th pass, whenever an arc $\vec e=(u,v)$ arrives, we examine all $x\in V(\vec H)$ with $h(x)=j-1$, $c\in [L]$, $S\subseteq [|V(\vec H)|]$ with $|S|=|\vec T_x|$, and $s\in [|V(\vec H)|]\setminus S$.  Letting $y$ be the parent of $x$, we check if 
        \begin{align*}
            D_x^c[S](u)=1,\quad \chi_c(v)=s,\quad
        (x,y)\in E(\vec H).
        \end{align*}
        If this is the case, then we set $W_{y,x}^c[S,s](v)=1$. Likewise, we also check if
        \begin{align*}
            D_x^c[S](v)=1,\quad  \chi_c(u)=s,\quad (y,x)\in E(\vec H).    
        \end{align*}
        If this is the case, then we set $W_{y,x}^c[S,s](u)=1$.\\
        \item [(3)]
        After the $j$-th pass, $W_{y,x}^c[S,s]$ have been fixed for all $x$ with $h(x)=j-1$ and $y$ being the parent of $x$. Now for each $y\in V(\vec H)$ with $h(y)=j$, let $y$ have children $x_1,\ldots,x_i$. For all $u\in V(\vec G)$, if there exist $c\in [L]$, $s\in [|V(\vec H)|]$ and disjoint $S_1,\ldots,S_i\subseteq [|V(\vec H)|]\setminus \{s\}$ such that
        \begin{align*}
            W_{y,x_1}^c[S_1,s](u)=\ldots=W_{y,x_i}^c[S_i,s](u)=1,
        \end{align*}
        then we set $D_y^c[S_1\cup \ldots\cup S_i\cup \{s\}](u)=1$. This fixes $D_x^c[S]$ for all $x$ with $h(x)=j$. Now if $j=r$ then we move on to the next step. Otherwise we increase $j$ by $1$ and repeat from step (2).\\
        \item [(4)]
        Now $D_x^c[S]$ is known for all $x\in V(\vec H)$. In particular we know $D_{v_1}^c[S],\ldots,D_{v_k}^c[S]$. If $\vec H\subseteq \vec G$, then under some $\chi_c$ will there be a colorful copy of $\vec H$. Thus if there are some $c\in [L]$ and disjoint $S_1,\ldots,S_k\subseteq [|V(\vec H)|]$ such that 
        \begin{align*}
            D_{v_1}^c[S_1](u_1)=\ldots=D_{v_k}^c[S_k](u_k)=1
        \end{align*}
        for some $u_1,\ldots,u_k\in V(\vec G)$, then we conclude $\vec H\subseteq \vec G$. Otherwise $\vec H\nsubseteq G$.\\
        \item [(5)]
        If $\vec H\subseteq \vec G$, then we can find an $\vec H$ using $r$ extra passes and $\tilde O(|V(H)|n)=\tilde O(n)$ extra space, which is similar to step (3) in the proof of Proposition~\ref{oddcycleUB} so we only sketch the proof as follows. Now that we know there is a copy $\vec H\subseteq \vec G$ given by
        \begin{align*}
            \vec H\hookrightarrow \vec G,\quad v_1\mapsto u_1,\ldots,v_k\mapsto u_k,    
        \end{align*}
        in the first extra pass we store all the arcs incident to $u_1,\ldots,u_k$. For all $i\in [k]$, let the children of $v_i$ be $v_{i,1},\ldots,v_{i,k_i}$, then we can find some disjoint
        \begin{align*}
            Z_{i,1},\ldots,Z_{i,k_i}\subseteq [|V(\vec H)|],\text{ } Z_{i,1}\sqcup\ldots\sqcup Z_{i,k_i}=S_i\setminus \{\chi_c(u_i)\},\text{ }  |Z_{i,j}|=|\vec T_{v_{i,j}}|\text{ }  \forall j\in [k_i]
        \end{align*}
        and some $u_{i,1},\ldots,u_{i,k_i}\in V(\vec G)$ such that
        \begin{align*}
            D_{v_{i,1}}^c[Z_{i,1}](u_{i,1})=\ldots=D_{v_{i,k_i}}^c[Z_{i,k_i}](u_{i,k_i})=1\quad \forall i\in [k],
        \end{align*}
        so we know there is a copy $\vec H\subseteq \vec G$ given by
        \begin{align*}
            \vec H\hookrightarrow \vec G,\quad v_1\mapsto u_1,\ldots,v_k\mapsto u_k, \quad v_{i,1}\mapsto u_{i,1},\ldots,v_{i,k_i}\mapsto u_{i,k_i}\quad \forall i\in [k].   
        \end{align*}
        Then in the next extra pass we store all the arcs incident to $v_{i,1},\ldots,v_{i,k_i}$ for all $i\in [k]$, and so on. This recovers a copy of $\vec H$ within $r$ extra passes. \qedhere
    \end{enumerate}
\end{proof}

In general, given an arbitrary NWO bipartite graph $\vec H$, we could not yet determine if it is a hard instance with $R^{p}_{2/3}(\textsc{Sub}(\vec H))=\Omega(n^2/p)$, or if space-efficient multi-pass algorithms actually exist, although we tend to believe that $\vec H$ is a hard instance whenever one of its NWO components contains a cycle. However, we know much more when restricted to single-pass and have a complete dichotomy result as in the next subsection.

\subsection{Non-well-oriented bipartite graphs: single-pass}

We investigate the single-pass lower bounds for NWO bipartite graphs, and the results are summarized by Theorem~\ref{thmOSPC}. It turns out that most non-well oriented bipartite $\vec H$ are hard instances with $R^1_{2/3}(\textsc{Sub}(\vec H))= \Omega(n^2)$. The only easy instances are $\vec H$ in which each component is either WO or a tree containing exactly one NWO vertex. 

The lower bound in Theorem~\ref{thmOSPC}
will be proven step-by-step. First we show that $\vec H$ is a hard instance whenever one of its components contains at least 2 NWO vertices. 
For this we need the following definition of \emph{critical paths}.
\begin{definition}[Critical path]
    Let $P:v_1v_2\ldots v_{\ell}$ be a path on $\ell\geq 5$ vertices and $\vec P$ be an orientation of $P$. We say that $\vec P$ is a critical path if among $v_1,v_2,\ldots,v_{\ell}$, only $v_2$ and $v_{\ell-1}$ are NWO in $\vec P$. For example, the following are both critical paths.   
    \begin{align*}
    &\vec P: v_1\to v_2\to v_3\leftarrow v_4\to v_5\leftarrow v_6\to v_7\leftarrow v_8\to v_9\to v_{10},\\
    &\vec P':v_1\leftarrow v_2\leftarrow v_3\to v_4\to v_5
\end{align*}
\end{definition}
Note that in the definition of a critical path $\vec P$, the well-orientation of $v_1,v_3,v_4,\ldots,v_{\ell-2},v_\ell$ is with respect to $\vec P$. Thus when we say that $\vec H$ contains a critical path $\vec P:v_1v_2\ldots v_\ell$, the vertices $v_1,v_3,v_4,\ldots,v_{\ell-2},v_\ell$ might not be WO in $\vec H$, but must be WO in $\vec P$.

Next we need Lemma~\ref{CritPath}, which states that if some component of $\vec H$ contains at least 2 NWO vertices, then that component contains an NWO cycle, a directed $P_4$ or a critical path. Then in proposition~\ref{NWOCorDPL3} we prove that $\vec H$ is a hard instance once it contains one of these three structures.

\begin{lemma}\label{CritPath}
    Let $\vec H$ be an oriented graph. If $\vec H$ has a component containing at least $2$ NWO vertices, then $\vec H$ contains an NWO cycle, a directed $P_4$ or a critical path.  
\end{lemma}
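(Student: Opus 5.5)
Assume $\vec H$ contains no NWO cycle and no directed $P_4$; we want a critical path inside the component $\vec K$ that has two NWO vertices. Since $\vec K$ is connected, pick two NWO vertices $a,b$ of $\vec K$ at minimum distance in $H$, and let $Q: a=q_1q_2\ldots q_m=b$ be a shortest path between them. By minimality, the internal vertices $q_2,\ldots,q_{m-1}$ are WO in $\vec H$, hence also WO in $\vec Q$.

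The plan is to extend $Q$ by one arc at each end to obtain a critical path $v_1 a \ldots b v_\ell$. Because $a$ is NWO in $\vec H$, it has both an in-arc and an out-arc. Choose a neighbour $v_1$ of $a$ so that $a$ becomes NWO along the extended path. If the arc $aq_2$ leaves $a$, take $v_1$ to be an in-neighbour of $a$; otherwise take an out-neighbour. Choose $v_\ell$ at $b$ symmetrically. In the resulting path $v_1,a,q_2,\ldots,q_{m-1},b,v_\ell$, the vertices $a$ and $b$ are NWO, while the internal $q_i$ are WO. The endpoints $v_1$ and $v_\ell$ have degree one in the path, so they are automatically WO.

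Next we check the two requirements of a genuine critical path. The first is the length condition $\ell\ge5$, which means $m\ge 3$, i.e.\ $a,b$ are non-adjacent. The second is that $v_1,v_\ell$ are new distinct vertices off $Q$. The cases are as follows.

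\textbf{Case $a\sim b$ ($m=2$).} Say $(a,b)$ is an arc. Take an in-neighbour $x$ of $a$ and an out-neighbour $y$ of $b$. If $x\ne y$, then $x\to a\to b\to y$ is a directed $P_4$, contradicting our assumption. If $x=y$, then $a\to b\to x\to a$ is a directed triangle, which is an NWO cycle, again a contradiction. Note that $x,y\notin\{a,b\}$ automatically. So this case cannot occur.

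\textbf{Case $m\ge3$, $v_1$ or $v_\ell$ lies on $Q$.} If $v_1=q_j$ for some $j$, then $v_1$ together with the subpath $a\ldots q_j$ forms a cycle. Every internal $q_i$ is WO, so along this cycle each internal vertex is a source or a sink within it. But $a$ is NWO on the cycle, since it uses both its chosen arc and $aq_2$. So the cycle contains an NWO vertex, which I interpret as being an NWO cycle (i.e.\ not well-oriented as a bipartite graph). This contradicts our assumption.

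\textbf{Case $m\ge3$, $v_1=v_\ell$ off $Q$.} Then $v_1 Q v_1$ is a cycle containing the NWO vertices $a,b$, so it is an NWO cycle, again a contradiction.

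\textbf{Case $m=3$.} Here we must also make sure $q_2$ is WO in the path; this holds by minimality of the distance.

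The main obstacle is pinning down exactly what ``NWO cycle'' means and verifying the case analysis rigorously. If it means a cycle that is not well-oriented, then any cycle containing a vertex NWO with respect to the cycle qualifies, and an odd cycle also qualifies (it is not bipartite). The other point to verify is that choosing a shortest pair of NWO vertices really forces the internal vertices to be WO in $\vec H$, not merely along the path. This is immediate: an internal NWO vertex would be an NWO vertex closer to $a$ than $b$ is.
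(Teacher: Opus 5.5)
Your proof is correct. It shares the paper's core idea: extend a path joining two NWO vertices by one arc at each end, oriented so that both endpoints become NWO along the path. Your choice of a closest pair $a,b$ of NWO vertices together with a shortest path $Q$ is, however, a genuine simplification.

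The paper takes an arbitrary pair $u,v$ and an arbitrary connecting path, and first does a case analysis on coincidences among the neighbours $u_1,u_2,v_1,v_2$. At the end it merely asserts that the extended path is a critical path or contains a critical subpath. Making that step precise means cutting at two consecutive vertices that are NWO along the path; if those two happen to be adjacent, one gets a directed $P_4$ instead. With your minimality the internal vertices are WO in $\vec H$ itself, so no extraction step is needed. The only degenerate configuration is then $a\sim b$, which you handle correctly.

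Two small remarks. First, your case ``$v_1$ lies on $Q$'' is actually vacuous. We have $v_1\neq q_2$, because $\vec H$ is oriented and you chose $v_1$ so that its arc points the opposite way at $a$ from the arc between $a$ and $q_2$. And $v_1=q_j$ with $j\ge 3$ would shortcut the shortest path $Q$. Similarly, $v_1=v_\ell$ forces $m=3$. Your cycle argument there is still valid.

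Second, on the meaning of an NWO cycle: a cycle has a unique bipartition. So an even cycle fails to be well-oriented exactly when some vertex is NWO within it, which is what you use. The only odd cycle your argument can produce is the directed triangle for $m=2$. That cannot arise in the bipartite setting where the lemma is applied, in Proposition~\ref{NWOCorDPL3}.
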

\begin{proof}
    Let $u,v\in V(\vec H)$ be distinct NWO vertices in the same component of $\vec H$. By definition, there exist $u_1,u_2,v_1,v_2\in V(\vec H)$ such that $\vec H$ contains the directed paths
    \begin{align*}
        u_1\to u\to u_2,\quad v_1\to v\to v_2.
    \end{align*}
    If $\{u_1,u_2\}=\{v_1,v_2\}$, then we get an NWO cycle. Otherwise, if $u_1=v_2$ or $u_2=v_1$, then we get a directed $P_4$, and if $u_1=v_2$ or $u_2=v_2$, then we get a critical path on 5 vertices.
    Therefore, we may assume that $u_1,u_2,v_1,v_2$ are all distinct.
    In this case, since $u$ and $v$ are in the same component, there exists an oriented path $\vec P$ whose underlying path is
    \begin{align*}
        P:u-w_1-w_2-\ldots- w_{\ell}-v,\quad w_1,w_2,\ldots,w_\ell\in V(\vec H).
    \end{align*}
    If $\{u_1,u_2\}\cap \{w_2,\ldots,w_\ell\}\ne\emptyset$ or $\{v_1,v_2\}\cap \{w_1,\ldots,w_{\ell}\}\ne\emptyset$, then we get an NWO cycle, so we may assume
    \begin{align*}
        \{u_1,u_2\}\cap\{w_2,\ldots,w_\ell\}=\{v_1,v_2\}\cap\{w_1,\ldots,w_{\ell-1}\}=\emptyset.
    \end{align*}
    Now define
    \begin{align*}
        z_0=\begin{cases}
            u_1,\text{ if }(u,w_1)\in E(\vec H),\\
            u_2,\text{ if }(w_1,u)\in E(\vec H),
        \end{cases}\quad z_{\ell+1}=\begin{cases}
            v_1,\text{ if }(v,w_\ell)\in E(\vec H),\\
            v_2,\text{ if }(w_\ell,v)\in E(\vec H),
        \end{cases}
    \end{align*}
    then in the extended path $\vec P'$ whose underlying path is
    \begin{align*}
     P':z_0-u-w_1-w_2-\ldots -w_\ell- v-z_{\ell+1},
    \end{align*}
    the vertices $u$ and $v$ are NWO in $\vec P'$. This implies that $\vec P'$ is itself a critical path or contains a critical subpath.
\end{proof}

The proof of Proposition~\ref{NWOCorDPL3} is by reduction from $\textsc{INDEX}$. Again, we use a WO $\vec K_{n,n}$ to encode Alice's input, and choose some $\vec e\in E(\vec H)$ carefully, then attach a copy of $H$ by overlapping $\vec e$ with the index arc. The reduction is shown to be valid by counting the number of shortest NWO cycles or longest directed paths or shortest critical paths in the reduction graph.

\begin{proposition}\label{NWOCorDPL3}
    Let $\vec H$ be an NWO bipartite graph. If $\vec H$ has a component containing at least $2$ NWO vertices, then $R^1_{2/3}(\textsc{Sub}(\vec H))= \Omega(n^2)$.
\end{proposition}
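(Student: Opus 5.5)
By \cref{CritPath}, the component of $\vec H$ with at least two NWO vertices contains an NWO cycle, a directed $P_4$, or a critical path. Let $\mathcal S$ be one of these three families of substructures, chosen in that order of priority. Within the chosen family we fix an extremal member $\vec F\subseteq \vec H$:
\begin{itemize}
\item a shortest NWO cycle, if $\vec H$ contains an NWO cycle;
\item otherwise a longest directed path, which has at least $4$ vertices;
\item otherwise a shortest critical path.
\end{itemize}
We then reduce from $\mathrm{INDEX}_{n^2}$, as in \cref{k2LB}.

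Alice encodes $S\subseteq[n^2]$ as the arcs of a WO bipartite graph with parts $X=\{x_1,\dots,x_n\}$ and $Y=\{y_1,\dots,y_n\}$: she includes the arc $(x_i,y_j)$ if and only if $(j-1)n+i\in S$. This graph contains no NWO vertex at all. Bob holds the index $(i,j)$. He picks an arc $\vec e=(a,b)$ of $\vec F$, chosen so that $a$ is a source and $b$ a sink inside $\vec F$ if possible. For a directed path we take the first arc, and for a critical path we take the end arc $v_1v_2$ or $v_{\ell-1}v_\ell$, oriented accordingly. Bob then attaches a copy of $\vec H$ minus the arc $\vec e$, identifying $a$ with $x_i$ and $b$ with $y_j$. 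He streams the arcs of this copy that are not incident to both identified vertices. The resulting graph has $\Theta(n)$ vertices, so a single-pass algorithm yields a one-way protocol and hence the $\Omega(n^2)$ bound. If $(i,j)\in S$, the arc $(x_i,y_j)$ completes a copy of $\vec H$.

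For the converse, suppose $(i,j)\notin S$. We count the members of $\mathcal S$ of the extremal size, in the style of the SOCC argument in \cref{NBlower}.
\begin{itemize}
\item \emph{Cycles.} Any NWO cycle in the reduction graph must use some Bob arc. If it also uses Alice arcs, it crosses between the two parts through $x_i$ and $y_j$. We would like to show that such a crossing cycle, of the extremal length, yields a closed walk in $\vec H$ through $a,b$ that is NWO and no longer. This would then contradict either minimality or the count.
\item \emph{Directed paths.} Alice's arcs form directed paths of length $1$ only, and they can extend Bob's paths only at $x_i$ (incoming) or $y_j$ (outgoing). A longest directed path in the glued graph therefore corresponds to one in $\vec H$ with the arc $\vec e$ replaced by an Alice arc $(x_i,y_{j'})$ or $(x_{i'},y_j)$. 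This gives the same length and no gain, provided $\vec e$ is chosen as an end arc of a longest path.
\end{itemize}
In each case the number of maximal/minimal structures in the glued graph, counted in the component that must host $\vec H$, strictly drops compared to $\vec H$. Hence no copy of $\vec H$ exists.

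The main obstacle is the critical-path case, together with making the cycle case rigorous. Alice's vertices are WO within her part, but they can become NWO in a path that mixes her arcs with Bob's. Concretely, $x_i$ has out-arcs to $Y$ and possibly in-arcs from Bob's copy. So new short critical paths could appear that pass through $x_i$ or $y_j$ and into Alice's graph. The plan for this case is as follows:
\begin{itemize}
\item choose $\vec e$ at an endpoint of a shortest critical path, so that the NWO endpoint $v_2$ maps away from $x_i,y_j$;
\item argue that any critical path using Alice arcs enters her part at most through $x_i$ or $y_j$ and immediately ends there, since her part has no NWO vertices and paths through it alternate;
\item conclude that such a path maps back to a critical path of $\vec H$ of the same length using $\vec e$;
\item then compare counts of shortest critical paths through the designated vertex images.
\end{itemize}
If counting fails, the fallback is to use $n$ disjoint copies of an extremal substructure attached at every vertex, which raises the counts uniformly, so the comparison only needs to be made locally at $(x_i,y_j)$.
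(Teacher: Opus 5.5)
There is a genuine gap, and it lies in your choice of the index arc $\vec e$. Your framework matches the paper's: an INDEX reduction, the priority NWO cycle / directed $P_4$ / critical path with the same extremal choices, and a counting argument. But in the directed-path and critical-path cases you take $\vec e$ to be an \emph{end} arc of $\vec F$, and this makes the reduction itself false, not just the proof.

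\textbf{Why the count does not drop.} Alice's arcs into $y_j$ (resp.\ out of $x_i$), concatenated with Bob's part of $\vec F$, form new extremal structures of exactly the same size. For instance, $x_{i'}\to y_j\to v_3\to\cdots\to v_{\ell+1}$ is a new longest directed path for every present arc $(x_{i'},y_j)$. Your own remark that this ``gives the same length and no gain'' is precisely why the count does \emph{not} drop, so the claim that it strictly drops is false.

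\textbf{False positives.} When the outer endpoint of $\vec e$ is a leaf of $\vec H$, this already produces a copy of $\vec H$ on a NO instance.
\begin{itemize}
\item Take $\vec H=v_1\to v_2\to v_3\to v_4$ and $\vec e=(v_1,v_2)$. Bob streams $y_j\to v_3\to v_4$, and any Alice arc $(x_{i'},y_j)$ with $i'\neq i$ completes $x_{i'}\to y_j\to v_3\to v_4$.
\item Take the critical path $\vec H=v_1\leftarrow v_2\leftarrow v_3\to v_4\to v_5$ (no cycle, no directed $P_4$) and $\vec e=(v_2,v_1)$. Any Alice arc $(x_i,y_{j'})$ with $j'\neq j$ completes $y_{j'}\leftarrow x_i\leftarrow v_3\to v_4\to v_5$.
\end{itemize}
Your critical-path plan is also self-contradictory: with $\vec e=v_1v_2$, the NWO vertex $v_2$ is mapped \emph{onto} $x_i$ or $y_j$, not away from them. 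The fallback of attaching extra copies does nothing against this impersonation.

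\textbf{The missing idea.} Place $\vec e$ so that anything routed through an Alice arc can be strictly improved back in $\vec H$. The paper takes the \emph{second} arc $\vec e=(v_2,v_3)$ of a longest directed path $v_1\to\cdots\to v_{\ell+1}$, so $v_2\mapsto x_i$ and $v_3\mapsto y_j$.
\begin{itemize}
\item Both endpoints are then NWO, while every vertex of Alice's part other than $x_i,y_j$ is WO in the reduction graph, so nothing can impersonate them.
\item A directed path with $\ell\ge 3$ arcs that uses an Alice arc uses exactly one, since two would close a directed cycle through Bob's part. That arc is necessarily $(x_{i'},y_j)$ or $(x_i,y_{j'})$.
\item In the first case, the image $v_3\to w_1\to\cdots\to w_{\ell-1}$ in $\vec H$ can be prefixed by $v_1\to v_2$; a repeated vertex would close a directed, hence NWO, cycle. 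This yields a directed path with $\ell+1$ arcs, a contradiction. The second case is symmetric, appending $v_3\to v_4$.
\end{itemize}
In the critical-path case the paper likewise uses $\vec e=v_2v_3$ rather than the end arc. This keeps the endpoint $v_1$ in Bob's fixed part and makes the NWO vertex $v_2$ the one identified with Alice's side.

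Your cycle case agrees with the paper's and goes through once written out. A crossing NWO $\vec C_\ell$ meets Alice's part in a path from $x_i$ to $y_j$ of odd length at least $3$, oriented like $\vec e$ at both junctions. Contracting that path to $\vec e$ therefore gives a strictly shorter NWO cycle in $\vec H$, contradicting minimality.
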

\begin{proof}
    By Lemma~\ref{CritPath}, $\vec H$ contains either an NWO cycle, a directed $P_4$, or a critical path.
    If $\vec H$ contains an NWO cycle, then let $\ell$ be the length of a shortest NWO cycle in $H$.
    Take $\vec C$ to be an NWO $\vec C_\ell$ in $\vec H$ whose underlying cycle is $C:v_1-v_2-\ldots- v_{\ell}-v_1$, and let $v=v_1$, $u=v_2$. Let $\vec e\in E(\vec H)$ have endpoints $u,v$.
    
    Otherwise, if $\vec H$ contains a directed $P_4$ , then we let $\ell$ be the length of a longest directed path in $H$, so $\ell\geq 3$. Choose a directed path $v_1\rightarrow v_2\rightarrow \ldots\rightarrow v_{\ell+1}$ in $H$ and let $v=v_3$,  $u=v_2$. Let $\vec e\in E(\vec H)$ have endpoints $u,v$.

    Otherwise $\vec H$ contains a critical path, and we let $\ell$ be the length of a shortest critical path in $\vec H$. Choose a shortest critical path $\vec P$ with underlying path  $P:v_1-v_2-v_3-\ldots -v_{\ell}$, and let $v=v_3$, $u=v_2$. Let $\vec e\in E(\vec H)$ have endpoints $u,v$.
    
    Given an instance of $\mathrm{INDEX}_{n^2}$, where Alice holds $S\subseteq [n^2] $ and Bob holds $t(n-1)+s\in [n^2]$ with $s,t\in [n]$, we define the oriented reduction graph $\vec G_{\vec H}$ by 
    \begin{align*}
        &V(\vec G_{\vec H})=\Big(V(\vec H)\setminus\{v,u\}\Big)\sqcup\bigsqcup_{i=1}^n\{v^{(i)},u^{(i)}\}, \quad E(\vec G_{\vec H})=E_1\cup E_2\cup E_3,\\
        &E_1=\begin{cases}
            \{ (u^{(i)},v^{(j)}): i,j\in[n],(j-1)n+i\in S \},\text{ if }(u,v)\in E(\vec H),\\
             \{ (v^{(i)},u^{(j)}):i,j\in[n], (j-1)n+i\in S \},\text{ if }(v,u)\in E(\vec H),
        \end{cases}\\
        &E_2=\{ (v^{(t)},w):(v,w)\in E(\vec H) \}\cup\{ (w,v^{(t)}):(w,v)\in E(\vec H) \},\\
        &E_3=\{(u^{(s)},w):(u,w)\in E(\vec H)\}\cup \{(w,u^{(s)}):(w,u)\in E(\vec H)\}\cup E(\vec H-\{v\}-\{u\}).
    \end{align*}
    \begin{figure}[H]
        \centering
        \includegraphics[width=0.75\linewidth]{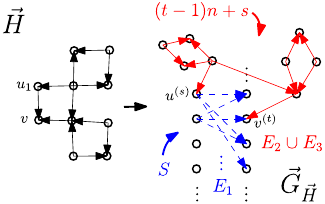}
        \caption{The reduction graph $\vec G_{\vec H}$ for \cref{NWOCorDPL3}. In this example, $\vec H$ contains an NWO cycle and critical paths.}
    \end{figure} 
    \noindent The following claim finishes the proof since $|V(\vec G_{\vec H})|=\Theta(n)$.\\\\
    \textbf{Claim:} We have $\vec H\subseteq \vec G_{\vec H}$ if and only if $(t-1)n+s\in S$.
    \begin{claimproof}
        If $(t-1)n+s\in S$, then a copy of $\vec H$ is formed by
    \begin{align*}
        v^{(t)},u^{(s)},V(\vec H)\setminus \{u,v\}.
    \end{align*}
    Conversely, if $(t-1)n+s\notin S$, then $(u^{(s)},v^{(t)}),(v^{(t)},u^{(s)})\notin E(\vec G_{\vec H})$.\\\\
    \textbf{Case 1:} $H$ contains an NWO cycle.\\
    
    Suppose there is a copy $\vec H'$ of $\vec H$ in $\vec G$. Note that $\vec G_{\vec H}-E_1$ is just the union of a copy of $\vec H-\vec e$ and some isolated vertices,
    while back in $\vec H$ the arc $\vec e$ is on an NWO $\vec C_\ell$. Thus for $\vec H'$ to contain enough NWO $\vec C_\ell$, it must contain a copy $\vec C'$ of an NWO $C_\ell$ intersecting $E_1$. But  $\vec G_{\vec H}[E_1]$ is WO, so $\vec C'$ also intersects $E_2\cup E_3$. Since $V(E_2\cup E_3)$ intersects $V(E_1)$ in at most $\{u^{(s)},v^{(t)}\}$,  
    we can let
    \begin{align*}
        &\vec C'=\vec e_1,\ldots,\vec e_{\ell'},\vec f_{\ell'+1},\ldots,\vec f_{\ell},\\
        & e_1,\ldots,\vec e_{\ell'}\in E_1,\quad  \vec f_{\ell'+1},\ldots,\vec f_{\ell}\in E_2\cup E_3,
    \end{align*}
    where $u^{(s)}$ is an endpoint of $\vec e_1$ and $v^{(t)}$ is an endpoint of $\vec e_\ell$. However, this means that if we replace $u^{(s)}$ with $u$ and replace $v^{(t)}$ with $v$ from the above edges,  then back in $\vec H$ there would be an NWO cycle given by the edges $\vec e,\vec f_{\ell'+1},\ldots,\vec f_\ell$, which has length less than $\ell$ and is a contradiction. Thus $\vec H\nsubseteq \vec G_{\vec H}$.\\\\
    \textbf{Case 2:} $H$ contains no NWO cycle but a directed $P_4$.\\
    
    Suppose there is a copy $\vec H'$ of $\vec H$ in $\vec G$. Like in Case 1, since $\vec e$ is on a directed $P_{\ell+1}$, to contain enough directed $P_{\ell+1}$, $\vec H'$ must contain a copy $\vec Q$ of a directed $P_{\ell+1}$ intersecting $E_1$. However, $\vec G_{\vec H}[E_1]$ is WO, and $V(E_2\cup E_3)$ intersects $V(E_1)$ in at most $\{u^{(s)},v^{(t)}\}$, so $\vec Q$ intersects $E_1$ in a unique $\vec f$ which has exactly one endpoint in $\{u^{(s)},v^{(t)}\}$. If $\vec f=(u^{(s')},v^{(t)})$ for some $s'\ne s$, then we can let
    \begin{align*}
        \vec Q=u^{(s')}\to v^{(t)}\to w_1\to w_2\to\ldots\to w_{\ell-1},\quad w_1,\ldots,w_{\ell-1}\in V(\vec H).
    \end{align*}
    By assumption, $\vec H$ contains a  directed $P_{\ell+1}$ given by $v_1\to u\to v\to v_4\to\ldots\to v_{\ell+1}$. Moreover, we have $v_1\notin \{w_1,\ldots,w_{\ell-1}\}$, otherwise there will be an NWO cycle in $\vec H$ given by $v_1\to u\to v\to w_1\to\ldots\to v_1$.
    However, this admits a directed $P_{\ell+2}$ in $\vec H$, given by
    \begin{align*}
        v_1\to u\to v\to w_1\to\ldots\to w_{\ell-1},
    \end{align*}
    which is a contradiction since $\ell$ is the length of a longest directed path in $\vec H$. Likewise, we get a contradiction if $\vec f=(u^{(s)},v^{(t')})$ for some $t'\ne t$. Thus in conclusion $\vec H\nsubseteq \vec G_{\vec H}$.\\\\
    \textbf{Case 3:} $H$ contains no directed $P_4$ but a critical path.\\

    Suppose there is a copy $\vec H'$ of $\vec H$ in $\vec G$. Like in the previous cases, since $\vec e$ is on a shortest critical path in $\vec H$, to contain enough shortest critical path, $\vec H'$ must contain a copy $\vec Q$ of a length-$\ell$ critical path intersecting $E_1$. However, among $V(E_1)=\{u^{(1)},v^{(1)},\ldots,u^{(n)},v^{(n)}\}$, only $u^{(s)}$ and $v^{(t)}$ are possibly NWO in $\vec G_{\vec H}$, so $u^{(s)}, v^{(t)}\in V(\vec Q)$ and we can let $\vec Q$ be
    \begin{align*}
        &\vec Q:\vec e_1,\ldots,  \vec e_a,\vec f_1,\ldots, \vec f_b,\vec g_1,\ldots,\vec g_c,\\
        &\vec e_1,\ldots, \vec e_a,\vec g_1,\ldots,\vec g_c\in E_2\cup E_3,\quad \vec f_1,\ldots,\vec f_b\in E_1, \quad a,c\geq 1,\quad b\geq 1,\quad a+b+c=\ell,
    \end{align*}
    where $u^{(s)}$ is an endpoint of $\vec f_1$ and $v^{(t)}$ is an endpoint of $\vec f_b$. 
    However, this means that if we replace $u^{(s)}$ with $u$ and replace $v^{(t)}$ with $v$ from the above edges, then pack in $\vec H$ we get the oriented path on arcs
    \begin{align*}
        \vec e_1,\ldots,\vec e_a,\vec e,\vec g_1,\ldots,\vec g_c,
    \end{align*}
    which will be a directed $P_4$ if $a=c=1$, and will be a critical path of length less than $\ell$ otherwise. Both being contradictions, we conclude $\vec H\nsubseteq \vec G_{\vec H}$.
    \end{claimproof}
\end{proof}

Now we show that for an NWO bipartite $\vec H$, if $\vec H$ has a component containing a cycle and exactly one NWO vertex, then $\vec H$ is also a hard instance in single-pass, and this finishes the proof of the lower bound in Theorem~\ref{thmOSPC}. The difficult case is  when all the cycles in $\vec H$ are WO, for otherwise the proof of Proposition~\ref{NWOCorDPL3} already shows that $\vec H$ is a hard instance. Also note that since we are proving a lower bound, by replacing $\vec H$ with the target component we may assume without loss of generality that $\vec H$ is connected.

In Proposition~\ref{NWOCorDPL3}, when building the reduction graph, the three special structures we considered cannot be contained in $E_1$, the set of Alice's arcs, so it is fairly easy to prove the reduction by counting the number of these special structures. But now without these special structures, one might worry whether some cycles or structures from $E_1$ together with the arcs attached by Bob creates an unexpected copy of $\vec H$ even when the answer to INDEX is false.

To resolve this problem, 
we shall record the distance from the NWO vertex to each cycle. Among all the cycles minimizing this distance, we carefully choose one to overlap with the index edge, and then a distance argument in the reduction graph finishes the proof.
\begin{proposition}\label{pure}
    Let $\vec H$ be an NWO bipartite graph. If $\vec H$ has a component containing a cycle and exactly one NWO vertex, then $R^1_{2/3}(\textsc{Sub}(\vec H))= \Omega(n^2)$. 
\end{proposition}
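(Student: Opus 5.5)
We reduce from $\mathrm{INDEX}_{n^2}$, following the template of Proposition~\ref{NWOCorDPL3}. As noted above, we may assume $\vec H$ is connected. If $\vec H$ contains an NWO cycle, Case 1 of Proposition~\ref{NWOCorDPL3} applies verbatim, so we assume every cycle of $\vec H$ is WO. We may also assume $\vec H$ contains no directed $P_4$. Indeed, a directed $P_4$ $a\to b\to c\to d$ makes both $b$ and $c$ NWO, contradicting the hypothesis of exactly one NWO vertex. Let $x$ be the unique NWO vertex, and let $d$ be the minimum distance in $H$ from $x$ to a vertex lying on some cycle.

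Among all cycles achieving distance $d$, we pick a cycle $C$ and an arc $\vec e=(u,v)$ or $(v,u)$ on $C$ such that $u$ is a vertex of $C$ closest to $x$, at distance $d$. We break ties by taking $C$ shortest. The reduction graph is the same as before. Alice's input is encoded as a WO bipartite graph $E_1$ between the copies $u^{(1)},\dots,u^{(n)}$ and $v^{(1)},\dots,v^{(n)}$, with arcs oriented as $\vec e$ is. Bob attaches $\vec H-\{u,v\}$, joining it to $u^{(s)}$ and $v^{(t)}$ as in $E_2,E_3$. If the index lies in $S$, the copy of $\vec H$ is immediate.

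For the converse, suppose the index is not in $S$ and $\vec H'\cong\vec H$ sits in $\vec G_{\vec H}$. We first locate NWO vertices. In $\vec G_{\vec H}$, every vertex of $V(E_1)$ other than $u^{(s)},v^{(t)}$ is WO. This holds because all arcs of $E_1$ at such a vertex point the same way, and the vertex has no other arcs. Moreover $u^{(s)},v^{(t)}$ are NWO in $\vec G_{\vec H}$ only if they were NWO in $\vec H-\vec e$ or become so via $E_1$. Consequently the image $x'$ of $x$ is either the Bob copy of $x$ or one of $u^{(s)},v^{(t)}$.

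Next we count cycles. $\vec G_{\vec H}-E_1$ is $\vec H-\vec e$ plus isolated vertices, so it has fewer cycles of length $|C|$ at distance exactly $d$ from $x'$ than $\vec H$ has. Hence $\vec H'$ must use a cycle $C'$ passing through $E_1$. Such a cycle enters $E_1$ only through $u^{(s)}$ and $v^{(t)}$. It may instead lie entirely inside $E_1$, which is a WO bipartite graph and so does contain many cycles.

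The main obstacle is to show that every such cycle $C'$ is farther from $x'$ than $d$. Alternatively, if it is at distance $d$, we must show it is longer than $C$ or does not fit with the rest of the embedding. The key point is that $E_1$ attaches to Bob's part only at $u^{(s)},v^{(t)}$. Since $u$ is at distance exactly $d$ from $x$ and lies on $C$, a shortest path from $x'$ into $E_1$ reaches distance $d$ at $u^{(s)}$ or more. A cycle inside $E_1$ through $u^{(s)}$ must avoid the missing arc $u^{(s)}v^{(t)}$, so it is a genuinely different cycle. We plan to compare the multiset of pairs (distance from the NWO vertex, cycle length) between $\vec H$ and $\vec G_{\vec H}$. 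Alternatively, we argue via a ball-of-radius-$d$ argument: the subgraph of $\vec H'$ within distance $d$ of $x'$ must embed in Bob's part, which is a copy of $\vec H-\vec e$. Combined with the minimal choice of $(d,|C|)$, this gives a contradiction.

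The case $x'\in\{u^{(s)},v^{(t)}\}$ needs separate handling. Here we use that $\vec H$ has no directed $P_4$ and only one NWO vertex, which constrains how $x'$ can sit. We would try to rule this case out by counting NWO vertices and their degrees.
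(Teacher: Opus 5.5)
\textbf{There is a genuine gap.} It sits exactly at the step you flag as the main obstacle: excluding a copy of $\vec H$ whose cycle lies entirely inside Alice's WO graph $E_1$ and hangs off Bob's part at $u^{(s)}$ (or $v^{(t)}$).

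With your choice of $\vec e$, incident to a vertex $u$ of $C$ nearest to $x$, this step is not merely unproved but false. So neither the (distance, length) multiset comparison nor the radius-$d$ ball argument can succeed. The ball of radius $d$ around $x$ contains no cycle at all: no vertex closer than $d$ lies on a cycle, and in a bipartite graph no edge joins two vertices at distance exactly $d$. Meanwhile $u^{(s)}$ sits at distance exactly $d$ from $x$ inside Bob's part, so a cycle of $E_1$ through $u^{(s)}$ can play the role of $C$.

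Here is a concrete counterexample. Let $\vec H$ have arcs $a\to b$, $c\to b$, $c\to d$, $a\to d$, $a\to x$, $x\to y$. It is bipartite, $x$ is its only NWO vertex, and $a$ is the only vertex of the $4$-cycle nearest to $x$. Your rule therefore forces $u=a$, say $\vec e=(a,b)$. Suppose Alice's set encodes the arcs $(a^{(s)},b^{(j)})$, $(a^{(i)},b^{(j)})$, $(a^{(i)},b^{(k)})$, $(a^{(s)},b^{(k)})$ with $i\neq s$ and distinct $j,k\neq t$. Then $a^{(s)},b^{(j)},a^{(i)},b^{(k)},x,y$ span a copy of $\vec H$ (using Bob's arc $a^{(s)}\to x$), whether or not Bob's index lies in $S$. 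The same failure occurs when $d=0$, where your rule even makes $u=x$.

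\textbf{The missing idea} is to choose $\vec e$ so that it does not cover all the attachment points of the cycle, and to refine the counting invariant accordingly.
\begin{itemize}
\item The paper lets $\ell$ be the shortest cycle length and $m$ the least distance from $x$ to an $\ell$-cycle.
\item Among those cycles it takes $\vec C^*$ maximizing $M=|A(\vec C^*)|$, the number of vertices of the cycle at distance $m$ from $x$.
\item It then chooses $\vec e=(u,v)$ on $\vec C^*$ with $A(\vec C^*)\not\subseteq\{u,v\}$.
\end{itemize}
This forces $u,v\neq x$. Hence $u^{(s)},v^{(t)}$ stay WO (the $E_1$ arcs at them point the same way as $\vec e$), and Bob's $x$ is the unique NWO vertex of $\vec G_{\vec H}$. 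Your separate case $x'\in\{u^{(s)},v^{(t)}\}$ never arises.

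One then counts $\ell$-cycles at distance $m$ with exactly $M$ nearest vertices:
\begin{itemize}
\item A cycle of length $\ell$ mixing $E_1$ with Bob's arcs would yield a shorter cycle in $H$.
\item A cycle inside Bob's part is at least as far from $x$ as its preimage in $\vec H-\vec e$, and has no more nearest vertices when the distances agree. So these inject into the qualifying cycles of $\vec H$ other than $\vec C^*$.
\item A cycle inside $E_1$ at distance $m$ can have nearest vertices only among $u^{(s)},v^{(t)}$. Their preimages lie in $A(\vec C^*)\cap\{u,v\}$, a set of size less than $M$.
\end{itemize}
Hence $\vec H'$ would have too few such cycles, which is the required contradiction.
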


\begin{proof}
    Assume that $\vec H$ is connected without loss of generality. Let $w\in V(\vec H)$ be the only NWO vertex in $\vec H$, and let $\ell$ be the length of a shortest cycle in $\vec H$. For each shortest cycle $\vec C$ in $\vec H$, we write $d_{\vec H}(\vec C)=\text{dist}_H(w,C)$ for the distance from $w$ to $C$ in $H$. It is possible that $d_{\vec H}(\vec C)=0$ if $w\in V(\vec C)$.
    Now define
    \begin{align*}
         m=\min\{d_{\vec H}(\vec C):\vec C\text{ is a }\vec C_{\ell}\text{ in }\vec H\},\quad
        \mathcal C_{\vec H}=\{\vec C: \vec C\text{ is a }\vec C_{\ell}\text{ in }\vec H\text{ with }d_{\vec H}(\vec C)=m\}.
    \end{align*}
    Moreover, for all $\vec C\in\mathcal C_{\vec H}$, we define
    \begin{align*}
        A_{\vec H}(\vec C)=\{z\in V(\vec C):\exists\text{  a path }w\leadsto z\text{ of length }\text{dist}_H(w,C)\text{ in }H\},
    \end{align*}
    the set of vertices in $V(\vec C)$ which is an endpoint of a shortest path from $w$ to $C$ in $H$. Choose $\vec C^*$ from $ \mathcal C_{\vec H}$ maximizing $|A_{\vec H}(\vec C^*)|$ and let $M=|A_{\vec H}(\vec C^*)|$, and choose $\vec e=(u,v)$ from $E(\vec C^*)$ so that $A(\vec C^*)\nsubseteq \{u,v\}$.

    Given an instance of $\mathrm{INDEX}_{n^2}$, where Alice holds $S\subseteq [n^2]$ and Bob holds $t(n-1)+s\in [n^2]$ with $s,t\in [n]$, we define the oriented reduction graph $\vec G_{\vec H}$ by
    \begin{align*}
        &V(\vec G_{\vec H})=\Big(V(\vec H)\setminus\{u,v\}\Big)\sqcup\bigsqcup_{i=1}^n\{u^{(i)},v^{(i)}\}, \quad E(\vec G_{\vec H})=E_1\cup E_2\cup E_3,\\
        &E_1=\{ (u^{(i)},v^{(j)}): i,j\in[n],(j-1)n+i\in S \}\\
        &E_2=\{ (w,v^{(t)}):(w,v)\in E(\vec H) \}\cup \{ (v^{(t)},w):(v,w)\in E(\vec H) \}\\
        &E_3=\{ (w,u^{(s)}):(w,u)\in E(\vec H) \}\cup \{ (u^{(s)},w):(u,w)\in E(\vec H) \}\cup E(\vec H-\{u\}-\{v\}).
    \end{align*}
    \begin{figure}[H]
        \centering
        \includegraphics[width=0.8\linewidth]{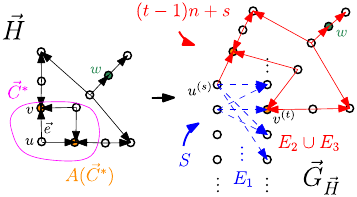}
        \caption{The reduction graph $\vec G_{\vec H}$ in Proposition~\ref{pure}. }
    \end{figure}
    If $(t-1)n+s\in S$, then a copy of $\vec H$ is formed by 
    \begin{align*}
        u^{(s)},v^{(t)},V(\vec H)\setminus \{u,v\}.
    \end{align*}
    
    Conversely, if $(t-1)n+s\in S$, then $(u^{(s)},v^{(t)})\notin E(\vec G_{\vec H})$. Suppose there is a copy $\vec H'$ of $\vec H$ in $\vec G_{\vec H}$.
    Note that the only NWO vertex in $\vec G_{\vec H}$ is $w$, so the only NWO vertex in $\vec H'$ is also $w$.
    Thus if $\vec C'$ is a $\vec C_{\ell}$ in $\vec H'$ contained $\vec G_{\vec H}[E_2\cup E_3]$, and $\vec C$ is the $\vec C_{\ell}$ in $\vec H$ obtained from $\vec C'$ by replacing the vertices $u^{(s)},v^{(t)}$ with $u,v$, respectively, then
    \begin{align*}
        d_{\vec H}(\vec C)\leq d_{\vec H'}(\vec C'),
    \end{align*}
    and the equality $d_{\vec H}(\vec C)=d_{\vec H'}(\vec C')$ will imply that
    \begin{align*}
        |A_{\vec H}(\vec C)|\geq |A_{\vec H'}(\vec C')|,
    \end{align*}
    because a shortest path from $w$ to $C'$ in $H'$ must be contained in $E_2\cup E_3$ and the latter is contained in $E(\vec H)$, so this path will also be a shortest path from $w$ to $C$ in $H$.
    Therefore, since $(u^{(s)},v^{(t)})$ is not in $ E(\vec G_{\vec H})$, the number of $\vec C_{\ell}$ in $\vec H'$ contained in $\vec G_{\vec H}[E_2\cup E_3]$ with $d_{\vec H'}(\vec C')=m$ and $|A_{\vec H'}(\vec C')|=M$ is less than that number in $\vec H$.
    
    On the other hand, if $\vec C'$ is a $\vec C_\ell$ in $\vec H'$ contained in $\vec G_{\vec H}[E_1]$ with $d_{\vec G_{\vec H}}(\vec C')=m$ and $|A_{\vec G_{\vec H}}(\vec C')|=M$,
    then since a shortest path from $w$ to $C'$ in $H'$ must be contained in $E_2\cup E_3$,
    \begin{align*}
        A_{\vec H'}(\vec C')\subseteq \{u^{(s)},v^{(t)}\}\subseteq V(\vec C^*),   
    \end{align*}
    so 
    $d_{\vec H}(\vec C^*)\leq d_{\vec H'}(\vec C')$, which forces $d_{\vec H}(\vec C^*)= d_{\vec H'}(\vec C')=m$ since we also have $d_{\vec H}(\vec C^*)=m.$ However, this means that
    \begin{align*}
        &u^{(s)}\in A_{\vec H'}(\vec C')\Rightarrow \exists \text{ path }P:w\leadsto u^{(s)}\text{ of length }m \text{ in  } H'\text{ and }P\subseteq G_{\vec H}[E_2\cup E_3]\\
        &\Rightarrow\exists \text{ path }w\leadsto u\text{ of length }m\text{ in }H\Rightarrow u\in A_{\vec H}(\vec C^*)
    \end{align*}
    and likewise 
    \begin{align*}
        v^{(s)}\in A_{\vec H'}(\vec C')\Rightarrow v\in A_{\vec H}(\vec C^*).   
    \end{align*}
    But $A(\vec C^*)\nsubseteq \{u,v\}$, while $|A_{\vec H'}(\vec C')|=M$ and $A_{\vec H'}(\vec C')\subseteq \{u^{(s)},v^{(t)}\}$, so $|A_{\vec H}(\vec C^*)|>M$, which is a contradiction.
    
    Finally, like in the proof of Proposition~\ref{NWOCorDPL3}, there can be no  $\vec C_{\ell}$ in $ \vec H'$ using both arcs from $E_1$ and arcs from $E_2\cup E_3$,
    otherwise by replacing the $E_1$ part of this $\vec C_{\ell}$ by the arc $
    (u,v)$ we get a cycle of length shorter than $\ell$ back in $\vec H$, which is a contradiction. Thus in conclusion, the number of $\vec C_{\ell}$ in $\vec H'$ with $d_{\vec H'}(\vec C')=m$ and $|A_{\vec H'}(\vec C')|=M$ is less than that number in $\vec H$, so $\vec H\nsubseteq \vec G_{\vec H}$.
\end{proof}

Next we prove the second part of Theorem~\ref{thmOSPC}, which follows from \cref{novel}, and is the main result for the upper bound in our single-pass dichotomy.

\begin{proposition}\label{novel}
Let $\vec{H}$ be an oriented tree with at most one NWO vertex. Then
\[
R^{1}_{2/3}\bigl(\textsc{Sub}(\vec{H})\bigr)=\widetilde{O}(n).
\]
\end{proposition}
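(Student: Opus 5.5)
The plan is to construct, in one pass, a sparse \emph{small-forest-preserving certificate} $F\subseteq\vec G$ with $\tilde O(n)$ arcs such that $\vec H\subseteq\vec G$ implies $\vec H\subseteq F$. After the pass we search $F$ for $\vec H$ offline. Since $|V(\vec H)|=O(1)$, this offline step costs no extra stream space. If $\vec H$ is WO, \cref{Oex} already suffices, since $\mathrm{ex}(n,\vec H)=O(n)$ for a tree. So the main case is when $\vec H$ has exactly one NWO vertex $w$.

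\textbf{Step 1: structure of $\vec H$.} Root $\vec H$ at $w$, and let $x_1,\dots,x_a$ and $y_1,\dots,y_b$ be its in- and out-neighbours. Every vertex of $\vec H$ other than $w$ is WO, so it is a source or a sink. Hence the arcs along any path avoiding $w$ alternate in direction. It follows that each branch $\vec T_{x_i}$ (resp.\ $\vec T_{y_j}$), together with its arc to $w$, is a WO tree in which each vertex's role (source or sink) is determined by the parity of its depth. Thus finding $\vec H$ amounts to the following: find a vertex $w'$ with $a$ in-neighbours and $b$ out-neighbours that root vertex-disjoint copies of these WO branches in the prescribed parity pattern.

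\textbf{Step 2: the certificate.} Fix $K=c|V(\vec H)|$. The certificate keeps a constant number $h\le|V(\vec H)|$ of levels of degree caps, $K_1\gg K_2\gg\dots\gg K_h$ (for instance $K_j=K^{h-j+1}$). An arriving arc $(u,v)$ is stored at level $j$ if $u$ currently has fewer than $K_j$ stored out-arcs at level $j$ or $v$ has fewer than $K_j$ stored in-arcs at level $j$. The total space is $O(hK_1 n\log n)=\tilde O(n)$.

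The key invariant is this: if an arc of $\vec G$ is missing from level $j$, then its tail is out-saturated at level $j$ and its head is in-saturated at level $j$. The heights of the WO branches are at most $h$. The intended use is that level $j$ is responsible for edges at depth $j$ of the branches, and the large caps at shallow levels give room to avoid the $O(|H|)$ vertices already used deeper or in sibling branches. To make the stated randomized bound comfortable, I would additionally use the $m$-wise independent hashing of \cref{prop:oriented-to-undirected} (or color-coding via \cref{ColorCoding}) to guess the source/sink role of each vertex. This restricts each level to arcs consistent with a WO bipartition, with only $O(1)$ repetitions and hence $O(\log n)$ extra seed space.

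\textbf{Step 3: correctness by re-embedding.} Take a copy $\varphi$ of $\vec H$ in $\vec G$ and rebuild it inside $F$ bottom-up, proving by induction on height the following statement. For every vertex $z$ of $\vec G$ and every branch $\vec T$ of height $j$ rooted at a vertex of the correct role, suppose that either $\vec T$ embeds at $z$ in $\vec G$, or $z$ is saturated at level $j$ in the required direction. Then $\vec T$ embeds at $z$ in $F$ while avoiding any prescribed set of $O(|H|)$ forbidden vertices.

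In the saturated case we choose among the $K_j>|H|$ stored neighbours one that is not forbidden. The hard point is that this new neighbour must itself root the next sub-branch. To handle this, the induction should show that a saturated neighbour at a shallower level either carries a genuine embedding of the sub-branch or is saturated at the next level, and here the parity-alternation from Step 1 is essential. Finally, at $w'=\varphi(w)$ we need the in-branches and out-branches simultaneously. Because $w$ is the only vertex needing both directions, its in-arcs and out-arcs are governed by separate caps and do not compete. This is precisely where a second NWO vertex, or a cycle, would break the argument, consistent with Propositions~\ref{NWOCorDPL3} and~\ref{pure}.

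I expect Step 3, the inductive re-embedding through saturated vertices, to be the main obstacle. The first thing I would try is to define $F$ as the union of $O(|H|)$ peeled $K_j$-cores on the role-restricted bipartite graphs and to prove the invariant directly for those cores.
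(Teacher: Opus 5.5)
There is a genuine gap, and it lies in the certificate of Step~2 itself, not only in the unfinished induction of Step~3. Your invariant is true: a discarded arc has an out-saturated tail and an in-saturated head. But saturation can be caused entirely by dead-end neighbours. So the claim you need in Step~3 is false: it is not true that a stored neighbour of a saturated vertex either roots the next sub-branch or is saturated at the next level.

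Here is a concrete counterexample. Let $\vec H: v_1\to v_2\to v_3\leftarrow v_4$, whose only NWO vertex is $v_2$. Stream $M\gg K_1$ arcs $(b',s_i)$ and $M$ arcs $(t_i,c')$, with all $s_i,t_i$ distinct fresh vertices. Then stream $(a',b')$, and finally $(b',c')$.
\begin{itemize}
\item At every level, each $(b',s_i)$ and each $(t_i,c')$ is stored, since $s_i$ has no stored in-arc and $t_i$ has no stored out-arc.
\item Hence, when $(b',c')$ arrives, $b'$ is out-saturated and $c'$ is in-saturated at every level, and $(b',c')$ is discarded everywhere.
\item In $F$, the only vertex of in-degree at least $2$ is $c'$, and all its in-neighbours have in-degree $0$. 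So $F$ contains no copy of $\vec H$, although $\vec G\supseteq a'\to b'\to c'\leftarrow t_1$.
\end{itemize}
Guessing source/sink roles does not help: it only removes a random constant fraction of the $s_i,t_i$, which a larger $M$ absorbs. As a sanity check, a deterministic rule of this kind, combined with deterministic colouring families as in Lemma~\ref{ColorCoding}, would settle the paper's open Question~\ref{Q7}.

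The missing idea is that an arc may be dropped only when the retained neighbours of its endpoints are themselves richly connected in the certificate. An online cap rule must decide on $(b',c')$ before it can learn that the $s_i,t_i$ are dead ends.

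The paper handles this with the $k$-core, $k=2|V(\vec H)|$. Every edge incident to a vertex peeled off during the core computation is kept, and each core vertex keeps $k$ neighbours inside the core. Thus every retained neighbour of a core vertex is again a core vertex with $k$ retained core neighbours.

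Since the core cannot be computed online, the single pass stores, per vertex, a degree counter and $k$ independent $\ell_0$-samplers (linear sketches) of its neighbourhood. Peeling is then simulated after the stream: recover the fewer than $k$ residual neighbours of each low-degree vertex and subtract them from their neighbours' sketches. This sketching step, not the role guessing, is where randomness is essential.

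Your closing fallback (peeled cores on the role-restricted graphs) points exactly here. However, the proposal supplies neither this one-pass construction nor the embedding argument. In the paper, the embedding works as follows.
\begin{itemize}
\item A random $3$-colouring places the in-part of $\vec H$ at $w$ in the $c_1\to c_2$ graph and the out-part in the $c_2\to c_3$ graph. These are two WO trees sharing only $w$, exactly as in your Step~1.
\item One embeds greedily from $\varphi(w)$. At non-core vertices it follows the original copy, since all their edges are kept. Inside the core it uses fresh core neighbours.
\item The fresh neighbours avoid both $\varphi(V(\vec H))$ and all already-used vertices, which $k=2|V(\vec H)|$ permits.
\end{itemize}

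Finally, Corollary~\ref{Oex} does not give $\mathrm{ex}(n,\vec H)=O(n)$ for WO trees; for a WO $P_4$ it gives only $O(n^{3/2})$. The linear bound needs the usual minimum-degree greedy embedding argument.
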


Before proving Proposition~\ref{novel}, we develop a one-pass streaming routine that constructs a \emph{sparse certificate} for an input graph $G$.
This certificate is a subgraph $G'\subseteq G$ of size $\widetilde{O}(n)$ such that the existence of $\vec{H}$ in $G$ can be decided by checking $G'$ alone.
Our construction relies on standard sampling primitives; we state them here and defer proofs to the literature.

\begin{definition}[$\ell_0$-sampler]
Let $\mathbf{x}\in\mathbb{Z}^n$ be a frequency vector maintained in the (general) turnstile model under updates
$(i,\Delta)$ that perform $x_i \leftarrow x_i+\Delta$ for $\Delta\in\mathbb{Z}$.
Fix parameters $\varepsilon,\delta\in(0,1)$.
An \emph{$(\varepsilon,\delta)$-$\ell_0$-sampler} is a randomized streaming algorithm that, after processing the stream,
outputs either a special symbol $\bot$ or an index $i\in[n]$ such that:
\begin{enumerate}
  \item \emph{(No false positives)} If the algorithm outputs $i\in[n]$, then $x_i\neq 0$.
  \item \emph{(Success probability)} If $\|\mathbf{x}\|_0>0$, then $\mathbb P[\mathrm{output }\text{ }\bot]\le \delta$.
  \item \emph{(Approximate uniformity)} For all $i \in \mathrm{supp}(\mathbf{x})=\{j\in[n]:x_j\neq 0\}$,
  \[
  \mathbb{P}[\mathrm{output }\text{ }i] = (1 \pm \varepsilon)\cdot\frac{1}{|\mathrm{supp}(\mathbf{x})|} \pm O(\delta).
  \]
\end{enumerate}
\end{definition}

\begin{lemma}[\cite{jowhari2010tightboundslpsamplers}, Theorem~2]
There is a $(0,\delta)$-$\ell_0$-sampler using $O\!\left(\log^2 n \cdot \log(1/\delta)\right)$ space.
\end{lemma}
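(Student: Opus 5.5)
The plan is the standard subsample-then-sparse-recover construction. Write $s=\Theta(\log(1/\delta))$. First I will build an \emph{exact $s$-sparse recovery} structure for turnstile vectors $\mathbf{x}\in\mathbb{Z}^n$ whose entries are polynomially bounded.
\begin{itemize}
\item Fix a prime $q=\mathrm{poly}(n)$ larger than every possible $|x_i|$ and than $n$.
\item Maintain the $2s$ power sums $\sum_i x_i i^k \bmod q$ for $k=0,\ldots,2s-1$. Each is updated linearly on an update $(i,\Delta)$.
\item If $\|\mathbf{x}\|_0\le s$, then $\mathbf{x}$ is uniquely and deterministically recoverable from these syndromes. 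This follows from invertibility of Vandermonde systems, and decoding can use Berlekamp--Massey or Reed--Solomon syndrome decoding.
\item Add a random fingerprint $\sum_i x_i r^i \bmod q$ for a random $r\in\mathbb{F}_q$. With probability $1-1/\mathrm{poly}(n)$ it detects when $\mathbf{x}$ is \emph{not} $s$-sparse: the decoder then either fails or returns a candidate whose fingerprint disagrees.
\end{itemize}
This costs $O(s\log n)$ bits.

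Next, \emph{subsampling}.
\begin{itemize}
\item Pick a hash $h:[n]\to[n^3]$ from an $s$-wise independent family, with seed $O(s\log n)$ bits.
\item For each level $j=0,1,\ldots,\lceil\log n\rceil$, let $I_j=\{i: h(i)\le n^3/2^j\}$. Run an independent copy of the sparse-recovery structure on the restriction $\mathbf{x}|_{I_j}$, simply ignoring updates to $i\notin I_j$.
\end{itemize}
The total space is $O(\log n)\cdot O(s\log n)=O(\log^2 n\cdot\log(1/\delta))$ bits, as claimed. At the end, scan the levels and take any level $j$ whose structure certifies that $1\le\|\mathbf{x}|_{I_j}\|_0\le s$. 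Recover $\mathbf{x}|_{I_j}$ exactly and output the index $i\in\mathrm{supp}(\mathbf{x})\cap I_j$ minimizing $h(i)$; output $\bot$ if no level certifies. Property 1 (no false positives) is immediate, except on the $1/\mathrm{poly}(n)$ fingerprint-failure event, which is absorbed into $\delta$.

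For Property 2, let $N=\|\mathbf{x}\|_0>0$ and consider the level $j^*$ with $2^{j^*}\approx N/(s/2)$. There $\mathbb{E}|\mathrm{supp}\cap I_{j^*}|=\Theta(s)$. By $s$-wise independence, moment (Chernoff-type) bounds for limited independence show the count lies in $[1,s]$ except with probability $2^{-\Omega(s)}\le\delta$.

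For Property 3, I must be careful about choosing the \emph{first} good level and the minimum-hash element. Since $\mathrm{supp}\cap I_j\subseteq \mathrm{supp}\cap I_{j'}$ for $j\ge j'$, outputting the $h$-minimizer in any certified level gives the same index: the global $h$-argmin over $\mathrm{supp}(\mathbf{x})$. So the output is just the argmin of $h$ over the support, whenever some level succeeds.

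The main obstacle is showing this argmin is uniform up to additive $O(\delta)$ under only $s$-wise independence. I would argue as follows.
\begin{itemize}
\item By symmetry it suffices to bound $\mathbb{P}[i=\arg\min]$ for each fixed $i$.
\item Condition on the success event, which already forces the minimizer to lie among at most $s$ surviving elements at the relevant level.
\item The event ``$i$ is the minimum and the level-$j^*$ sample has size $\le s$'' depends on at most $s$ hash values at a time. So the inclusion--exclusion expansion of its probability can be truncated at order $s$ with error $2^{-\Omega(s)}$.
\item With a small constant-factor increase in the independence parameter, this matches the fully random value $1/N$ up to $O(\delta)$.
\end{itemize}
If this truncation argument proves delicate, the fallback is the one in \cite{jowhari2010tightboundslpsamplers}. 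Replace the min-hash rule by outputting a uniformly random element of the recovered set at level $j^*$, and use the symmetry of the $s$-wise independent family to show every index is equally likely conditioned on recovery. That conditional symmetry is exactly the step I expect to require the most care.
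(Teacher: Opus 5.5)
The paper gives no proof of this lemma; it is quoted from \cite{jowhari2010tightboundslpsamplers}, so there is no internal argument to compare against. Your reconstruction follows the standard subsample-and-recover template behind that result, and most of it is fine: the space count, the failure bound at level $j^*$, and the observation that the output is always the global $h$-minimizer over $\mathrm{supp}(\mathbf{x})$. The step that is not yet a proof is Property~3, and the fallback you propose for it does not work.

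\textbf{Property~3.} Two of your justifications do not hold as written. First, ``by symmetry'' is unavailable, because an $s$-wise independent family is not exchangeable. Second, the event $\{i=\arg\min_{\mathrm{supp}}h\}$ does not ``depend on at most $s$ hash values''. It depends on all $N$ of them, and so does the success event. The correct version of your truncation runs as follows, with $K$ the independence parameter.
\begin{itemize}
\item Conditioned on $h(i)=t$, the other values are $(K-1)$-wise independent and uniform. Let $f(t)=\mathbb{P}[h(j)>t\ \forall j\in\mathrm{supp}\setminus\{i\}\mid h(i)=t]$. Then $f(t)$ lies between the Bonferroni partial sums of orders $K-2$ and $K-1$. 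These partial sums coincide with the fully random ones.
\item Hence $|f(t)-f_{\mathrm{rand}}(t)|\le\binom{N-1}{K-1}(t/m)^{K-1}\le\bigl(e\mu_t/(K-1)\bigr)^{K-1}$, where $\mu_t=(N-1)t/m$ and $m$ is the hash range.
\item This bound is useful only while $\mu_t=O(K)$, so split at the threshold $\tau$ of level $j^*$. For $t\le\tau$ we have $\mu_t\le s/2$, and taking $K\ge 3s$ makes the error $2^{-\Omega(s)}$.
\item The contribution of $h(i)>\tau$ is at most $\mathbb{P}[\mathrm{supp}\cap I_{j^*}=\emptyset]\le\delta$, under either distribution.
\end{itemize}
Adding the failure probability gives $\mathbb{P}[\text{output } i]=1/N\pm O(\delta)$. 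Indyk's theorem, that $O(\log(1/\varepsilon))$-wise independent families are $\varepsilon$-min-wise independent, even gives a relative-error version of this.

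This establishes exactly the paper's definition, whose slack is an additive $O(\delta)$. It does not give additive error $n^{-c}$ independent of $\delta$, which is the stronger reading of ``zero relative error''. For that stronger statement you would analyze a fully random hash, where exchangeability on $\mathrm{supp}(\mathbf{x})$ makes the min-hash output exactly uniform, and then derandomize.

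\textbf{The fallback.} The same issue sinks it. A uniform choice from the recovered set is uniform over $\mathrm{supp}(\mathbf{x})$ only if the hash values are exchangeable, which $s$-wise independence does not provide. Conditioning on recovery involves all $N$ coordinates. In addition, $j^*$ depends on the unknown $N$, so the algorithm cannot locate that level.

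\textbf{Constants.} Take the prime and the hash range of size $\mathrm{poly}(n)/\delta$, so that fingerprint collisions and hash ties cost $O(\delta)$. With $m=n^3$ and the paper's $\delta=n^{-3}$, the crude tie bound $(N-1)/m\le n^{-2}$ already exceeds $\delta$. The larger range still keeps words at $O(\log n)$ bits whenever $\delta\ge 1/\mathrm{poly}(n)$.
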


We also use the notion of the \emph{$k$-core}.
For an undirected graph $G$, the $k$-core is the unique maximal induced subgraph $C$ of $G$ with minimum degree at least $k$.
It can be obtained by repeatedly deleting any vertex of degree $<k$ until no such vertex remains.
For an appropriately chosen constant $k$, our certificate construction uses $\ell_0$-samplers to sparsify the edges inside the $k$-core, while retaining everything outside.

\begin{lemma}\label{lem:kcore-certificate}
Let $G$ be an $n$-vertex undirected graph, let $k=O(1)$, and let $K$ be the $k$-core of $G$.
There is a one-pass randomized streaming algorithm using $\widetilde{O}(kn)$ space that, with high probability, outputs and stores a subgraph $G'\subseteq G$ such that:
\begin{enumerate}
    \item For every vertex $v\notin V(K)$, all edges of $G$ incident to $v$ are included in $G'$.
    \item For every vertex $v\in V(K)$, we have $\deg_{G'[V(K)]}(v)\ge k$.
    \item $|E(G')|=O(kn)$.
\end{enumerate}

    We refer to this $G'$ as a $\emph{small-forest-preserving certificate}$.
\end{lemma}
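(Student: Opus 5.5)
We plan a one-pass construction built from two ingredients: exact bookkeeping for low-degree vertices, and $\ell_0$-sampling for vertices that end up in the $k$-core. The difficulty is that $K$ is defined by the whole graph and cannot be known in advance. The approach is to use the linearity of $\ell_0$-samplers: they are sketches in the turnstile model, so edges can be implicitly \emph{deleted} after the stream ends.

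For every vertex $v$, we maintain its neighborhood indicator vector $\mathbf{x}_v\in\{0,1\}^n$ through a collection of sketches. Concretely, $v$ gets $t=O(k\log n)$ independent $(0,1/\mathrm{poly}(n))$-$\ell_0$-samplers of $\mathbf{x}_v$. Each has size $O(\log^3 n)$, so the total space is $\widetilde O(kn)$. Alongside these we keep a $k$-sparse-recovery sketch of $\mathbf{x}_v$, of size $\widetilde O(k)$, which exactly recovers $\mathbf{x}_v$ whenever $\|\mathbf{x}_v\|_0\le k$, and we also keep the degree counter $\deg(v)$. Every arriving edge $\{u,v\}$ updates the sketches of both endpoints.

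After the pass, we simulate the peeling process offline. We repeatedly find a vertex $v$ whose current degree is $<k$. Since its residual vector then has support $<k$, the sparse-recovery sketch returns its remaining neighbors exactly. We add all these edges to $G'$ and then "delete" them. Deletion means that for each recovered neighbor $u$ we feed the update $(v,-1)$ into $u$'s sketches and decrement $\deg(u)$. Because removed edges are recovered exactly, this keeps every sketch consistent with the residual graph. When peeling stops, the remaining vertex set is exactly $V(K)$, and every sketch now represents the neighborhood of its vertex inside $G[V(K)]$. This gives property 1: every edge incident to a peeled vertex was recovered, either when that vertex was peeled or when an earlier neighbor was peeled. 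It also gives the count $<k$ edges per peeled vertex, so $O(kn)$ edges in this part.

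For property 2, each $v\in V(K)$ has $\deg_{G[V(K)]}(v)\ge k$, and we want $k$ distinct neighbors from its $t$ samplers. Each sampler returns a near-uniform element of the residual support, which has size $d\ge k$. By a coupon-collector bound, $t=O(k\log n)$ independent samples contain at least $k$ distinct elements with probability $1-1/\mathrm{poly}(n)$. The worst case is $d=k$, where we need $O(k\log k+k\log n)$ samples. Independence across the peeling updates holds because the samplers' randomness is fixed before the stream and the updates are deterministic functions of the exact recoveries. The one subtlety is that those recoveries are themselves correct only with high probability, so we condition on that event and take a union bound over all $n$ vertices. We add $k$ sampled edges per core vertex, giving $O(kn)$ more edges, which completes property 3.

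The main obstacle is justifying the offline deletion step. It must not break the sampler guarantees, which is the adaptivity concern. It also needs the sparse-recovery primitive, which is not among the stated tools. If we want to avoid that primitive, the fallback is to recover low-degree neighborhoods by running $O(k\log n)$ $\ell_0$-samplers and collecting distinct outputs until their number equals the counter value. This also certifies exact recovery, since the samplers have no false positives.
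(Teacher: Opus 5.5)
Your proposal is correct and follows essentially the paper's proof. Both use per-vertex degree counters and $\ell_0$-samplers, simulate the $k$-core peeling offline by exploiting linearity to delete removed edges, and then attach $k$ sampled core neighbors to each vertex of $K$. The one difference is how distinct neighbors are obtained: the paper keeps only $k$ samplers per vertex and deletes each recovered neighbor from that vertex's remaining samplers before the next query, which gives both exact low-degree recovery and $k$ distinct core neighbors, whereas you use $O(k\log n)$ samplers with a coupon-collector bound plus a sparse-recovery sketch. You are also more explicit than the paper about feeding deletions into the neighbors' sketches and about adaptivity.
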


\begin{proof}
We maintain $k$ independent $(0,\delta)$-$\ell_0$-samplers $S_{v,1},\ldots,S_{v,k}$ for each vertex $v$, together with degree counters $d[v]$.
Upon seeing an edge $\{u,v\}$, we increment $d[u],d[v]$ and insert $u$ into each $S_{v,i}$ and $v$ into each $S_{u,i}$.

After the stream, we run the standard $k$-core peeling process offline, using the samplers to recover neighbors.
Whenever a vertex $v$ has current degree $d[v]<k$, we query $S_{v,1},\ldots,S_{v,d[v]}$ (deleting recovered neighbors from subsequent samplers to ensure distinctness), add the corresponding edges incident to $v$ into $G'$, and remove $v$ while decrementing the degrees of its recovered neighbors.
This exactly simulates core peeling, hence the remaining active set equals the $k$-core $K$.

Therefore, every vertex $v\notin K$ has all its incident edges in the residual graph added to $G'$, proving (1).
Once the process terminates, for each $v\in C$ we recover $k$ (distinct) neighbors inside $K$ using $S_{v,1},\ldots,S_{v,k}$ and add these edges to $G'$, which gives $\deg_{G'[V(K)]}(v)\ge k$ and proves (2).
Since we add fewer than $k$ edges per peeled vertex and at most $k$ edges per core vertex, $|E(G')|=O(kn)$, proving (3).

The space is $kn$ samplers plus degree counters, i.e.\ $\widetilde{O}(kn)$.
With $\delta=n^{-3}$, we make $O(kn)$ sampler queries, so a union bound gives failure probability at most $O(kn)\delta=o(1)$.
\end{proof}

\textbf{Proof of Proposition~\ref{novel}.}
If $\vec{H}$ contains no NWO vertex, then every $\vec{H}$-free oriented graph has $O(n)$ arcs, i.e., $\mathrm{ex}(n,\vec{H})=O(n)$.
Hence we can solve $\textsc{Sub}(\vec{H})$ in one pass by storing all arcs, using $\widetilde{O}(n)$ space.

Now assume that $\vec{H}$ has exactly one NWO vertex.
We use a constant-size color-coding step.
Independently color each vertex of the input oriented graph $\vec{G}$ uniformly at random with a color in $\{c_1,c_2,c_3\}$.
Let $\vec{G}_1$ be the subgraph consisting of arcs from color $c_1$ to color $c_2$, and let $\vec{G}_2$ be the subgraph consisting of arcs from color $c_2$ to color $c_3$.
Ignoring orientations, we apply Lemma~\ref{lem:kcore-certificate} to $\vec{G}_1$ and $\vec{G}_2$ with
\[
k := 2|V(\vec{H})|
\]
and obtain small-forest-preserving certificates $\vec{G}_1'\subseteq \vec{G}_1$ and $\vec{G}_2'\subseteq \vec{G}_2$.
We output \textsc{True} iff $\vec{G}':=\vec{G}_1'\cup \vec{G}_2'$ contains a copy of $\vec{H}$.
Since $\vec{G}'\subseteq \vec{G}$, the procedure has no false positives.
Repeating the entire experiment $O(1)$ times independently (in parallel within one pass) boosts the success probability to at least $2/3$.

It remains to argue completeness.
Let $x$ be the unique NWO vertex of $\vec{H}$.
By definition of this class, $V(\vec{H})$ admits a partition into three sets $V_1,V_2,V_3$ with $x\in V_2$, such that every arc of $\vec{H}$ is oriented from $V_i$ to $V_j$ with $i<j$.
Equivalently, $\vec{H}$ decomposes as $\vec{H}=\vec{T}_1\cup \vec{T}_2$, where $\vec{T}_1$ consists of arcs from $V_1$ to $V_2$ and $\vec{T}_2$ consists of arcs from $V_2$ to $V_3$.
Moreover, $\vec{T}_1$ and $\vec{T}_2$ are WO trees and they intersect only at the vertex $x$.

Fix any copy $\vec{H}^*\subseteq \vec{G}$ isomorphic to $\vec{H}$.
With probability $3^{-|V(\vec{H})|}$, the random coloring assigns color $c_i$ to every vertex of $V_i$ (for $i=1,2,3$) within this copy.
Condition on this event.
Then $\vec{T}_1^*:=\vec{H}^*[V_1\cup V_2]$ is a copy of $\vec{T}_1$ contained in $\vec{G}_1$, and
$\vec{T}_2^*:=\vec{H}^*[V_2\cup V_3]$ is a copy of $\vec{T}_2$ contained in $\vec{G}_2$; both share the same image $x^*$ of $x$.

We claim that $\vec{G}_1'$ contains a copy of $\vec{T}_1$ rooted at $x^*$ and $\vec{G}_2'$ contains a copy of $\vec{T}_2$ rooted at $x^*$, and that these two copies can be chosen to be vertex-disjoint outside $x^*$.
Observe first that $\vec{T}_i$ is a WO tree and $\vec{G}_i$ is a WO subgraph of $\vec{G}$ with the same orientation pattern (all arcs go from color $c_i$ to color $c_{i+1}$).
Hence, once we find the underlying undirected embedding of $\vec{T}_i$ in $\vec{G}_i'$, the orientations automatically match, so we may ignore directions during the embedding.

Fix $i\in\{1,2\}$ and consider $\vec{G}_i$ and its certificate $\vec{G}_i'$.
Lemma~\ref{lem:kcore-certificate} preserves \emph{all} edges incident to vertices outside the $k$-core and guarantees that every vertex inside the $k$-core retains degree at least $k$ within the core.
Since $k=2|V(\vec{H})|$ and $|\vec{T}_i|\le |V(\vec{H})|$, we can greedily embed $\vec{T}_i$ in $\vec{G}_i'$ starting from $x^*$:
outside the core, all required edges are present by construction; once the embedding enters the core, the minimum-degree condition ensures that at each step, even if there are vertices already occupied, there are enough fresh neighbors to extend the embedding, so the two embeddings can be chosen disjointly except at the shared root $x^*$.

Thus, conditioned on the correct coloring of $\vec{H}^*$, the certificate graph $\vec{G}'=\vec{G}_1'\cup\vec{G}_2'$ contains a copy of $\vec{H}$.
Therefore, a single trial succeeds with probability at least $3^{-|V(\vec{H})|}$, and repeating $O(3^{|V(\vec{H})|})$ independent trials yields constant success probability; with $O(1)$ additional repetition in parallel, we obtain success at least $2/3$.
The total space remains $\widetilde{O}(n)$ since $|V(\vec{H})|$ is constant.
\qed\\

Proposition~\ref{novel} also extends to oriented forests in which each component contains at most one NWO vertex.
Color the vertices of $\vec{G}$ independently and uniformly with $r$ colors, where $r$ is the number of components of $\vec{H}$.
With constant probability, each component of $\vec{H}$ becomes monochromatic and different components receive different colors.
Conditioned on this event, it suffices in that trial to check, for every $i\in[r]$, whether the subgraph induced by color $c_i$ contains the $i$-th component of $\vec{H}$.
Repeating a constant number of independent tests in parallel yields a success probability of at least $2/3$, while keeping one pass and $\widetilde{O}(n)$ space.

Finally, the only uncovered NWO bipartite graphs are those in which some components are WO and contain a cycle while some are trees with exactly one NWO vertex, as in the third part of Theorem~\ref{thmOSPC}.

Let $\vec H_1$ be an NWO forest in which each component has exactly one NWO vertex, and $\vec H_2$ be a WO bipartite graph.
If $\vec H=\vec H_1\sqcup \vec  H_2$, then we can perform the algorithm in~\cref{novel} while storing $\text{ex}(n,\vec H_2)=\tilde O(n^{2-1/\Delta'(H_2)})$ more extra edges. If all the input edges are stored, then we search for a copy of $\vec H$ directly. Otherwise there are more than $n+\text{ex}(n,\vec H_2)$ edges, and hence the existence of a copy of $\vec H_1$ is equivalent to the existence of a copy of $\vec H$. This proves the following and hence the complete single-pass dichotomy.

\begin{corollary}\label{NWOFandWO}
    Let $\vec H$ be an NWO bipartite graph. If each component of $\vec H$ is either WO or a tree with exactly one NWO vertex, then $R^{1}_{2/3}(\textsc{Sub}(\vec H))=\tilde O(n^{2-\Delta'(H)})$.
\end{corollary}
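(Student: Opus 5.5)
The plan is to combine the single-pass $\tilde O(n)$-space certificate algorithm of Proposition~\ref{novel} (in its extension to forests whose components each have exactly one NWO vertex) with the naive Tur\'an-number storage for the WO part. Then the resulting space is bounded by the quantity $n^{2-\Delta'(H)}$ of the statement, where $\Delta'(H)$ is computed for the \emph{whole} underlying graph $H$, read with the exponent as in \cref{tab:dichotomy}. Write $\vec H=\vec H_1\sqcup\vec H_2$. Here $\vec H_1$ is the union of the NWO tree components, each containing exactly one NWO vertex, and $\vec H_2$ is the union of the WO components. Since $\vec H$ is NWO, $\vec H_1\neq\emptyset$.

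\textbf{Algorithm.} In a single pass I run, in parallel, the randomized procedure for $\vec H_1$. This uses $O(3^{|V(\vec H)|})$ colorings, each building small-forest-preserving certificates via Lemma~\ref{lem:kcore-certificate} with $k=2|V(\vec H)|$, and costs $\tilde O(n)$ space. Alongside it I store every arc until the count exceeds $N:=\mathrm{ex}(n,\vec H_2)+(|V(\vec H_1)|+1)n$, and stop storing after that. If the whole stream fits, I search the stored graph for $\vec H$ directly.

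\textbf{Correctness when the stream overflows.} In that case $\vec G$ has more than $N$ arcs. The plan is to show that whenever the certificate procedure finds a copy of $\vec H_1$ on a vertex set $U$ with $|U|=|V(\vec H_1)|$, the graph $\vec G-U$ still contains $\vec H_2$. At most $|U|n$ arcs touch $U$, so $\vec G-U$ keeps more than $\mathrm{ex}(n,\vec H_2)$ arcs. By Corollary~\ref{Oex}, $\vec G-U$ therefore contains $\vec H_2$, and this copy is disjoint from $U$. Conversely, if $\vec H\subseteq\vec G$ then $\vec H_1\subseteq\vec G$, which the certificate detects with probability $\ge 2/3$. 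Hence the answer reduces to the $\vec H_1$ test. To output an actual copy in one pass, I would additionally keep, for every vertex, $|V(\vec H)|$ sampled neighbours. I expect the detection version to be what is needed here.

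\textbf{Space bound in terms of $\Delta'(H)$.} The total space is $\tilde O(n+\mathrm{ex}(n,\vec H_2))=\tilde O(n+n^{2-1/\Delta'(H_2)})$ by Corollary~\ref{Oex}. The step I regard as the main obstacle is converting this into the stated bound, which is phrased through $\Delta'(H)$ rather than $\Delta'(H_2)$. I would argue monotonicity: in any bipartition of $H$ restricted to $H_2$, each side's maximum degree can only grow when the components of $H_1$ are added back. Hence $\Delta'(H)\ge\Delta'(H_2)$, choosing the bipartition of each component independently to realize the minimum. This gives $n^{2-1/\Delta'(H_2)}\le n^{2-1/\Delta'(H)}$. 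Moreover $\Delta'(H)\ge1$, so the $\tilde O(n)$ term is also absorbed.

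One caveat remains about the exponent. Literally $n^{2-\Delta'(H)}\le n$, and when $\Delta'(H)\ge2$ it is $o(n)$, which even the $\tilde O(n)$ certificate cannot meet. So I would check carefully that the exponent in the statement matches the form $2-1/\Delta'(H)$ used in \cref{tab:dichotomy}. The argument above establishes exactly that form.
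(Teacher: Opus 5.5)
Your proof is correct and is the paper's argument: run the certificate algorithm of Proposition~\ref{novel} for $\vec H_1$, buffer arcs for the well-oriented part, and search directly if the stream fits. On overflow, the problem reduces to detecting $\vec H_1$. You are also right that the exponent should read $2-1/\Delta'(H)$.

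Your overflow threshold $\mathrm{ex}(n,\vec H_2)+(|V(\vec H_1)|+1)n$ is genuinely needed, because the paper's $n+\mathrm{ex}(n,\vec H_2)$ does not give the claimed equivalence. Take $\vec H_1=a\to b\to c$ and $\vec H_2$ the out-star with two leaves, so $\mathrm{ex}(n,\vec H_2)=n$. Let $\vec G$ have vertex set $S\sqcup\{x_1,x_2,\tau\}$ with $|S|=n-3$, arcs $(s,x_1),(s,x_2),(s,\tau)$ for all $s\in S$, and the arc $(x_1,x_2)$. Then $\vec G$ has $3n-8>2n$ arcs and contains $\vec H_1$. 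However, every copy of $\vec H_1$ uses both $x_1,x_2$, and every copy of $\vec H_2$ meets $\{x_1,x_2\}$, so $\vec H\not\subseteq\vec G$.

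With your threshold, the overflowing buffer itself contains a copy of $\vec H_2$ avoiding the copy of $\vec H_1$ found in the stored certificate. So outputting a copy needs no extra samplers.

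The one step to repair is the inequality $\Delta'(H)\ge\Delta'(H_2)$. Corollary~\ref{Oex} uses the well-oriented bipartition of $\vec H_2$, and a bipartition of $H$ realizing $\Delta'(H)$ need not extend it. For example, an in-star plus an out-star with ten leaves each gives $10$ under the well-oriented bipartition but $1$ under another. Instead, apply Corollary~\ref{Oex} to each component $X$ of $\vec H_2$ separately. Since $X$ is connected, $\Delta'(X)\le\Delta'(H)$ for every bipartition of $H$. Then combine the components via $\mathrm{ex}(n,F\sqcup F')\le\mathrm{ex}(n,F)+\mathrm{ex}(n,F')+|V(F)|n$, which is just your vertex-deletion argument again.
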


%% file: inducedoriented.tex
\section{Induced oriented graphs}\label{sec:section6}
Our arguments also extend to the induced oriented subgraph problem, and we also have a complete dichotomy result.

It is straightforward to see that the induced directed $\mathrm{co}\text{-}P_3$ is a tractable case: an oriented graph $\vec G$ contains an induced directed $\mathrm{co}\text{-}P_3$ if and only $G$ contains an induced $\mathrm{co}\text{-}P_3$, and the latter can be detected efficiently by Corollary~\ref{co-p3}.
In fact, this is the only tractable instance among graphs $H$ with $|V(H)|\ge 3$.

Combining Proposition~\ref{prop:oriented-to-undirected} with the classification in Section~\ref{sec:section4}, we first conclude that most oriented induced patterns are hard instances.

\begin{corollary}
If $\vec{H}$ is an oriented graph on at least $3$ vertices and $H$ is not a $P_3$, $\mathrm{co}\text{-}P_3$, or $P_4$, then $R^p_{2/3}(\textsc{IndSub}(\vec{H})) = \Omega(n^2/p)$.
\end{corollary}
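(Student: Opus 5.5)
The plan is to argue by contraposition through Proposition~\ref{prop:oriented-to-undirected}, applied in its induced form. Suppose some algorithm solves $\textsc{IndSub}(\vec H)$ with $p$ passes and $f(n,p)$ space. Then Proposition~\ref{prop:oriented-to-undirected} yields an algorithm for $\textsc{IndSub}(H)$ with $p m 2^{m}$ passes and $O(f(n,p)+m\log n)$ space, where $m=|E(H)|=O(1)$.

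The correctness of that reduction in the induced setting is the one point I would re-check rather than take on faith. Fix an induced copy of $H$ in $G$. Under a random orientation its $m$ edges match $\vec H$ with probability $2^{-m}$, and the non-edges stay non-arcs, so the copy is an induced copy of $\vec H$ in $\vec G$. Conversely, an induced copy of $\vec H$ in $\vec G$ has an underlying induced copy of $H$ in $G$, because orienting edges neither adds nor removes adjacencies. So the reduction is sound for both directions.

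Next I invoke the classification of Section~\ref{sec:section4}. Since $H$ has at least $3$ vertices and $H\notin\{P_3,\mathrm{co}\text{-}P_3,P_4\}$, the undirected graph $H$ falls into exactly one of the following cases:
\begin{itemize}
\item $H$ is non-bipartite, handled by Proposition~\ref{prop:induced-nonbipartite};
\item $\overline H$ is non-bipartite, handled by Proposition~\ref{prop:induced-complement-nonbipartite} (the hypothesis intended there is that $\overline H$ is non-bipartite, as its proof shows);
\item $H\in\{C_4,2K_2\}$, handled by the $C_4$ proposition together with the complementation argument for $2K_2$.
\end{itemize}
That this list is exhaustive follows from the corollary for $|V(H)|\ge 5$ together with the enumeration of graphs on $3$ or $4$ vertices. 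In every case $R^{q}_{2/3}(\textsc{IndSub}(H))=\Omega(n^2/q)$ for all $q$.

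Now take $q=pm2^m=O(p)$. This gives $f(n,p)+m\log n=\Omega(n^2/(pm2^m))$, hence $f(n,p)=\Omega(n^2/p)$ for every $p\le n^{2-\epsilon}$, since $m$ is a constant. For even larger $p$ the claimed bound is trivial, given the $\Omega(\log n)$ baseline.

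I expect the main obstacle to be only bookkeeping: the pass blow-up by the constant factor $m2^m$, and confirming that the lower bounds from Section~\ref{sec:section4} hold for every pass count $q$, which they do since they come from communication complexity.
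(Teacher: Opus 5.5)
Your proposal is correct and is the paper's own argument: the paper obtains this corollary in one line by combining Proposition~\ref{prop:oriented-to-undirected} (in its induced form) with the classification of Section~\ref{sec:section4}. Your extra checks---induced soundness of the random orientation, reading Proposition~\ref{prop:induced-complement-nonbipartite} with the hypothesis ``$\overline{H}$ non-bipartite'', and the constant pass blow-up $m2^m$---just make that line explicit. Two cosmetic fixes: your three cases are exhaustive but not mutually exclusive, so say ``at least one'' rather than ``exactly one''. Also, to absorb the additive $m\log n$ cleanly, put the cutoff at $p\le c\,n^2/\log n$ for a small constant $c$ rather than at $p\le n^{2-\epsilon}$, so that on the leftover range $n^2/p=O(\log n)$ really holds.
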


It is further shown that $\vec{H}$ is a hard instance even when the underlying undirected graph $H$ is $P_3$ or $P_4$, leaving co-$P_3$ as the only simple instance.
\begin{proposition}\label{prop:oriented-P3-hard}
Let $\vec{H}$ be an oriented graph and  $H=P_3$, then 
\begin{align*}
    R^p_{2/3}(\textsc{IndSub}(\vec{H})) = \Omega(n^2/p).
\end{align*}
\end{proposition}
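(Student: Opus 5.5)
The plan is to reduce $\mathrm{DISJ}_{n^2}$ to $\textsc{IndSub}(\vec H)$ on a graph with $\Theta(n)$ vertices. Up to isomorphism there are three orientations of $P_3$: the directed path $a\to b\to c$, the out-star $a\leftarrow b\to c$, and the in-star $a\to b\leftarrow c$. Reversing every arc of the input graph maps induced in-stars to induced out-stars and vice versa. It also keeps the stream partitioned between the players. So it suffices to treat the directed path and the out-star. In both constructions I use vertex sets $X=\{x_1,\ldots,x_n\}$, $Y=\{y_1,\ldots,y_n\}$, $Z=\{z_1,\ldots,z_n\}$, and encode the element $(j-1)n+i$ by the pair $(i,j)$. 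Alice and Bob stream the arcs determined by $S_1$ and $S_2$, and either player streams the fixed arcs. Since the three vertices of an induced $P_3$ span only two arcs, the endpoints must be non-adjacent. The design principle is therefore to make every pair adjacent except the ``matching'' pairs $\{x_i,z_i\}$ and the pairs we cannot avoid, and to control which centers can see a matching pair.

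\textbf{Directed path.} Add $(x_i,y_j)$ if $(i,j)\in S_1$, add $(y_j,z_i)$ if $(i,j)\in S_2$, and add the fixed arcs $(x_i,z_k)$ for all $i\ne k$. Every vertex of $X$ has in-degree $0$ and every vertex of $Z$ has out-degree $0$. Hence a directed $P_3$ has its middle vertex in $Y$ and has the form $x_i\to y_j\to z_k$. It is induced iff $x_i\not\sim z_k$, i.e.\ iff $k=i$. Such a copy exists iff $(i,j)\in S_1\cap S_2$.

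\textbf{Out-star.} Add $(y_j,x_i)$ if $(i,j)\in S_1$ and $(y_j,z_i)$ if $(i,j)\in S_2$. Make $X\cup Z$ the complete graph minus the perfect matching $\{x_i,z_i\}$. Orient it, for each $a<i$, as the $4$-cycle
\[
x_i\to x_a\to z_i\to z_a\to x_i .
\]
If $(i,j)\in S_1\cap S_2$, then $y_j$ with leaves $x_i,z_i$ is an induced out-star.

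Conversely, every vertex of $Y$ has in-degree $0$, so no vertex of $Y$ is a leaf of an out-star. If the center is in $Y$, the two out-neighbours must be non-adjacent. Within $X\cup Z$ the only non-adjacent pairs are $\{x_i,z_i\}$, and both arcs $(y_j,x_i)$ and $(y_j,z_i)$ present means $(i,j)\in S_1\cap S_2$.

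The key remaining check is that no center $u\in X\cup Z$ has both $x_i$ and $z_i$ as out-neighbours. The $4$-cycle orientation is chosen so that, for every pair $\{x_i,z_i\}$ and every other vertex $u$, exactly one of the two arcs between $u$ and the pair points into $u$. I will verify the four cases:
\begin{itemize}
\item $u=x_a$ with pair $i>a$: $x_a\to z_i$ and $x_i\to x_a$.
\item $u=x_i$ with pair $a<i$: $x_i\to x_a$ and $z_a\to x_i$.
\item $u=z_i$ with pair $a<i$: $z_i\to z_a$ and $x_a\to z_i$.
\item $u=z_a$ with pair $i>a$: $z_a\to x_i$ and $z_i\to z_a$.
\end{itemize}

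Both graphs have $3n$ vertices and encode a universe of size $n^2$. A $p$-pass $s$-space streaming algorithm therefore yields a $\mathrm{DISJ}_{n^2}$ protocol of cost $O(ps)$, giving $\Omega(n^2/p)$. The in-star case follows by reversing all arcs of the out-star construction.

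The main obstacle is the out-star case. A naive transitive orientation of the dense part inevitably gives some vertex two out-neighbours forming a non-adjacent matching pair, which would create a spurious induced out-star. The $4$-cycle orientation is the fix. The step needing the most care is this exhaustive check that each center in $X\cup Z$ sees at most one end of every matching pair as an out-neighbour.
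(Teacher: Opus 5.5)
Your proposal is correct and follows the paper's proof. The directed-path construction is identical, and your out-star construction is exactly the arc-reversal of the paper's in-star construction: its transitive tournaments on $X_1,X_2$ together with the cross arcs $E_3,E_4$ form precisely your $4$-cycles. Your explicit four-case check that no vertex of $X\cup Z$ has both ends of a matching pair as out-neighbours supplies a verification the paper only asserts.
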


\begin{proof}
Up to isomorphism and reversal, there are two orientations of $P_3$: \[\text{the directed path }a\to b\to c\qquad \text{ the ``in-star'' }a\to b\leftarrow c\]
For both cases, we give a reduction from $\mathrm{DISJ}_{n^2}$.
Let Alice and Bob hold sets $S_1,S_2\subseteq[n^2]$.
We construct the reduction graph $\vec{G}_{\vec H}$ with vertex set
\[
V(\vec{G})=X_1 \sqcup X_2 \sqcup X_3,
\qquad
X_i=\{x^{(i)}_1,\ldots,x^{(i)}_n\}.
\]\\
\textbf{Case 1: $\vec{H}\cong (a\to b\to c$)}\\\\
Define the arc set by $E(\vec{G})=E_1\sqcup E_2\sqcup E_3$, where
\[
E_1=\{(x^{(1)}_i,x^{(2)}_j):\, (i-1)n+j\in S_1\},\qquad
E_2=\{(x^{(2)}_j,x^{(3)}_i):\, (i-1)n+j\in S_2\},
\]
and
\[
E_3=\{(x^{(1)}_i,x^{(3)}_{j}):\, i\neq j\}.
\]
\textbf{Claim:} $S_1\cap S_2\neq\emptyset$ if and only if $\vec{G}$ contains an induced copy of $a\to b\to c$.
\begin{claimproof}
    If $n(i-1)+j\in S_1\cap S_2$, then $\{x^{(1)}_i,x^{(2)}_j,x^{(3)}_i\}$ induces $a\to b\to c$.
Conversely, any induced $a\to b\to c$ must pick one vertex from each layer $X_1,X_2,X_3$ (since all arcs go from lower to higher index),
and the absence of an arc between the endpoints forces the two endpoints to share the same index in $X_1$ and $X_3$ (by the definition of $E_3$),
which implies the corresponding element lies in $S_1\cap S_2$.
\end{claimproof}
\text{}\\
\noindent\textbf{Case 2: $\vec{H}\cong (a\to b\leftarrow c$)}\\\\
We keep the same vertex set $X_1\sqcup X_2\sqcup X_3$, and define arcs into $X_3$ by
\[
E_1=\{(x^{(1)}_i,x^{(3)}_j):\, (i-1)n+j\in S_1\},\qquad
E_2=\{(x^{(2)}_i,x^{(3)}_j):\, (i-1)n+j\in S_2\}.
\]
To ensure that the two sources $a,c$ are non-adjacent in any induced copy, we orient all arcs inside $X_1$ and inside $X_2$ as a transitive tournament,
and we orient the bipartite graph between $X_1$ and $X_2$ so that every pair is adjacent except for the ``matching'' pairs $(x^{(1)}_i,x^{(2)}_i)$.
Concretely, add
\[
E_3=\{(x^{(1)}_i,x^{(2)}_j):\, i>j\},\qquad
E_4=\{(x^{(2)}_i,x^{(1)}_j):\, i>j\},
\]
\[
E_5=\{(x^{(1)}_i,x^{(1)}_j):\, i<j\},\qquad
E_6=\{(x^{(2)}_i,x^{(2)}_j):\, i<j\}.
\]
\textbf{Claim:} $S_1\cap S_2\neq\emptyset$ if and only if $\vec{G}$ contains an induced copy of $a\to b\leftarrow c$.
\begin{claimproof}
    If $n(i-1)+j\in S_1\cap S_2$, then $\{x^{(1)}_i,x^{(2)}_i,x^{(3)}_j\}$ induces $a\to b\leftarrow c$.
Conversely, in any induced $a\to b\leftarrow c$, the two sources must be a non-adjacent pair in $X_1\cup X_2$ since all nodes in $X_3$ have no out-neighbors, which by construction forces them to be $(x^{(1)}_i,x^{(2)}_i)$ for some $i$.
By our construction. their common out-neighbor can only exist in $X_3$, which then determines an index $j$ such that $(i-1)n+j\in S_1\cap S_2$.
\end{claimproof}

In either case, we obtain a reduction from $\mathrm{DISJ}_{n^2}$ to $\textsc{IngSub}(\vec H)$.
This implies an $\Omega(n^2/p)$ space lower bound for any $p$-pass streaming algorithm.
\end{proof}

\begin{proposition}\label{prop:oriented-P4-hard}
Let $\vec{H}$ be an oriented graph and $H=P_4$,
then 
\begin{align*}
    R^p_{2/3}(\textsc{IndSub}(\vec{H})) = \Omega(n^2/p).    
\end{align*}
\end{proposition}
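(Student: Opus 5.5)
The plan is to reduce from $\mathrm{DISJ}_{n^2}$, extending the layered construction from the proof of Proposition~\ref{prop:oriented-P3-hard}. Only the induced orientations of $P_4$ matter. Up to isomorphism and reversing every arc, write the pattern as $a - b - c - d$. We then distinguish four cases according to the directions of the three arcs: the directed path $a\to b\to c\to d$, the alternating path $a\to b\leftarrow c\to d$, and the two mixed patterns $a\to b\to c\leftarrow d$ and $a\leftarrow b\to c\to d$. I first want to confirm this list. Each class gets the same skeleton, and only the orientations of the fixed arcs change.

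The skeleton has vertex set $X_1\sqcup X_2\sqcup X_3\sqcup X_4$ with $X_q=\{x^{(q)}_1,\ldots,x^{(q)}_n\}$. The element $(i-1)n+j$ is encoded by a single intended induced path $x^{(1)}_i - x^{(2)}_j - x^{(3)}_j - x^{(4)}_i$, built from the following arcs.
\begin{itemize}
\item Alice adds an $X_1X_2$ arc $\{x^{(1)}_i,x^{(2)}_j\}$ whenever $(i-1)n+j\in S_1$, oriented as $ab$.
\item Bob adds an $X_3X_4$ arc $\{x^{(3)}_j,x^{(4)}_i\}$ whenever $(i-1)n+j\in S_2$, oriented as $cd$.
\item A fixed perfect matching $\{x^{(2)}_j,x^{(3)}_j\}$, oriented as $bc$.
\item All pairs $\{x^{(1)}_i,x^{(4)}_{i'}\}$ with $i\neq i'$ as fixed arcs. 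These play the role of $E_3$ in Case~1 of Proposition~\ref{prop:oriented-P3-hard}: they force the two endpoints of any induced copy to carry the same index.
\item No arcs at all between $X_1$ and $X_3$, or between $X_2$ and $X_4$.
\end{itemize}
If the construction needs to forbid two vertices from the same layer in a copy, I would also put transitive tournaments inside $X_1$ and $X_4$, as $E_5,E_6$ did before. The graph has $\Theta(n)$ vertices, so a $p$-pass $s$-space algorithm gives a $p$-round protocol with $O(ps)$ communication, and hence $s=\Omega(n^2/p)$. Completeness is immediate: if $(i-1)n+j\in S_1\cap S_2$, the four vertices above span exactly the three pattern arcs, because $x^{(1)}_i x^{(4)}_i$ is deliberately a non-edge and the $X_1X_3$ and $X_2X_4$ pairs are empty.

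Soundness is the main obstacle. Here I must show that every induced copy of $\vec H$ uses exactly one vertex from each layer in the order $X_1,X_2,X_3,X_4$, and uses one matching arc. The dense fixed $X_1X_4$ arcs (and any intra-layer tournaments) could combine with input arcs into spurious induced $P_4$'s, for instance $x^{(2)}-x^{(1)}-x^{(4)}-x^{(3)}$ with mismatched indices. I would rule these out by orienting the fixed arcs case by case so that they cannot appear in the pattern. When the pattern has a source or sink at an endpoint, I orient all $X_1X_4$ arcs, say from $X_4$ to $X_1$, so that every vertex of $X_1$ or $X_4$ would need the wrong in/out behaviour at an internal position. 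This mirrors the ``all arcs go forward between layers'' argument used in Case~1 of Proposition~\ref{prop:oriented-P3-hard}. The directed and alternating cases look easiest because every vertex role there is forced by its in/out-degree. The mixed cases will need an extra observation; the first thing I would try is a degree or in/out argument that forces the middle arc onto the matching.

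If the single skeleton fails in some case, I will fall back on Case~2 of Proposition~\ref{prop:oriented-P3-hard}. That means encoding both inputs as arcs into a common layer, and using a complete-minus-matching bipartite graph to supply the one required non-adjacency.
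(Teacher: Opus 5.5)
Your completeness argument and the communication bound are fine. Your list of orientations is also correct, though $a\to b\to c\leftarrow d$ and $a\leftarrow b\to c\to d$ are reverses of each other up to isomorphism, so three cases suffice. The gap is the soundness step you flag as the crux: it fails for your skeleton in every case except the directed path.

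\emph{Directed path.} This case can be saved. Orient the $X_1X_4$ arcs as $X_1\to X_4$ and add no tournaments. Then every arc goes forward in the order $X_1,X_2,X_3,X_4$, so a copy takes one vertex per layer in that order, and non-adjacency of its ends forces equal indices. Your suggested orientation $X_4\to X_1$ fails even here: $x^{(3)}_{j'}\to x^{(4)}_{i'}\to x^{(1)}_i\to x^{(2)}_j$ is an induced directed $P_4$ whenever $(i-1)n+j\in S_1$, $(i'-1)n+j'\in S_2$, $i\ne i'$ and $j\ne j'$.

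\emph{Alternating path $a\to b\leftarrow c\to d$.} Neither orientation of the $X_1X_4$ arcs works, and intra-layer tournaments only create new copies. A uniform orientation makes the subgraph on $X_1\cup X_4$ well-oriented. But $K_{n,n}$ minus a perfect matching contains the induced paths $x^{(1)}_i - x^{(4)}_k - x^{(1)}_l - x^{(4)}_i$ ($i,k,l$ distinct), and each is then an induced copy of $\vec H$ for every input. A tournament on $X_1$ removes these, but creates $x^{(3)}_j\to x^{(2)}_j\leftarrow x^{(1)}_i\to x^{(1)}_{i'}$ whenever $x^{(1)}_i\to x^{(1)}_{i'}$, $(i-1)n+j\in S_1$ and $(i'-1)n+j\notin S_1$. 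So roles are not forced by in/out-degrees here: $a,c$ are both sources and $b,d$ are both sinks.

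\emph{Mixed path $a\to b\to c\leftarrow d$.} The ends $a,d$ are both sources. Hence the $X_1X_4$ arcs give vertices of $X_1$ or $X_4$ both in- and out-arcs, and $b$ is no longer confined to $X_2$.
With $X_1\to X_4$ you get $x^{(1)}_i\to x^{(4)}_k\to x^{(3)}_{j'}\leftarrow x^{(2)}_{j'}$ whenever $k\ne i$, $(k-1)n+j'\in S_2$ and $(i-1)n+j'\notin S_1$. This copy uses one vertex per layer, so tournaments cannot help.
With $X_4\to X_1$ you get $x^{(4)}_k\to x^{(1)}_i\to x^{(2)}_j\leftarrow x^{(1)}_k$ whenever $(i-1)n+j,(k-1)n+j\in S_1$ with $i\ne k$. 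A tournament on $X_1$ then creates the input-independent copy $x^{(4)}_{i'}\to x^{(1)}_i\to x^{(1)}_{i'}\leftarrow x^{(4)}_i$ for each arc $x^{(1)}_i\to x^{(1)}_{i'}$. A further tournament on $X_4$ creates $x^{(4)}_{k'}\to x^{(4)}_k\to x^{(3)}_{j'}\leftarrow x^{(2)}_{j'}$ whenever $x^{(4)}_{k'}\to x^{(4)}_k$, $(k-1)n+j'\in S_2$ and $(k'-1)n+j'\notin S_2$.

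\emph{The missing idea.} The paper index-checks $a$ against $c$ rather than against $d$, and attaches $d$ by a pendant matching. The layers then form a triangle plus a pendant layer instead of a $4$-cycle.
For $a\to b\to c\to d$ and $a\to b\to c\leftarrow d$, take Case~1 of Proposition~\ref{prop:oriented-P3-hard} and add $X_4$ with arcs $(x^{(3)}_i,x^{(4)}_i)$, resp.\ $(x^{(4)}_i,x^{(3)}_i)$. In the first case all arcs go forward in the order $X_1,X_2,X_3,X_4$. In the second, $X_2$ is the only layer with both in- and out-arcs, so $b\in X_2$, $a\in X_1$ and $c\in X_3$. Either way, the non-adjacency of $a$ and $c$ forces a common element.
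For $a\to b\leftarrow c\to d$, take Case~2 and add the arcs $(x^{(2)}_i,x^{(4)}_i)$. Now $b$ needs two non-adjacent in-neighbours. This forces $b=x^{(3)}_j$ and $\{a,c\}=\{x^{(1)}_i,x^{(2)}_i\}$, so $(i-1)n+j\in S_1\cap S_2$.
Your fallback gestures at this last construction but does not carry it out. For the mixed pattern, the base that works is Case~1, not Case~2.
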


\begin{proof}
Up to isomorphism and reversal, $P_4$ has three orientations:
\[
\vec H_1: a\to b\to c\to d,\qquad
\vec H_2: a\to b\to c\leftarrow d,\qquad
\vec H_3: a\to b\leftarrow c\to d.
\]
We reduce $\mathrm{DISJ}_{n^2}$ to $\textsc{Indsub}(\vec H_i)$ for each $i\in\{1,2,3\}$ by a direct extension of the constructions in Proposition~\ref{prop:oriented-P3-hard}.

For $\vec H_1$ and $\vec H_2$, start from the reduction graph in Case~1 of Proposition~\ref{prop:oriented-P3-hard} (for the pattern $a\to b\to c$).
Add a fourth layer $X_4=\{x^{(4)}_1,\ldots,x^{(4)}_n\}$ and the matching arcs
\[
\left\{(x^{(3)}_i,x^{(4)}_i): i\in[n]\right\}
\quad\text{or}\quad
\left\{(x^{(4)}_i,x^{(3)}_i): i\in[n]\right\},
\]
choosing the direction so that the induced pattern becomes $\vec H_1$ or $\vec H_2$, respectively.
The same argument as in Proposition~\ref{prop:oriented-P3-hard} shows that $S_1\cap S_2\neq\emptyset$ if and only if the resulting graph contains an induced copy of $\vec H_1$ (resp.\ $\vec H_2$).

For $\vec H_3$, start from the reduction graph in Case~2 of Proposition~\ref{prop:oriented-P3-hard} (for the pattern $a\to b\leftarrow c$).
Add $X_4$ and the matching arcs $\{(x^{(2)}_i,x^{(4)}_i): i\in[n]\}$.
Again, the same structural argument yields that $S_1\cap S_2\neq\emptyset$ if and only if the resulting graph contains an induced copy of $\vec H_3$.

In all three cases, $\textsc{Indsub}(H_i)$ is at least as hard as $\mathrm{DISJ}_{n^2}$, which implies the stated $\Omega(n^2/p)$ space lower bound.
\end{proof}

%% file: openproblem.tex
\section{Remarks and Open questions}\label{sec:open problem}

\newtheorem{Question}{Question}

We list and discuss some open problems mentioned in the previous sections.
\subsection{Lower bounds for bipartite graphs}

\begin{Question}\label{Q1}
    For general bipartite graphs $H$ whose components all have diameter $\geq 4$, what is $R^p_{2/3}(\textsc{Sub}(H))$?
    Do we have the $\Omega(\mathrm{ex}(n,H)/p)$ lower bound?
\end{Question}  

Despite proving $R^p_{2/3}(\textsc{Sub}(H))=\Omega(\text{ex}(n,H)/p)$ when some component of $H$ has diameter $<4$ in \cref{diam4LB},  the proof does not work for Question~\ref{Q1}. For $H$ as in Question~\ref{Q1}, when we apply the proof of \cref{diam4LB}, although $G'$ is $H$-free, there is no longer the restriction that two vertices from a same part of $H$ must share a common neighbor, so in the proof, when $\bigcap_{q=1}^{|E(H)|}S_q=\emptyset$, supposing a copy $H'$ of $H$, we can no longer deduce
\begin{align*}
    |V(H')\cap \{v_1^{(j)},\ldots,v_a^{(j)}\}|,|V(H')\cap \{u_1^{(j)},\ldots,u_a^{(j)}\}|\leq 1,
\end{align*}
and hence there might be no contradiction. For example, \cref{proofWrong} indicates how the proof may go wrong when $H=C_8$.
\begin{figure}[H]
        \centering
        \includegraphics[width=0.65\linewidth]{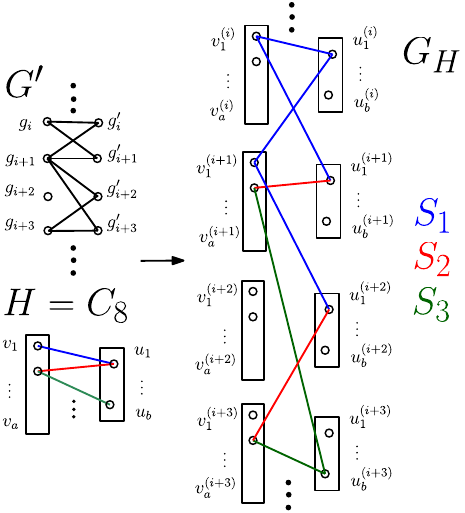}
        \caption{An example where $H\subseteq G_H$ but the parties have no common element.}
        \label{proofWrong}
    \end{figure} 
\begin{Question}
    For general bipartite graphs $H$ whose components all have connectivity $\leq 2$, what is $R^1_{2/3}(\textsc{Sub}(H))$? Do we have the $\Omega(\mathrm{ex}(n,H))$ lower bound?
\end{Question} 

We proved $R^1_{2/3}(\textsc{Sub}(H))=\Omega(\text{ex}(n,H))$ for bipartite $H$ in which some component has connectivity $>2$ in \cref{k2LB}. In the proof, we attached a copy $H_0$ of $H$ by overlapping an $\{v_i,v_j\}\in E(H_0)$ with the target index edge $e_k$. Then, calling a copy $H'$ of $H$ in  $G_H$ ``crossing'' if it uses both vertices from $V(G')$ and from $V(H_0)\setminus \{v_i,v_j\}$,
 by $\kappa(H)>2$ we showed that if $e_k\notin E(G_H)$, then there cannot be crossing copies of $H$, and hence $H\nsubseteq G_H$.
However, when turning to the case $\kappa(H)=2$, we face obstacles. To  formulate them, we first define the \emph{two-vertex-cut-components} of $H$.
 \begin{definition}[TVCC]
     Let $H$ be bipartite with $\kappa(H)=2$ and $\{u,v\}$ be a two-vertex-cut of $H$, where $u,v$ are from different parts and $\{u,v\}\notin E(H)$. Let $A'$ be a component in $H-u-v$, the remainder after the removal of $u,v$, and let $A(u,v)=H[\{u,v\}\cup V(A')]$.
    We say $A(u,v)$ is a two-vertex-cut-component (TVCC) of $H$.
 \end{definition}
 
If $\kappa(H)=2$, then in the proof of Proposition \cref{k2LB}, when the index edge $e_k\notin E(G_H)$, although both $G_H[V(G')]$ and $G_H[V(H_0)]$ are $H$-free, a crossing $H$ copy still might exist since one of $G_H[V(G')]$, $G_H[V(H_0)]$ might contain a TVCC of $H$ and the other might contain the remaining part.

One might ask if this can be prevented by carefully choosing the $\{v_i,v_j\}\in E(H)$ to be overlapped with the index edge $e_k$. Indeed, if there is an $\{v_i,v_j\}\in E(H)$ such that no TVCC $A(u,v)$ of $H$ can be embedded into $H$ with the pair of vertices $\{u,v\}\mapsto \{v_i,v_j\}$, then we can rule out any crossing copy in the proof. However, we are obstructed by the following result.
\begin{proposition}\label{YC}
    There exists a bipartite $H$ with $\kappa(H)=2$ such that for all $e\in E(H)$, there is a TVCC $A(u,v)$ of $H$ and an embedding $\varphi:A(u,v)\hookrightarrow H$ such that $\varphi:\{u,v\}\mapsto e$.
\end{proposition}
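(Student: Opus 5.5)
The plan is to prove Proposition~\ref{YC} by exhibiting one explicit graph $H$ and checking it by hand. There are three verifications: $H$ is bipartite, $\kappa(H)=2$, and every edge is \emph{covered}. We call an edge $e$ covered when some TVCC $A(u,v)$ of $H$ embeds into $H$ with $\{u,v\}\mapsto e$. Such an embedding sends the non-adjacent pair $u,v$ onto an adjacent pair. So $e$ is covered exactly when the graph $A(u,v)+uv$ has a copy in $H$ in which the added edge $uv$ lands on $e$. This reformulation is what the search will use.

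\textbf{Why the obvious candidates fail.} We will not try simple graphs like an even cycle. In $C_{2k}$ every TVCC is a path between two antipodal vertices, so $A(u,v)+uv$ is a shorter cycle, which $C_{2k}$ does not contain. The same problem appears for a ``theta-like'' graph made of internally disjoint $u$--$v$ gadgets that are $4$-cycle-poor. Two requirements therefore follow. First, $H$ must contain short TVCCs, for example paths $u-a-b-v$ (where $A+uv=C_4$) or the gadget $u\sim a_1,a_2$, $\{a_1,a_2\}$ complete to $\{b_1,b_2\}$, $b_1,b_2\sim v$ (where $A+uv=K_{3,3}$). Second, $H$ must be rich enough in $C_4$'s and $K_{3,3}$'s that every edge, including the edges inside the gadgets themselves, lies on a copy of the corresponding $A+uv$.

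\textbf{Working candidate.} Take a $2$-connected bipartite ``core'' containing many $K_{3,3}$'s. Attach several copies of the $K_{2,2}$-gadget across non-adjacent, opposite-part pairs $\{x,y\}$ of the core, so that each such $\{x,y\}$ is a $2$-cut whose components are gadgets. This keeps $\kappa(H)=2$, because $\{x,y\}$ separates each gadget. The core edges are covered by embedding a gadget-TVCC, i.e.\ a $K_{3,3}$ minus an edge, around a $K_{3,3}$ of the core. For edges inside a gadget, the key observation is that every gadget edge lies on a $C_4$ inside the gadget. So it suffices to arrange that some TVCC equals a $P_4$ whose endpoints form a non-adjacent opposite-part $2$-cut. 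One way is to have the cuts $\{x,y\}$ also separate short $u$--$v$ paths. Another is to choose the gadget so that a sub-piece of it is separated by a $2$-cut of $H$ as an induced path. If both kinds of components hang on the same cut, then $P_4$-TVCCs cover everything on $C_4$'s and $K_{3,3}$-type TVCCs cover the rest.

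\textbf{Main obstacle.} The hard step is the circularity. Every TVCC is itself a subgraph of $H$, and covering its edges requires a strictly larger-looking structure $A+uv$ elsewhere in $H$. In particular, any path-like TVCC introduces edges that may lie on no $C_4$. For each candidate I would first list all $2$-cuts $\{u,v\}$ with $u,v$ in opposite parts and $uv\notin E(H)$, together with their TVCCs. I would then check edge by edge whether some $A+uv$ embeds with $uv$ mapped to the edge, adjusting the gadgets until no edge is left uncovered. If hand-tuning stalls, the fallback is an exhaustive computer search over small bipartite graphs with $\kappa=2$ (say up to about $12$ vertices), using exactly this test. Since the statement is existential, a single verified example found that way, presented with its cut and embedding tables, completes the proof.
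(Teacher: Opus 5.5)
Your proposal does not yet prove the statement. The claim is purely existential, so the proof \emph{is} a concrete witness $H$ together with a check that it is bipartite, has $\kappa(H)=2$, and has every edge covered. You never fix such an $H$. The ``working candidate'' leaves the core and the attachment cuts unspecified. The fallback computer search is a procedure, not an argument, and nothing guarantees it finds a witness within about $12$ vertices. For comparison, the paper's witness in Figure~\ref{YCgraph} already involves $u,v,w_1,\dots,w_{13}$ plus auxiliary vertices $z_i$.

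Moreover, the covering mechanism you sketch fails as stated.

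\textbf{What is right.} A TVCC equal to a path $u-a-b-v$ does cover every edge lying on a $C_4$. This is exactly how the paper handles its red edges, via the TVCC $u,w_1,w_2,v$.

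\textbf{Where it breaks.} The three edges of that path lie on no $C_4$. Since $\{a,b\}$ is a whole component of $H-u-v$, both $a$ and $b$ have degree $2$, so a $4$-cycle through any of these edges would force $uv\in E(H)$. Hence these edges lie on no $K_{3,3}$ either, and your claim that the $K_{3,3}$-type TVCCs ``cover the rest'' fails exactly on them.

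\textbf{How long a covering TVCC must be.} Suppose some $A(u',v')$ covers one of these edges. Follow the image of a $u'$--$v'$ path: it must pass through the degree-$2$ vertices $a,b$, and then get from $v$ to $u$ inside $H-a-b$, which takes at least $3$ steps. This gives $\mathrm{dist}_{A(u',v')}(u',v')\ge 5$. So $H$ must contain long TVCCs, whose own edges then need covering too. This is the circularity you name as the main obstacle but do not break.

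The missing idea is a specific graph in which long TVCCs re-embed onto every edge not lying on a $C_4$. That includes the edges of the short path and of the long TVCCs themselves. In the paper, for instance, the TVCC on $u,w_4,\dots,w_{13},v$ covers the edge $\{u,w_3\}$ via $u\mapsto u$, $v\mapsto w_3$, $w_8\mapsto v$ and $w_i\mapsto z_i$.
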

\begin{proof}
\begin{figure}[H]
        \centering
        \includegraphics[width=\linewidth]{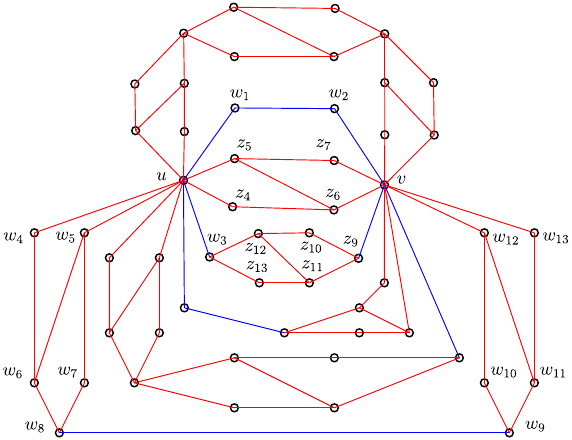}
        \caption{An example for Proposition~\ref{YC}.}
        \label{YCgraph}
    \end{figure}    
    An example is given by \cref{YCgraph}. In the statement of \cref{YC}, if $e$ is chosen to be any red edge, then $u,w_1,w_2,v$ form the required TVCC. On the other hand, no matter which blue edge we chose to be $e$, some TVCC $A(u,v)$ can always be embedded so that $\{u,v\}\mapsto e$. For instance, if we take $e=\{u,w_3\}$, then the TVCC formed by $u,w_4,w_5,\ldots,w_{12},w_{13},v$ can be embedded by
    \begin{align*}
        &u\mapsto u,\quad v\mapsto w_3,\quad w_8\mapsto v,\quad 
        w_i\mapsto z_i\quad \forall 4\leq i\leq 13,\text{ }i\ne 8.\qedhere
    \end{align*}
\end{proof}

\subsection{Upper bounds for undirected graphs}

In \cref{forestLB} we showed $R^p_{2/3}(\textsc{Sub}(H))=\Omega(\text{ex}(n,H)/p)$ when $H$ is a forest that is not a matching. If $H$ is a matching, though, then a very simple algorithm exists.

\begin{proposition}
    Let $H$ be a matching with $\ell$ edges,  then $D^1(\textsc{Sub}(H))=\tilde O(1)$.
\end{proposition}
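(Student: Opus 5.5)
The plan is to keep a bounded-size greedy maximal matching of the stream. In one pass we maintain a set $M$ of vertex-disjoint edges, initially empty. When an edge $\{x,y\}$ arrives, we add it to $M$ if $|M|<\ell$ and neither $x$ nor $y$ is covered by $M$; otherwise we discard it. We stop adding once $|M|=\ell$. At the end we output $M$ if $|M|=\ell$, and report that no copy of $H$ exists otherwise. The memory is at most $\ell$ edges, that is $2\ell$ vertex labels, together with a counter. Since $\ell=O(1)$, this is $O(\log n)=\tilde O(1)$ bits. The algorithm is deterministic and uses a single pass.

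For correctness, whenever we output $M$ it consists of $\ell$ pairwise disjoint edges of $G$, so it is a copy of $H$. Now suppose we end with $|M|<\ell$. Then no edge was ever rejected because of the size cap. Every discarded edge therefore met a vertex already covered by $M$, so $M$ is a maximal matching of $G$.

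The only point requiring an argument is the converse: if $G$ contains $\ell$ disjoint edges, then the greedy matching cannot end smaller than $\ell$. I would prove it by the standard charging argument. Let $N$ be any matching of size $\ell$ in $G$. By maximality, every edge of $N$ shares an endpoint with some edge of $M$. Each edge of $M$ has two endpoints, and each endpoint lies in at most one edge of $N$, so $|N|\le 2|M|$. This only shows $|M|\ge \ell/2$, which is not enough, so the simple greedy rule needs repair.

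I would first repair it by enlarging the buffer. Run the greedy rule with cap $2\ell$ instead of $\ell$, and also store every edge that touches a vertex of $M$, but only up to $2\ell+1$ such edges per matched vertex. This costs $O(\ell^2)$ edges, still $\tilde O(1)$ space. If $|M|$ reaches $\ell$ we are done. Otherwise $M$ is maximal with fewer than $\ell$ edges, so every edge of $G$ touches one of the at most $2\ell$ matched vertices. Hence the stored subgraph contains all edges among a vertex cover $V(M)$ of size less than $2\ell$, plus, for each vertex of $V(M)$, up to $2\ell+1$ of its edges to outside vertices.

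It remains to show that this stored subgraph contains an $\ell$-matching whenever $G$ does. Take a maximum matching $N$ of $G$. Every edge of $N$ has an endpoint in $V(M)$. Replace each edge of $N$ going from $v\in V(M)$ to an outside vertex by a stored edge from $v$ to an outside vertex that is not used elsewhere in $N$. Such an edge exists because $v$ kept $2\ell+1$ edges to outside vertices and at most $2\ell$ of those endpoints are blocked. If $v$ kept fewer than $2\ell+1$ such edges, then all its edges are stored and no replacement is needed. This exchange step is the main obstacle, and I would verify it carefully. Once it holds, offline brute force on the $O(1)$-size stored graph finds a copy of $H$, which gives $D^1(\textsc{Sub}(H))=\tilde O(1)$.
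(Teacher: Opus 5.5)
Your plan is sound and close in spirit to the paper's, but one step is false as written. You assert that the stored subgraph ``contains all edges among $V(M)$'', and your exchange only repairs edges of $N$ with exactly one endpoint in $V(M)$. In one pass the per-vertex quota is enforced when an edge arrives, before $M$ is final, so the assertion fails. Take $\ell=3$ and let $\{a,b\}$ enter $M$. Let $2\ell+1$ edges $\{a,c_i\}$ exhaust $a$'s quota. Then let $\{a,x\}$ arrive while $x$ is still unmatched; it is dropped, since only $a$ can be charged. Finally let $\{x,y\}$ enter $M$. Now $\{a,x\}$ joins two vertices of the final $V(M)$ yet is not stored. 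An edge between two matched vertices whose quotas are both full is lost in the same way. For the same reason, ``if $v$ kept fewer than $2\ell+1$ edges then all its edges are stored'' holds only for edges to vertices outside the final $V(M)$. So an edge of $N$ inside $V(M)$ may be missing, and your argument does not treat it.

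The repair is your own counting, applied to every edge. Suppose the final $|M|<\ell$ and $e$ is an unstored edge. Then $e$ did not enter $M$, so at least one endpoint was matched when $e$ arrived. If $e$ is charged to any matched endpoint with spare quota, then every matched endpoint $v$ of $e$ already held $2\ell+1$ stored edges to distinct vertices, none of them the other endpoint of $e$, since the graph is simple. Choose an $\ell$-matching $N$ of $G$ maximizing the number of stored edges. If some $e=\{v,x\}\in N$ is unstored, the other $\ell-1$ edges of $N$ cover only $2\ell-2$ vertices. So some stored neighbor $y$ of $v$ is uncovered, and $(N\setminus\{e\})\cup\{\{v,y\}\}$ contradicts maximality. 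Thus $N$ is stored.

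In this form you need neither the internal/external split, nor ``outside'' replacement vertices, nor the cap $2\ell$ on $M$ (cap $\ell$ suffices). The maximal matching serves only to bound space: at most $2\ell$ matched vertices, each with at most $2\ell+1$ stored edges.

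This is the same exchange principle as in the paper. The paper stores an edge iff both endpoints have stored degree below $2\ell-1$, so every dropped edge has an endpoint with $2\ell-1$ stored neighbors. The difference is the space bound. The paper maintains no matching and instead uses that an $\ell$-matching-free graph of maximum degree at most $2\ell-1$ has at most $(\ell-1)(4\ell-3)$ edges, stopping once the buffer exceeds this. You replace that counting fact by the vertex-cover bound $|V(M)|<2\ell$ coming from maximality of $M$.
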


\begin{proof}
    If some $v\in V(G)$ has degree at least $2\ell$, then removing one of its incident edges does not affect the existence of $H$ in $G$. Indeed, if there is a copy $H'$ of $H$ in $G$ and $v\in V(H')$, then $N(v)\setminus{V(H')}\ne \emptyset$, so there will still be a copy of $H$ no matter which incident edge is removed.
    
    On the other hand, a graph with maximum degree $2\ell-1$ can have at most $(\ell-1)(4\ell-3)$ edges if it is $H$-free. This is because in such a graph, every edge is incident to at most $(4\ell-4)$ other edges, so if there were more than $(\ell-1)(4\ell-3)$ edges, then we can find a copy of $H$ by repeating selecting an edge and then mark the incident edges as non-selectable.
    
    The following algorithm works due to the two facts above. We retain a graph $G'$, initialized to $\emptyset$. On each input edge $e\in E(G)$, we store $e$ into $G'$ if and only if this keeps the vertices in $G'$ to all have degree less than $2\ell$. During this process, if at any point $|E(G')|>(\ell-1)(4\ell-3)$, then we know $H\subseteq G'$ and can find one. Otherwise, at the end of the stream we still have $|E(G')|\leq (\ell-1)(4\ell-3)$, and then we examine whether $H\subseteq G'$, which will be equivalent to whether $H\subseteq G$.
\end{proof}

As stated in the end of \cref{sec:section3}, we have the following open question on upper bounds.

\begin{Question}
    Let $H$ be a fixed graph. Does $R^p_{2/3}(\textsc{Sub}(H))=\tilde O(\mathrm{ex}(n,H)/p)$ for $p=\mathrm{poly}(n)$?
\end{Question}

\noindent Allowing randomness and $p=\text{poly}(n)$ passes, 
the following shows a simple $o(\text{ex}(n,H))$-space randomized algorithm, but its space complexity is still far from  $\Omega(\text{ex}(n,H)/p)$.

\begin{proposition}\label{bipartite_upperbound}
    Let $H$ be a fixed graph whose minimum vertex cover has size $\tau(H)$, then for all $p=\omega(1)$ we have 
    \begin{align*}
       R^{p^{\tau(H)}}_{2/3}(\textsc{Sub}(H))=\tilde O(\mathrm{ex}(n,H)/p+n). 
    \end{align*}
    
\end{proposition}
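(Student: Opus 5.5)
Let $\tau=\tau(H)$ and fix a minimum vertex cover $\{x_1,\ldots,x_\tau\}$ of $H$. The plan is to guess the images of the cover vertices of a copy of $H$, in the following sense: split $V(G)$ into $p$ blocks $V_1,\ldots,V_p$ of size about $n/p$. For every tuple $(j_1,\ldots,j_\tau)\in[p]^\tau$ we run one phase, using $O(1)$ passes, that decides whether $G$ contains a copy of $H$ with $x_i$ mapped into $V_{j_i}$ for all $i$. Over all tuples this gives $O(p^\tau)$ passes, which matches the bound in the statement after rescaling $p$ by a constant.

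Within a phase, let $U=V_{j_1}\cup\cdots\cup V_{j_\tau}$, so $|U|=O(\tau n/p)$. Since $x_1,\ldots,x_\tau$ cover $E(H)$, every edge of a copy of $H$ in this phase has an endpoint in $U$. It therefore suffices to look at the subgraph $G_U$ consisting of the edges of $G$ with at least one endpoint in $U$. We run the naive Tur\'an-number algorithm on $G_U$. We store edges of $G_U$ until the number stored exceeds $\mathrm{ex}(n,H)$, at which point we know $H\subseteq G$. We cannot certify that this is a copy of the intended type, but any copy of $H$ suffices for correctness. The difficulty is that $G_U$ may have up to $|U|\cdot n=\Theta(n^2/p)$ edges. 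Storing $\mathrm{ex}(n,H)+1$ edges from $G_U$ also costs $\mathrm{ex}(n,H)$ space, not $\mathrm{ex}(n,H)/p$, so this threshold has to be refined.

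The refinement uses the fact that a graph with vertex cover inside $U$ and no copy of $H$ is sparse relative to $|U|$. We would apply a Kővári--Sós--Turán / dependent random choice type bound for $H$-free graphs in which one side is $U$. Concretely, we aim for the estimate
\begin{align*}
|E(G_U)| \le \tilde O\bigl(\mathrm{ex}(n,H)\cdot |U|/n + n\bigr)
\end{align*}
whenever $G_U$ is $H$-free. This can be obtained by averaging: take a random set $W$ of $n$ vertices containing $U$ and note that $\mathrm{ex}$ is superadditive in the appropriate sense. This averaging step is where the randomness and the additive $n$ term come in. Degrees to vertices outside $U$ are handled by keeping, for each vertex outside $U$, only $O(|V(H)|)$ neighbours in $U$, drawn with $\ell_0$-samplers as in Lemma~\ref{lem:kcore-certificate}. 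This costs $\tilde O(n)$ space, and by a greedy-embedding argument it preserves every copy of $H$ in which that vertex is a non-cover vertex. With both ingredients in place, each phase stores $\tilde O(\mathrm{ex}(n,H)/p+n)$ edges.

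The main obstacle is the sparsity estimate for $H$-free graphs with a small cover set $U$. The naive averaging argument only gives $\mathrm{ex}(n,H)\cdot|U|/n$ up to constants for the edges inside $U\times V$ after the sampler sparsification. One has to check that the sparsified graph, in which every outside vertex has bounded degree into $U$, still loses no copy. This is why $p=\omega(1)$ is assumed: the additive and boosting losses, including the $O(\log n)$ repetitions needed to amplify the success probability of the samplers to $2/3$ overall, are absorbed into the $\tilde O$.
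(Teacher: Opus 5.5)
Your proposal has a genuine gap: the per-phase space budget rests on the estimate $|E(G_U)|\le\tilde O(\mathrm{ex}(n,H)|U|/n+n)$ for $H$-free $G_U$, and this estimate is false in general.

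\textbf{Counterexample.} Take $H=C_6$, so $\tau(H)=3$ and $\mathrm{ex}(N,C_6)=\Theta(N^{4/3})$. Let $G$ be the point--line incidence graph of a generalized quadrangle of order $(q,q^2)$. It has $\Theta(q^4)$ points, $\Theta(q^5)$ lines and $\Theta(q^6)$ edges. It has girth $8$, so it is $C_6$-free. Let $U$ be the set of points, with $N=\Theta(q^5)$ and $p=\Theta(q)=\Theta(N^{1/5})$, which matches your block sizes. Every edge touches $U$, yet $|E(G)|=\Theta(N^{6/5})$. Your bound predicts only $\tilde O(N^{4/3-1/5}+N)=\tilde O(N^{17/15})$. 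So edge counts of $H$-free graphs with a small covering side need not scale like $\frac{|U|}{n}\mathrm{ex}(n,H)$, and the threshold rule of a phase is unsound. No averaging or superadditivity argument can prove the estimate, since it is false.

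\textbf{The sampler step also fails.} In a copy of $H$, a non-cover vertex must be adjacent to the \emph{specific} images of its neighbours, all of which are cover vertices. Keeping $O(|V(H)|)$ arbitrary neighbours in $U$ per outside vertex therefore destroys copies as soon as $H$ has a cycle; already for $C_4$, $\phi(x_2)$ must see both $\phi(x_1)$ and $\phi(x_3)$. The min-degree argument of Lemma~\ref{lem:kcore-certificate} works only for trees grown from a root, where any fresh neighbour will do.

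\textbf{A fixed partition also fails.} Even for sparse $G$, a single fixed block can carry almost all the edges, so the per-phase space is not bounded by $\mathrm{ex}(n,H)/p$.

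\textbf{Missing idea.} The space bound must come from the \emph{global} sparsity of $G$ together with \emph{randomness} in the choice of $U$, not from a structural property of $H$-free graphs. One may assume $|E(G)|\le \mathrm{ex}(n,H)+1$; otherwise restrict to the first $\mathrm{ex}(n,H)+1$ stream edges, which already contain a copy. The paper then does the following in each pass:
\begin{itemize}
\item It puts each vertex into $S$ independently with probability $1/p$ and stores up to $4\mathrm{ex}(n,H)/p$ edges meeting $S$.
\item Fix a copy of $H$. Its $\tau(H)$ cover images all lie in $S$ with probability $p^{-\tau(H)}$.
\item Conditioned on this, the expected number of edges meeting $S$ is at most $\tau(H)n+2\mathrm{ex}(n,H)/p$.
\item By Markov's inequality, a pass captures the copy with probability at least $1/(4p^{\tau(H)})$, so $\tilde O(p^{\tau(H)})$ passes suffice.
\end{itemize}
Your block-enumeration framework can be repaired the same way: use a uniformly random partition into $p$ blocks, and drop both the sparsity lemma and the samplers.
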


\begin{proof}
    Let $\{v_1,\ldots,v_{\tau(H)}\}$ be a vertex cover of $H$. We may assume that $\text{ex}(n,H)/p=\Omega(n)$. 
    We declare a set $E$ to store at most $4\text{ex}(n,H)/p$ edges and declare a vertex set $S$, both initialized to $\emptyset$ in the beginning of each pass.
    In each pass, we first form $S$ by choosing each $v\in V(G)$ with probability $1/p$ independently. Then when each $e\in E(G)$ arrives, we store $e$ in $E$ if and only if $e\cap S\ne \emptyset$ and there are still space in $E$. At the end of the pass, if in $E$ we find a copy of $H$, then we output it and stop. Otherwise we go on to the next pass. At the end of the algorithm, if we have never find an $H\subseteq E$, then we output $H\nsubseteq G$.

    If $H\nsubseteq G$, then this algorithm is always correct. Conversely, let $X$ denote the number of edges incident to $S$, then in each pass the  probability that $E$ captures an $H$ is at least
    \begin{align*}
        \mathbb P(v_1,\ldots,v_{\tau(H)}\in S)\mathbb P\Big(X\leq 4\text{ex}(n,H)/p\Big| v_1,\ldots,v_{\tau(H)}\in S \Big).
    \end{align*}
    The conditional expectation of $X$ being
    \begin{align*}
        \mathbb E\Big[X\Big| v_1,\ldots,v_{\tau(H)}\in S \Big]\leq n\tau(H)+\sum_{e\in E(G)}\mathbb P(e\cap S\ne\emptyset)\leq n\tau(H)+\frac{2\text{ex}(n,H)}{p},
    \end{align*}
    by Markov's inequality the success probability is at least
    \begin{align*}
        &\mathbb P(v_1,\ldots,v_{\tau(H)}\in S)\Big[1-\mathbb P\Big(X> 4\text{ex}(n,H)/p\Big| v_1,\ldots,v_{\tau(H)}\in S \Big)\Big]\\
        &\geq \frac{1}{p^{\tau(H)}}\Big[1-\frac{\mathbb E[X|v_1,\ldots,v_{\tau(H)}\in S]}{4\text{ex}(n,H)/p}\Big]\geq \frac 1{4p^{\tau(H)}},
    \end{align*}
    so with $\tilde O(p^{\tau(H)})$ passes we success with high probability.
\end{proof}
\noindent

In particular, when $H$ is an even cycle $C_{2\ell}$, there is a randomized streaming algorithm that uses $\tilde{O}(p^{\ell})$ passes and
$\tilde{O}\!\left(\frac{\mathrm{ex}(n,C_{2\ell})}{p}+n\right)$ space and, with high probability, either finds a copy of $C_{2\ell}$ in an input
$n$-vertex graph $G$ or correctly reports that no such copy exists.
If we set $p := \mathrm{ex}(n,C_{2\ell})/n$, then the space bound is $\tilde{O}(n)$, while the number of passes becomes
\[
\tilde{O}\!\left(\left(\frac{\mathrm{ex}(n,C_{2\ell})}{n}\right)^{\ell}\right)
= \tilde{O}\!\left(\frac{\mathrm{ex}(n,C_{2\ell})^{\ell}}{n^{\ell}}\right).
\]
By \cite{BondySimonovits} we have $\mathrm{ex}(n,C_{2\ell}) = O\!\left(n^{1+1/\ell}\right)$, so the above is an $\tilde{O}(n)$-pass, $\tilde{O}(n)$-space randomized streaming algorithm finding $C_{2\ell}$.
This naturally raises the question of whether these tradeoffs are tight (up to polylogarithmic factors).

\begin{Question}\label{even_cycle_tightness_p_version}
Is Proposition~\ref{bipartite_upperbound} tight for $H=C_{2\ell}$?
Equivalently, for $p = O(\mathrm{ex}(n,C_{2\ell})/n)$, do we have $R^{p^\ell}_{2/3}(\textsc{Sub}(C_{2\ell}))=\Theta(\mathrm{ex}(n,C_{2\ell})/p)$?
\end{Question}
A well-known conjecture of Erd\H{o}s \cite{ErdosSimonovits} asks whether $\mathrm{ex}(n,C_{2\ell})=\Theta(n^{1+1/\ell})$ for all $\ell$.
This is open for all $\ell$ except $\ell=2,3,5$.
Suppose there is a number $\ell$ such that $\text{ex}(n,C_{2\ell})  = \tilde{o}(n^{1+1/\ell})$, then Proposition~\ref{bipartite_upperbound} implies a $\tilde{o}(n)$-pass $\tilde{O}(n)$-space algorithm for the problem. 
Therefore, a positive answer to the following question would surprisingly imply the
conjectured order of magnitude for $\mathrm{ex}(n,C_{2\ell})$, up to at most polylogarithmic factors.

\begin{Question}\label{even_cycle_tightness_n_version}
Does every randomized streaming $\textsc{Sub}(C_{2k})$ algorithm using $\tilde{O}(n)$ space require $\tilde{\Theta}(n)$ passes to success with probability at least $2/3$?
\end{Question}
In particular, if Question~\ref{even_cycle_tightness_p_version} holds, then Question~\ref{even_cycle_tightness_n_version} becomes (up to
polylogarithmic factors) equivalent to the Erd\H{o}s lower bound conjecture on $\mathrm{ex}(n,C_{2k})$. It is also worth noting that it is possible for Question~\ref{even_cycle_tightness_n_version} to have a positive answer even if
Question~\ref{even_cycle_tightness_p_version} does not. Moreover, it may also happen that both questions are false while the Erd\H{o}s conjecture
remains true.

\subsection{Multi-pass behavior of NWO bipartite graphs}

\begin{Question}\label{Q6}
    Let $\vec H$ be an NWO bipartite graph. If $\vec H$ has an NWO component containing a cycle, do we have $R^p_{2/3}(\textsc{Sub}(\vec H))=\Omega(n^2/p)$?
\end{Question}

In the multi-pass setting, \cref{thmOMP} shows that an NWO bipartite graph might either be a hard instance with $R^p_{2/3}(\textsc{Sub}(\vec H))=\Omega(n^2/p)$ or a simple instance with $D^{p}(\textsc{Sub}(\vec H))=\tilde O(n)$ for some constant $p$, where all NWO forest belongs to the latter and the efficient algorithm is given by \cref{forest}. One naturally asks if there is also a dichotomy under multi-pass.

Like how we proved the single-pass \cref{NWOFandWO}, if $\vec H$ is the union of an NWO forest and a WO bipartite graph, then $\vec H$ is a simple instance in multi-pass by adding $\tilde O(n^{2-1/\Delta'(H)})$ extra space to the algorithm in \cref{forest}. We conjecture that all easy instances are given this way. That is, whenever $\vec H$ has an NWO component containing a cycle, then $\vec H$ should be a hard instance. For example, the following is a hard instance.

\begin{proposition}
    Let $\vec H$ be an NWO bipartite graph defined by
    \begin{align*}
        V(\vec H)=\{v_1,v_2,v_3,v_4,v_5\},\quad E(\vec H)=\{(v_1,v_2),(v_3,v_2),(v_1,v_4),(v_3,v_4),(v_4,v_5)\},
    \end{align*}
    then $R^p_{2/3}(\textsc{Sub}(\vec H))=\Omega(n^2/p)$.
\end{proposition}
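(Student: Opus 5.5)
We would prove this by a reduction from $\mathrm{DISJ}_{n^2}$, in the spirit of the proof of \cref{kstLB}. The key observation is that $v_4$ is the unique NWO vertex of $\vec H$: it has in-arcs from $v_1,v_3$ and an out-arc to $v_5$. So in any copy of $\vec H$ the image of $v_4$ must be an NWO vertex of the host graph. We therefore build a host graph in which the NWO vertices are few and rigidly controlled, and in which the only varying arcs, namely the ones encoding the inputs, are arcs into sinks of a well-oriented part.

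\textbf{Construction.} Let Alice and Bob hold $S_1,S_2\subseteq[n^2]$. The vertex set is $\bigsqcup_{i=1}^n\{a_i,c_i,d_i,e_i\}\sqcup\{b_1,\ldots,b_n\}$. The arc set consists of the following.
\begin{itemize}
\item The fixed arcs $(a_i,d_i)$, $(c_i,d_i)$ and $(d_i,e_i)$ for every $i\in[n]$.
\item Alice's arcs $(a_i,b_j)$ for every pair with $(i-1)n+j\in S_1$.
\item Bob's arcs $(c_i,b_j)$ for every pair with $(i-1)n+j\in S_2$.
\end{itemize}
The graph has $\Theta(n)$ vertices, and the players can simulate a $p$-pass streaming algorithm on it. Hence it suffices to show that $\vec H\subseteq\vec G$ if and only if $S_1\cap S_2\neq\emptyset$, which yields the bound $\Omega(n^2/p)$ from the $\mathrm{DISJ}_{n^2}$ lower bound.

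\textbf{Completeness.} Suppose $(i-1)n+j\in S_1\cap S_2$. Then the map
\[
v_1\mapsto a_i,\quad v_3\mapsto c_i,\quad v_2\mapsto b_j,\quad v_4\mapsto d_i,\quad v_5\mapsto e_i
\]
realizes all five arcs of $\vec H$.

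\textbf{Soundness.} This is the step requiring the most care, though it should be short. The vertices $a_i,c_i$ have in-degree $0$, the vertices $b_j$ and $e_i$ have out-degree $0$, so the only NWO vertices of $\vec G$ are the $d_i$. Take any copy of $\vec H$. Its vertex $v_4$ is NWO in $\vec H$, so it must map to some $d_i$. The two distinct in-neighbours $v_1,v_3$ of $v_4$ must then map to the only in-neighbours of $d_i$, namely $a_i$ and $c_i$ in some order. Next, $v_2$ is a common out-neighbour of $v_1$ and $v_3$ distinct from $v_4$. The out-neighbours of $a_i$ are $d_i$ together with $b_j$ for $(i-1)n+j\in S_1$, and similarly for $c_i$ with $S_2$. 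So $v_2$ must map to some $b_j$ with $(i-1)n+j\in S_1\cap S_2$. Thus a copy exists only if the sets intersect. The role of the pendant arc $(v_4,v_5)$ is exactly to force $v_4$ onto a $d_i$; without it the $4$-cycle could be embedded entirely inside the well-oriented input arcs, which is why the pendant is essential.
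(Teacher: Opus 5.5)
Your proposal is correct and is essentially the paper's proof: the paper uses the same reduction from $\mathrm{DISJ}_{n^2}$, with vertices $v_1^{(i)},v_2^{(j)},v_3^{(i)},v_4^{(i)},v_5^{(i)}$ playing the roles of your $a_i,b_j,c_i,d_i,e_i$. It encodes indices as $(j-1)n+i$ rather than $(i-1)n+j$, which is immaterial. Its soundness argument is the same as yours: the $v_4^{(i)}$ are the only NWO vertices of the host graph, so the image of $v_4$ is forced, and the $4$-cycle through it can only close via some $v_2^{(j)}$ whose index lies in $S_1\cap S_2$; your write-up simply spells out these steps more explicitly.
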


\begin{proof}
    Given an instance of $\mathrm{DISJ}_{n^2}$, where Alice and Bob hold $S_1,S_2\subseteq [n^2]$, respectively, we define the oriented reduction graph $\vec G_{\vec H}$ to be on the vertex set
    \begin{align*}
        V(\vec G_{\vec H})=\bigsqcup_{i=1}^n\Big\{v_1^{(i)},v_2^{(i)},v_3^{(i)},v_4^{(i)},v_5^{(i)}\Big\}
    \end{align*}
    and have the arc set $E(\vec G_{\vec H})=E_1\cup E_2\cup E_3$, where
    \begin{align*}
        &E_1=\{(v_1^{(i)},v_2^{(j)}):i,j\in [n], (j-1)n+i\in S_1\},\\
        &E_2=\{(v_3^{(i)},v_2^{(j)}):i,j\in [n], (j-1)n+i\in S_2\},\\
&E_3=\bigsqcup_{i=1}^n\Big\{(v_1^{(i)},v_4^{(i)}),(v_3^{(i)},v_4^{(i)}),(v_4^{(i)},v_5^{(i)})\Big\}.
    \end{align*}
    If $(j-1)n+i\in S_1\cap S_2$ for some $i,j\in [n]$, then a copy of $\vec H$ is formed by 
    \begin{align*}
        v_1^{(i)},v_2^{(j)},v_3^{(i)},v_4^{(i)},v_5^{(i)}.
    \end{align*}
    Conversely, if there is some copy $\vec H'$ of $\vec H$ in $\vec G_{\vec H}$, then the NWO vertex in $\vec H'$ must be $v_4^{(i)}$ for some $i\in [n]$ since these are the only NWO vertices in $\vec G_{\vec H}$. This $v_4^{(i)}$ must be on a $\vec C_4$ in $\vec H'$, but this can only be formed by $v_4^{(i)},v_1^{(i)},v_2^{(j)},v_3^{(i)}$ for some $j\in [n]$, and hence $(j-1)n+i\in S_1\cap S_2$.
\end{proof}

In general, we suspect that a proof like \cref{pure} might solve Question~\ref{Q6}. That is, perhaps for each cycle we can record the sum of the distances from each NWO vertex to it, and then carefully pick one minimizing this distance to have some of its arcs encoding the inputs of $\textsc{Set-Disjointness}$ or even $\text{Multi-Disjointness}$. However, the obstacle is that for such reductions we often need to introduce crossing arcs between different vertex copies, like in the proof of \cref{NBlower}, but this affects the proof of \cref{pure}, where we count the number of vertices on a cycle that can be an endpoint of a desired shortest path from an NWO vertex. More ideas might be needed.

\subsection{Deterministic single-pass upper bounds for NWO bipartite graphs}

\begin{Question}\label{Q7}
   Let $\vec H$ be an oriented tree with at most one NWO vertex. Do we have
   \begin{align*}
       D^1(\textsc{Sub}(\vec H))=\tilde O(n)?
   \end{align*}
\end{Question}

In \cref{novel}, we use $\ell_0$-samplers to prove $R^1_{2/3}(\textsc{Sub}(\vec H))=\tilde O(n)$ for oriented trees with at most one NWO vertex, leading to the single-pass dichotomy for $\textsc{Sub}(\vec H)$. However, this means that our algorithm cannot be derandomized directly, and one might ask if a deterministic version as in Question~\ref{Q7} actually exists.
We have examples of such $\vec H$ with $D^1(\textsc{Sub}(\vec H))=\tilde O(n)$, where the algorithms are based on color-coding. Below is one of the simplest examples. However, the general case remains open. 

\begin{proposition}\label{special P4}
    Let $\vec H$ be the oriented path $v_1\to v_2\to v_3\leftarrow v_4$, then $D^1(\textsc{Sub}(H))=\tilde O(n)$.  
\end{proposition}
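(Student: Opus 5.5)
The plan is to give a direct deterministic one-pass algorithm rather than color-coding. It rests on the observation that $\vec H=v_1\to v_2\to v_3\leftarrow v_4$ is exactly an arc $(b,c)$ of $\vec G$ such that $b$ has some in-neighbor $a$ and $c$ has an in-neighbor $d\notin\{a,b\}$. Since $\vec G$ is oriented, $a\ne c$ automatically.

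The algorithm maintains in $\tilde O(n)$ space:
\begin{itemize}
\item a counter $\deg^-(x)$ for every vertex $x$;
\item the set $F$ of the first four in-arcs of every vertex, which has size at most $4n$;
\item the set $L$ of ``late'' arcs $(x,y)$ that arrive when $x$ already has positive in-degree, kept with a cap of $4n$;
\item for every vertex $b$ a pointer $\pi(b)$, set to $c$ when an arc $(b,c)$ arrives while $\deg^-(b)=0$ and $\deg^-(c)\ge 3$ already holds.
\end{itemize}
On arrival of $(a,b)$, if $\pi(b)=c$ is set we immediately output $a,b,c$ and some $d$. The vertex $d$ is obtained by taking the at least three in-neighbors of $c$ preceding $b$ that are stored in $F$, and choosing one that is not $a$. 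This output is valid because in-degrees only grow.

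Next I will justify the cap on $L$. If $\vec G$ is $\vec H$-free, then every arc $(x,y)$ with $\deg^-(x)\ge 1$ must have $\deg^-(y)\le 2$; otherwise $y$ has an in-neighbor other than $x$ and other than the in-neighbor of $x$. Hence each $y$ receives at most two such arcs, and $|L|\le 2n$. So if $L$ ever exceeds $2n$, I will argue that the stored arcs in $F\cup L$ already contain a copy of $\vec H$ and search them offline. Concretely, some $y$ receives three arcs of $L$, and the in-arcs of their tails are witnessed in $F$. At the end of the stream the algorithm brute-forces the stored graph $F\cup L$, augmented by the counters: an arc $(b,c)$ in $F\cup L$ with $b$ having a stored in-arc and $\deg^-(c)\ge 3$ certifies $\vec H$.

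For completeness, fix a copy $a\to b\to c\leftarrow d$. The argument splits according to when the arc $(b,c)$ arrives, relative to the first in-arc of $b$ and the third in-arc of $c$:
\begin{itemize}
\item If $b$ already has an in-arc when $(b,c)$ arrives, then $(b,c)\in L$.
\item If not, and $\deg^-(c)\ge 3$ at that moment, then $\pi(b)$ is set and the later in-arc of $b$ triggers detection.
\item Otherwise $(b,c)$ is among the first three in-arcs of $c$, so $(b,c)\in F$, and $b$'s first in-arc is also in $F$.
\end{itemize}
In the last case the final check needs $c$ to have an in-neighbor outside $\{a',b\}$, where $a'$ is the stored in-neighbor of $b$; this holds if $\deg^-(c)\ge 3$. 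When $\deg^-(c)=2$, both in-arcs of $c$ are stored, and every in-arc of $b$ among its first four is stored, so an exact check on $F$ suffices. The only leftover subcase is $b$ having more than four in-neighbors, in which case one of them avoids $d$ and is stored.

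The main obstacle is making this timing case analysis airtight. In particular I expect trouble in showing that the stored $F$-arcs always supply an $a$ distinct from the needed $d$ when in-degrees are small (two or three), and in the overflow argument for $L$. If the four-arc buffer proves insufficient, I would first try enlarging it to a larger constant, which is harmless for the $\tilde O(n)$ bound.
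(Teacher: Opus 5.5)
Your argument is correct, and it takes a genuinely different route from the paper.

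\textbf{The paper's route.} It uses deterministic color-coding. It closes the $O(\log^2 n)$ Schmidt--Siegal colorings of Lemma~\ref{ColorCoding} under the $24$ permutations of $[4]$, so that some coloring paints $v_1,\dots,v_4$ with $1,\dots,4$. For each coloring it keeps pointers $\text{IN}_c(w),\text{ON}_c(w)$ and a two-element list $\text{LN}_c(w)$.

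\textbf{Your route.} You avoid coloring altogether. You use the exact characterization of a copy as an arc $(b,c)$ with $a\in N^-(b)$ and $d\in N^-(c)\setminus\{a,b\}$. You then combine in-degree counters, short in-arc buffers, the $2n$ bound on late arcs in $\vec H$-free graphs, and a case split on when $(b,c)$ arrives.

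\textbf{The steps you flag as risky are already settled by what you wrote.}
\begin{itemize}
\item If the final $\deg^-(c)\ge 3$, at least three in-arcs of $c$ lie in $F$. So one of them avoids $\{a',b\}$, whichever stored in-neighbor $a'$ of $b$ you use.
\item If $\deg^-(c)=2$, both in-arcs of $c$ are in $F$, and among the first two in-arcs of $b$ one has tail $\neq d$. This argument is timing-independent, so it also covers your first case.
\item On overflow, the first in-arc of $x_1$ precedes $(x_1,y)$, so it is in $F$, and one of $x_2,x_3$ differs from its tail.
\end{itemize}
So no larger buffer is needed. You could even drop $L$ by reporting immediately when $(b,c)$ arrives while $b$ already has an in-arc and $c$ already has two. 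Every remaining arc not covered by $\pi$ is then among the first three in-arcs of $c$.

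\textbf{What each approach buys.} Your route gives transparency, robustness to arrival order, and no coloring overhead. The paper's route is meant as a template that might extend to other trees with one NWO vertex (Question~\ref{Q7}).

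The order issue you treat explicitly is real. As written, the paper sets $\text{ON}_c(u)$ only for the vertex that enters $\text{LN}_c(v)$ second. Take the stream $(b,c),(d,c),(a,b)$ of a lone copy $a\to b\to c\leftarrow d$. No coloring ever sets $\text{ON}_c(b)$, so the paper's algorithm misses this copy.
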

\begin{proof}
    By Lemma~\ref{ColorCoding}, we can take $L=O(\log^2 n)$ colorings $\chi_1,\ldots,\chi_L:V(\vec G)\to [4]$ such that if $\vec G$ contains a copy of $\vec H$, then that copy is colorful under one of the colorings. In particular, if we consider all possible permutations $\sigma:[4]\to [4]$ of the four colors and compose $\sigma\circ \chi_1,\ldots,\sigma\circ \chi_L$, then we get $24L$ colorings $\chi_1,\ldots,\chi_{24L}$ such that if $\vec G$ contains a copy of $\vec H$, then for that copy under one of the colorings $\chi_c$ we have
    \begin{align*}
        \chi_c(v_1)=1,\quad \chi_c(v_2)=2,\quad \chi_c(v_3)=3,\quad \chi_c(v_4)=4.
    \end{align*}
    
    Consider each $c\in[24L]$ and $w\in V(\vec G)$. If $\chi_c(w)\in \{2,4\}$, then we declare two variables $\text{IN}_c(w),\text{ON}_c(w)\in V(\vec G)$, initialized to $w$, while if $\chi_c(w)=3$, then we declare a list $\text{LN}_c(w)\subseteq V(\vec G)$ of size at most $2$, initialized to $\emptyset$. This uses space $\tilde O(n)$.
    
    Now for each input arc $\vec e=(u,v)$ in the stream, we do the following for all $c\in [24L]$:
    \begin{enumerate}
        \item [(1)]  If $\chi_c(u)=1$, $\chi_c(v)=2$, and $\text{IN}_c(v)=v$, then we set $\text{IN}_c(v)=u$.
        \item [(2)] If $\chi_c(u)\in\{2,4\}$, $\chi_c(v)=3$ and $|\text{LN}_c(v)|\leq 1$, then we add $u$ to $\text{LN}_c(v)$. Moreover, if this makes $|\text{LN}_c(v)|=2$, then we set $\text{ON}_c(u)=v$.
        \item [(3)] If none of the above holds, then we do nothing.
    \end{enumerate}
    
    At the end, we check if there are some $c\in[24L]$ and
    $w\in V(\vec G)$ with $\chi_c(w)\in\{2,4\}$ such that $\text{IN}_c(w),\text{ON}_c(w)\ne w$.
    If so, then there must be a copy of $\vec H$ given by 
    \begin{align*}
        \text{IN}_c(w)\to w\to \text{ON}_c(w)\leftarrow v,\quad 
        v\in \text{LN}_c(\text{ON}_c(w))\setminus \{w\},
    \end{align*}
    so we output that $\vec H\subseteq \vec G$. If not, then we conclude $\vec H\nsubseteq \vec G$, and this is correct because if a copy of $\vec H$ exists, then under some $c$ will it see the desired color pattern and hence we would have get the certificate $w\in V(\vec G)$ as above.
\end{proof}

%% file: helper.tex
\section{Helper Lemmas}

\newtheorem*{cor}{Corollary 11}

For completeness, we prove Corollary~\ref{Oex} here, which is a direct extension of~\cite{AlonKS03}.

\begin{lemma}[Dependent random choice]\label{DRC}
Let $G$ be a  bipartite graph on parts $V(G)=V_1\sqcup V_2$ with $|V_1|=|V_2|=n/2$ and $d=2|E(\vec G)|/n$. If $a,b,r\in\mathbb N$ and
\begin{align*}
    \frac{d^r}{n^{r-1}}-\binom nr\left(\frac{b-1}{n}\right)^r\geq a,
\end{align*}
then there is some $A_0\subseteq V_1$, $|A_0|\geq a$ such that every $r$ vertices from $A_0$ share at least $b$ common neighbors.
\end{lemma}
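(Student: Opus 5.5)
The plan is the standard dependent random choice argument of Alon, Krivelevich and Sudakov. Pick a multiset $T=(t_1,\ldots,t_r)$ of $r$ vertices of $V_2$, chosen uniformly and independently with repetition. Let $A$ be the set of common neighbours of $T$ in $V_1$, so $A=\{x\in V_1: t_1,\ldots,t_r\in N(x)\}$.

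We first bound $\mathbb E|A|$ from below. A vertex $x\in V_1$ lies in $A$ with probability $(\deg(x)/|V_2|)^r$. Summing over $x$ and applying convexity (power mean / Jensen) gives
\begin{align*}
\mathbb E|A|=\sum_{x\in V_1}\Big(\frac{\deg(x)}{n/2}\Big)^r\ \ge\ |V_1|\Big(\frac{|E(G)|}{|V_1|\cdot n/2}\Big)^r .
\end{align*}
With $d=2|E(G)|/n$ and $|V_1|=|V_2|=n/2$, this is of order $d^r/n^{r-1}$. The normalisation of $d$ and the sizes of the parts should be chosen so that this matches the term $d^r/n^{r-1}$ in the hypothesis. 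If the constants come out off by a power of $2$, I will absorb them by sampling $T$ from $V_2$ of size $n/2$ versus $n$, or by adjusting the definition of $d$.

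Next, call an $r$-subset $R\subseteq V_1$ \emph{bad} if its common neighbourhood has fewer than $b$ vertices. A bad $R$ is contained in $A$ only if every $t_i$ lands in $N(R)$, so $\mathbb P(R\subseteq A)\le \big((b-1)/|V_2|\big)^r$. There are at most $\binom{n}{r}$ such subsets, so the expected number of bad $r$-subsets inside $A$ is at most $\binom nr\big((b-1)/n\big)^r$, again up to the same normalisation.

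By linearity of expectation, there is some choice of $T$ for which $|A|$ minus the number of bad $r$-subsets of $A$ is at least $a$. We then delete one vertex from each bad $r$-subset of $A$ to obtain $A_0$. This leaves $|A_0|\ge a$, and every $r$ vertices of $A_0$ have at least $b$ common neighbours. Since every common neighbour of vertices of $V_1$ lies in $V_2$, the argument takes place entirely inside the bipartite graph. The orientation plays no role here; it is used only later, in \cref{Oex}, when a WO copy of $\vec H$ is embedded greedily.

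The only delicate step is matching the constants in the expectation bound to the exact form of the hypothesis, namely the factors of $2$ coming from $|V_i|=n/2$. The fallback is to run the computation with $|V_2|$ in place of $n$ and note that the stated inequality implies the required one, because $(b-1)/(n/2)$ and $d/(n/2)$ scale by the same factor of $2^r$.
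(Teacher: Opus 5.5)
You are running the same dependent random choice argument as the paper (sample $T$ from $V_2$ with repetition, bound $\mathbb E|A|$ by convexity, delete one vertex from each bad $r$-set), so the strategy is right. The gap is the step you defer. Matching the constants is the only thing this lemma adds to the textbook version, and your proposed resolution does not work as written.

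With $|V_1|=|V_2|=n/2$, convexity gives $\mathbb E|A|\ge \frac n2\bigl(\frac{d}{n/2}\bigr)^r=2^{r-1}\,d^r/n^{r-1}$. So the first term gains only $2^{r-1}$, because the factor $|V_1|=n/2$ costs one power of $2$. Each bad $R$, however, has $\mathbb P(R\subseteq A)\le\bigl(2(b-1)/n\bigr)^r$, a full factor $2^r$. If you then count bad sets by $\binom nr$, as you do, you need $2^{r-1}X-2^rY\ge a$ with $X=d^r/n^{r-1}$ and $Y=\binom nr\bigl((b-1)/n\bigr)^r$. This does not follow from $X-Y\ge a$: already for $r=1$ it asks for $d-2(b-1)\ge a$ given only $d-(b-1)\ge a$. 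So the ``same factor $2^r$'' justification is wrong, since the two terms do not scale alike. The other escape routes you mention are unavailable: $d$ is fixed by the statement, and sampling $T$ from a larger set only decreases $\mathbb E|A|$.

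The fix is one line. Bad sets are $r$-subsets of $V_1$, so there are only $\binom{n/2}{r}$ of them. Moreover $2^r\binom{n/2}{r}\le\binom nr$, because $2(n/2-i)\le n-i$ for every $0\le i<r$. Hence
\[
\mathbb E\bigl[\#\{\text{bad } R\subseteq A\}\bigr]\le\binom{n/2}{r}\Bigl(\frac{2(b-1)}{n}\Bigr)^r\le\binom nr\Bigl(\frac{b-1}{n}\Bigr)^r .
\]
Together with $\mathbb E|A|\ge 2^{r-1}d^r/n^{r-1}\ge d^r/n^{r-1}$, this makes $\mathbb E\bigl[|A|-\#\{\text{bad } R\subseteq A\}\bigr]$ at least the left-hand side of the hypothesis. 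Your deletion step then finishes the proof.

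The paper's write-up is loose at exactly this point. It states $\mathbb P(U\subseteq A)$ as $((b-1)/n)^r$ and sums over $\binom nr$ sets; the suppressed factor $2^r$ is absorbed precisely by this overcount.
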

\begin{proof}
    Let $T$ be a set of $r$ random vertices of $V_2$, chosen uniformly with repetitions. Let $A$ be the set of common neighbors of all vertices in $T$, then by Jensen's inequality,
    \begin{align*}
        \mathbb E[|A|]=\sum_{v\in V_1}\left(\frac{\deg(v)}{n/2}\right)^r=\frac{2^r}{n^r}\sum_{v\in V_2}\deg(v)^r\geq \frac{2^r}{n^{r-1}}\left(\frac{\sum_{v\in V_1}\deg(v)}{n}\right)^r=\frac{d^r}{n^{r-1}}.
    \end{align*}
    On the other hand, define
    \begin{align*}
        Y=\Big\{ W\subseteq A:|W|=r\text{ but the vertices in }W\text{ share }\leq b-1\text{ common neighbors} \Big\}.
    \end{align*}
    If $U\subseteq V_1$ has $|U|=r$ and the vertices in $U$ share $\leq b-1$ common neighbors, then the probability that $U\subseteq A$ is $(b-1/n)^r$, so 
    \begin{align*}
        \mathbb E[|Y|]\leq \binom nr\left(\frac{b-1}{n}\right)^r.
    \end{align*}
    In particular, some choice of $T$ makes
    \begin{align*}
        |A|\geq \frac{d^r}{n^{r-1}},\quad 
        |Y|\leq\binom nr\left(\frac{b-1}{n}\right)^r. 
    \end{align*}
    From $A$, we remove one vertex from each $W\in Y$, resulting in $A_0$, then  
    \begin{align*}
        |A_0|\geq |A|-|Y|=\frac{d^r}{n^{r-1}}-\binom nr\left(\frac{b-1}{n}\right)^r\geq a
    \end{align*}
    and every $r$ vertices from $A_0$ share at least $b$ common neighbors.
\end{proof}

\begin{cor}
    Let $\vec H$ be  WO with respect to the bipartition $V(H)=A(H)\sqcup B(H)$, and
    \begin{align*}
        \Delta'(H)=\min \left\{\max_{v\in A(H)}\mathrm{deg}_H(v),\max_{v\in B(H)}\mathrm{deg}_H(v)\right\},
    \end{align*}
    then $\mathrm{ex}(n,\vec H)=O(n^{2-1/\Delta'(H)})$.
\end{cor}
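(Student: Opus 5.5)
The plan is to copy the Alon--Krivelevich--Sudakov argument, using \cref{DRC}. The new ingredient is a reduction of the oriented host graph to a bipartite graph whose arcs all point one way. We may assume $\Delta'(H)=r$ is attained on the side $B(H)$, i.e.\ every vertex of $B(H)$ has degree at most $r$. The WO orientation either sends all arcs from $A(H)$ to $B(H)$ or all from $B(H)$ to $A(H)$; write $h=|V(H)|$.

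First, reduce to a balanced bipartite host. Let $\vec G$ be an $n$-vertex oriented graph with $Cn^{2-1/r}$ arcs. Take a uniformly random balanced partition $V(\vec G)=V_1\sqcup V_2$. Each arc then goes from $V_1$ to $V_2$ with probability about $1/4$, so some partition keeps at least $\tfrac14 Cn^{2-1/r}$ arcs oriented from $V_1$ to $V_2$. If the WO orientation of $\vec H$ goes from $B(H)$ to $A(H)$, we swap the roles of $V_1$ and $V_2$. Let $G$ be the underlying bipartite graph of the retained arcs. Its average degree is $d\ge \tfrac12 Cn^{1-1/r}$ (up to constants), and every edge is oriented from $V_1$ to $V_2$. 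From now on we embed $A(H)$ into the side that sends arcs to the side receiving $B(H)$, matching the orientation of $\vec H$.

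Next, apply \cref{DRC} with this $r$, with $a=h$ and $b=h$. Then $d^r/n^{r-1}\ge (C/2)^r n$, while $\binom nr((h-1)/n)^r\le h^r/r!$. For $C$ large enough the hypothesis holds, and we obtain $A_0$, of size at least $h$, inside the side where $A(H)$ is to be placed, such that every $r$ vertices of $A_0$ have at least $h$ common neighbours on the other side. (If \cref{DRC} is stated only for $V_1$, we apply it after the swap, or symmetrically.)

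Now embed. Map $A(H)$ injectively into $A_0$. Then embed the vertices of $B(H)$ one at a time. A vertex $y\in B(H)$ has at most $r$ neighbours in $A(H)$, whose images have at least $h$ common neighbours. At most $h-1$ of these are already used, so a fresh image exists.

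Each edge $\{x,y\}$ of $H$ with $x\in A(H)$ is mapped to an edge of $G$ between the two sides. It is therefore an arc of $\vec G$ oriented the same way as in $\vec H$, because all retained arcs go one way and $\vec H$ is WO in the matching direction. Hence $\vec G\supseteq\vec H$, which gives $\mathrm{ex}(n,\vec H)=O(n^{2-1/\Delta'(H)})$.

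The only delicate step is the orientation bookkeeping in the first step: we must make sure that the side chosen for the dependent random choice is the tail side exactly when $A(H)$ is the tail side. This is handled by the freedom to swap the sides. The rest is the standard computation.
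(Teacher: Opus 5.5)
Your proposal is correct and follows the paper's proof essentially step for step. You pass to a balanced bipartite subgraph whose arcs all point one way while keeping a constant fraction of the arcs; you do this with a single random balanced partition instead of the paper's two steps of a balanced bipartite subgraph followed by majority orientation. You then apply dependent random choice with $r=\Delta'(H)$ and embed $B(H)$ greedily.

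There is one harmless slip: from $d\ge \frac{1}{2}Cn^{1-1/r}$ you only get $d^r/n^{r-1}\ge (C/2)^r$, not $(C/2)^r n$. This still suffices, because the subtracted term is at most $h^r/r!$, which is a constant.
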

\begin{proof}
    Assume without loss of generality that $\Delta'(H)=\max_{v\in B(H)}\text{deg}_H(v)$. Let $a=|A(H)|$, $b=|B(H)|$, and $C>0$ be a constant such that
    \begin{align*}
        (2C)^{\Delta'(H)}-\binom n{\Delta'(H)}\left(\frac{b-1}{n}\right)^{\Delta'(H)}\geq a.
    \end{align*}
    We show that every $n$-vertex oriented graph $\vec G_0$ with $|E(G_0)|\geq 5Cn^{2-1/\Delta'(H)}$ contains a copy of $\vec H$. Indeed, such a $\vec G_0$ has a bipartite subgraph $\vec G_1$ on balanced parts $V(\vec G_1)=V_1\sqcup V_2$ with $|E(\vec G_1)|\geq 2Cn^{2-1/\Delta'(H)}$, and then by choosing the major orientation between the parts we can further get a well-oriented subgraph $\vec G$ with $V(\vec G)=V_1\sqcup V_2$ and $|E(\vec G)|\geq Cn^{2-1/\Delta'(H)}$. Note that for $d=2|E(\vec G)|/n$ we have
    \begin{align*}
        \frac{d^r}{n^{r-1}}-\binom nr\left(\frac{b-1}{n}\right)^r\geq (2C)^r-\binom n{\Delta'(H)}\left(\frac{b-1}{n}\right)^{\Delta'(H)}\geq a.
    \end{align*}
    Assume without loss of generality that the orientation on $\vec H$ is from $A(H)$ to $B(H)$ and the orientation on $\vec G$ is from $V_1$ to $V_2$. By Lemma~\ref{DRC}, choose $A_0\subseteq V_1$ in which every $\Delta'(H)$ vertices share at least $b$ common neighbors in $V_2$, and take any bijection
    \begin{align*}
        \varphi:A(H)\to A_0.
    \end{align*}
    Let $B(H)=\{v_1,\ldots,v_b\}$. First consider $N_H(v_1)\subseteq A(H)$. Since $|N_H(v_1)|\leq \Delta'(H)$, by the choice of $A_0$ we know $\varphi (N_H(v_1))$ has more than $b$ common neighbors in $V_2$. Pick any $u_1$ from these common neighbors and extend the bijection to
    \begin{align*}
        \varphi:A(H)\cup \{v_1\}\to A_0\cup \{u_1\}.
    \end{align*}
    Inductively, if we have the bijection
    \begin{align*}
        \varphi:A(H)\cup \{v_1,\ldots,v_k\}\to A_0\cup \{u_1,\ldots,u_k\}
    \end{align*}
    for some $k<b$, then consider $N_H(v_k)\subseteq A(H)$. Since $|N_H(v_k)|\leq \Delta'(H)$, again we know $\varphi (N_H(v_1k))$ has more than $b$ common neighbors in $V_2$, and in particular one of these common neighbor are not in $\{u_1,\ldots,u_k\}$, so we can let it be $u_{k+1}$ and then extend $\varphi$ by mapping $v_{k+1}\mapsto u_{k+1}$. Finally, the bijection
    \begin{align*}
        \varphi:A(H)\cup B(H)\to A_0\cup \{u_1,\ldots,u_b\}
    \end{align*}
    gives a copy of $\vec H$ in $\vec G$.
\end{proof}